\newif\ifcomment \commentfalse
\newif\iffull \fulltrue
\newtheorem{theorem}{Theorem}
\newtheorem{lemma}{Lemma}
\newcommand{\fsname}{\ensuremath{\text{Context\-FJ}_{{<:}}}}
\newcommand{\LT}{\kw{LT}}
\newcommand\AI[1]{\ifcomment{\color{red}[\emph{#1} -- AI]}\fi}
\newcommand\HI[1]{\ifcomment{\color{blue}[\emph{#1} -- HI]}\fi}
\newcommand{\npr}{\kw{npr}}
\journal{Science of Computer Programming}
\begin{document}
\lstset{
  breaklines = true,
  language=Java,
  basicstyle={\ttfamily\small},
  framesep=5pt,
  showstringspaces=false,
  tabsize=2,
  commentstyle={\slshape},
  stringstyle={\itshape},
  keywordstyle={\bfseries},
  classoffset=1,
  keywordstyle={\bfseries\slshape},
  morekeywords={layer,with,without,proceed,superproceed,requires,swap,swappable},
}

\begin{frontmatter}



\title{A Type System for First-Class Layers with Inheritance, Subtyping, and Swapping\tnoteref{label1}}
\tnotetext[label1]{This is a revised and extended version of the paper ``A Sound Type System for Layer Subtyping and Dynamically Activated First-Class Layers,'' presented at the 13th Asian Symposium on Programming Languages and Systems (APLAS 2015).}

\author{Hiroaki Inoue}
\ead{hinoue@kuis.kyoto-u.ac.jp}
\author{Atsushi Igarashi}
\ead{igarashi@kuis.kyoto-u.ac.jp}
\address{Graduate School of Informatics \\ Kyoto University}

\begin{abstract}
  Context-Oriented Programming (COP) is a programming paradigm to encourage modularization of 
  con\-text-de\-pend\-ent software.  Key features of COP are
  \emph{layers}---modules to describe context-dependent behavioral
  variations of a software system---and their \emph{dynamic
    activation}, which can modify the behavior of multiple objects
  that have already been instantiated.  Typechecking programs written
  in a COP language is difficult because the activation of a layer can
  even change objects' interfaces.  Inoue et al. have informally
  discussed how to make JCop, an extension of Java for COP by
  Appeltauer et al., type-safe.
  
  In this article, we formalize a small COP language called \fsname{}
  with its operational semantics and type system and show its type
  soundness.  The language models main features of the type-safe
  version of JCop, including dynamically activated \emph{first-class}
  layers, \emph{inheritance} of layer definitions, layer
  \emph{subtyping}, and layer \emph{swapping}.
\end{abstract}

%

\begin{keyword}
  Context-oriented programming \sep
  dynamic layer composition \sep
  first-class layers \sep
  layer inheritance \sep
  type systems



\end{keyword}

\end{frontmatter}

\doclicenseThis

\section{Introduction}
\label{sec:intro}

Software is much more interactive than it used to be: it interacts with
not only users but also external resources such as network and sensors
and changes its behavior according to inputs from these resources.
For example, an e-mail reader may switch to a text-based mode when
network throughput is low.  Such external information that affects the
behavior of software is often referred to as \emph{contexts} and 
software that is aware of contexts as context-dependent software.
However, context-dependent software is hard to develop and
maintain, because the description of context-dependent behavior, which
we desire to be modularized, often crosscuts with the dominating
module structure.  To address such a problem from a
programming-language perspective, Context-Oriented Programming
(COP) has been proposed by Hirschfeld et al~\cite{hirschfeld2008cop}.

The main language constructs for COP are \emph{layers}, which are
modules to specify context-dependent behavior, and their \emph{dynamic
  layer activation}.  A layer is basically a collection of what are
called \emph{partial methods}, which add new behavior to existing
objects or override existing methods.  When a layer is activated at
run time by a designated construct, the partial methods defined in it
become effective, changing the behavior of objects until the
activation ends.  Roughly speaking, a layer abstracts a context and
dynamic layer activation abstracts change of contexts.

The JCop language~\cite{appeltauer2012jcop} is an extension of Java with
language constructs for COP.  Not only does it support basic COP
constructs described above, but also it introduces many advanced
features such as inheritance of layer implementations and first-class
layers.  However, typechecking implemented in the JCop compiler does
not take into account the fact that layer activation can change
objects' interface by partial methods that add new methods and, as a
result, not all ``method not found'' errors are prevented statically.
In our
previous work~\cite{inoue2014towards}, we have studied this problem,
proposed a type-safe version of JCop (we call Safe JCop in this
paper) with informal discussions on its type system.

In this paper, we formalize most of the ideas proposed in the previous
work and prove that they really make the language sound.  More concretely,
we develop a small COP language called \fsname{}, which extends
ContextFJ by Igarashi, Hirschfeld, and Masuhara~\cite{contextfj2011,DynamicLayer2012contextfj}
to layer inheritance,
subtyping of layer types, first-class layers, and a type-safe layer deactivation mechanism called layer swapping~\cite{inoue2014towards}; and
we prove a type soundness theorem for \fsname.
Main issues we have to deal with are (1) the semantics of layer
inheritance, which adds another ``dimension'' to the space of method
lookup, (2) sound subtyping for first-class layers, which led us to
two kinds of subtyping relation, and (3) layer swapping.
A preliminary version of this work has been presented
elsewhere~\cite{inoue2015sound} under the title ``A Sound Type System
for Layer Subtyping and Dynamically Activated First-Class Layers.''
We have extended \fsname{} given there with \ensuremath{\itbox{superproceed}} calls, which have
been omitted, added proofs, and substantially revised the paper.

The rest of the article is organized as follows.  After informally
reviewing features of Safe JCop in~\secref{cop}, we develop \fsname{}
with its syntax, operational semantics, and type system in \secref{calculus};
and we prove type soundness in \secref{type_soundness}.  In \secref{relwork},
we discuss related work and then conclude in \secref{discussion}.

\section{Language Constructs of Safe JCop}
\label{sec:cop}

In this section, we review language constructs of Safe JCop, first
described in~\cite{inoue2014towards}, including first-class layers,
layer inheritance/subtyping, and layer swapping along informal
discussions about the type system.

As a running example, we consider programming a graphical computer
game called \emph{RetroAdventure}~\cite{appeltauer2013declarative}.
In this game, a player has a character ``hero'' that wanders around
the game world.  Here, we introduce class \ensuremath{\itbox{Hero}} that represents the
hero, which has method \ensuremath{\itbox{move}} to walk around, and class \ensuremath{\itbox{World}}
that represents the game world.

\lstinputlisting[label={lst:hero}]{src/hero.jcop}

\subsection{Layers and Partial Methods}

As mentioned already, a first distinctive feature of COP is
\emph{layers}---collections of \emph{partial methods} to modify the
behavior of existing objects.  A partial method is syntactically
similar to an ordinary method declared in a class, except that the
name is given in a qualified form \ensuremath{\itbox{Hero.move()}}; this means the
partial method is going to override method \ensuremath{\itbox{move}} defined in \ensuremath{\itbox{Hero}} or
(if it does not exist) add to \ensuremath{\itbox{Hero}}.  A layer can contain partial
methods for different classes, so, when it is activated, it can affect
objects from various classes at once.  Similarly to \ensuremath{\itbox{super}} calls in
Java, the body of a partial method can contain \ensuremath{\itbox{proceed}} calls to
invoke the original method overridden by this partial method.

Here, suppose that the hero's behavior is influenced by weather conditions
in the game world.  For example, in a foggy weather, the hero gets
slow and, in a stormy weather, the hero cannot move as he likes.  Here
are layers that denote weathers of the game world.

\lstinputlisting[label={lst:firstex}]{src/firstex.jcop}
\ensuremath{\itbox{Foggy}} and \ensuremath{\itbox{Stormy}} have the definitions of \ensuremath{\itbox{Hero.move}}, which change
the behavior of the original definition in different ways.  In
particular, \ensuremath{\itbox{Hero.move}} in \ensuremath{\itbox{Stormy}} uses \ensuremath{\itbox{proceed}}, replacing the
arguments to calls to \ensuremath{\itbox{move}}.  It also has \ensuremath{\itbox{Hero.randomDirection}}, 
used to determine a new randomized direction to which the hero is
going to move.

Methods defined in classes are often referred to as \emph{base methods} and
partial methods without corresponding base methods as \emph{baseless
partial methods}.  Notice that activating a layer with baseless partial
methods extends object interfaces and \ensuremath{\itbox{proceed}} in a baseless partial method
is unsafe unless another layer activation provides a baseless partial method
of the same signature.

\subsection{Layer Activation and First-Class Layers}

In Safe JCop, a layer can be activated by using a layer instance
(created by a \ensuremath{\itbox{new}} expression, just as an ordinary Java object, from
a layer definition) in a \ensuremath{\itbox{with}} statement.  The following code snippet
shows how \ensuremath{\itbox{Foggy}} can be activated. (\ensuremath{\itbox{hero}} is an object of the class
\ensuremath{\itbox{Hero}}).

\lstinputlisting[label={lst:with}]{src/with.jcop}

\noindent
Inside the body of \ensuremath{\itbox{with}}, dynamic method dispatch is affected by the
activated layers so that partial methods are looked up first.
So, \ensuremath{\itbox{move}}ment of the \ensuremath{\itbox{hero}} will be slow.

Layer activation has a dynamic extent in the sense that the behavior
of objects changes even in methods called from inside \ensuremath{\itbox{with}}.  If
more than one layer is activated, a more recent activation has
precedence and a \ensuremath{\itbox{proceed}} call in a more recently activated layer may
call another partial method (of the same name) in another layer.

In Safe JCop, a layer instance is a first-class citizen and can be
stored in a variable, passed to, or returned from a method.  A layer
name can be used as a type.  Combining with layer subtyping discussed
later, we can switch layers to activate by a run-time condition.  For
example, suppose that the game has 
\emph{difficulty} levels, determined at run time according to
some parameters, and each level is represented by 
an instance of a sublayer of \ensuremath{\itbox{Difficulty}}.  Then, we can
set the initial difficulty level by code like this:

\begin{lstlisting}
Difficulty diff = /* an expression to compute difficulty */ ;
with(diff){ ... }
\end{lstlisting}

Moreover, a layer can declare own fields and methods (although we do not model them
in layers in this article).  So, first-class layers significantly
enhance expressiveness of the language.

\subsection{Dependencies between Layers}

Baseless partial methods and layer activation that has dynamic extent
pose a challenge on typechecking because activation of a layer
including baseless partial methods can change object interfaces.  So, a method invocation,
including a \ensuremath{\itbox{proceed}} call, may or may not be safe depending on what
layers are activated at the program point.  Safe JCop adopts
\ensuremath{\itbox{requires}} clauses~\cite{DynamicLayer2012contextfj} for layer
definitions to express which layers should have been activated before
activating each layer (instance).  The type system checks whether each
activation satisfies the \ensuremath{\itbox{requires}} clause associated to the activated
layer and also uses \ensuremath{\itbox{requires}} clauses to estimate interfaces of
objects at every program point.

For example, consider another layer \ensuremath{\itbox{ThunderInStorm}}, which expresses an
event in a game.  It affects the way how the hero's direction is
randomized during a storm and includes a baseless partial method
with a \ensuremath{\itbox{proceed}} call.  To prevent \ensuremath{\itbox{ThunderInStorm}} from being activated
in a weather other than a storm, the layer \ensuremath{\itbox{requires}} \ensuremath{\itbox{Stormy}}
as follows:

\lstinputlisting[label={lst:req}]{src/req.jcop}
An attempt at activating \ensuremath{\itbox{ThunderInStorm}} without activating \ensuremath{\itbox{Stormy}}
will be rejected by the type system (unless the activation appears in
a layer requiring \ensuremath{\itbox{Stormy}}).  Thanks to the \ensuremath{\itbox{requires}} clause, the
type system knows that the \ensuremath{\itbox{proceed}} call will not fail.  (It will
call the partial method of the same name in \ensuremath{\itbox{Stormy}} or some other
depending on what layers are activated at run time.)

\subsection{Layer Inheritance and Subtyping}

In Safe JCop, a layer can inherit definitions from another layer by
using the keyword \ensuremath{\itbox{extends}} and the \ensuremath{\itbox{extends}} relation between layers
yields subtyping, just like Java classes.  If weather layers have many
definitions in common, it is a good idea to define a superlayer
\ensuremath{\itbox{Weather}} and concrete weather layers as its sublayers.

\lstinputlisting[label={lst:subtyping}]{src/subtyping.jcop}

\begin{sloppypar}
\noindent Here, \ensuremath{\itbox{Weather}} provides (baseless) partial method
\ensuremath{\itbox{sayWeather}} to the class \ensuremath{\itbox{People}}, which returns \ensuremath{\itbox{Text}} data that
people say about weather condition.  The implementation of
\ensuremath{\itbox{People.sayWeather}} just returns an empty \ensuremath{\itbox{Text}} and
sublayers of the \ensuremath{\itbox{Weather}} override it.
Safe JCop provides \ensuremath{\itbox{superproceed}}, which calls a partial method
overridden because of layer inheritance.  The partial method of
\ensuremath{\itbox{Stormy}} sets the contents of the text using \ensuremath{\itbox{superproceed}}.
\end{sloppypar}

Since class subtyping equals to the reflexive and transitive closure
of the \ensuremath{\itbox{extends}} relation, we expect layer subtyping to be the same;
an instance of a sublayer can be substituted for that of its
superlayer.  However, substitutability is more subtle than one might
expect and we are led to distinguishing two kinds of substitutability
and introducing two kinds of subtyping relation, called weak and
normal subtyping.  The difference arises from \ensuremath{\itbox{requires}} clauses.  To
explain the issue, we define layer \ensuremath{\itbox{Thunder}}, which is the superlayer of
\ensuremath{\itbox{ThunderInStorm}} and \ensuremath{\itbox{ThunderInFog}} and a sublayer of a marker layer
\ensuremath{\itbox{Event}}.

\lstinputlisting[label={lst:superproceed}]{src/superproceed.jcop}
\ensuremath{\itbox{Thunder}} changes the font of the text of what \ensuremath{\itbox{People}} say.  It
seems natural to set the \ensuremath{\itbox{requires}} clause of \ensuremath{\itbox{Thunder}} to be
\ensuremath{\itbox{Weather}}, since its two sublayers require \ensuremath{\itbox{Stormy}} and \ensuremath{\itbox{Foggy}}
respectively.

\paragraph{Weak subtyping}
An instance of a sublayer can be used where a superlayer is
\ensuremath{\itbox{require}}d, since a sublayer defines more partial methods than its
superlayer.  For example, to activate the following layer called
\ensuremath{\itbox{Thunder}}, which \ensuremath{\itbox{requires}} \ensuremath{\itbox{Weather}}, it suffices to activate
\ensuremath{\itbox{Foggy}}, a sublayer of \ensuremath{\itbox{Weather}}, beforehand.

\begin{lstlisting}
with(new Foggy()){
  // Thunder requires Weather and Foggy extends Weather
  with(new Thunder()){ ... }
}
\end{lstlisting}

We will formalize substitutability about \ensuremath{\itbox{requires}} as \emph{weak
  subtyping}, which is the reflexive transitive closure of the
\ensuremath{\itbox{extends}} relation between layer types.  For the weak subtyping to
work, we require that a sublayer declare, at least, what its
superlayer \ensuremath{\itbox{requires}} because partial methods inherited from the
superlayer may depend on them.  We could relax this condition if a
sublayer overrides all the partial methods but such a case is expected
to be rare and so not taken into account.\footnote{%
  Re-typechecking inherited methods under the new \texttt{requires}
  clause would be another way to relax this condition but this is
  against modular checking.}

\paragraph{Normal subtyping}
The above notion of subtyping is called weak because it does \emph{not}
guarantee safe substitutability for \emph{first-class} layers.
Consider layer \ensuremath{\itbox{Difficulty}} again and assume that it requires no other
layers and has sublayers \ensuremath{\itbox{Easy}} and \ensuremath{\itbox{Hard}}.  In the following code
snippet, the activation of \ensuremath{\itbox{diff}} appears safe because its static type
\ensuremath{\itbox{Difficulty}} does not require any layers to have been activated.
\begin{lstlisting}
Difficulty diff = someCondition() ? new Easy() : new Hard();
with(diff){ ... }
\end{lstlisting}
However, the case where \ensuremath{\itbox{Easy}} or \ensuremath{\itbox{Hard}} requires some layers breaks
the expected invariant that the dependency expressed by the \ensuremath{\itbox{requires}}
clauses is satisfied at run time.  So, for assignments and parameter
passing, we need one more condition for subtyping, namely, \ensuremath{\itbox{requires}}
of a sublayer must be the same as that of its superlayer.  We call
this strong notion of subtyping \emph{normal subtyping}.

\begin{figure}[htbp]
  \includegraphics[width=13cm,clip,page=2]{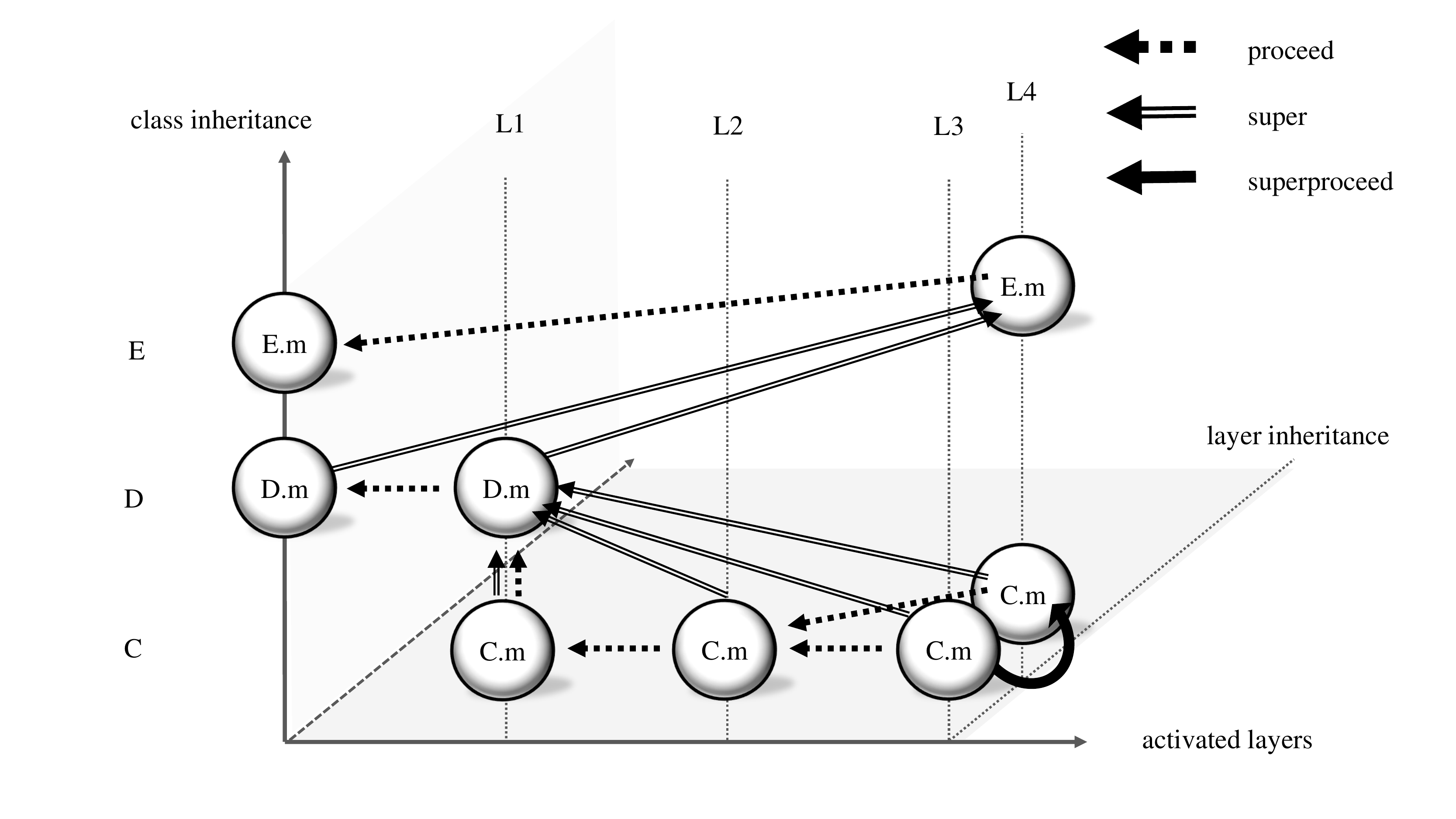}
  \caption{An example of layer subtyping hierarchy.}
  \label{fig:tree}
\end{figure}

In \figref{tree}, we show the layer subtyping hierarchy of the examples so far.
An oval means a layer and the notation \ensuremath{\itbox{req {\char'173}X{\char'175}}} beside an oval means its
\ensuremath{\itbox{requir}}\-ing layers.
Just like \ensuremath{\itbox{Object}} in Java, there is \ensuremath{\itbox{Base}}, which is a superlayer of
all layers, in Safe JCop.  If a layer omits the \ensuremath{\itbox{extends}} clause,
it is implicitly assumed that the layer \ensuremath{\itbox{extends Base}}.

\subsection{Layer Swapping and Deactivation}

The original JCop provides constructs to \emph{de}activate layers.
However, only with \ensuremath{\itbox{requires}}, it is not easy to guarantee that layer
deactivation does not lead to an error.  For safe deactivation, it has
to be checked that there is no layer that \ensuremath{\itbox{requires}} the deactivated
layer, but the type system is not designed to keep track of the
\emph{absence} of certain layers.  Instead of general-purpose layer
deactivation mechanisms, Safe JCop introduces a special construct to
express one important idiom that uses deactivation, namely
\textit{layer swapping} to deactivate some layers and activate a layer
at once.

In Safe JCop, we can define a layer as \emph{swappable}, which means that
all its sublayers can be swapped with each other, by adding the
modifier \ensuremath{\itbox{swappable}}. The \ensuremath{\itbox{swap}} statement for layer swapping is of the following
form:

\begin{center}
  \texttt{swap(}\textit{activation\_layer}, \textit{deactivation\_lay\-er\_type}\texttt{)\{ \ldots\ \}}
\end{center}

\noindent The \textit{activation\_layer} is an expression whose static
type must be a sublayer of \textit{deactivation\_lay\-er\_type}, which in
turn has to be swappable.  It deactivates \emph{all} instances of
\textit{deactivation\_layer\_type} (and its sublayers), and activates
the \textit{activation\_layer}.

Let's consider \ensuremath{\itbox{Difficulty}} once again.  We could define \ensuremath{\itbox{Difficulty}}
as a \ensuremath{\itbox{swappable}} layer and use \ensuremath{\itbox{swap}} to switch to another mode
temporarily.

\lstinputlisting[keywords={swap,layer,swappable},label={lst:swap}]{src/swappable.jcop}

For type safety, the necessary restriction for layer swapping was
wrong and has to be stronger than discussed in the previous
work~\cite{inoue2014towards}.  Specifically, we need the following restrictions:
\begin{itemize}
\item No sublayer of a swappable layer can be \ensuremath{\itbox{require}}d by any other layers.
\item Every sublayer of a swappable layer has to have the same interface (namely, set of public methods) and \ensuremath{\itbox{requires}} clause
  as the swappable layer.
\end{itemize}
The second condition was overlooked in the previous work.
\HI{for REVIEWER 1}
\AI{How about this?}
\AI{Maybe, it's a good idea to explain what the previous work overlooked: ``The second condition was overlooked in the previous work.'' or something like that.}
\HI{Sounds good.}

\subsection{Method Lookup}\label{sec:method_lookup}

We informally explain how Safe JCop's method lookup mechanism works,
before proceeding to the formal calculus.

When method \ensuremath{\itbox{m}} is invoked on an instance of class \ensuremath{\itbox{C}} with layers
$\ensuremath{\itbox{L}_{1}\itbox{;}}\cdots\ensuremath{\itbox{;L}_{n}\itbox{}}$ activated, the corresponding method definition is
sought as follows: first, the activated layers \ensuremath{\itbox{L}_{n}\itbox{}}, \ensuremath{\itbox{L}_{{n-1}}\itbox{}}, down
to \ensuremath{\itbox{L}_{1}\itbox{}} are searched (in this order) for a partial method named
\ensuremath{\itbox{C.m}}; if \ensuremath{\itbox{C.m}} is not found, the base class \ensuremath{\itbox{C}} is searched for the
base definition; if \ensuremath{\itbox{m}} is not found, similar search continues on the
\ensuremath{\itbox{C}}'s superclass \ensuremath{\itbox{D}}---namely, the activated layers are searched again
for a partial method named \ensuremath{\itbox{D.m}} and the base class \ensuremath{\itbox{D}} is searched
for the base definition, and so on.  In addition to the usual
inheritance chain in class-based object-oriented languages, COP adds
another dimension to the space of method lookup.  Actually, there is
yet another dimension in (Safe) JCop because of layer inheritance:
When \ensuremath{\itbox{L}_{i}\itbox{}} is searched for a partial method, its superlayers are
searched, too, before going to \ensuremath{\itbox{L}_{{i-1}}\itbox{}}.  For example, under the
following class and layer definitions
\begin{lstlisting}
class C extends D   { }
class D extends E   { void m(){ ... } }
class E             { void m(){ ... } }
layer L1            { void D.m(){ ... } }
layer L2 extends L3 { void E.m(){ ... } }
layer L3            { void C.m(){ ... } }
\end{lstlisting}
the following statement
\begin{lstlisting}
with(new L1()) {
  with(new L2()){
    new C().m();
  }
}
\end{lstlisting}
will execute partial method \ensuremath{\itbox{C.m}} defined in \ensuremath{\itbox{L3}} (we will use notation \ensuremath{\itbox{L.C.m}} to mean the partial method
\ensuremath{\itbox{C.m}} defined in layer \ensuremath{\itbox{L}} hereafter), whereas
the statement
\begin{lstlisting}
with(new L1()) { new C().m(); }
\end{lstlisting}
will execute \ensuremath{\itbox{L1.D.m}}.  

Now, we turn our attention to the semantics of \ensuremath{\itbox{super}}, \ensuremath{\itbox{proceed}}, and
\ensuremath{\itbox{superproceed}}.  When a \ensuremath{\itbox{super}}, \ensuremath{\itbox{proceed}} or \ensuremath{\itbox{superproceed}} call is
encountered during execution of a (partial) method, it continues to
look for a method definition of the same name as follows.

Suppose that \ensuremath{\itbox{C.m}} is found in layer \ensuremath{\itbox{L}_{i}\itbox{}} with layers
$\ensuremath{\itbox{}\overline{\itbox{L}}\itbox{}} = \ensuremath{\itbox{L}_{1}\itbox{;}}\cdots\ensuremath{\itbox{;L}_{n}\itbox{}}$ activated ($0 < i \le n$) and that \ensuremath{\itbox{D}} is
a superclass of \ensuremath{\itbox{C}}.
\begin{itemize}
  \item A call \ensuremath{\itbox{super.m()}} starts looking for a partial method \ensuremath{\itbox{D.m}} from
    \ensuremath{\itbox{L}_{n}\itbox{}} and so on.
  \item A \ensuremath{\itbox{proceed}} call starts looking for a partial method \ensuremath{\itbox{C.m}} from
    \ensuremath{\itbox{L}_{{i-1}}\itbox{}} or the base method of class \ensuremath{\itbox{C}} (when $i = 1$), and so
    on.
  \item A \ensuremath{\itbox{superproceed}} call starts looking for
    \ensuremath{\itbox{C.m}} in \ensuremath{\itbox{L}_{i}\itbox{{\ensuremath{'}}}} (where \ensuremath{\itbox{L}_{i}\itbox{{\ensuremath{'}}}} is the superlayer of \ensuremath{\itbox{L}_{i}\itbox{}}), \ensuremath{\itbox{L}_{i}\itbox{{\ensuremath{'}}{\ensuremath{'}}}}
    (where \ensuremath{\itbox{L}_{i}\itbox{{\ensuremath{'}}{\ensuremath{'}}}} is the superlayer of \ensuremath{\itbox{L}_{i}\itbox{{\ensuremath{'}}}}), and so on.  If \ensuremath{\itbox{C.m}}
    is not found in the superlayers, it is a run-time error (which the
    type system will prevent).
\end{itemize}

For example, consider the following class and layer definitions and
suppose \ensuremath{\itbox{L1}}, \ensuremath{\itbox{L2}} and \ensuremath{\itbox{L3}} are activated in this order.  
\begin{lstlisting}
class C extends D { }
class D extends E { void m(){ return super.m();} }
class E           { void m(){ return; } }
layer L1 {
  void C.m(){ ... super.m(); ... proceed(); ... }
  void D.m(){ ... super.m(); ... proceed(); ... }
}
layer L2 {
  void C.m(){ ... super.m(); ... proceed(); ... } 
}
layer L4 {
  void C.m(){ ... super.m(); ... proceed(); ... }
  void E.m(){ ... super.m(); ... proceed(); ... }
}
layer L3 extends L4 {
  void C.m(){ ... super.m(); ... proceed(); ... }
}
\end{lstlisting}

\begin{itemize}
\item \ensuremath{\itbox{super.m}} calls from \ensuremath{\itbox{L4.C.m}} and \ensuremath{\itbox{L1.C.m}} will invoke \ensuremath{\itbox{L1.D.m}};
  and those from \ensuremath{\itbox{L1.D.m}} and \ensuremath{\itbox{D.m}} will invoke \ensuremath{\itbox{L4.E.m}}, since \ensuremath{\itbox{L3}}
  inherits \ensuremath{\itbox{E.m}} from \ensuremath{\itbox{L4}}.
\item a \ensuremath{\itbox{proceed}} call from \ensuremath{\itbox{L4.C.m}} will invoke \ensuremath{\itbox{L2.C.m}}
and that from \ensuremath{\itbox{L1.C.m}} will invoke \ensuremath{\itbox{L1.D.m}}.
\item a \ensuremath{\itbox{superproceed}} call from \ensuremath{\itbox{L3.C.m}} will invoke \ensuremath{\itbox{L4.C.m}}.
\end{itemize}

\begin{figure}[htbp]
  \includegraphics[width=13cm,clip,page=1]{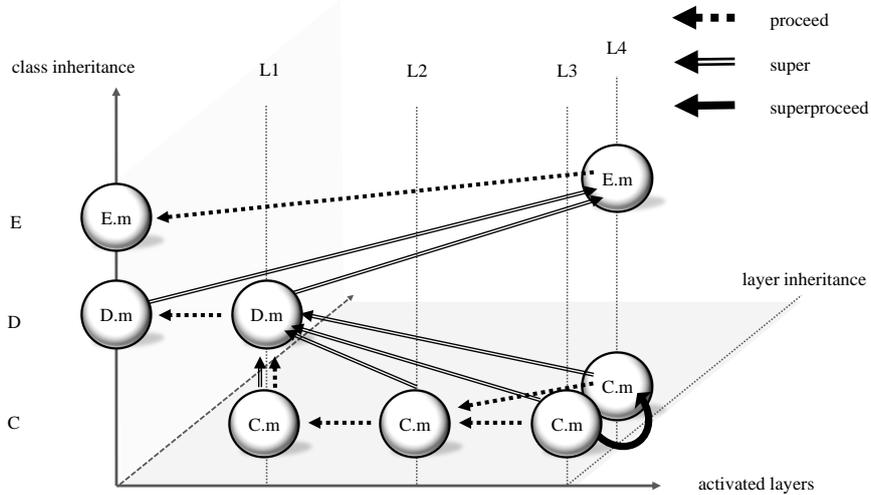}
  \caption{Method Lookup Example.}
  \label{fig:method-lookup}
\end{figure}

\figref{method-lookup} summarizes how \ensuremath{\itbox{super}}, \ensuremath{\itbox{proceed}}, and
\ensuremath{\itbox{superproceed}} calls are resolved.  Each ball represents a (partial)
method definition and its location where it is put.  The three axes
stands for class inheritance (\ensuremath{\itbox{C}} extends \ensuremath{\itbox{D}} and \ensuremath{\itbox{D}} extends \ensuremath{\itbox{E}}),
activated layers (\ensuremath{\itbox{L1}}, \ensuremath{\itbox{L2}}, and \ensuremath{\itbox{L3}} are activated in this order),
and layer inheritance (\ensuremath{\itbox{L3}} extends \ensuremath{\itbox{L4}}).  Dotted arrows represent how
\ensuremath{\itbox{proceed}} calls at each method definition are resolved.  For example,
the top-most long dotted arrow means that \ensuremath{\itbox{proceed}} from \ensuremath{\itbox{L4.E.m}} will
invoke \ensuremath{\itbox{E.m}}.  Double-line arrows represent \ensuremath{\itbox{super}} and thick arrows \ensuremath{\itbox{superproceed}}.

Finally, we should note that, for \ensuremath{\itbox{super}}, \ensuremath{\itbox{proceed}}, and
\ensuremath{\itbox{superproceed}} calls, the activated layers are the same as those when
the current method is found.  So, \ensuremath{\itbox{with}} or \ensuremath{\itbox{swap}} around \ensuremath{\itbox{super}},
\ensuremath{\itbox{proceed}}, and \ensuremath{\itbox{superproceed}} does not affect which definition is
invoked; only method invocations are affected by \ensuremath{\itbox{with}} and \ensuremath{\itbox{swap}}.

\section{\fsname} \label{sec:calculus}

In this section, we formalize a core functional subset of Safe JCop as
\fsname{} with its syntax, operational semantics and type system.
\fsname{}, a descendant of Featherweight Java
(FJ)~\cite{IgarashiPierceWadler01TOPLAS_FJ}, extends
ContextFJ~\cite{contextfj2011,DynamicLayer2012contextfj} with layer
inheritance, \ensuremath{\itbox{superproceed}}, layer subtyping, first-class layers, and
swappable layers.  JCop features that \fsname{} does \emph{not} model
for simplicity include: fields and (ordinary) methods in layers,
special variable \ensuremath{\itbox{thislayer}} to refer to the current layer instance,
\ensuremath{\itbox{superlayer}} to invoke an ordinary method in a superlayer, and
declarative layer composition.

\subsection{Syntax}

Let metavariables \ensuremath{\itbox{C}}, \ensuremath{\itbox{D}} and \ensuremath{\itbox{E}} range over class names; \ensuremath{\itbox{L}} over
layer names; \ensuremath{\itbox{f}} and \ensuremath{\itbox{g}} over field names; \ensuremath{\itbox{m}} over method names; \ensuremath{\itbox{x}}
and \ensuremath{\itbox{y}} over variables, which contains special variable \ensuremath{\itbox{this}}.  The
abstract syntax of \fsname{} is given in \figref{CFJ:syntax}.

\begin{figure}[h]
  \noindent
  \[
  \begin{array}{lllr}
    \ensuremath{\itbox{T}} & ::= & \ensuremath{\itbox{C}} \mid \ensuremath{\itbox{L}} & (\textit{types}) \\
    \ensuremath{\itbox{CL}} & ::= & \ensuremath{\itbox{class C \(\triangleleft\) C {\char'173} }\overline{\itbox{T}}\itbox{ }\overline{\itbox{f}}\itbox{; K }\overline{\itbox{M}}\itbox{ {\char'175}}}   & (\textit{classes}) \\
    \ensuremath{\itbox{LA}} & ::= & [\ensuremath{\itbox{swappable}}]\ensuremath{\itbox{ layer L \(\triangleleft\) L req }\overline{\itbox{L}}\itbox{ {\char'173} }\overline{\itbox{PM}}\itbox{ {\char'175}}} &(\textit{layers}) \\
    \ensuremath{\itbox{K}} & ::= & 
    \ensuremath{\itbox{C(}\overline{\itbox{T}}\itbox{ }\overline{\itbox{f}}\itbox{){\char'173} super(}\overline{\itbox{f}}\itbox{); this.}\overline{\itbox{f}}\itbox{ = }\overline{\itbox{f}}\itbox{; {\char'175}}}  
    & (\textit{constructors}) \\  
    \ensuremath{\itbox{M}} &  ::= & \ensuremath{\itbox{T m(}\overline{\itbox{T}}\itbox{ }\overline{\itbox{x}}\itbox{){\char'173} return e; {\char'175}}}  &(\textit{methods}) \\
    \ensuremath{\itbox{PM}} &  ::= & \ensuremath{\itbox{T C.m(}\overline{\itbox{T}}\itbox{ }\overline{\itbox{x}}\itbox{){\char'173} return e; {\char'175}}}  &(\textit{partial methods}) \\
    \ensuremath{\itbox{e}}, \ensuremath{\itbox{d}} & ::= &  \ensuremath{\itbox{x}} \mid  \ensuremath{\itbox{e.f}} \mid \ensuremath{\itbox{e.m(}\overline{\itbox{e}}\itbox{)}} \mid \ensuremath{\itbox{new T(}\overline{\itbox{e}}\itbox{)}}     \mid \ensuremath{\itbox{with e e}} \mid \ensuremath{\itbox{swap (e,L) e}} 
  & (\textit{expressions}) \\ & &
    \mid \ensuremath{\itbox{proceed(}\overline{\itbox{e}}\itbox{)}} \mid \ensuremath{\itbox{super.m(}\overline{\itbox{e}}\itbox{)}} \mid \ensuremath{\itbox{superproceed(}\overline{\itbox{e}}\itbox{)}} \\
    & & \mid \ensuremath{\itbox{new C(}\overline{\itbox{v}}\itbox{)<C,}\overline{\itbox{L}}\itbox{,}\overline{\itbox{L}}\itbox{>.m(}\overline{\itbox{e}}\itbox{)}}
    \mid \ensuremath{\itbox{new C(}\overline{\itbox{v}}\itbox{)<C,L,}\overline{\itbox{L}}\itbox{,}\overline{\itbox{L}}\itbox{>.m(}\overline{\itbox{e}}\itbox{)}}\\
    \ensuremath{\itbox{v}}, \ensuremath{\itbox{w}} & ::= &  \ensuremath{\itbox{new C(}\overline{\itbox{v}}\itbox{)}} \mid \ensuremath{\itbox{new L()}} & (\textit{values})
  \end{array}
  \]
  \caption{\fsname: Syntax.}
  \label{fig:CFJ:syntax}
\end{figure}

Following FJ, we use overlines to denote sequences: So, \ensuremath{\itbox{}\overline{\itbox{f}}\itbox{}} stands
for a possibly empty sequence $\ensuremath{\itbox{f}_{1}\itbox{}},\cdots,\ensuremath{\itbox{f}_{n}\itbox{}}$ and similarly for
\ensuremath{\itbox{}\overline{\itbox{T}}\itbox{}}, \ensuremath{\itbox{}\overline{\itbox{x}}\itbox{}}, \ensuremath{\itbox{}\overline{\itbox{e}}\itbox{}}, and so on.  The empty sequence is denoted by
$\bullet$.  Concatenation of sequences is often denoted by a comma
except for layer names, for which we use a semicolon.  We also
abbreviate pairs of sequences, writing ``\ensuremath{\itbox{}\overline{\itbox{T}}\itbox{ }\overline{\itbox{f}}\itbox{}}'' for
``$\ensuremath{\itbox{T}_{1}\itbox{ f}_{1}\itbox{}},\cdots,\ensuremath{\itbox{T}_{n}\itbox{ f}_{n}\itbox{}}$'', where $n$ is the length of \ensuremath{\itbox{}\overline{\itbox{T}}\itbox{}} and
\ensuremath{\itbox{}\overline{\itbox{f}}\itbox{}}, and similarly ``\ensuremath{\itbox{}\overline{\itbox{T}}\itbox{ }\overline{\itbox{f}}\itbox{;}}'' as shorthand for the sequence of
declarations ``\ensuremath{\itbox{T}_{1}\itbox{ f}_{1}\itbox{;}}\ldots\ensuremath{\itbox{T}_{n}\itbox{ f}_{n}\itbox{;}}'' and ``\ensuremath{\itbox{this.}\overline{\itbox{f}}\itbox{=}\overline{\itbox{f}}\itbox{;}}'' for
``\ensuremath{\itbox{this.f}_{1}\itbox{=f}_{1}\itbox{;}}\ldots\ensuremath{\itbox{;this.f}_{n}\itbox{=f}_{n}\itbox{;}}''.  Given layer sequence \ensuremath{\itbox{}\overline{\itbox{L}}\itbox{}}, We write $\set{\ensuremath{\itbox{}\overline{\itbox{L}}\itbox{}}}$ for the set of layers (obtained by ignoring the order).  Sequences of field declarations,
parameter names, layer names, and method declarations are assumed to
contain no duplicate names.

We briefly explain the syntax, focusing on COP-related constructs.  A
layer definition \ensuremath{\itbox{LA}} consists of optional modifier \ensuremath{\itbox{swappable}}, its
name, its superlayer name, layers that it \ensuremath{\itbox{requires}}, and partial
methods.  A partial method (defined as \ensuremath{\itbox{PM}}) is similar to a method
but specifies which \ensuremath{\itbox{m}} to modify by qualifying the simple
method name with a class name \ensuremath{\itbox{C}}.

Instantiation can be a layer instance \ensuremath{\itbox{new L()}}, as well as a class
instance \ensuremath{\itbox{new C(}\overline{\itbox{e}}\itbox{)}}.  Note that arguments to \ensuremath{\itbox{new L}} are always empty
because \fsname{} does not model fields of layer instances.  In the
expression \ensuremath{\itbox{with e}_{1}\itbox{ e}_{2}\itbox{}}, \ensuremath{\itbox{e}_{1}\itbox{}} stands for the layer to be activated
and \ensuremath{\itbox{e}_{2}\itbox{}} the body of \ensuremath{\itbox{with}}.
In the expression \ensuremath{\itbox{swap (e}_{1}\itbox{, L) e}_{2}\itbox{}}, \ensuremath{\itbox{e}_{1}\itbox{}} means the layer to be
activated, \ensuremath{\itbox{L}} the swappable layer, \ensuremath{\itbox{e}_{2}\itbox{}} the body of \ensuremath{\itbox{swap}}.  By this
expression, during the evaluation of \ensuremath{\itbox{e}_{2}\itbox{}}, all instances of the
swappable layer \ensuremath{\itbox{L}} and its sublayers are deactivated, and \ensuremath{\itbox{e}_{1}\itbox{}} is
activated.
\ensuremath{\itbox{super.m(}\overline{\itbox{e}}\itbox{)}}, \ensuremath{\itbox{proceed(}\overline{\itbox{e}}\itbox{)}} and \ensuremath{\itbox{superproceed(}\overline{\itbox{e}}\itbox{)}} are keywords to
invoke methods of the superclass, a previously activated layer, and
the superlayer, respectively.

Expressions $\ensuremath{\itbox{new C(}\overline{\itbox{v}}\itbox{)<D,}\overline{\itbox{L}}\itbox{{\ensuremath{'}},}\overline{\itbox{L}}\itbox{>.m(}\overline{\itbox{e}}\itbox{)}}$ and
$\ensuremath{\itbox{new C(}\overline{\itbox{v}}\itbox{)<D,L,}\overline{\itbox{L}}\itbox{{\ensuremath{'}},}\overline{\itbox{L}}\itbox{>.m(}\overline{\itbox{e}}\itbox{)}}$ are special run-time
expressions that are related to method invocation mechanism of COP,
and not supposed to appear in classes and layers.  They basically mean
that \ensuremath{\itbox{m}} is going to be invoked on \ensuremath{\itbox{new C(}\overline{\itbox{v}}\itbox{)}}.  The annotation
$\ensuremath{\itbox{<D,}\overline{\itbox{L}}\itbox{{\ensuremath{'}},}\overline{\itbox{L}}\itbox{>}}$ is used to model \ensuremath{\itbox{super}} and 
\ensuremath{\itbox{proceed}} whereas $\ensuremath{\itbox{<D,L,}\overline{\itbox{L}}\itbox{{\ensuremath{'}},}\overline{\itbox{L}}\itbox{>}}$ is used for \ensuremath{\itbox{superproceed}}.  \ensuremath{\itbox{}\overline{\itbox{L}}\itbox{}} stands for a sequence of activated layers and \ensuremath{\itbox{D}}, \ensuremath{\itbox{L}} and \ensuremath{\itbox{}\overline{\itbox{L}}\itbox{{\ensuremath{'}}}} (which is assumed to be a prefix
of \ensuremath{\itbox{}\overline{\itbox{L}}\itbox{}}) play a role of a ``cursor'' where the method
lookup starts from.  We explain how they work in detail in \secref{semantics}.

\paragraph{Program}
A \fsname{} program $(\CT,\LT,\ensuremath{\itbox{e}})$ consists of a class table $\CT$, a
layer table $\LT$ and an expression \ensuremath{\itbox{e}}, which stands for the body of
the \ensuremath{\itbox{main}} method.  $\CT$ maps a class name to a class definition and
$\LT$ a layer name to a layer definition.  A layer definition can
be regarded as a function that maps a partial method name \ensuremath{\itbox{C.m}} to a
partial method definition.  So, we can view $\LT$ as a Curried
function, and we often write $\LT(\ensuremath{\itbox{L}})(\ensuremath{\itbox{C.m}})$ for the partial method
\ensuremath{\itbox{C.m}} in \ensuremath{\itbox{L}} in a program.  We assume that the domains of $\CT$ and $\LT$
are finite.
Precisely speaking, the semantics and type
system are parameterized over $\CT$ and $\LT$ but, to lighten the
notation, we assume them to be fixed and omit from judgments.

Given $\CT$ and $LT$, \ensuremath{\itbox{extends}} and \ensuremath{\itbox{requires}} clauses are considered
relations, written \ensuremath{\itbox{\(\triangleleft\)}} and \ensuremath{\itbox{req}}, respectively, over class/layer
names.  Namely, we write \ensuremath{\itbox{L req L}_{i}\itbox{}} if \(\LT(\ensuremath{\itbox{L}}) = \ensuremath{\itbox{layer L req }\overline{\itbox{L}}\itbox{}}\)
and $\ensuremath{\itbox{L}_{i}\itbox{}} \in \ensuremath{\itbox{}\overline{\itbox{L}}\itbox{}}$.
We also write $\ensuremath{\itbox{L req }}\set{\ensuremath{\itbox{}\overline{\itbox{L}}\itbox{}}}$ if \(\LT(\ensuremath{\itbox{L}}) = \ensuremath{\itbox{layer L req }\overline{\itbox{L}}\itbox{}}\).\footnote{Note that \texttt{L${}_1$ req L${}_2$} and $\texttt{L${}_1$ req } \set{\texttt{L}_2}$ have slightly different meanings; the former means \texttt{L}${}_2$ is one of the layers required by \texttt{L}${}_1$, whereas the latter means \texttt{L}${}_2$ is the only layer required by \texttt{L}${}_1$.}
As usual, we write \(\mathcal{R}^+\)
for the transitive closure of relation \(\mathcal{R}\);
similarly for \(\mathcal{R}^{*}\)
for the reflexive transitive closure of \(\mathcal{R}\).
We write \ensuremath{\itbox{L swappable}} if \(\LT(\ensuremath{\itbox{L}})\)
is defined with the \ensuremath{\itbox{swappable}} modifier.

We assume the
following sanity conditions are satisfied by a given program:

\begin{enumerate}
\item $\CT(\ensuremath{\itbox{C}}) = \ensuremath{\itbox{class C ...}}$ for any $\ensuremath{\itbox{C}} \in \dom(\CT)$.
\item $\ensuremath{\itbox{Object}} \not \in \dom(\CT)$.
\item For every class name \ensuremath{\itbox{C}} (except \ensuremath{\itbox{Object}}) appearing anywhere
  in $\CT$, $\ensuremath{\itbox{C}} \in \dom(\CT)$.
\item $\LT(\ensuremath{\itbox{L}}) = \ensuremath{\itbox{... layer L ...}}$ for any $\ensuremath{\itbox{L}} \in \dom(\LT)$.
\item $\ensuremath{\itbox{Base}} \not \in \dom(\LT)$.
\item\label{lang:base} For every layer name \ensuremath{\itbox{L}} (except \ensuremath{\itbox{Base}})
  appearing anywhere in $\LT$, $\ensuremath{\itbox{L}} \in \dom(\LT)$.
\item\label{lang:extends} Both for classes and layers, there are no
  cycles in the transitive closure of the \ensuremath{\itbox{extends}} clauses.
\item $\LT(\ensuremath{\itbox{L}})(\ensuremath{\itbox{C.m}}) = \ensuremath{\itbox{... C.m(...){\char'173}...{\char'175}}}$ for any $\ensuremath{\itbox{L}} \in
  \dom(\LT)\) and \(\ensuremath{\itbox{C}} \neq \ensuremath{\itbox{Object}}\) and \((\ensuremath{\itbox{C.m}}) \in \dom(\LT(\ensuremath{\itbox{L}}))$.
\end{enumerate}
These sanity conditions are an extension of those of FJ: conditions
for layers (4--7) are similar to those for classes (1--3, 7).  In
Condition \ref{lang:base}, like \ensuremath{\itbox{Object}} of classes, layer \ensuremath{\itbox{Base}} is
defined as the root of the layer inheritance/subtyping hierarchy.  In
the condition (8), \(\ensuremath{\itbox{C}} \neq \ensuremath{\itbox{Object}}\)
means that a layer cannot introduce a method to \ensuremath{\itbox{Object}}, which has no
base methods.  We could allow a layer to add methods to \ensuremath{\itbox{Object}} but
doing so would just clutter presentation---there are more rules to
deal with the fact that \ensuremath{\itbox{super}} calls cannot be made in partial
methods for \ensuremath{\itbox{Object}}.

\subsection{Operational Semantics}\label{sec:semantics}
\paragraph{Lookup Functions}
We need a few auxiliary lookup functions to define operational
semantics and they are defined in \figref{CFJ:auxfuns}.  The function $\fields(\ensuremath{\itbox{C}})$
returns a sequence \ensuremath{\itbox{}\overline{\itbox{T}}\itbox{ }\overline{\itbox{f}}\itbox{}} of pairs of a field name and its type by
collecting all field declarations from \ensuremath{\itbox{C}} and its superclasses.

\begin{figure}[h]
     \noindent
     \fbox{$\fields(\ensuremath{\itbox{C}}) = \ensuremath{\itbox{}\overline{\itbox{T}}\itbox{ }\overline{\itbox{f}}\itbox{}}$}
     \iffull
     \typicallabel{MB-NextLayer}
     \fi
     \infax[F-Object]{
     \fields(\ensuremath{\itbox{Object}}) = \bullet
     }
     \infrule[F-Class]{
     \ensuremath{\itbox{class C \(\triangleleft\) D {\char'173} }\overline{\itbox{T}}\itbox{ }\overline{\itbox{f}}\itbox{; ... {\char'175}}} \andalso
     \fields(\ensuremath{\itbox{D}}) = \ensuremath{\itbox{}\overline{\itbox{S}}\itbox{ }\overline{\itbox{g}}\itbox{}}
     }{
     \fields(\ensuremath{\itbox{C}}) = \ensuremath{\itbox{}\overline{\itbox{S}}\itbox{ }\overline{\itbox{g}}\itbox{}}, \ensuremath{\itbox{}\overline{\itbox{T}}\itbox{ }\overline{\itbox{f}}\itbox{}}
     }
     \fbox{$\pmbody(\ensuremath{\itbox{m}},\ensuremath{\itbox{C}},\ensuremath{\itbox{L}}) = \ensuremath{\itbox{}\overline{\itbox{x}}\itbox{.e}} \IN \ensuremath{\itbox{L}_{0}\itbox{}}$}
     \infrule[PMB-Layer]{
     \LT(\ensuremath{\itbox{L}})(\ensuremath{\itbox{C.m}}) = \ensuremath{\itbox{T}_{0}\itbox{ C.m(}\overline{\itbox{T}}\itbox{ }\overline{\itbox{x}}\itbox{){\char'173} return e; {\char'175}}}
     }{
     \pmbody(\ensuremath{\itbox{m}},\ensuremath{\itbox{C}},\ensuremath{\itbox{L}}) = \ensuremath{\itbox{}\overline{\itbox{x}}\itbox{.e}} \IN \ensuremath{\itbox{L}}
     }
     \infrule[PMB-Super]{
       \LT(\ensuremath{\itbox{L}})(\ensuremath{\itbox{C.m}}) \undf \andalso
       \ensuremath{\itbox{L \(\triangleleft\) L{\ensuremath{'}}}} \andalso
       \pmbody(\ensuremath{\itbox{m}},\ensuremath{\itbox{C}},\ensuremath{\itbox{L{\ensuremath{'}}}}) = \ensuremath{\itbox{}\overline{\itbox{x}}\itbox{.e}} \IN \ensuremath{\itbox{L}_{0}\itbox{}}
     }{
     \pmbody(\ensuremath{\itbox{m}},\ensuremath{\itbox{C}},\ensuremath{\itbox{L}}) = \ensuremath{\itbox{}\overline{\itbox{x}}\itbox{.e}} \IN \ensuremath{\itbox{L}_{0}\itbox{}}
     }
     \fbox{$\mbody(\ensuremath{\itbox{m}},\ensuremath{\itbox{C}},\ensuremath{\itbox{}\overline{\itbox{L}}\itbox{{\ensuremath{'}}}},\ensuremath{\itbox{}\overline{\itbox{L}}\itbox{}}) = \ensuremath{\itbox{}\overline{\itbox{x}}\itbox{.e}} \IN \ensuremath{\itbox{D}},\ensuremath{\itbox{}\overline{\itbox{L}}\itbox{{\ensuremath{'}}{\ensuremath{'}}}}$}
     \infrule[MB-Class]{
       \ensuremath{\itbox{class C \(\triangleleft\) D {\char'173} ... T}_{0}\itbox{ m(}\overline{\itbox{T}}\itbox{ }\overline{\itbox{x}}\itbox{){\char'173} return e; {\char'175} ... {\char'175}}}
     }{
     \mbody(\ensuremath{\itbox{m}},\ensuremath{\itbox{C}},\bullet,\ensuremath{\itbox{}\overline{\itbox{L}}\itbox{}}) = \ensuremath{\itbox{}\overline{\itbox{x}}\itbox{.e}} \IN \ensuremath{\itbox{C}}, \bullet
     }
     \infrule[MB-Layer]{
     \pmbody(\ensuremath{\itbox{m}},\ensuremath{\itbox{C}},\ensuremath{\itbox{L}_{0}\itbox{}}) = \ensuremath{\itbox{}\overline{\itbox{x}}\itbox{.e}} \IN \ensuremath{\itbox{L}_{1}\itbox{}}
     }{
     \mbody(\ensuremath{\itbox{m}},\ensuremath{\itbox{C}},(\ensuremath{\itbox{}\overline{\itbox{L}}\itbox{{\ensuremath{'}}}};\ensuremath{\itbox{L}_{0}\itbox{}}),\ensuremath{\itbox{}\overline{\itbox{L}}\itbox{}}) = \ensuremath{\itbox{}\overline{\itbox{x}}\itbox{.e}} \IN \ensuremath{\itbox{C}}, (\ensuremath{\itbox{}\overline{\itbox{L}}\itbox{{\ensuremath{'}}}};\ensuremath{\itbox{L}_{0}\itbox{}})
     }
     \infrule[MB-Super]{
     \ensuremath{\itbox{class C \(\triangleleft\) D {\char'173} ... }\overline{\itbox{M}}\itbox{ {\char'175}}} \andalso \ensuremath{\itbox{m}} \not \in \ensuremath{\itbox{}\overline{\itbox{M}}\itbox{}} \andalso
     \mbody(\ensuremath{\itbox{m}},\ensuremath{\itbox{D}},\ensuremath{\itbox{}\overline{\itbox{L}}\itbox{}},\ensuremath{\itbox{}\overline{\itbox{L}}\itbox{}}) = \ensuremath{\itbox{}\overline{\itbox{x}}\itbox{.e}} \IN \ensuremath{\itbox{E}},\ensuremath{\itbox{}\overline{\itbox{L}}\itbox{{\ensuremath{'}}}}
     }{
     \mbody(\ensuremath{\itbox{m}},\ensuremath{\itbox{C}},\bullet,\ensuremath{\itbox{}\overline{\itbox{L}}\itbox{}}) = \ensuremath{\itbox{}\overline{\itbox{x}}\itbox{.e}} \IN \ensuremath{\itbox{E}},\ensuremath{\itbox{}\overline{\itbox{L}}\itbox{{\ensuremath{'}}}}
     }
     \infrule[MB-NextLayer]{
     \pmbody(\ensuremath{\itbox{m}},\ensuremath{\itbox{C}},\ensuremath{\itbox{L}_{0}\itbox{}}) \undf \andalso
     \mbody(\ensuremath{\itbox{m}},\ensuremath{\itbox{C}},\ensuremath{\itbox{}\overline{\itbox{L}}\itbox{{\ensuremath{'}}}},\ensuremath{\itbox{}\overline{\itbox{L}}\itbox{}}) = \ensuremath{\itbox{}\overline{\itbox{x}}\itbox{.e}} \IN \ensuremath{\itbox{D}},\ensuremath{\itbox{}\overline{\itbox{L}}\itbox{{\ensuremath{'}}{\ensuremath{'}}}}
     }{
     \mbody(\ensuremath{\itbox{m}},\ensuremath{\itbox{C}},(\ensuremath{\itbox{}\overline{\itbox{L}}\itbox{{\ensuremath{'}}}};\ensuremath{\itbox{L}_{0}\itbox{}}),\ensuremath{\itbox{}\overline{\itbox{L}}\itbox{}}) = \ensuremath{\itbox{}\overline{\itbox{x}}\itbox{.e}} \IN \ensuremath{\itbox{D}}, \ensuremath{\itbox{}\overline{\itbox{L}}\itbox{{\ensuremath{'}}{\ensuremath{'}}}}
     }
 \caption{\fsname: Lookup functions.}
 \label{fig:CFJ:auxfuns}
\end{figure}

The function $\pmbody(\ensuremath{\itbox{m}},\ensuremath{\itbox{C}},\ensuremath{\itbox{L}})$ returns the parameters and body \ensuremath{\itbox{}\overline{\itbox{x}}\itbox{.e}}
of the partial method \ensuremath{\itbox{C.m}} defined in layer \ensuremath{\itbox{L}}.  It also returns the
layer name \ensuremath{\itbox{L}_{0}\itbox{}} at which \ensuremath{\itbox{C.m}} is found, which will be used in
reduction rules to deal with \ensuremath{\itbox{superproceed}}.  If partial method \ensuremath{\itbox{C.m}}
is not found in \ensuremath{\itbox{L}}, its superlayer \ensuremath{\itbox{L{\ensuremath{'}}}} is searched and so on.
The function $\mbody(\ensuremath{\itbox{m}},\ensuremath{\itbox{C}},\ensuremath{\itbox{}\overline{\itbox{L}}\itbox{}_{1}\itbox{}},\ensuremath{\itbox{}\overline{\itbox{L}}\itbox{}_{2}\itbox{}})$ returns the parameters
and body \ensuremath{\itbox{}\overline{\itbox{x}}\itbox{.e}} of method \ensuremath{\itbox{m}} in class \ensuremath{\itbox{C}} when the search starts from
\ensuremath{\itbox{}\overline{\itbox{L}}\itbox{}_{1}\itbox{}}; the other sequence \ensuremath{\itbox{}\overline{\itbox{L}}\itbox{}_{2}\itbox{}} keeps track of the layers that are
activated when the search initially started.
It also returns \ensuremath{\itbox{D}} and \ensuremath{\itbox{}\overline{\itbox{L}}\itbox{{\ensuremath{'}}{\ensuremath{'}}}} (which will be a prefix of \ensuremath{\itbox{}\overline{\itbox{L}}\itbox{}_{2}\itbox{}}),
information on where the method has been found.  For example, in the
rule \rn{MB-Layer}, which means that the method is found in class \ensuremath{\itbox{C}}
and layer \ensuremath{\itbox{L}_{0}\itbox{}} (or its superlayers), $\mbody$ returns \ensuremath{\itbox{C}} and
$(\ensuremath{\itbox{}\overline{\itbox{L}}\itbox{{\ensuremath{'}}}};\ensuremath{\itbox{L}_{0}\itbox{}})$.  Such information will be used in reduction rules to
deal with \ensuremath{\itbox{proceed}} and \ensuremath{\itbox{super}}.  Readers familiar with ContextFJ will
notice that the rules for $\mbody$ are mostly the same as those in
ContextFJ, except that $\pmbody(\ensuremath{\itbox{m}}, \ensuremath{\itbox{C}}, \ensuremath{\itbox{L}})$ is substituted for
$\PT(\ensuremath{\itbox{m}}, \ensuremath{\itbox{C}}, \ensuremath{\itbox{L}})$ to take layer inheritance into account.  By
reading the four rules defining the two functions in a bottom-up
manner, it is not hard to see the correspondence with the method
lookup procedure, informally described in \secref{method_lookup}.

\paragraph{Reduction}
The operational semantics of \fsname{} is given by a reduction
relation of the form $\reduceto{\ensuremath{\itbox{}\overline{\itbox{L}}\itbox{}}}{\ensuremath{\itbox{e}}}{\ensuremath{\itbox{e{\ensuremath{'}}}}}$, read ``expression
\ensuremath{\itbox{e}} reduces to \ensuremath{\itbox{e{\ensuremath{'}}}} under the activated layers \ensuremath{\itbox{}\overline{\itbox{L}}\itbox{}}.''  The sequence
\ensuremath{\itbox{}\overline{\itbox{L}}\itbox{}} of layer names stands for nesting of \ensuremath{\itbox{with}} and the rightmost
name stands for the most recently activated layer.  As for other
sequences, \ensuremath{\itbox{}\overline{\itbox{L}}\itbox{}} do not contain duplicate names.  Note that we put a
sequence of layer names \ensuremath{\itbox{}\overline{\itbox{L}}\itbox{}} rather than layer instances because layer
instances have no fields and \ensuremath{\itbox{new L()}} and \ensuremath{\itbox{L}} can be identified.  If
we modelled fields in layer instances, we would have to put instances
for layer names.

\begin{figure}
  \leavevmode
  \centering
  \typicallabel{R-InvkArg}
  \infrule[R-Field]{
    \fields(\ensuremath{\itbox{C}}) = \ensuremath{\itbox{}\overline{\itbox{C}}\itbox{ }\overline{\itbox{f}}\itbox{}}
  }{
    \reducetoL{\ensuremath{\itbox{new C(}\overline{\itbox{v}}\itbox{).f}_{i}\itbox{}}}{\ensuremath{\itbox{v}_{i}\itbox{}}}
  }
  \infrule[R-Invk]{
    \reducetoL{\ensuremath{\itbox{new C(}\overline{\itbox{v}}\itbox{)<C,}\overline{\itbox{L}}\itbox{,}\overline{\itbox{L}}\itbox{>.m(}\overline{\itbox{w}}\itbox{)}}}{\ensuremath{\itbox{e{\ensuremath{'}}}}}
  }{
    \reducetoL{\ensuremath{\itbox{new C(}\overline{\itbox{v}}\itbox{).m(}\overline{\itbox{w}}\itbox{)}}}{\ensuremath{\itbox{e{\ensuremath{'}}}}}
  }
  \infrule[R-InvkB]{
    \mbody(\ensuremath{\itbox{m}},\ensuremath{\itbox{C{\ensuremath{'}}}},\ensuremath{\itbox{}\overline{\itbox{L}}\itbox{{\ensuremath{'}}{\ensuremath{'}}}},\ensuremath{\itbox{}\overline{\itbox{L}}\itbox{{\ensuremath{'}}}})= \ensuremath{\itbox{}\overline{\itbox{x}}\itbox{.e}_{0}\itbox{}} \IN \ensuremath{\itbox{C{\ensuremath{'}}{\ensuremath{'}}}},\bullet \andalso
    \ensuremath{\itbox{class C{\ensuremath{'}}{\ensuremath{'}}\(\triangleleft\) D{\char'173}...{\char'175}}}
  }{
    \reduceto{\ensuremath{\itbox{}\overline{\itbox{L}}\itbox{}}}{
      \ensuremath{\itbox{new C(}\overline{\itbox{v}}\itbox{)<C{\ensuremath{'}},}\overline{\itbox{L}}\itbox{{\ensuremath{'}}{\ensuremath{'}},}\overline{\itbox{L}}\itbox{{\ensuremath{'}}>.m(}\overline{\itbox{w}}\itbox{)}}
    }{} \\ \qquad \qquad
    \left[\begin{array}{l@{/}l}
        \ensuremath{\itbox{new C(}\overline{\itbox{v}}\itbox{)}} & \ensuremath{\itbox{this}}, \\
        \ensuremath{\itbox{}\overline{\itbox{w}}\itbox{}} & \ensuremath{\itbox{}\overline{\itbox{x}}\itbox{}}, \\
        \ensuremath{\itbox{new C(}\overline{\itbox{v}}\itbox{)<D,}\overline{\itbox{L}}\itbox{{\ensuremath{'}},}\overline{\itbox{L}}\itbox{{\ensuremath{'}}>}} & \ensuremath{\itbox{super}}
      \end{array}\right]\ensuremath{\itbox{e}_{0}\itbox{}}
  }
  \infrule[R-InvkP]{
    \mbody(\ensuremath{\itbox{m}},\ensuremath{\itbox{C{\ensuremath{'}}}},\ensuremath{\itbox{}\overline{\itbox{L}}\itbox{{\ensuremath{'}}{\ensuremath{'}}}},\ensuremath{\itbox{}\overline{\itbox{L}}\itbox{{\ensuremath{'}}}})= \ensuremath{\itbox{}\overline{\itbox{x}}\itbox{.e}_{0}\itbox{}} \IN \ensuremath{\itbox{C{\ensuremath{'}}{\ensuremath{'}}}},(\ensuremath{\itbox{}\overline{\itbox{L}}\itbox{{\ensuremath{'}}{\ensuremath{'}}{\ensuremath{'}}}};\ensuremath{\itbox{L}_{0}\itbox{}}) \andalso
    \ensuremath{\itbox{class C{\ensuremath{'}}{\ensuremath{'}}\(\triangleleft\) D{\char'173}...{\char'175}}} \andalso
    \ensuremath{\itbox{layer L}_{0}\itbox{\(\triangleleft\) L}_{1}\itbox{}}
  }{
    \reduceto{\ensuremath{\itbox{}\overline{\itbox{L}}\itbox{}}}{
      \ensuremath{\itbox{new C(}\overline{\itbox{v}}\itbox{)<C{\ensuremath{'}},}\overline{\itbox{L}}\itbox{{\ensuremath{'}}{\ensuremath{'}},}\overline{\itbox{L}}\itbox{{\ensuremath{'}}>.m(}\overline{\itbox{w}}\itbox{)}}
    }{} \\  \qquad \qquad \qquad \quad
    \left[
      \begin{array}{l@{/}l}
        \ensuremath{\itbox{new C(}\overline{\itbox{v}}\itbox{)}} & \ensuremath{\itbox{this}}, \\
        \ensuremath{\itbox{}\overline{\itbox{w}}\itbox{}} & \ensuremath{\itbox{}\overline{\itbox{x}}\itbox{}}, \\
        \ensuremath{\itbox{new C(}\overline{\itbox{v}}\itbox{)<C{\ensuremath{'}}{\ensuremath{'}},}\overline{\itbox{L}}\itbox{{\ensuremath{'}}{\ensuremath{'}}{\ensuremath{'}},}\overline{\itbox{L}}\itbox{{\ensuremath{'}}>.m}} & \ensuremath{\itbox{proceed}}, \\
        \ensuremath{\itbox{new C(}\overline{\itbox{v}}\itbox{)<D,}\overline{\itbox{L}}\itbox{{\ensuremath{'}},}\overline{\itbox{L}}\itbox{{\ensuremath{'}}>}} & \ensuremath{\itbox{super}}, \\
        \ensuremath{\itbox{new C(}\overline{\itbox{v}}\itbox{)<C{\ensuremath{'}}{\ensuremath{'}},L}_{1}\itbox{,(}\overline{\itbox{L}}\itbox{{\ensuremath{'}}{\ensuremath{'}}{\ensuremath{'}};L}_{0}\itbox{),}\overline{\itbox{L}}\itbox{{\ensuremath{'}}>.m}} & \ensuremath{\itbox{superproceed}}
      \end{array}\right]\ensuremath{\itbox{e}_{0}\itbox{}}
  }
  \infrule[R-InvkSP]{
    \pmbody(\ensuremath{\itbox{m}},\ensuremath{\itbox{C{\ensuremath{'}}}},\ensuremath{\itbox{L}_{1}\itbox{}})= \ensuremath{\itbox{}\overline{\itbox{x}}\itbox{.e}_{0}\itbox{}} \IN \ensuremath{\itbox{L}_{2}\itbox{}} \andalso
    \ensuremath{\itbox{class C{\ensuremath{'}}\(\triangleleft\) D{\char'173}...{\char'175}}} \andalso
    \ensuremath{\itbox{layer L}_{2}\itbox{\(\triangleleft\) L}_{3}\itbox{}}
  }{
    \reduceto{\ensuremath{\itbox{}\overline{\itbox{L}}\itbox{}}}{
      \ensuremath{\itbox{new C(}\overline{\itbox{v}}\itbox{)<C{\ensuremath{'}},L}_{1}\itbox{,(}\overline{\itbox{L}}\itbox{{\ensuremath{'}}{\ensuremath{'}};L}_{0}\itbox{),}\overline{\itbox{L}}\itbox{{\ensuremath{'}}>.m(}\overline{\itbox{w}}\itbox{)}}
    }{} \\  \qquad \qquad \qquad \quad
    \left[
      \begin{array}{l@{/}l}
        \ensuremath{\itbox{new C(}\overline{\itbox{v}}\itbox{)}} & \ensuremath{\itbox{this}}, \\
        \ensuremath{\itbox{}\overline{\itbox{w}}\itbox{}} & \ensuremath{\itbox{}\overline{\itbox{x}}\itbox{}}, \\
        \ensuremath{\itbox{new C(}\overline{\itbox{v}}\itbox{)<C{\ensuremath{'}},}\overline{\itbox{L}}\itbox{{\ensuremath{'}}{\ensuremath{'}},}\overline{\itbox{L}}\itbox{{\ensuremath{'}}>.m}} & \ensuremath{\itbox{proceed}}, \\
        \ensuremath{\itbox{new C(}\overline{\itbox{v}}\itbox{)<D,}\overline{\itbox{L}}\itbox{{\ensuremath{'}},}\overline{\itbox{L}}\itbox{{\ensuremath{'}}>}} & \ensuremath{\itbox{super}}, \\
        \ensuremath{\itbox{new C(}\overline{\itbox{v}}\itbox{)<C{\ensuremath{'}},L}_{3}\itbox{,(}\overline{\itbox{L}}\itbox{{\ensuremath{'}}{\ensuremath{'}};L}_{0}\itbox{),}\overline{\itbox{L}}\itbox{{\ensuremath{'}}>.m}} & \ensuremath{\itbox{superproceed}}
      \end{array}\right]\ensuremath{\itbox{e}_{0}\itbox{}}
  }

  \caption{\fsname: Reduction Rules 1.}
  \label{fig:CFJ:reduction1}
\end{figure}

\begin{figure}
  \leavevmode
  \centering
  \typicallabel{R-InvkArgAAAA}
  \infrule[RC-With]{
    \with(\ensuremath{\itbox{L}},\ensuremath{\itbox{}\overline{\itbox{L}}\itbox{}}) = \ensuremath{\itbox{}\overline{\itbox{L}}\itbox{{\ensuremath{'}}}} \andalso
    \reduceto{\ensuremath{\itbox{}\overline{\itbox{L}}\itbox{{\ensuremath{'}}}}}{\ensuremath{\itbox{e}}}{\ensuremath{\itbox{e{\ensuremath{'}}}}}
  }{
    \reduceto{\ensuremath{\itbox{}\overline{\itbox{L}}\itbox{}}}{\ensuremath{\itbox{with new L() e}}}{\ensuremath{\itbox{with new L() e{\ensuremath{'}}}}}
  }
  \infrule[RC-Swap]{
    \swap(\ensuremath{\itbox{L}},\ensuremath{\itbox{L}_{{sw}}\itbox{}}, \ensuremath{\itbox{}\overline{\itbox{L}}\itbox{}}) = \ensuremath{\itbox{}\overline{\itbox{L}}\itbox{{\ensuremath{'}}}} \andalso
    \reduceto{\ensuremath{\itbox{}\overline{\itbox{L}}\itbox{{\ensuremath{'}}}}}{\ensuremath{\itbox{e}}}{\ensuremath{\itbox{e{\ensuremath{'}}}}}
  }{
    \reducetoL{\ensuremath{\itbox{swap (new L(),L}_{{sw}}\itbox{) e}}}{\ensuremath{\itbox{swap (new L(),L}_{{sw}}\itbox{) e{\ensuremath{'}}}}}
  }
  \infrule[RC-WithArg]{
    \reduceto{\ensuremath{\itbox{}\overline{\itbox{L}}\itbox{}}}{\ensuremath{\itbox{e}_{l}\itbox{}}}{\ensuremath{\itbox{e}_{l}\itbox{{\ensuremath{'}}}}}
  }{
    \reduceto{\ensuremath{\itbox{}\overline{\itbox{L}}\itbox{}}}{\ensuremath{\itbox{with e}_{l}\itbox{ e}}}{\ensuremath{\itbox{with e}_{l}\itbox{{\ensuremath{'}} e}}}
  }
  \infrule[RC-SwapArg]{
    \reduceto{\ensuremath{\itbox{}\overline{\itbox{L}}\itbox{}}}{\ensuremath{\itbox{e}_{l}\itbox{}}}{\ensuremath{\itbox{e}_{l}\itbox{{\ensuremath{'}}}}}
  }{
    \reduceto{\ensuremath{\itbox{}\overline{\itbox{L}}\itbox{}}}{\ensuremath{\itbox{swap (e}_{l}\itbox{,L}_{{sw}}\itbox{) e}}}{\ensuremath{\itbox{swap (e}_{l}\itbox{{\ensuremath{'}},L}_{{sw}}\itbox{) e}}}
  }
  \infrule[R-WithVal]{
  }{
    \reducetoL{\ensuremath{\itbox{with new L() v}}}{\ensuremath{\itbox{v}}}
  }
  \infrule[R-SwapVal]{
  }{
    \reducetoL{\ensuremath{\itbox{swap (new L(), L}_{{sw}}\itbox{) v}}}{\ensuremath{\itbox{v}}}
  }
  \infrule[RC-Field]{
    \reducetoL{\ensuremath{\itbox{e}_{0}\itbox{}}}{\ensuremath{\itbox{e}_{0}\itbox{{\ensuremath{'}}}}}
  }{
    \reducetoL{\ensuremath{\itbox{e}_{0}\itbox{.f}}}{\ensuremath{\itbox{e}_{0}\itbox{{\ensuremath{'}}.f}}}
  }
  \infrule[RC-InvkArg]{
    \reducetoL{\ensuremath{\itbox{e}_{i}\itbox{}}}{\ensuremath{\itbox{e}_{i}\itbox{{\ensuremath{'}}}}}
  }{
    \reducetoL{\ensuremath{\itbox{e}_{0}\itbox{.m(..,e}_{i}\itbox{,..)}}}{\ensuremath{\itbox{e}_{0}\itbox{.m(..,e}_{i}\itbox{{\ensuremath{'}},..)}}}
  }    
  \infrule[RC-InvkRecv]{
    \reducetoL{\ensuremath{\itbox{e}_{0}\itbox{}}}{\ensuremath{\itbox{e}_{0}\itbox{{\ensuremath{'}}}}}
  }{
    \reducetoL{\ensuremath{\itbox{e}_{0}\itbox{.m(}\overline{\itbox{e}}\itbox{)}}}{\ensuremath{\itbox{e}_{0}\itbox{{\ensuremath{'}}.m(}\overline{\itbox{e}}\itbox{)}}}
  }
  \infrule[RC-New]{
    \reducetoL{\ensuremath{\itbox{e}_{i}\itbox{}}}{\ensuremath{\itbox{e}_{i}\itbox{{\ensuremath{'}}}}}
  }{
    \reducetoL{\ensuremath{\itbox{new C(..,e}_{i}\itbox{,..)}}}{\ensuremath{\itbox{new C(..,e}_{i}\itbox{{\ensuremath{'}},..)}}}
  }
  \infrule[RC-InvkAArg1]{
    \reducetoL{\ensuremath{\itbox{e}_{i}\itbox{}}}{\ensuremath{\itbox{e}_{i}\itbox{{\ensuremath{'}}}}}
  }{
    \reducetoL{\ensuremath{\itbox{new C(}\overline{\itbox{v}}\itbox{)<C{\ensuremath{'}},}\overline{\itbox{L}}\itbox{{\ensuremath{'}}{\ensuremath{'}},}\overline{\itbox{L}}\itbox{{\ensuremath{'}}>.m(..,e}_{i}\itbox{,..)}}}
	          {\ensuremath{\itbox{new C(}\overline{\itbox{v}}\itbox{)<C{\ensuremath{'}},}\overline{\itbox{L}}\itbox{{\ensuremath{'}},}\overline{\itbox{L}}\itbox{{\ensuremath{'}}>.m(..,e}_{i}\itbox{{\ensuremath{'}},..)}}}
  }
  \infrule[RC-InvkAArg2]{
    \reducetoL{\ensuremath{\itbox{e}_{i}\itbox{}}}{\ensuremath{\itbox{e}_{i}\itbox{{\ensuremath{'}}}}}
  }{
    \reducetoL{\ensuremath{\itbox{new C(}\overline{\itbox{v}}\itbox{)<C{\ensuremath{'}},L,}\overline{\itbox{L}}\itbox{{\ensuremath{'}}{\ensuremath{'}},}\overline{\itbox{L}}\itbox{{\ensuremath{'}}>.m(..,e}_{i}\itbox{,..)}}}
	          {\ensuremath{\itbox{new C(}\overline{\itbox{v}}\itbox{)<C{\ensuremath{'}},L,}\overline{\itbox{L}}\itbox{{\ensuremath{'}},}\overline{\itbox{L}}\itbox{{\ensuremath{'}}>.m(..,e}_{i}\itbox{{\ensuremath{'}},..)}}}
  }
  \caption{\fsname: Reduction Rules 2.}
  \label{fig:CFJ:reduction2}
\end{figure}

Reduction rules are found in \figref{CFJ:reduction1} and \figref{CFJ:reduction2}.  \rn{R-Field} is
for field access and four rules \rn{R-InvkXX} are for method
invocation: \rn{R-Invk} initializes the cursor according to the
currently activated layers \ensuremath{\itbox{}\overline{\itbox{L}}\itbox{}}; the rules \rn{R-InvkB} and 
\rn{R-InvkP} represent invocation of a base and partial method, respectively, depending on which kind is found by $\mbody$; the rule \rn{R-InvkSP} deals with
the case where the cursor in the receiver object is a quadruple, which
occurs when the entire expression was a \ensuremath{\itbox{superproceed}} call.
In the last case, $\pmbody$ is used to find a method body because \ensuremath{\itbox{superproceed}} denotes a partial method in one of the superlayers.

Note how \ensuremath{\itbox{this}}, \ensuremath{\itbox{proceed}}, \ensuremath{\itbox{super}} and \ensuremath{\itbox{superproceed}} are replaced
with the receiver with different cursor locations.  For \ensuremath{\itbox{proceed}}, the
cursor of triple moves one layer to the left and, for \ensuremath{\itbox{super}}, the
cursor moves one level up in the direction of class inheritance,
resetting the layers.  Thanks to Sanity Condition (8), the superclass
\ensuremath{\itbox{D}} is always found.  If we allowed a layer to add baseless partial
methods to \ensuremath{\itbox{Object}}, we would have to have special rules, in which
there is no substitution for \ensuremath{\itbox{super}} (and typing rules to disallow the
use of \ensuremath{\itbox{super}} in such partial methods).  Igarashi et
al.~\cite{DynamicLayer2012contextfj} (as well as the conference
version of this article~\cite{inoue2015sound}) have overlooked this
subtlety.  For \ensuremath{\itbox{superproceed}}, the cursor moves one level up in the
direction of layer inheritance (generating a quadruple from a triple
in \rn{R-InvkP}).
For example, we show how cursors of a triple and a quadruple work
using example in \secref{method_lookup}.  Let \ensuremath{\itbox{e}} be \ensuremath{\itbox{new C().m()}}.
Then, the derivation of $\reduceto{\ensuremath{\itbox{L1}};\ensuremath{\itbox{L2}};\ensuremath{\itbox{L3}}}{\ensuremath{\itbox{e}}}{\ensuremath{\itbox{e{\ensuremath{'}}}}}$ will take the form:
\[
        \infer[\rn{R-Invk}]{
          \reduceto{\ensuremath{\itbox{L1}};\ensuremath{\itbox{L2}};\ensuremath{\itbox{L3}}}{\ensuremath{\itbox{new C().m()}}}{\ensuremath{\itbox{e{\ensuremath{'}}}}}
        }{
          \infer[\rn{R-InvkP}]{
             \reduceto{\ensuremath{\itbox{L1}};\ensuremath{\itbox{L2}};\ensuremath{\itbox{L3}}}{\ensuremath{\itbox{new C<C,(L1;L2;L3),(L1;L2;L3)>().m()}}}{\ensuremath{\itbox{e{\ensuremath{'}}}}}
          }{
             \mbody(\ensuremath{\itbox{m}},\ensuremath{\itbox{C}},(\ensuremath{\itbox{L1;L2;L3}}),(\ensuremath{\itbox{L1;L2;L3}})) = \bullet\ensuremath{\itbox{.e4 in C}}, (\ensuremath{\itbox{L1;L2;L3}})
          }
        }
\]
where \ensuremath{\itbox{e{\ensuremath{'}}}} is
$$
 \left[
   \begin{array}{l@/l}
     \ensuremath{\itbox{new C<C,(L1;L2;L3),(L1;L2;L3)>()}} & \ensuremath{\itbox{this}} \\
     \ensuremath{\itbox{new C<D,(L1;L2;L3),(L1;L2;L3)>()}} & \ensuremath{\itbox{super}} \\
     \ensuremath{\itbox{new C<C,(L2;L3),(L1;L2;L3)>().m}} & \ensuremath{\itbox{proceed}} \\
     \ensuremath{\itbox{new C<C,L4,(L1;L2;L3),(L1;L2;L3)>().m}} & \ensuremath{\itbox{superproceed}} \\
   \end{array}
 \right] \ensuremath{\itbox{e4}} .
$$

Now, we go back to \figref{CFJ:reduction2}.  The rules \rn{RC-With} and \rn{RC-Swap} express layer activation and swapping, respectively.  The auxiliary functions
$\with(\ensuremath{\itbox{L}}, \ensuremath{\itbox{}\overline{\itbox{L}}\itbox{}})$ and $\swap(\ensuremath{\itbox{L}} ,\ensuremath{\itbox{L}_{{sw}}\itbox{}},\ensuremath{\itbox{}\overline{\itbox{L}}\itbox{}})$ for context
manipulation are defined by:
\[
\begin{array}{lcr@{\qquad}lcr}
  \with(\ensuremath{\itbox{L}},\ensuremath{\itbox{}\overline{\itbox{L}}\itbox{}}) &=& (\ensuremath{\itbox{}\overline{\itbox{L}}\itbox{}} \setminus \set{\ensuremath{\itbox{L}}});\ensuremath{\itbox{L}} &
  \swap(\ensuremath{\itbox{L}},\ensuremath{\itbox{L}_{{sw}}\itbox{}},\ensuremath{\itbox{}\overline{\itbox{L}}\itbox{}}) &=& (\ensuremath{\itbox{}\overline{\itbox{L}}\itbox{}}\setminus 
  \set{\ensuremath{\itbox{L{\ensuremath{'}}}} \mid \ensuremath{\itbox{L{\ensuremath{'}} \(\triangleleft\)}^{\itbox{{*}}}\itbox{ L}_{{sw}}\itbox{}}});\ensuremath{\itbox{L}}
\end{array}
\]
The function $\with$ removes \ensuremath{\itbox{L}} (if exists) from layer
sequence \ensuremath{\itbox{}\overline{\itbox{L}}\itbox{}} and adds \ensuremath{\itbox{L}} to the end of \ensuremath{\itbox{}\overline{\itbox{L}}\itbox{}} and $\swap$ removes
all sublayers of \ensuremath{\itbox{L}_{{sw}}\itbox{}} from \ensuremath{\itbox{}\overline{\itbox{L}}\itbox{}}, and adds \ensuremath{\itbox{L}} to the
end of \ensuremath{\itbox{}\overline{\itbox{L}}\itbox{}}.\footnote{The symbol $\setminus$ is usually used to
  remove entities from a set, but we informally use it for a sequence
  here.}  The type system checks that \ensuremath{\itbox{L}_{{sw}}\itbox{}} is a \ensuremath{\itbox{swappable}} layer.  
Based on the
above, the rule \rn{RC-With} means that \ensuremath{\itbox{with (new L()) e}} executes
\ensuremath{\itbox{e}} with \ensuremath{\itbox{L}} activated (as the first layer).  The rule \rn{RC-Swap} is
similar; it means that
\ensuremath{\itbox{swap (new L(), L}_{{sw}}\itbox{) e}} executes by deactivating all sublayers of
\ensuremath{\itbox{L}_{{sw}}\itbox{}} and activating layer \ensuremath{\itbox{L}}.  
For example, we can derive:
\[
\infer[\rn{R-With}]{
  \reduceto{\bullet}{\ensuremath{\itbox{with new L1() (with new L2() (with new L3() e))}}}{\ensuremath{\itbox{e{\ensuremath{'}}}}}
}{
  \infer[\rn{R-With}]{
     \reduceto{\ensuremath{\itbox{L1}}}{\ensuremath{\itbox{with new L2() (with new L3() e)}}}{\ensuremath{\itbox{e{\ensuremath{'}}}}}
  }{
     \infer[\rn{R-With}]{
        \reduceto{\ensuremath{\itbox{L1}};\ensuremath{\itbox{L2}}}{\ensuremath{\itbox{with new L3() e}}}{\ensuremath{\itbox{e{\ensuremath{'}}}}}
     }{
        \infer[\rn{R-Invk}]{
           \reduceto{\ensuremath{\itbox{L1}};\ensuremath{\itbox{L2}};\ensuremath{\itbox{L3}}}{\ensuremath{\itbox{new C().m()}}}{\ensuremath{\itbox{e{\ensuremath{'}}}}}
        }{\vdots}
     }
  }
}
\]

The rules \rn{RC-WithArg} and \rn{RC-SwapArg} are for reduction of
expression \ensuremath{\itbox{e}_{l}\itbox{}} that is expected to become a layer instance.  Rules
\rn{RC-WithVal} and \rn{RC-SwapVal} are for final reduction steps of
\ensuremath{\itbox{with}} and \ensuremath{\itbox{swap}} blocks, respectively, that pass the value \ensuremath{\itbox{v}} as it is.  Other
rules for congruence are same as those of ContextFJ: \fsname{} reduction
is call by value but the order of reduction of subexpressions is
unspecified.

\subsection{Type System}

As usual, the role of a type system is to ensure the absence of a
certain class of run-time errors.  Here, they are ``field-not-found''
and ``method-not-found'' errors, including the failure of \ensuremath{\itbox{proceed}},
\ensuremath{\itbox{superproceed}} or \ensuremath{\itbox{super}} calls.

As discussed in the last section, the type system takes information on
activated layers at every program point into account.  We approximate
such information by a set \(\Lambda\)
of layer names, which mean that, for any layer in \(\Lambda\),
an instance of one of its sublayers has to be activated at run time.
This set gives underapproximation of activated layers; other layers
might be activated.  Activated layers are approximated by sets rather
than sequences because the type system is mainly concerned about
access to fields and methods and the order of activated layers does
not influence which fields and methods are accessible.

\begin{sloppypar}
In our type system, a type judgment for an expression is of the form
$\LLGp \ensuremath{\itbox{e}} : \ensuremath{\itbox{T}}$, where $\Gamma$ is a type environment, which records
types of variables, and $\Loc$ stands for where \ensuremath{\itbox{e}} appears, namely, a
method in a class (denoted by \ensuremath{\itbox{C.m}}) or a partial method in a layer (denoted by \ensuremath{\itbox{L.C.m}}).  For example, the
\ensuremath{\itbox{proceed}} call in the body of the partial method \ensuremath{\itbox{People.sayWeather()}} of
layer \ensuremath{\itbox{Thunder}} is typed as follows:
\begin{align*}
  \ensuremath{\itbox{Thunder.People.sayWeather}}; \{\ensuremath{\itbox{Weather}},\ensuremath{\itbox{Thunder}}\} ; \ensuremath{\itbox{this}} : \ensuremath{\itbox{People}} 
  \p \ensuremath{\itbox{proceed()}} : \ensuremath{\itbox{Text}}
\end{align*}
The layer name set $\{\ensuremath{\itbox{Weather}}, \ensuremath{\itbox{Thunder}}\}$ comes from the fact 
that \ensuremath{\itbox{Thunder}} \ensuremath{\itbox{requires}} \ensuremath{\itbox{Weather}}.  \ensuremath{\itbox{Thunder}} is also included
because \ensuremath{\itbox{Thunder}} (or one of its sublayers) is obviously activated when a partial method
defined in this very layer is executed.
\end{sloppypar}

We start with the definitions of two kinds of layer subtyping
discussed in the last section and proceed to functions to look up
method types and typing rules.

\paragraph{Subtyping}

We define subtyping \ensuremath{\itbox{C \(\Leq\) D}} for class types, weak subtyping
\(\ensuremath{\itbox{L}_{1}\itbox{}} \LEQ_w \ensuremath{\itbox{L}_{2}\itbox{}}\) and normal subtyping \(\ensuremath{\itbox{L}_{1}\itbox{}} \LEQ \ensuremath{\itbox{L}_{2}\itbox{}}\) for
layer types by the rules in \figref{CFJ:subtyping}.  Class subtyping
\ensuremath{\itbox{C \(\Leq\) D}} is defined as the reflexive and transitive closure of \ensuremath{\itbox{\(\triangleleft\)}},
just as FJ.  Weak layer subtyping is also the reflexive and transitive
closure of \ensuremath{\itbox{\(\triangleleft\)}}.  We extend it to the relation
\(\Lambda_1 \LEQ_w \Lambda_2\) between layer name sets by
\rn{LSS-Intro}: \(\Lambda_1 \LEQ_w \Lambda_2\) if and only if for
every element in \(\Lambda_2\), there must exist a sublayer of it in
\(\Lambda_1\).  It is used to check activated layers \(\Lambda_1\)
satisfy the requirement \(\Lambda_2\) given by a \ensuremath{\itbox{requires}} clause in
typechecking a layer activation.  Normal subtyping is almost the
reflexive and transitive closure of \ensuremath{\itbox{\(\triangleleft\)}} but there is one additional
condition: for \ensuremath{\itbox{L}_{1}\itbox{}} to be a normal subtype of \ensuremath{\itbox{L}_{2}\itbox{}}, the layers they
\ensuremath{\itbox{require}} must be the same (\rn{LS-Extends}).  Obviously, if
\(\ensuremath{\itbox{L}_{1}\itbox{}} \LEQ \ensuremath{\itbox{L}_{2}\itbox{}}\), then \(\ensuremath{\itbox{L}_{1}\itbox{}} \LEQ_w \ensuremath{\itbox{L}_{2}\itbox{}}\) (but not vice
versa).

\begin{figure}[ht]
  \begin{center}
    \begin{tabular}{c}
      \begin{minipage}{0.47\hsize}
        \begin{flushleft}
          \typicallabel{LS-Extends}
          \noindent
          \fbox{class subtyping $\LEQ$}
          \infrule[CL-Refl]{
          }{
            \ensuremath{\itbox{C}} \LEQ \ensuremath{\itbox{C}}
          }
          \infrule[CL-Trans]{
            \ensuremath{\itbox{C}} \LEQ \ensuremath{\itbox{D}} \andalso
            \ensuremath{\itbox{D}} \LEQ \ensuremath{\itbox{E}}
          }{
            \ensuremath{\itbox{C}} \LEQ \ensuremath{\itbox{E}}
          }
          \infrule[CL-Extends]{
            \ensuremath{\itbox{class  C \(\triangleleft\) D {\char'173}...{\char'175}}} 
          }{
              \ensuremath{\itbox{C}} \LEQ \ensuremath{\itbox{D}}
          }
          \fbox{normal layer subtyping $\LEQ$}
          \infrule[LS-Refl]{
          }{
            \ensuremath{\itbox{L}} \LEQ \ensuremath{\itbox{L}}
          }
          \infrule[LS-Trans]{
            \ensuremath{\itbox{L}_{1}\itbox{}} \LEQ \ensuremath{\itbox{L}_{2}\itbox{}} \andalso
            \ensuremath{\itbox{L}_{2}\itbox{}} \LEQ \ensuremath{\itbox{L}_{3}\itbox{}}
          }{
            \ensuremath{\itbox{L}_{1}\itbox{}} \LEQ \ensuremath{\itbox{L}_{3}\itbox{}}
          }
          \infrule[LS-Base]{
            \ensuremath{\itbox{L \(\triangleleft\) Base}} \andalso
            \ensuremath{\itbox{L req }} \emptyset
          }{
            \ensuremath{\itbox{L}} \LEQ \ensuremath{\itbox{Base}}
          }  
          \infrule[LS-Extends]{
            \ensuremath{\itbox{L}_{1}\itbox{ \(\triangleleft\) L}_{2}\itbox{}} \andalso
            \ensuremath{\itbox{L}_{1}\itbox{ req }} \LSet \\
            \ensuremath{\itbox{L}_{2}\itbox{ req }} \LSet
          }{
            \ensuremath{\itbox{L}_{1}\itbox{}} \LEQ \ensuremath{\itbox{L}_{2}\itbox{}}
          }
          \bigskip
          \smallskip
        \end{flushleft}
      \end{minipage}
      \quad      
      \begin{minipage}{0.47\hsize}
        \begin{flushleft}
          \typicallabel{LSw-Extends}
          \fbox{weak layer subtyping $\LEQ_w$}
          \infrule[LSw-Refl]{
          }{
            \ensuremath{\itbox{L}} \LEQ_w \ensuremath{\itbox{L}}
          }
          \infrule[LSw-Trans]{
            \ensuremath{\itbox{L}_{1}\itbox{}} \LEQ_w \ensuremath{\itbox{L}_{2}\itbox{}} \andalso
            \ensuremath{\itbox{L}_{2}\itbox{}} \LEQ_w \ensuremath{\itbox{L}_{3}\itbox{}}
          }{
            \ensuremath{\itbox{L}_{1}\itbox{}} \LEQ_w \ensuremath{\itbox{L}_{3}\itbox{}}
          }
          \infrule[LSw-Extends]{
            \ensuremath{\itbox{L}_{1}\itbox{ \(\triangleleft\) L}_{2}\itbox{}}
          }{
            \ensuremath{\itbox{L}_{1}\itbox{}} \LEQ_w \ensuremath{\itbox{L}_{2}\itbox{}}
          }
          \fbox{layer set subtyping}
          \infrule[LSS-Intro]{
            \forall \ensuremath{\itbox{L}_{0}\itbox{}} \in \LSet_0. \exists \ensuremath{\itbox{L}_{1}\itbox{}} \in \LSet_1 \text{ s.t. } \ensuremath{\itbox{L}_{1}\itbox{}} \LEQ_w \ensuremath{\itbox{L}_{0}\itbox{}}
          }{
            \LSet_1 \LEQ_w \LSet_0
          }
        \end{flushleft}
      \end{minipage}
    \end{tabular}
  \end{center}
  \caption{\fsname: Subtyping Relations.}
  \label{fig:CFJ:subtyping}
\end{figure}

\paragraph{Method type lookup}

Similarly to $\pmbody$ and $\mbody$, we define two auxiliary functions
$\pmtype$ and $\mtype$ to look up the signature \ensuremath{\itbox{}\overline{\itbox{T}}\itbox{\(\rightarrow\)T}_{0}\itbox{}} (consisting of
argument type \ensuremath{\itbox{}\overline{\itbox{T}}\itbox{}} and a return type \ensuremath{\itbox{T}_{0}\itbox{}}) of a (partial) method.
$\pmtype(\ensuremath{\itbox{m}},\ensuremath{\itbox{C}},\ensuremath{\itbox{L}})$ returns the signature of \ensuremath{\itbox{C.m}} in \ensuremath{\itbox{L}} (or one of its
superlayers).
$\mtype(\ensuremath{\itbox{m}},\ensuremath{\itbox{C}},\LSet_1,\LSet_2)$ returns the type of \ensuremath{\itbox{m}} in \ensuremath{\itbox{C}} under
the assumption that $\LSet_1$ is activated.  The other layer set
$\LSet_2$ ($\supseteq \LSet_1$) is used when the lookup goes on to a
superclass.  If $\LSet_1$ and $\LSet_2$ are the same,
which is mostly the case, we write $\mtype(\ensuremath{\itbox{m}},\ensuremath{\itbox{C}},\LSet_1)$.

\begin{figure}[ht]
  \typicallabel{PMT-PMethod}
  \fbox{$\pmtype(\ensuremath{\itbox{m}},\ensuremath{\itbox{C}},\ensuremath{\itbox{L}}) = \ensuremath{\itbox{}\overline{\itbox{T}}\itbox{\(\rightarrow\)T}_{0}\itbox{}}$}
  \infrule[PMT-Layer]{
    \LT(\ensuremath{\itbox{L}})(\ensuremath{\itbox{C.m}}) = \ensuremath{\itbox{T}_{0}\itbox{ C.m(}\overline{\itbox{T}}\itbox{ }\overline{\itbox{x}}\itbox{){\char'173} return e; {\char'175}}}
  }{
    \pmtype(\ensuremath{\itbox{m}},\ensuremath{\itbox{C}},\ensuremath{\itbox{L}}) = \ensuremath{\itbox{}\overline{\itbox{T}}\itbox{\(\rightarrow\)T}_{0}\itbox{}}
  }
  \infrule[PMT-Super]{
    \LT(\ensuremath{\itbox{L}})(\ensuremath{\itbox{C.m}}) \undf \andalso
    \ensuremath{\itbox{L \(\triangleleft\) L{\ensuremath{'}}}} \andalso
    \pmtype(\ensuremath{\itbox{m}},\ensuremath{\itbox{C}},\ensuremath{\itbox{L{\ensuremath{'}}}}) = \ensuremath{\itbox{}\overline{\itbox{T}}\itbox{\(\rightarrow\)T}_{0}\itbox{}}
  }{
    \pmtype(\ensuremath{\itbox{m}},\ensuremath{\itbox{C}},\ensuremath{\itbox{L}}) = \ensuremath{\itbox{}\overline{\itbox{T}}\itbox{\(\rightarrow\)T}_{0}\itbox{}}
  }
  \fbox{$\mtype(\ensuremath{\itbox{m}},\ensuremath{\itbox{C}},\LSet_1,\LSet_2) = \ensuremath{\itbox{}\overline{\itbox{T}}\itbox{\(\rightarrow\)T}_{0}\itbox{}}$}
  \infrule[MT-Class]{
    \ensuremath{\itbox{class C \(\triangleleft\) D {\char'173}... T}_{0}\itbox{ m(}\overline{\itbox{T}}\itbox{ }\overline{\itbox{x}}\itbox{){\char'173} return e; {\char'175} ...{\char'175}}}
  }{
    \mtype(\ensuremath{\itbox{m}},\ensuremath{\itbox{C}},\LSet_1,\LSet_2) = \ensuremath{\itbox{}\overline{\itbox{T}}\itbox{\(\rightarrow\)T}_{0}\itbox{}}
  }
  \infrule[MT-PMethod]{
    \exists \ensuremath{\itbox{L}} \in \LSet_1.
    \pmtype(\ensuremath{\itbox{m}}, \ensuremath{\itbox{C}}, \ensuremath{\itbox{L}})= \ensuremath{\itbox{}\overline{\itbox{T}}\itbox{\(\rightarrow\)T}_{0}\itbox{}}
  }{
    \mtype(\ensuremath{\itbox{m}},\ensuremath{\itbox{C}},\LSet_1, \LSet_2) = \ensuremath{\itbox{}\overline{\itbox{T}}\itbox{\(\rightarrow\)T}_{0}\itbox{}}
  }
  \infrule[MT-Super]{
    \ensuremath{\itbox{class C \(\triangleleft\) D {\char'173}... }\overline{\itbox{M}}\itbox{ {\char'175}}} \andalso
    \ensuremath{\itbox{m}} \not \in \ensuremath{\itbox{}\overline{\itbox{M}}\itbox{}} \\
    \forall \ensuremath{\itbox{L}} \in \LSet_1.
    \pmtype (\ensuremath{\itbox{m}}, \ensuremath{\itbox{C}}, \ensuremath{\itbox{L}}) \undf \andalso
    \mtype(\ensuremath{\itbox{m}},\ensuremath{\itbox{D}},\LSet_2,\LSet_2) = \ensuremath{\itbox{}\overline{\itbox{T}}\itbox{\(\rightarrow\)T}_{0}\itbox{}}
  }{
    \mtype(\ensuremath{\itbox{m}},\ensuremath{\itbox{C}},\LSet_1,\LSet_2) = \ensuremath{\itbox{}\overline{\itbox{T}}\itbox{\(\rightarrow\)T}_{0}\itbox{}}
  }
  \caption{\fsname: Method Type Lookup functions.}
  \label{fig:CFJ:lookup_method_type}
\end{figure}

These rules by themselves do not define \(\mtype\) as a function,
because different layers may contain partial methods of the same name
with different signatures.  So, precisely speaking, it should rather
be understood as a relation; in a well-typed program, it will behave
as a function, though.

\paragraph{Expression Typing}

As mentioned already, the type judgment for expressions is of the form
$\LLGp \ensuremath{\itbox{e}} : \ensuremath{\itbox{T}}$, read ``\ensuremath{\itbox{e}} is given type \ensuremath{\itbox{T}} under context
$\Gamma$, location $\Loc$ and layer set $\Lambda$''.  In addition to
\ensuremath{\itbox{C.m}} and \ensuremath{\itbox{L.C.m}}, $\Loc$ can be $\bullet$, which means the top-level
(i.e., under execution).  Typing rules are given in
\figref{CFJ:exp_typing}.  We defer typing rules for run-time
expressions $\ensuremath{\itbox{new C(}\overline{\itbox{v}}\itbox{)<D,}\overline{\itbox{L}}\itbox{{\ensuremath{'}},}\overline{\itbox{L}}\itbox{>.m(}\overline{\itbox{e}}\itbox{)}}$ and
$\ensuremath{\itbox{new C(}\overline{\itbox{v}}\itbox{)<D,L,}\overline{\itbox{L}}\itbox{{\ensuremath{'}},}\overline{\itbox{L}}\itbox{>.m(}\overline{\itbox{e}}\itbox{)}}$ to the next section and focus on
expressions that appear class and layer definitions.

\begin{figure}
  \leavevmode
  \fbox{$\LLGp \ensuremath{\itbox{e}} : \ensuremath{\itbox{T}}$}
  \typicallabel{T-Invk} 
  \infrule[T-Var]{
    (\Gamma = \ensuremath{\itbox{}\overline{\itbox{x}}\itbox{:}\overline{\itbox{T}}\itbox{}})
  }{
    \LLGp \ensuremath{\itbox{x}_{i}\itbox{}} : \ensuremath{\itbox{T}_{i}\itbox{}}
  }
  \iffull\else\hfil\fi
  \infrule[T-Field]{
    \LLGp \ensuremath{\itbox{e}_{0}\itbox{}} : \ensuremath{\itbox{C}_{0}\itbox{}} \andalso   
    \fields(\ensuremath{\itbox{C}_{0}\itbox{}}) = \ensuremath{\itbox{}\overline{\itbox{T}}\itbox{ }\overline{\itbox{f}}\itbox{}}
  }{
    \LLGp \ensuremath{\itbox{e}_{0}\itbox{.f}_{i}\itbox{}} : \ensuremath{\itbox{T}_{i}\itbox{}}
  }
  \infrule[T-Invk]{
    \LLGp \ensuremath{\itbox{e}_{0}\itbox{}} : \ensuremath{\itbox{C}_{0}\itbox{}}   \andalso
    \mtype(\ensuremath{\itbox{m}},\ensuremath{\itbox{C}_{0}\itbox{}}, \LSet) = \ensuremath{\itbox{}\overline{\itbox{T}}\itbox{ \(\rightarrow\) T}_{0}\itbox{}} \andalso
    \LLGp \ensuremath{\itbox{}\overline{\itbox{e}}\itbox{}} : \ensuremath{\itbox{}\overline{\itbox{S}}\itbox{}}     \andalso
    \ensuremath{\itbox{}\overline{\itbox{S}}\itbox{}} \LEQ \ensuremath{\itbox{}\overline{\itbox{T}}\itbox{}}
  }{
    \LLGp \ensuremath{\itbox{e}_{0}\itbox{.m(}\overline{\itbox{e}}\itbox{)}} : \ensuremath{\itbox{T}_{0}\itbox{}}
  }
  \infrule[T-New]{
    \fields(\ensuremath{\itbox{C}_{0}\itbox{}}) = \ensuremath{\itbox{}\overline{\itbox{T}}\itbox{ }\overline{\itbox{f}}\itbox{}} \andalso
    \LLGp \ensuremath{\itbox{}\overline{\itbox{e}}\itbox{}} : \ensuremath{\itbox{}\overline{\itbox{S}}\itbox{}} \andalso
    \ensuremath{\itbox{}\overline{\itbox{S}}\itbox{}} \LEQ \ensuremath{\itbox{}\overline{\itbox{T}}\itbox{}}
  }{
    \LLGp \ensuremath{\itbox{new C}_{0}\itbox{(}\overline{\itbox{e}}\itbox{)}} : \ensuremath{\itbox{C}_{0}\itbox{}}
  }
  \infrule[T-NewL]{
  }{
    \LLGp \ensuremath{\itbox{new L}_{0}\itbox{()}} : \ensuremath{\itbox{L}_{0}\itbox{}}
  }  
  \infrule[T-With]{
    \LLGp \ensuremath{\itbox{e}_{l}\itbox{}} : \ensuremath{\itbox{L}} \andalso
    \ensuremath{\itbox{L req }} \LSet' \andalso
    \LSet \LEQ_w \LSet' \andalso
    \Loc; \Lambda \cup \set{\ensuremath{\itbox{L}}}; \Gp \ensuremath{\itbox{e}_{0}\itbox{}} : \ensuremath{\itbox{T}_{0}\itbox{}} 
  }{
    \LLGp \ensuremath{\itbox{with e}_{l}\itbox{ e}_{0}\itbox{}} : \ensuremath{\itbox{T}_{0}\itbox{}}
  }
  \infrule[T-Swap]{
    \LLGp \ensuremath{\itbox{e}_{l}\itbox{}} : \ensuremath{\itbox{L}} \andalso
    \ensuremath{\itbox{L}} \LEQ_w \ensuremath{\itbox{L}_{{sw}}\itbox{}} \andalso \ensuremath{\itbox{L}_{{sw}}\itbox{ swappable}} \andalso
    \ensuremath{\itbox{L req }} \LSet' \\
    \LSet_{rm} = \LSet \setminus \set{\ensuremath{\itbox{L{\ensuremath{'}}}} \mid \ensuremath{\itbox{L{\ensuremath{'}}}} \LEQ_w \ensuremath{\itbox{L}_{{sw}}\itbox{}}} \andalso
    \LSet_{rm} \LEQ_w \LSet' \andalso
    \Loc; \LSet_{rm} \cup \set{\ensuremath{\itbox{L}}}; \Gp \ensuremath{\itbox{e}_{0}\itbox{}} : \ensuremath{\itbox{T}_{0}\itbox{}}
  }{
    \LLGp \ensuremath{\itbox{swap (e}_{l}\itbox{,L}_{{sw}}\itbox{) e}_{0}\itbox{}} : \ensuremath{\itbox{T}_{0}\itbox{}}
  }
  \infrule[T-SuperB]{
    \ensuremath{\itbox{class C \(\triangleleft\) E {\char'173}...{\char'175}}} \andalso 
    \mtype(\ensuremath{\itbox{m{\ensuremath{'}}}},\ensuremath{\itbox{E}}, \emptyset) = \ensuremath{\itbox{}\overline{\itbox{T}}\itbox{ \(\rightarrow\) T}_{0}\itbox{}} \andalso
    \ensuremath{\itbox{C.m}}; \LSet; \Gp \ensuremath{\itbox{}\overline{\itbox{e}}\itbox{}} : \ensuremath{\itbox{}\overline{\itbox{S}}\itbox{}} \andalso
    \ensuremath{\itbox{}\overline{\itbox{S}}\itbox{}} \LEQ \ensuremath{\itbox{}\overline{\itbox{T}}\itbox{}}
  }{
    \ensuremath{\itbox{C.m}}; \LSet; \Gp \ensuremath{\itbox{super.m{\ensuremath{'}}(}\overline{\itbox{e}}\itbox{)}} : \ensuremath{\itbox{T}_{0}\itbox{}}
  }
  \infrule[T-SuperP]{
    \ensuremath{\itbox{class C \(\triangleleft\) E {\char'173}...{\char'175}}} \andalso 
    \ensuremath{\itbox{L req }}\LSet' \andalso 
    \mtype(\ensuremath{\itbox{m{\ensuremath{'}}}},\ensuremath{\itbox{E}}, \LSet' \cup \set{\ensuremath{\itbox{L}}}) = \ensuremath{\itbox{}\overline{\itbox{T}}\itbox{ \(\rightarrow\) T}_{0}\itbox{}}  \\
    \ensuremath{\itbox{L.C.m}}; \LSet; \Gp \ensuremath{\itbox{}\overline{\itbox{e}}\itbox{}} : \ensuremath{\itbox{}\overline{\itbox{S}}\itbox{}} \andalso
    \ensuremath{\itbox{}\overline{\itbox{S}}\itbox{}} \LEQ \ensuremath{\itbox{}\overline{\itbox{T}}\itbox{}}
  }{
    \ensuremath{\itbox{L.C.m}}; \LSet; \Gp \ensuremath{\itbox{super.m{\ensuremath{'}}(}\overline{\itbox{e}}\itbox{)}} : \ensuremath{\itbox{T}_{0}\itbox{}}
  }
  \infrule[T-Proceed]{
    \ensuremath{\itbox{L req }}\LSet' \andalso 
    \mtype(\ensuremath{\itbox{m}},\ensuremath{\itbox{C}},\LSet', \LSet' \cup \set{\ensuremath{\itbox{L}}}) = \ensuremath{\itbox{}\overline{\itbox{T}}\itbox{ \(\rightarrow\) T}_{0}\itbox{}} \andalso
    \ensuremath{\itbox{L.C.m}}; \LSet; \Gp \ensuremath{\itbox{}\overline{\itbox{e}}\itbox{}} : \ensuremath{\itbox{}\overline{\itbox{S}}\itbox{}} \andalso
    \ensuremath{\itbox{}\overline{\itbox{S}}\itbox{}} \LEQ \ensuremath{\itbox{}\overline{\itbox{T}}\itbox{}}
  }{
    \ensuremath{\itbox{L.C.m}}; \LSet; \Gp \ensuremath{\itbox{proceed(}\overline{\itbox{e}}\itbox{)}} : \ensuremath{\itbox{T}_{0}\itbox{}}
  }
  \infrule[T-SuperProceed]{
    \ensuremath{\itbox{L \(\triangleleft\) L{\ensuremath{'}}}}\andalso 
    \pmtype(\ensuremath{\itbox{m}},\ensuremath{\itbox{C}},\ensuremath{\itbox{L{\ensuremath{'}}}}) = \ensuremath{\itbox{}\overline{\itbox{T}}\itbox{ \(\rightarrow\) T}_{0}\itbox{}} \andalso
    \ensuremath{\itbox{L.C.m}}; \LSet; \Gp \ensuremath{\itbox{}\overline{\itbox{e}}\itbox{}} : \ensuremath{\itbox{}\overline{\itbox{S}}\itbox{}} \andalso
    \ensuremath{\itbox{}\overline{\itbox{S}}\itbox{}} \LEQ \ensuremath{\itbox{}\overline{\itbox{T}}\itbox{}}
  }{
    \ensuremath{\itbox{L.C.m}}; \LSet; \Gp \ensuremath{\itbox{superproceed(}\overline{\itbox{e}}\itbox{)}} : \ensuremath{\itbox{T}_{0}\itbox{}}
  }
  \caption{\fsname: Expression typing.}
  \label{fig:CFJ:exp_typing}
\end{figure}

Rules \rn{T-Var}, \rn{T-Field} are easy.  \rn{T-New} and \rn{T-NewL}
are for instance of classes and instance of layers, respectively.  The
rule \rn{T-Invk} is straightforward: the method signature \ensuremath{\itbox{}\overline{\itbox{T}}\itbox{\(\rightarrow\)T}_{0}\itbox{}} is
retrieved from the receiver type \ensuremath{\itbox{C}_{0}\itbox{}} and $\LSet$; the types of the
actual arguments must be subtypes of \ensuremath{\itbox{}\overline{\itbox{T}}\itbox{}}; and the whole expression is
given the method return type \ensuremath{\itbox{T}_{0}\itbox{}}.  The rule \rn{T-With} checks, by
\(\LSet \LEQ_w \LSet'\),
that the layers \ensuremath{\itbox{require}}d by \ensuremath{\itbox{L}}---the type of the layer to be
activated---are already activated and that the body \ensuremath{\itbox{e}_{0}\itbox{}} is well
typed under the assumption that \ensuremath{\itbox{L}} is additionally activated.
\rn{T-Swap} is similar; the set \(\LSet_{rm}\)
stands for the set of layers after deactivation and must be a weak
subtype of the required set \(\LSet'\).
The last four rules are for \ensuremath{\itbox{super}}, \ensuremath{\itbox{proceed}}, and \ensuremath{\itbox{superproceed}}
calls and so they are similar to \rn{T-Invk}.  Differences are in how
the method signature is obtained.  In the rules \rn{T-SuperB} and
\rn{T-SuperP} for a \ensuremath{\itbox{super}} call in a method defined in a class and in
a partial method, respectively, the superclass \ensuremath{\itbox{E}} is given to
\(\mtype\).
Layer names are taken from the \ensuremath{\itbox{requires}} clause instead of
\(\Lambda\)---corresponding
to the fact that the method to be invoked is not affected by \ensuremath{\itbox{with}} or
\ensuremath{\itbox{swap}} surrounding \ensuremath{\itbox{super}} (a class cannot require any layer, hence
the empty set).  In the rule \rn{T-Proceed} for a \ensuremath{\itbox{proceed}} call, the
current class name \ensuremath{\itbox{C}} is used.  Similarly to \rn{T-SuperP}, layer
names are taken from the \ensuremath{\itbox{require}} clause.  The last argument to
\(\mtype\)
is \(\LSet \cup \set{\ensuremath{\itbox{L}}}\)
because a \ensuremath{\itbox{proceed}} call can proceed to a partial method \ensuremath{\itbox{D.m}} (where
\ensuremath{\itbox{D}} is a superclass of \ensuremath{\itbox{C}}) defined in the same layer \ensuremath{\itbox{L}}.  In the
rule \rn{T-SuperProceed}, \(\pmtype\) is used instead of \(\mtype\).

\begin{figure}[ht]
  \typicallabel{T-PMethod}
  \fbox{$\ensuremath{\itbox{M ok in C}}$}
  \infrule[T-Method]{
    \ensuremath{\itbox{C.m}}; \emptyset; \ensuremath{\itbox{}\overline{\itbox{x}}\itbox{}}:\ensuremath{\itbox{}\overline{\itbox{T}}\itbox{}}, \ensuremath{\itbox{this}}:\ensuremath{\itbox{C}} \p \ensuremath{\itbox{e}_{0}\itbox{}} : \ensuremath{\itbox{S}_{0}\itbox{}} \andalso
    \ensuremath{\itbox{S}_{0}\itbox{}} \LEQ \ensuremath{\itbox{T}_{0}\itbox{}}
  }{
    \ensuremath{\itbox{T}_{0}\itbox{ m(}\overline{\itbox{T}}\itbox{ }\overline{\itbox{x}}\itbox{) {\char'173} return e}_{0}\itbox{; {\char'175} ok in C}}
  }
  \noindent\fbox{$\ensuremath{\itbox{PM ok in L}}$}
  \infrule[T-PMethod]{
    \ensuremath{\itbox{L req }}\LSet \andalso
    \ensuremath{\itbox{L.C.m}}; \LSet\cup \set{\ensuremath{\itbox{L}}}; \ensuremath{\itbox{}\overline{\itbox{x}}\itbox{}}:\ensuremath{\itbox{}\overline{\itbox{T}}\itbox{}}, \ensuremath{\itbox{this}}:\ensuremath{\itbox{C}} \p \ensuremath{\itbox{e}_{0}\itbox{}} : \ensuremath{\itbox{S}_{0}\itbox{}} \andalso
    \ensuremath{\itbox{S}_{0}\itbox{}} \LEQ \ensuremath{\itbox{T}_{0}\itbox{}}
  }{
    \ensuremath{\itbox{T}_{0}\itbox{ C.m(}\overline{\itbox{T}}\itbox{ }\overline{\itbox{x}}\itbox{) {\char'173} return e}_{0}\itbox{; {\char'175} ok in L}}
  }
  \fbox{$\ensuremath{\itbox{CL ok}}$}
  \infrule[T-Class]{
    \ensuremath{\itbox{K}} = \ensuremath{\itbox{C(}\overline{\itbox{S}}\itbox{ }\overline{\itbox{g}}\itbox{, }\overline{\itbox{T}}\itbox{ }\overline{\itbox{f}}\itbox{){\char'173} super(}\overline{\itbox{g}}\itbox{); this.}\overline{\itbox{f}}\itbox{=}\overline{\itbox{f}}\itbox{; {\char'175}}}  \\
    \fields(\ensuremath{\itbox{D}}) = \ensuremath{\itbox{}\overline{\itbox{S}}\itbox{ }\overline{\itbox{g}}\itbox{}} \andalso
    \ensuremath{\itbox{}\overline{\itbox{M}}\itbox{ ok in C}}
  }{
    \ensuremath{\itbox{class C \(\triangleleft\) D {\char'173} }\overline{\itbox{T}}\itbox{ }\overline{\itbox{f}}\itbox{; K }\overline{\itbox{M}}\itbox{ {\char'175} ok}}
  }
  \fbox{$\ensuremath{\itbox{LA ok}}$}
  \infrule[T-Layer]{
    \ensuremath{\itbox{L}} \text{ is not sublayer of any \ensuremath{\itbox{swappable}} layer} \andalso \\
    \ensuremath{\itbox{L{\ensuremath{'}} req }} \LSet' \andalso
    \set{\ensuremath{\itbox{}\overline{\itbox{L}}\itbox{}}} \LEQ_w \LSet' \andalso
    \ensuremath{\itbox{}\overline{\itbox{PM}}\itbox{ ok in L}}
  }{
    [\ensuremath{\itbox{swappable}}]\ensuremath{\itbox{ layer L req }\overline{\itbox{L}}\itbox{ \(\triangleleft\) L{\ensuremath{'}} {\char'173} }\overline{\itbox{PM}}\itbox{ {\char'175} ok}}
  }
  \infrule[T-LayerSW]{
    \ensuremath{\itbox{L \(\triangleleft\)}^{\itbox{{+}}}\itbox{ L}_{{sw}}\itbox{}} \andalso
    \ensuremath{\itbox{L}_{{sw}}\itbox{ swappable}} \\
    \ensuremath{\itbox{L{\ensuremath{'}} req }} \LSet' \andalso
    \set{\ensuremath{\itbox{}\overline{\itbox{L}}\itbox{}}} = \LSet' \\
    \ensuremath{\itbox{}\overline{\itbox{PM}}\itbox{ ok in L}} \andalso
    \forall \ensuremath{\itbox{C.m}} \in \set{\ensuremath{\itbox{}\overline{\itbox{PM}}\itbox{}}}. \; \pmtype(\ensuremath{\itbox{m}}, \ensuremath{\itbox{C}}, \ensuremath{\itbox{L}_{{sw}}\itbox{}}) \mbox{ defined} \\
    \neg \exists \ensuremath{\itbox{L}_{2}\itbox{}}  \in\dom(\LT). \ensuremath{\itbox{L}_{2}\itbox{}} \mathrel{\ensuremath{\itbox{req}}} \ensuremath{\itbox{L}}
  }{
    \ensuremath{\itbox{layer L req }\overline{\itbox{L}}\itbox{ \(\triangleleft\) L{\ensuremath{'}} {\char'173} }\overline{\itbox{PM}}\itbox{ {\char'175} ok}}
  }
  \caption{\fsname: Method/Class/Layer typing.}
  \label{fig:CFJ:other_typing}
\end{figure}

\begin{figure}[ht]
\typicallabel{T-Table}
  \fbox{Valid overriding $\noconflict(\ensuremath{\itbox{L}_{1}\itbox{}}, \ensuremath{\itbox{L}_{2}\itbox{}})$, $\override^h(\ensuremath{\itbox{L}},\ensuremath{\itbox{C}})$, $\override^v(\ensuremath{\itbox{C}})$}
  \infrule{
    \forall \ensuremath{\itbox{m}}, \ensuremath{\itbox{C}}, \ensuremath{\itbox{}\overline{\itbox{T}}\itbox{}}, \ensuremath{\itbox{T}_{0}\itbox{}}, \ensuremath{\itbox{}\overline{\itbox{S}}\itbox{}}, \ensuremath{\itbox{S}_{0}\itbox{}}. 
    \mbox{ if }
    \begin{array}[t]{l}
      \LT(\ensuremath{\itbox{L}_{1}\itbox{}})(\ensuremath{\itbox{C.m}}) = \ensuremath{\itbox{T}_{0}\itbox{ m(}\overline{\itbox{T}}\itbox{ }\overline{\itbox{x}}\itbox{){\char'173}...{\char'175}}} \\ \mbox{ and } 
      \LT(\ensuremath{\itbox{L}_{2}\itbox{}})(\ensuremath{\itbox{C.m}}) = \ensuremath{\itbox{S}_{0}\itbox{ m(}\overline{\itbox{S}}\itbox{ }\overline{\itbox{y}}\itbox{){\char'173}...{\char'175}}}, \mbox{then } \ensuremath{\itbox{}\overline{\itbox{T}}\itbox{}},\ensuremath{\itbox{T}_{0}\itbox{}} = \ensuremath{\itbox{}\overline{\itbox{S}}\itbox{}}, \ensuremath{\itbox{S}_{0}\itbox{}}
    \end{array}
  }{
    \noconflict(\ensuremath{\itbox{L}_{1}\itbox{}},\ensuremath{\itbox{L}_{2}\itbox{}})
  }
  \infrule{
    \forall \ensuremath{\itbox{m}}, \ensuremath{\itbox{}\overline{\itbox{T}}\itbox{}}, \ensuremath{\itbox{T}_{0}\itbox{}}, \ensuremath{\itbox{}\overline{\itbox{S}}\itbox{}}, \ensuremath{\itbox{S}_{0}\itbox{}}, \ensuremath{\itbox{}\overline{\itbox{x}}\itbox{}}.
    \mbox{ if }
    \begin{array}[t]{l}
      \LT(\ensuremath{\itbox{L}})(\ensuremath{\itbox{C.m}}) = \ensuremath{\itbox{S}_{0}\itbox{ m(}\overline{\itbox{S}}\itbox{ }\overline{\itbox{x}}\itbox{){\char'173}...{\char'175}}} \\ \mbox{ and }
      \mtype(\ensuremath{\itbox{m}}, \ensuremath{\itbox{C}}, \emptyset, \dom(\LT)) = \ensuremath{\itbox{}\overline{\itbox{T}}\itbox{\(\rightarrow\)T}_{0}\itbox{}}, 
      \mbox{then } \ensuremath{\itbox{}\overline{\itbox{T}}\itbox{}}, \ensuremath{\itbox{T}_{0}\itbox{}} = \ensuremath{\itbox{}\overline{\itbox{S}}\itbox{}}, \ensuremath{\itbox{S}_{0}\itbox{}}
    \end{array}
  }{
    \override^h(\ensuremath{\itbox{L}},\ensuremath{\itbox{C}})
  }
  \infrule{
    \forall \ensuremath{\itbox{m}}, \ensuremath{\itbox{D}}, \ensuremath{\itbox{}\overline{\itbox{T}}\itbox{}}, \ensuremath{\itbox{T}_{0}\itbox{}}, \ensuremath{\itbox{}\overline{\itbox{S}}\itbox{}}, \ensuremath{\itbox{S}_{0}\itbox{}}. 
    \begin{array}[t]{l@{}l}
      \mbox{if }&
      \ensuremath{\itbox{class C \(\triangleleft\) D {\char'173}... S}_{0}\itbox{ m(}\overline{\itbox{S}}\itbox{ x){\char'173}...{\char'175}...{\char'175}}} \\ & \mbox{ and }
      \mtype(\ensuremath{\itbox{m}},\ensuremath{\itbox{D}},\dom(\LT), \dom(\LT)) = \ensuremath{\itbox{}\overline{\itbox{T}}\itbox{\(\rightarrow\)T}_{0}\itbox{}},  \\
      \multicolumn{2}{l}{\mbox{then } \ensuremath{\itbox{}\overline{\itbox{T}}\itbox{}} = \ensuremath{\itbox{}\overline{\itbox{S}}\itbox{}} \mbox{ and } \ensuremath{\itbox{S}_{0}\itbox{}} \LEQ \ensuremath{\itbox{T}_{0}\itbox{}}}
    \end{array}
  }{
    \override^v(\ensuremath{\itbox{C}})
  }

      
  \fbox{$\p (\CT, \LT)$\ensuremath{\itbox{ ok}}}  \fbox{$\p (\CT, \LT, \ensuremath{\itbox{e}}) : \ensuremath{\itbox{T}}$}
  \infrule[T-Table]{
    \forall \ensuremath{\itbox{C}} \in \dom(\CT). \CT(\ensuremath{\itbox{C}})\ensuremath{\itbox{ ok}} \andalso
    \forall \ensuremath{\itbox{L}} \in \dom(\LT). \LT(\ensuremath{\itbox{L}})\ensuremath{\itbox{ ok}} \\
    \forall \ensuremath{\itbox{L}_{1}\itbox{}},\ensuremath{\itbox{L}_{2}\itbox{}} \in \dom(\LT). \noconflict(\ensuremath{\itbox{L}_{1}\itbox{}}, \ensuremath{\itbox{L}_{2}\itbox{}}) 
    \\
    \forall \ensuremath{\itbox{C}} \in \dom(\CT). \ensuremath{\itbox{L}} \in \dom(\LT).
    \override^h(\ensuremath{\itbox{L}},\ensuremath{\itbox{C}}) \andalso
    \forall \ensuremath{\itbox{C}} \in \dom(\CT). \override^v(\ensuremath{\itbox{C}})
  }{
    \p (\CT, \LT)\ensuremath{\itbox{ ok}}
  }
  \infrule[T-Prog]{
    \p (\CT, \LT)\ensuremath{\itbox{ ok}} \andalso
    \bullet; \emptyset; \bullet \p \ensuremath{\itbox{e}} : \ensuremath{\itbox{T}}
  }{
    \p (\CT, \LT, \ensuremath{\itbox{e}}) : \ensuremath{\itbox{T}}
  }
  \caption{\fsname: Program typing.}
  \label{fig:CFJ:program_typing}
\end{figure}

In Igarashi et al.~\cite{DynamicLayer2012contextfj}, in which a
type system for ContextFJ is developed, another layer activation construct called
\ensuremath{\itbox{ensure}} is adopted.  The difference from \ensuremath{\itbox{with}} is that, if an
already activated layer is to be activated, \ensuremath{\itbox{ensure}} does not change
the activated layer sequence, whereas \ensuremath{\itbox{with}} will pull that layer to
the head of the sequence so that partial methods in it are invoked
first.  For example, activating layers \ensuremath{\itbox{L1}}, \ensuremath{\itbox{L2}}, \ensuremath{\itbox{L1}} in this order
results in \ensuremath{\itbox{L1;L2}} with \ensuremath{\itbox{ensure}} but in \ensuremath{\itbox{L2;L1}} with the \ensuremath{\itbox{with}}
statement.  Igarashi et al.\ argue that the rearrangement of layers by
\ensuremath{\itbox{with}} destroys the layer ordering in which interlayer dependency is
respected.  For example, if \ensuremath{\itbox{L2}} \ensuremath{\itbox{require}}s \ensuremath{\itbox{L1}}, then \ensuremath{\itbox{L2;L1}}
violates the \ensuremath{\itbox{require}} clause in the sense that the layers that \ensuremath{\itbox{L2}}
requires do not precede \ensuremath{\itbox{L2}} in the sequence.  So, for simplicity,
Igarashi et al.\ considered only \ensuremath{\itbox{ensure}}, which does not have this
problem.

Our discovery is that, in fact, this anomaly caused by \ensuremath{\itbox{with}} is not
really a problem for type soundness and essentially the same typing
rule works---Our typing rule \rn{T-With} for \ensuremath{\itbox{with}} is indeed very
similar to that for \ensuremath{\itbox{ensure}} in ContextFJ; the only difference is the
use of \(\subseteq\) in the place of weak subtyping $\LEQ_w$
(ContextFJ does not have layer subtyping).  The reason why a layer
sequence like \ensuremath{\itbox{L2;L1}} is not problematic can be explained as follows.
Actually, problematic would be a partial method defined in \ensuremath{\itbox{L2}}
calling another (partial) method, say \ensuremath{\itbox{C.m}}, that exists only in
\ensuremath{\itbox{L1}}---that is, one that is undefined in a base class---via
\ensuremath{\itbox{proceed}}.\footnote{%
  Invoking \texttt{m} via \texttt{this} or \texttt{super} will find \texttt{m} in \texttt{L1}.}  Such a
dangling partial method cannot be executed, however: \ensuremath{\itbox{C.m}} in \ensuremath{\itbox{L1}}
cannot contain \ensuremath{\itbox{proceed}}, which leads to execution of the dangling
partial method, because \ensuremath{\itbox{L1}} is activated first, meaning that \ensuremath{\itbox{L1}}
does not \ensuremath{\itbox{require}} any other layer, but it is assumed here that \ensuremath{\itbox{m}} is
not defined in base class \ensuremath{\itbox{C}}.

\paragraph{Typing for Methods, Partial Methods, Classes, Layers, and Programs}

Typing rules for (partial) methods, layers, and classes and are given
in \figref{CFJ:other_typing}.  The rule \rn{T-Method} is standard.
Readers familiar with FJ may notice that a condition for valid
overriding is missing; it is put in elsewhere--see below.  The rule
\rn{T-PMethod} for a partial method means that the method body \ensuremath{\itbox{e}_{0}\itbox{}}
is typed under the layer set \ensuremath{\itbox{require}}d by this layer.  The rule
\rn{T-Layer} is for layers that are not sublayers of any \ensuremath{\itbox{swappable}}
layer and demands that the \ensuremath{\itbox{requires}} clause of the layer be
\emph{covariant} and all partial methods are well formed.  The rule
\rn{T-LayerSW} is for sublayers of \ensuremath{\itbox{swappable}} layers.  It demands, in
addition to the conditions described in \rn{T-Layer}, that the
\ensuremath{\itbox{requires}} clause of this layer be the same as those of its parent
\ensuremath{\itbox{swappable}} layer, that no partial method be newly introduced, and that
this layer be not required by other layers.  The last condition
requires a global program analysis.

It is worth elaborating the rule \rn{T-LayerSW} in more detail.
First, if the condition \(\set{\ensuremath{\itbox{}\overline{\itbox{L}}\itbox{}}} = \LSet'\)
were \(\set{\ensuremath{\itbox{}\overline{\itbox{L}}\itbox{}}} \LEQ_w \LSet'\)
(as in \rn{T-Layer}), the type system would be unsound.  A
counterexample is below:
\begin{lstlisting}
class C {}
swappable layer L0 { int C.m() { return 0; } }
layer L1 extends L0 {}
layer L2 extends L0 requires L { int C.m() { return proceed(); } }
layer L requires L0 { int C.m() { return proceed(); } }
\end{lstlisting}
Layer \ensuremath{\itbox{L2}} additionally requires \ensuremath{\itbox{L}}, which requires \ensuremath{\itbox{L0}}, a swappable
superclass of \ensuremath{\itbox{L2}}.  The condition \(\set{\ensuremath{\itbox{}\overline{\itbox{L}}\itbox{}}} \LEQ_w \LSet'\)
would be trivially satisfied for \ensuremath{\itbox{L2}} because the \ensuremath{\itbox{requires}} clause of
\ensuremath{\itbox{L0}} is empty.  The partial methods in \ensuremath{\itbox{L2}} and \ensuremath{\itbox{L}} are well formed
because \ensuremath{\itbox{L}} and \ensuremath{\itbox{L0}}, respectively, provide definitions to \ensuremath{\itbox{proceed}}.
Under these classes and layers, the following expression
\begin{lstlisting}
with (new L1()) 
  with (new L())          // fulfills "requires L0"
    swap(L0, new L2())    // fulfills "requires L"
      new C().m()
\end{lstlisting}
is well typed, because \ensuremath{\itbox{L1}}, which is a subclass of \ensuremath{\itbox{L0}}, is activated
before activating \ensuremath{\itbox{L}}, and \ensuremath{\itbox{L}} is activated before activating \ensuremath{\itbox{L2}}.
However, the \ensuremath{\itbox{swap}} expression executed under \(\ensuremath{\itbox{L1}};\ensuremath{\itbox{L}}\)
would get stuck as follows:
\[
\begin{array}{l@{\,}l}
  \ensuremath{\itbox{L1}};\ensuremath{\itbox{L}} \p & \ensuremath{\itbox{swap(L0, new L2()) new C().m()}} \\ & \longrightarrow 
             \ensuremath{\itbox{swap(L0, new L2()) new C<C,L,(L;L2)>().m()}} \\ & \longrightarrow
                \ensuremath{\itbox{swap(L0, new L2()) new C<C,}}\bullet\ensuremath{\itbox{,(L;L2)>().m()}} \\
& \;\not \!\!\longrightarrow
\end{array}
\]
The method invocation would take place under \(\ensuremath{\itbox{L}}; \ensuremath{\itbox{L2}}\),
both of which have \ensuremath{\itbox{C.m}} but the second \ensuremath{\itbox{proceed}} call goes nowhere.

Second, if a subclass of a swappable layer were allowed to define a
new method (which is not defined in the swappable), then the type
system would be unsound, too.  Consider the following classes and
layers.
\pagebreak
\begin{lstlisting}
class C {}
class D extends C {}

swappable layer L0 {}
layer L1 extends L0 {}
layer L2 extends L0 {
  int C.m() { return this.m(); }
  int D.m() { return swap(L, new L2()) super.m(); }
}
\end{lstlisting}
Layer \ensuremath{\itbox{L2}} defines new partial methods \ensuremath{\itbox{C.m}} and \ensuremath{\itbox{D.m}}.
They are well formed: in particular, \ensuremath{\itbox{super.m()}} is well typed 
because \ensuremath{\itbox{L2}} itself provides \ensuremath{\itbox{C.m}}.  The following expression
\begin{lstlisting}
with (new L2()) new D().m()
\end{lstlisting}
is well typed, since \ensuremath{\itbox{D.m}} invoked with \ensuremath{\itbox{L2}} activated.
However, reduction of \ensuremath{\itbox{new D().m()}} under \ensuremath{\itbox{L2}} would get stuck:
\[
\begin{array}{l@{\,}l}
  \ensuremath{\itbox{L2}} \p & \ensuremath{\itbox{new D().m()}} \\
& \longrightarrow \ensuremath{\itbox{swap(L, new L1()) new D<C,L2,L2>().m()}} \\
& \longrightarrow \ensuremath{\itbox{swap(L, new L1()) new D().m()}} \\
& \;\not \!\!\longrightarrow
\end{array}
\]
Since \ensuremath{\itbox{super}} calls are not affected by \ensuremath{\itbox{swap}}, \ensuremath{\itbox{super.m()}} in \ensuremath{\itbox{D.m}}
succeeds but, by the time \ensuremath{\itbox{this.m()}} is executed, \ensuremath{\itbox{L2}} will be swapped
out.

\figref{CFJ:program_typing} is for program typing; a program is well
typed if all classes and layers in $\CT$ and $\LT$, respectively, are
well formed and the main expression \ensuremath{\itbox{e}} is typed (at the top-level
$\bullet$).

The most involved is the rule to check valid method overriding used in
\rn{T-Table}.  The predicate $\noconflict$ means that for two partial
methods of the same (qualified) name must have the same signature.
The predicate $\override^h$ means that, for any partial method, the
overridden method (base method in \ensuremath{\itbox{C}} or partial methods for \ensuremath{\itbox{C}}'s
superclass) must have the same signature.  The predicate $\override^v$
means that a base method can override a (partial) method in its
superclass (or layers modifying it) with a covariant return type.
Note that, unlike Java, checking valid method overriding requires a
whole program because a layer may add a new method to a base class,
one of whose subclass may accidentally define a method of the same
name without knowing of that layer.

\section{Type Soundness} \label{sec:type_soundness}

In this section, we prove type soundness of \fsname{} via subject
reduction and progress~\cite{wright1994syntactic}.  Strictly speaking,
we should present typing rules for run-time expressions first before
stating these properties but, for ease of understanding, we will
reverse the order and start with the statements of the properties.

Since we model the execution of a \ensuremath{\itbox{main}} method starting with no layers
activated, we are mainly interested in the case where \(\Loc\) is
\(\bullet\) and the layer sequence is empty.  However, we have to
strengthen the statements of these properties so that the layer
sequence can be nonempty.  We introduce the notion of
\emph{well-formed layer sets} for this purpose.

We define the relation \(\set{\ensuremath{\itbox{}\overline{\itbox{L}}\itbox{}}}\, \WF\), read ``layer set
\(\set{\ensuremath{\itbox{}\overline{\itbox{L}}\itbox{}}}\) is well formed,'' by the rules in
\figref{CFJ:layer-well-formedness}.  Intuitively, a set of layers is
well-formed if one can obtain the layers by activating them one by one
so that \ensuremath{\itbox{requires}} clauses are satisfied.  We ignore the order of
activation because the \ensuremath{\itbox{with}} statement can change the order of
activated layers by activating an already activated layer again.

\begin{figure}
  \typicallabel{T-InvkA}
  \leavevmode
  \infrule[Wf-Empty]{
  }{
    \emptyset \, \WF
  }
  \infrule[Wf-With]{
    \LSet\, \WF \andalso
    \ensuremath{\itbox{L}_{a}\itbox{ req }} \LSet' \andalso
    \LSet \LEQ_w \LSet'
  }{
    \LSet \cup \set{\ensuremath{\itbox{L}_{a}\itbox{}}}\, \WF
  }
  \infrule[Wf-Swap]{
    \LSet\, \WF \andalso
    \ensuremath{\itbox{L}_{{sw}}\itbox{ swappable}} \andalso \ensuremath{\itbox{L}} \LEQ_w \ensuremath{\itbox{L}_{{sw}}\itbox{}} \andalso
    \ensuremath{\itbox{L req }} \LSet' \\
    \LSet_{rm} = \LSet \setminus \set{\ensuremath{\itbox{L{\ensuremath{'}}}} \mid \ensuremath{\itbox{L{\ensuremath{'}}}} \LEQ_w \ensuremath{\itbox{L}_{{sw}}\itbox{}}} \andalso
    \LSet_{rm} \LEQ_w \LSet'
  }{
    \LSet_{rm} \cup \set{\ensuremath{\itbox{L}}}\, \WF
  }
  \caption{\fsname: Layer set well-formedness.}
  \label{fig:CFJ:layer-well-formedness}
\end{figure}

Aside from layer well-formedness, the statements of subject reduction,
progress, and type soundness are standard:

\begin{theorem}[Subject Reduction]\label{thm:def:subject-reduction}
  Suppose $\p (\CT, \LT)\ensuremath{\itbox{ ok}}$.  If $\bullet;
  \set{\ensuremath{\itbox{}\overline{\itbox{L}}\itbox{}}}; \Gp \ensuremath{\itbox{e}} : \ensuremath{\itbox{T}}$ and \set{\ensuremath{\itbox{}\overline{\itbox{L}}\itbox{}}}\, \WF \ and
  $\reduceto{\ensuremath{\itbox{}\overline{\itbox{L}}\itbox{}}}{\ensuremath{\itbox{e}}}{\ensuremath{\itbox{e{\ensuremath{'}}}}}$, then $\bullet; \set{\ensuremath{\itbox{}\overline{\itbox{L}}\itbox{}}}; \Gp \ensuremath{\itbox{e{\ensuremath{'}}}} :
  \ensuremath{\itbox{S}}$ for some \ensuremath{\itbox{S}} such that $\ensuremath{\itbox{S}} \LEQ \ensuremath{\itbox{T}}$.
\end{theorem}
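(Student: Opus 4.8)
The plan is to establish Subject Reduction by induction on the derivation of $\reduceto{\overline{L}}{e}{e'}$, with a case analysis on the last reduction rule. As the excerpt anticipates, the first step is to fix typing rules for the two run-time expression forms $\mathtt{new}\ C(\overline{v})<D,\overline{L}',\overline{L}>.m(\overline{e})$ and $\mathtt{new}\ C(\overline{v})<D,L,\overline{L}',\overline{L}>.m(\overline{e})$: they look up the signature with $\mtype$ (resp.\ $\pmtype$) starting from the cursor recorded in the annotation, require $\overline{v}$ to match $\fields(C)$, $\overline{L}'$ to be a prefix of $\overline{L}$, $D$ to be a superclass of $C$, and the argument types to be subtypes of the expected ones. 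Beyond this I will need a handful of auxiliary results: (i) weakening of $\Gamma$; (ii) a \emph{layer-set weakening} lemma --- if $\Loc;\Lambda_0;\Gamma\p e:T$, $\Lambda_1\LEQ_w\Lambda_0$, and $e$ contains no $\mathtt{super}$/$\mathtt{proceed}$/$\mathtt{superproceed}$, then $\Loc;\Lambda_1;\Gamma\p e:T$ --- which holds because $\noconflict$, $\override^h$ and $\override^v$ force all (partial) methods of a given name and class visible under any layer set to share a signature; (iii) a companion lemma that $\mtype$ is covariant in the receiver class ($\override^v$); (iv) a term substitution lemma; and (v) \emph{agreement} lemmas tying $\mbody$ to $\mtype$ and $\pmbody$ to $\pmtype$: whenever $\mbody(m,C,\overline{L}_1,\overline{L}_2)=\overline{x}.e_0\ \mathtt{in}\ D',\overline{L}''$, the corresponding $\mtype$ is defined with some signature $\overline{T}\!\to\!T_0$ with $|\overline{x}|=|\overline{T}|$, and (using $\p(\CT,\LT)\ \mathbf{ok}$, via \rn{T-Method}/\rn{T-PMethod}) $e_0$ is typed under the location and the \texttt{requires}-derived layer set appropriate to where it was found, with $\overline{x}:\overline{T}$ and $\mathtt{this}:D'$, with result type a subtype of $T_0$.

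The congruence rules \rn{RC-Field}, \rn{RC-InvkRecv}, \rn{RC-InvkArg}, \rn{RC-New}, \rn{RC-WithArg}, \rn{RC-SwapArg}, \rn{RC-InvkAArg1}, and \rn{RC-InvkAArg2} follow from the induction hypothesis and inversion of the matching typing rule, using (iii) to re-run $\mtype$ at the (possibly smaller) class of the reduced receiver and transitivity of $\LEQ$ for the reduced argument. \rn{RC-With} and \rn{RC-Swap} additionally require that the enlarged layer set $\overline{L}'$ still be well formed: \rn{Wf-With} from the premises $\overline{L}\ \mathtt{req}\ \Lambda'$ and $\set{\overline{L}}\LEQ_w\Lambda'$ of \rn{T-With} (after noting $\mathtt{new}\ L():\mathtt{L}$), and \rn{Wf-Swap} from the corresponding premises of \rn{T-Swap}; this is exactly where the \texttt{requires}-discipline carried by $\WF$ is preserved. \rn{R-Field} is immediate from \rn{T-Field}, \rn{T-New} and the shape of $\fields$; \rn{R-WithVal}/\rn{R-SwapVal} because the body value retains its type; and \rn{R-Invk} because \rn{T-Invk} and the run-time invocation rule assign $\mathtt{new}\ C(\overline{v}).m(\overline{w})$ and $\mathtt{new}\ C(\overline{v})<C,\overline{L},\overline{L}>.m(\overline{w})$ the same type, since the cursor is initialized to the currently activated layers.

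The substantive cases are \rn{R-InvkB}, \rn{R-InvkP}, and \rn{R-InvkSP}. In each, inverting the run-time invocation rule gives $\mtype$ (or $\pmtype$) of the receiver class with signature $\overline{T}\!\to\!T_0$ --- $T_0$ the type of the whole expression --- together with well-typed arguments $\overline{w}:\overline{S}$, $\overline{S}\LEQ\overline{T}$, and $\overline{v}$ matching $\fields(C)$. By agreement (v), the body satisfies $\Loc_0;\Lambda_0;\overline{x}:\overline{T},\mathtt{this}:D'\p e_0:S_0$ with $S_0\LEQ T_0$, where $\Loc_0,\Lambda_0,D'$ are determined by where $\mbody$/$\pmbody$ found it (a base method: $\Loc_0=D'.m$, $\Lambda_0=\emptyset$; a partial method found in layer $L_0$: $\Loc_0=L_0.D'.m$, $\Lambda_0=\Lambda''\cup\set{L_0}$ with $L_0\ \mathtt{req}\ \Lambda''$). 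Now apply the substitution lemma (iv) to the substitution the reduction rule performs, checking each replacement is well typed at location $\bullet$ under the runtime set $\set{\overline{L}}$: $\mathtt{new}\ C(\overline{v})$ for $\mathtt{this}$ by \rn{T-New}; the $\overline{w}$ for $\overline{x}$ are given with types $\LEQ\overline{T}$; and the cursor-shifted run-time expressions substituted for $\mathtt{super}$, $\mathtt{proceed}$, and (in \rn{R-InvkP}) $\mathtt{superproceed}$ by the run-time invocation rules, once their side conditions are re-established. Re-establishing those side conditions is the crux: one must propagate the invariants that $\overline{L}'$ stays a prefix of $\overline{L}$, that $\set{\overline{L}}$ stays well formed (hence $\set{\overline{L}}\LEQ_w\Lambda''$ for the relevant $\Lambda''$, by inverting $\WF$), that the class component remains a superclass of $C$, and --- above all --- that the signature the reduced expression's $\mtype$/$\pmtype$ lookup finds from the shifted cursor coincides with the one $e_0$ was typechecked against in \rn{T-Proceed}/\rn{T-SuperB}/\rn{T-SuperP}/\rn{T-SuperProceed}; this is discharged by combining (ii), (v), $\noconflict$ and $\override^h$. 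Composing, $e'$ receives a type $S_0'\LEQ S_0\LEQ T_0=T$, as required. (The weak-versus-normal subtyping distinction plays essentially no role here, since $\LEQ$ implies $\LEQ_w$; it bites only where first-class layer values flow into \texttt{with}/\texttt{swap}, and there the values already have the form $\mathtt{new}\ L()$.)

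I expect the main obstacle to be precisely the formulation of lemmas (ii) and (v), together with the well-typedness of the cursor-shifting substitutions, so that all the annotation bookkeeping survives reduction --- in particular, that the expression substituted for $\mathtt{proceed}$ (cursor one layer to the left) and for $\mathtt{superproceed}$ (cursor one step up the layer-inheritance chain, turning a triple into a quadruple) stays well typed. This needs well-formedness of $\set{\overline{L}}$ to guarantee that every \texttt{requires} obligation a body was typed against is met by the runtime sequence, and it needs the global override conditions to align the dynamically found signature with the statically assumed one. A second delicate point, already flagged by the counterexamples in the excerpt, is verifying that the two extra restrictions in \rn{T-LayerSW} (no sublayer of a swappable layer is \texttt{require}d by any layer, and all its sublayers share interface and \texttt{requires} clause) are exactly what makes the \texttt{swap} reduction preserve well-formedness of the layer set and keep these lookups aligned --- without them, as shown, well-typed programs get stuck.
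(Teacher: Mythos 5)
There is a genuine gap in the invocation cases, centered on what your run-time typing rule must carry to survive a \texttt{proceed} step. Your rule for $\texttt{new C(}\overline{\texttt{v}}\texttt{)<D,}\overline{\texttt{L}}'\texttt{,}\overline{\texttt{L}}\texttt{>.m(}\overline{\texttt{e}}\texttt{)}$ records only that $\overline{\texttt{L}}'$ is a prefix of $\overline{\texttt{L}}$, that \texttt{D} is a superclass, and that $\mtype$ from the cursor yields a signature. In the \rn{R-InvkP} case the cursor moves one layer to the left, and you must re-establish that $\mtype(\texttt{m},\texttt{C}'',\set{\overline{\texttt{L}}'''},\set{\overline{\texttt{L}}'})$ is still \emph{defined} for the shrunken prefix $\overline{\texttt{L}}'''$. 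Your plan addresses only the \emph{coincidence} of that signature with the one \rn{T-Proceed} assumed (via $\noconflict$/$\override^h$), but coincidence presupposes existence, and existence does not follow from your listed invariants: \rn{T-Proceed} guarantees the method exists under the \texttt{requires} set of the layer where the body was found, and well-formedness of $\set{\overline{\texttt{L}}'}$ only guarantees that set is covered by the \emph{whole} runtime sequence --- not by the prefix to the \emph{left} of the current layer, since \texttt{with} can reorder an already-activated layer past the one that requires it. This is exactly the ``dangling \texttt{proceed}'' scenario the paper discusses; closing it requires an extra invariant on the cursor (the paper's $\ndp$ predicate inside \rn{Wf-Cursor}, established initially from $\WF$ and propagated leftward through $\mbody$), which your proposal has no counterpart for. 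Without it, the premise of your own run-time rule cannot be recovered for the expression substituted for \texttt{proceed}.

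Two secondary calibration problems would also bite. First, your layer-set weakening lemma (ii) is simultaneously too weak and wrongly indexed: its main job is to re-type the looked-up method body (which \emph{does} contain \texttt{proceed}/\texttt{super}/\texttt{superproceed}) from its \texttt{requires}-derived layer set down to the runtime set, so the syntactic restriction you impose excludes precisely the intended use; and the relation must be the swappable-aware $\LEQ_{sw}$ rather than $\LEQ_w$, because after a \texttt{swap} the runtime set is not a weak subtype of the set the enclosing call was typed under --- it only relates through a common swappable ancestor. (The restriction is also unnecessary: in \rn{T-Proceed}/\rn{T-SuperP} the signature comes from the \texttt{requires} clause, not from $\Lambda$, so narrowing $\Lambda$ is harmless there.) Second, your parenthetical that normal subtyping ``plays essentially no role'' is not right: in \rn{RC-WithArg}/\rn{RC-SwapArg} the layer expression reduces to something of a normal subtype $\texttt{L}'\LEQ\texttt{L}$, and it is exactly the invariance of \texttt{requires} under normal subtyping (\rn{LS-Extends}) that lets you re-establish the $\Lambda\LEQ_w\Lambda'$ premise of \rn{T-With}/\rn{T-Swap}; with only weak subtyping these congruence cases would fail.
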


\begin{theorem}[Progress]\label{thm:def:progress}
  Suppose $\p (\CT, \LT)\ensuremath{\itbox{ ok}}$.  If $\bullet;
  \set{\ensuremath{\itbox{}\overline{\itbox{L}}\itbox{}}}; \bullet \p \ensuremath{\itbox{e}} : \ensuremath{\itbox{T}}$ and \set{\ensuremath{\itbox{}\overline{\itbox{L}}\itbox{}}}\, \WF, then \ensuremath{\itbox{e}} is a
  value or $\reduceto{\ensuremath{\itbox{}\overline{\itbox{L}}\itbox{}}}{\ensuremath{\itbox{e}}}{\ensuremath{\itbox{e{\ensuremath{'}}}}}$ for some \ensuremath{\itbox{e{\ensuremath{'}}}}.
\end{theorem}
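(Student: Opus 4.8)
The plan is to argue by the standard Wright--Felleisen method: structural induction on the derivation of $\bullet; \set{\ensuremath{\itbox{}\overline{\itbox{L}}\itbox{}}}; \bullet \p \ensuremath{\itbox{e}} : \ensuremath{\itbox{T}}$, i.e.\ essentially on the shape of \ensuremath{\itbox{e}}.  Because the location component is $\bullet$ and the type environment is empty, the last rule applied is one of \rn{T-NewL}, \rn{T-New}, \rn{T-Field}, \rn{T-Invk}, \rn{T-With}, \rn{T-Swap}, or one of the two typing rules for the run-time method-invocation expressions introduced in the remainder of this section: \rn{T-Var} cannot occur since $\Gamma$ is empty, and \rn{T-SuperB}, \rn{T-SuperP}, \rn{T-Proceed}, \rn{T-SuperProceed} cannot occur since each demands a non-$\bullet$ location (which is also why, after reduction, the top-level expression never contains a bare \ensuremath{\itbox{super}}, \ensuremath{\itbox{proceed}}, or \ensuremath{\itbox{superproceed}}).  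We will also use the immediate canonical-forms fact---there being no subsumption rule in \figref{CFJ:exp_typing}---that a value of a class type necessarily has the shape $\ensuremath{\itbox{new C(}\overline{\itbox{v}}\itbox{)}}$ and a value of a layer type is necessarily $\ensuremath{\itbox{new L()}}$.

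The routine cases follow the familiar pattern.  \rn{T-NewL} already yields a value.  For \rn{T-New}, \rn{T-Field}, \rn{T-Invk}, and the run-time invocations, if some immediate subexpression is not a value we step by the matching congruence rule of \figref{CFJ:reduction2}; otherwise all subexpressions are values.  Under \rn{T-Field} the receiver is then $\ensuremath{\itbox{new C}_{0}\itbox{(}\overline{\itbox{v}}\itbox{)}}$, the accessed field is a genuine $\ensuremath{\itbox{f}_{i}\itbox{}}$ of $\fields(\ensuremath{\itbox{C}_{0}\itbox{}})$, and---since $\ensuremath{\itbox{new C}_{0}\itbox{(}\overline{\itbox{v}}\itbox{)}}$ was typed by \rn{T-New}---the number of arguments matches the number of fields, so \rn{R-Field} fires.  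Under \rn{T-Invk} the receiver is $\ensuremath{\itbox{new C}_{0}\itbox{(}\overline{\itbox{v}}\itbox{)}}$ with $\ensuremath{\itbox{C}_{0}\itbox{}}$ a class; \rn{R-Invk} rewrites it to the corresponding run-time triple expression, so it remains only to see that this triple reduces, i.e.\ that $\mbody(\ensuremath{\itbox{m}},\ensuremath{\itbox{C}_{0}\itbox{}},\ensuremath{\itbox{}\overline{\itbox{L}}\itbox{}},\ensuremath{\itbox{}\overline{\itbox{L}}\itbox{}})$ is defined and hence \rn{R-InvkB} or \rn{R-InvkP} applies.  This follows from the premise ``$\mtype(\ensuremath{\itbox{m}},\ensuremath{\itbox{C}_{0}\itbox{}},\set{\ensuremath{\itbox{}\overline{\itbox{L}}\itbox{}}})$ defined'' together with a lemma I would establish first: whenever $\mtype(\ensuremath{\itbox{m}},\ensuremath{\itbox{C}},\set{\ensuremath{\itbox{}\overline{\itbox{L}}\itbox{}}})$ is defined so is $\mbody(\ensuremath{\itbox{m}},\ensuremath{\itbox{C}},\ensuremath{\itbox{}\overline{\itbox{L}}\itbox{}},\ensuremath{\itbox{}\overline{\itbox{L}}\itbox{}})$, and $\pmtype(\ensuremath{\itbox{m}},\ensuremath{\itbox{C}},\ensuremath{\itbox{L}})$ is defined iff $\pmbody(\ensuremath{\itbox{m}},\ensuremath{\itbox{C}},\ensuremath{\itbox{L}})$ is.  That lemma is by induction on the class hierarchy (well founded by the sanity conditions), using that $\mtype$ and $\mbody$ traverse the same layer-then-superclass search space and that $\pmtype$ and $\pmbody$ walk the superlayer chain in lockstep.

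For \rn{T-With}: if $\ensuremath{\itbox{e}_{l}\itbox{}}$ is not a value, \rn{RC-WithArg} applies; otherwise $\ensuremath{\itbox{e}_{l}\itbox{}} = \ensuremath{\itbox{new L()}}$, and if the body $\ensuremath{\itbox{e}_{0}\itbox{}}$ is a value then \rn{R-WithVal} applies, else we conclude by \rn{RC-With} after applying the induction hypothesis to $\ensuremath{\itbox{e}_{0}\itbox{}}$ under the sequence $\with(\ensuremath{\itbox{L}},\ensuremath{\itbox{}\overline{\itbox{L}}\itbox{}})$.  To use the hypothesis we need $\set{\with(\ensuremath{\itbox{L}},\ensuremath{\itbox{}\overline{\itbox{L}}\itbox{}})}\,\WF$; but this set is exactly $\set{\ensuremath{\itbox{}\overline{\itbox{L}}\itbox{}}}\cup\set{\ensuremath{\itbox{L}}}$, the premises $\ensuremath{\itbox{L req }}\Lambda'$ and $\set{\ensuremath{\itbox{}\overline{\itbox{L}}\itbox{}}}\LEQ_w\Lambda'$ of \rn{T-With} are precisely the side conditions of \rn{Wf-With} (given $\set{\ensuremath{\itbox{}\overline{\itbox{L}}\itbox{}}}\,\WF$), so it is well formed; and \rn{T-With} already types $\ensuremath{\itbox{e}_{0}\itbox{}}$ at the set $\Lambda \cup \set{\ensuremath{\itbox{L}}}$ with the same location $\bullet$ and empty environment, so the hypothesis applies.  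The \rn{T-Swap} case is strictly analogous, via \rn{RC-SwapArg}, \rn{R-SwapVal}, and \rn{RC-Swap}, with $\set{\swap(\ensuremath{\itbox{L}},\ensuremath{\itbox{L}_{{sw}}\itbox{}},\ensuremath{\itbox{}\overline{\itbox{L}}\itbox{}})}$ equal to the set $\Lambda_{rm}\cup\set{\ensuremath{\itbox{L}}}$ of \rn{T-Swap} and the premises of \rn{T-Swap} matching \rn{Wf-Swap}.

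The step I expect to be the real work is the pair of run-time method-invocation cases---the triple-cursor form and its four-component \ensuremath{\itbox{superproceed}} variant---which hinge on the typing rules for those expressions introduced below.  Once all arguments are values, one must exhibit the relevant lookup as defined: $\mbody$ on the triple's cursor so that \rn{R-InvkB} or \rn{R-InvkP} fires, and $\pmbody$ on the quadruple's superlayer component so that \rn{R-InvkSP} fires.  This is where well-formedness of $\set{\ensuremath{\itbox{}\overline{\itbox{L}}\itbox{}}}$ is genuinely needed: the run-time typing rules must record enough about the cursor---in particular, that the remaining activated-layer sequence still weakly dominates (in the $\LEQ_w$ sense) the layers required at the method found there---so that for a \ensuremath{\itbox{proceed}} whose cursor has already shifted leftward, or a \ensuremath{\itbox{superproceed}} whose cursor has moved up the superlayer chain, the remaining search space is provably nonempty; for \ensuremath{\itbox{superproceed}} one additionally needs the recorded superlayer to actually carry a matching partial method, which is exactly the content of the $\pmtype(\ensuremath{\itbox{m}},\ensuremath{\itbox{C}},\ensuremath{\itbox{L{\ensuremath{'}}}})$ premise of \rn{T-SuperProceed} threaded into that run-time expression.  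Packaged as a handful of ``well-typed run-time expression implies lookup defined'' lemmas proved alongside the run-time typing rules, these cases collapse to the same shape as \rn{T-Invk}.
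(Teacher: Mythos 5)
Your proposal follows the paper's proof essentially verbatim: the same induction on the typing derivation with case analysis on the last rule, the same congruence/value split, the same discharge of the \rn{T-With}/\rn{T-Swap} cases by matching their premises against \rn{Wf-With}/\rn{Wf-Swap}, and the same two auxiliary lemmas (the paper's \lemref{def:pmtype} and \lemref{def:mtype}) converting definedness of $\pmtype$/$\mtype$ into definedness of $\pmbody$/$\mbody$, with the run-time typing rules \rn{T-InvkA}/\rn{T-InvkAL} indeed carrying the $\mtype$/$\pmtype$ premises you anticipate. The only refinement the paper makes that your lemma statement omits is that the lookup lemmas must also record that the definition is found in a class other than \ensuremath{\itbox{Object}} and a layer other than \ensuremath{\itbox{Base}} (and that parameter counts match), since the substitution targets for \ensuremath{\itbox{super}} and \ensuremath{\itbox{superproceed}} in \rn{R-InvkB}/\rn{R-InvkP}/\rn{R-InvkSP} require an existing superclass and superlayer.
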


\begin{theorem}[Type Soundness]\label{thm:def:soundness}
  If \(\p (\CT, \LT, \ensuremath{\itbox{e}}) : \ensuremath{\itbox{T}}\) and \ensuremath{\itbox{e}} reduces to a normal form under
  the empty set of layers, then the normal form is \ensuremath{\itbox{new S(}\overline{\itbox{v}}\itbox{)}}
  for some \ensuremath{\itbox{}\overline{\itbox{v}}\itbox{}} and \(\ensuremath{\itbox{S}}\) such that \(\ensuremath{\itbox{S}} \LEQ \ensuremath{\itbox{T}}\).
\end{theorem}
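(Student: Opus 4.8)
The plan is to derive this theorem as a routine corollary of Subject Reduction (Theorem~\ref{thm:def:subject-reduction}) and Progress (Theorem~\ref{thm:def:progress}), in the usual syntactic style. First I would invert \rn{T-Prog} on the hypothesis $\p (\CT, \LT, \ensuremath{\itbox{e}}) : \ensuremath{\itbox{T}}$ to obtain $\p (\CT, \LT)\ensuremath{\itbox{ ok}}$ and $\bullet; \emptyset; \bullet \p \ensuremath{\itbox{e}} : \ensuremath{\itbox{T}}$, and observe that $\emptyset \, \WF$ by \rn{Wf-Empty}. Together with the location $\bullet$ and the empty type environment $\bullet$, these are precisely the side conditions needed to apply the two theorems to the main expression under the empty layer sequence.

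Next, I would write the assumed reduction to a normal form as a finite chain $\reduceto{\bullet}{\ensuremath{\itbox{e}}}{\ensuremath{\itbox{e}_{1}\itbox{}}}, \reduceto{\bullet}{\ensuremath{\itbox{e}_{1}\itbox{}}}{\ensuremath{\itbox{e}_{2}\itbox{}}}, \ldots, \reduceto{\bullet}{\ensuremath{\itbox{e}_{n-1}\itbox{}}}{\ensuremath{\itbox{e}_{n}\itbox{}}}$ with $\ensuremath{\itbox{e}_{n}\itbox{}}$ irreducible under $\bullet$, and prove by induction on $n$ that $\bullet; \emptyset; \bullet \p \ensuremath{\itbox{e}_{n}\itbox{}} : \ensuremath{\itbox{S}}$ for some $\ensuremath{\itbox{S}}$ with $\ensuremath{\itbox{S}} \LEQ \ensuremath{\itbox{T}}$. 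The base case is the typing hypothesis above. In the inductive step, the top-level layer sequence stays $\bullet$ (so the layer set stays $\emptyset$, still well formed), and the type environment stays empty, so Subject Reduction applies to each step and assigns $\ensuremath{\itbox{e}_{i+1}\itbox{}}$ a type that is a subtype of that of $\ensuremath{\itbox{e}_{i}\itbox{}}$; chaining these through transitivity of subtyping (\rn{CL-Trans} and \rn{LS-Trans}) gives $\ensuremath{\itbox{S}} \LEQ \ensuremath{\itbox{T}}$.

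Finally, since $\ensuremath{\itbox{e}_{n}\itbox{}}$ is a normal form under $\bullet$, is well typed by the step above, and $\emptyset \, \WF$, Progress forces $\ensuremath{\itbox{e}_{n}\itbox{}}$ to be a value; by the value grammar it has the form $\ensuremath{\itbox{new S(}\overline{\itbox{v}}\itbox{)}}$ --- either $\ensuremath{\itbox{new C(}\overline{\itbox{v}}\itbox{)}}$, or $\ensuremath{\itbox{new L()}}$, which is of this shape with $\ensuremath{\itbox{S}}$ a layer name and $\ensuremath{\itbox{}\overline{\itbox{v}}\itbox{}}$ empty. Inverting the unique applicable typing rule (\rn{T-New} or \rn{T-NewL}) shows that the type of this value is exactly the name occurring in it; identifying that name with the $\ensuremath{\itbox{S}}$ delivered by the induction yields $\ensuremath{\itbox{S}} \LEQ \ensuremath{\itbox{T}}$, which is the claim.

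The corollary itself poses no real obstacle; all the substantive work is concentrated in Subject Reduction and Progress. Their proofs will require the (deferred) typing rules for the run-time forms $\ensuremath{\itbox{new C(}\overline{\itbox{v}}\itbox{)<D,}\overline{\itbox{L}}\itbox{{\ensuremath{'}},}\overline{\itbox{L}}\itbox{>.m(}\overline{\itbox{e}}\itbox{)}}$ and $\ensuremath{\itbox{new C(}\overline{\itbox{v}}\itbox{)<D,L,}\overline{\itbox{L}}\itbox{{\ensuremath{'}},}\overline{\itbox{L}}\itbox{>.m(}\overline{\itbox{e}}\itbox{)}}$, a substitution lemma, weakening with respect to the layer set $\Lambda$, and, most importantly, lemmas connecting $\mbody$ and $\pmbody$ with $\mtype$ and $\pmtype$ over well-formed layer sets --- the method-lookup invariants guaranteeing that every cursor produced by \rn{R-InvkB}, \rn{R-InvkP}, and \rn{R-InvkSP} (for \ensuremath{\itbox{super}}, \ensuremath{\itbox{proceed}}, and \ensuremath{\itbox{superproceed}}) still lands on a defined (partial) method, which is exactly where the swappable-layer restrictions isolated by \rn{T-LayerSW} come into play. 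Within the corollary proper, the only point that needs a little care is checking that the hypotheses of the two theorems are preserved along the whole reduction chain, together with the short case analysis on the shape of a value at the end.
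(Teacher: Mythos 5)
Your proposal is correct and matches the paper's own proof, which simply cites \rn{T-Prog} together with Subject Reduction and Progress; you have merely spelled out the standard induction on the reduction chain and the final case analysis on the value grammar that the paper leaves implicit. No discrepancies.
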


\subsection{Typing Rules for Run-time Expressions}
To prove the theorems above, we have to give typing rules for run-time
expressions of the forms \ensuremath{\itbox{new C(}\overline{\itbox{v}}\itbox{)<D,}\overline{\itbox{L}}\itbox{{\ensuremath{'}},}\overline{\itbox{L}}\itbox{>.m(}\overline{\itbox{e}}\itbox{)}} and %
\ensuremath{\itbox{new C(}\overline{\itbox{v}}\itbox{)<D,L,}\overline{\itbox{L}}\itbox{{\ensuremath{'}},}\overline{\itbox{L}}\itbox{>.m(}\overline{\itbox{e}}\itbox{)}}, which are not supposed to appear in a
class/layer table.  The typing rules with the rules for a few
auxiliary judgments are given in \figref{CFJ:runtime_rules}:
\begin{figure}
  \typicallabel{T-InvkA}
  \leavevmode
  \infrule[T-InvkA]{
    \bullet; \LGp \ensuremath{\itbox{new C}_{0}\itbox{(}\overline{\itbox{v}}\itbox{)}} : \ensuremath{\itbox{C}_{0}\itbox{}} \andalso
    \ensuremath{\itbox{C}_{0}\itbox{.m}} \p \ensuremath{\itbox{<D}_{0}\itbox{,}\overline{\itbox{L}}\itbox{{\ensuremath{'}},}\overline{\itbox{L}}\itbox{> ok}}  \andalso \LSet \LEQ_{sw} \set{\ensuremath{\itbox{}\overline{\itbox{L}}\itbox{}}} \\
    \mtype(\ensuremath{\itbox{m}},\ensuremath{\itbox{D}_{0}\itbox{}},\set{\ensuremath{\itbox{}\overline{\itbox{L}}\itbox{{\ensuremath{'}}}}}, \set{\ensuremath{\itbox{}\overline{\itbox{L}}\itbox{}}}) = \ensuremath{\itbox{}\overline{\itbox{T}}\itbox{{\ensuremath{'}}\(\rightarrow\)T}_{0}\itbox{}}  \andalso
    \bullet; \LGp \ensuremath{\itbox{}\overline{\itbox{e}}\itbox{}} : \ensuremath{\itbox{}\overline{\itbox{S}}\itbox{}}  \andalso \ensuremath{\itbox{}\overline{\itbox{S}}\itbox{}} \LEQ \ensuremath{\itbox{}\overline{\itbox{T}}\itbox{{\ensuremath{'}}}}
  }{
    \bullet; \LGp \ensuremath{\itbox{new C}_{0}\itbox{(}\overline{\itbox{v}}\itbox{)<D}_{0}\itbox{,}\overline{\itbox{L}}\itbox{{\ensuremath{'}},}\overline{\itbox{L}}\itbox{>.m(}\overline{\itbox{e}}\itbox{)}} : \ensuremath{\itbox{T}_{0}\itbox{}}
  }
  \infrule[T-InvkAL]{
    \bullet; \LGp \ensuremath{\itbox{new C}_{0}\itbox{(}\overline{\itbox{v}}\itbox{)}} : \ensuremath{\itbox{C}_{0}\itbox{}} \andalso
    \ensuremath{\itbox{C}_{0}\itbox{.m}} \p \ensuremath{\itbox{<D}_{0}\itbox{,(}\overline{\itbox{L}}\itbox{{\ensuremath{'}}{\ensuremath{'}};L}_{0}\itbox{),}\overline{\itbox{L}}\itbox{> ok}}  \andalso
    \LSet \LEQ_{sw} \set{\ensuremath{\itbox{}\overline{\itbox{L}}\itbox{}}} \\
    \ensuremath{\itbox{L}_{0}\itbox{}} \LEQ_w \ensuremath{\itbox{L}_{1}\itbox{}} \andalso
    \pmtype(\ensuremath{\itbox{m}},\ensuremath{\itbox{D}_{0}\itbox{}},\ensuremath{\itbox{L}_{1}\itbox{}}) = \ensuremath{\itbox{}\overline{\itbox{T}}\itbox{{\ensuremath{'}}\(\rightarrow\)T}_{0}\itbox{}} \andalso
    \bullet; \LGp \ensuremath{\itbox{}\overline{\itbox{e}}\itbox{}} : \ensuremath{\itbox{}\overline{\itbox{S}}\itbox{}}  \andalso \ensuremath{\itbox{}\overline{\itbox{S}}\itbox{}} \LEQ \ensuremath{\itbox{}\overline{\itbox{T}}\itbox{{\ensuremath{'}}}}
  }{
    \bullet; \LGp \ensuremath{\itbox{new C}_{0}\itbox{(}\overline{\itbox{v}}\itbox{)<D}_{0}\itbox{,L}_{1}\itbox{,(}\overline{\itbox{L}}\itbox{{\ensuremath{'}}{\ensuremath{'}};L}_{0}\itbox{),}\overline{\itbox{L}}\itbox{>.m(}\overline{\itbox{e}}\itbox{)}} : \ensuremath{\itbox{T}_{0}\itbox{}}
  }
  \infrule[LSSW-Intro]{
    \forall \ensuremath{\itbox{L}_{0}\itbox{}} \in \LSet_0. \exists \ensuremath{\itbox{L}_{1}\itbox{}} \in \LSet_1. (
    \begin{array}[t]{l}
      \ensuremath{\itbox{L}_{1}\itbox{}} \LEQ_w \ensuremath{\itbox{L}_{0}\itbox{}} \text{ or} \\ \exists \ensuremath{\itbox{L}_{2}\itbox{}} \in
      \dom(\LT). \ensuremath{\itbox{L}_{2}\itbox{ swappable}} \text{ and } \ensuremath{\itbox{L}_{0}\itbox{}} \LEQ_w \ensuremath{\itbox{L}_{2}\itbox{}}
      \text{ and } \ensuremath{\itbox{L}_{1}\itbox{}} \LEQ_w \ensuremath{\itbox{L}_{2}\itbox{}})
    \end{array}
  }{
    \LSet_1 \LEQ_{sw} \LSet_0
  }
  \infrule[Wf-Cursor]{
    \ensuremath{\itbox{C}} \LEQ \ensuremath{\itbox{D}} \andalso 
    \set{\ensuremath{\itbox{}\overline{\itbox{L}}\itbox{}_{2}\itbox{}}}\, \WF \andalso \ndp(\ensuremath{\itbox{m}},\ensuremath{\itbox{D}},\ensuremath{\itbox{}\overline{\itbox{L}}\itbox{}_{1}\itbox{}},\ensuremath{\itbox{}\overline{\itbox{L}}\itbox{}_{2}\itbox{}})
  }{
    \ensuremath{\itbox{C.m}} \p \ensuremath{\itbox{<D,}\overline{\itbox{L}}\itbox{}_{1}\itbox{,}\overline{\itbox{L}}\itbox{}_{2}\itbox{> ok}} 
  }
  \infrule[NDP-Class]{
    \ensuremath{\itbox{class C {\char'173}.. C}_{0}\itbox{ m(..){\char'173}..{\char'175} ..{\char'175}}} 
  }{
    \ndp(\ensuremath{\itbox{m}},\ensuremath{\itbox{C}},\ensuremath{\itbox{}\overline{\itbox{L}}\itbox{}_{1}\itbox{}},(\ensuremath{\itbox{}\overline{\itbox{L}}\itbox{}_{1}\itbox{;}\overline{\itbox{L}}\itbox{}_{2}\itbox{}}))
  }  
  \infrule[NDP-Layer]{
    \exists \ensuremath{\itbox{L}_{0}\itbox{}} \in \ensuremath{\itbox{}\overline{\itbox{L}}\itbox{}_{1}\itbox{}}. \ensuremath{\itbox{proceed}} \not \in \pmbody(\ensuremath{\itbox{m}},\ensuremath{\itbox{C}},\ensuremath{\itbox{L}_{0}\itbox{}})
  }{
    \ndp(\ensuremath{\itbox{m}},\ensuremath{\itbox{C}},\ensuremath{\itbox{}\overline{\itbox{L}}\itbox{}_{1}\itbox{}},(\ensuremath{\itbox{}\overline{\itbox{L}}\itbox{}_{1}\itbox{;}\overline{\itbox{L}}\itbox{}_{2}\itbox{}}))
  }
  \infrule[NDP-Super]{
    \ensuremath{\itbox{C \(\triangleleft\) D}} \andalso \ndp(\ensuremath{\itbox{m}},\ensuremath{\itbox{D}},(\ensuremath{\itbox{}\overline{\itbox{L}}\itbox{}_{1}\itbox{;}\overline{\itbox{L}}\itbox{}_{2}\itbox{}}),(\ensuremath{\itbox{}\overline{\itbox{L}}\itbox{}_{1}\itbox{;}\overline{\itbox{L}}\itbox{}_{2}\itbox{}})) \\
  }{
    \ndp(\ensuremath{\itbox{m}},\ensuremath{\itbox{C}},\ensuremath{\itbox{}\overline{\itbox{L}}\itbox{}_{1}\itbox{}},(\ensuremath{\itbox{}\overline{\itbox{L}}\itbox{}_{1}\itbox{;}\overline{\itbox{L}}\itbox{}_{2}\itbox{}}))
  }
  \caption{\fsname: Runtime expression typing.}
  \label{fig:CFJ:runtime_rules}
\end{figure}

\begin{sloppypar}
In the rule \rn{T-InvkA} for \ensuremath{\itbox{new C}_{0}\itbox{(}\overline{\itbox{v}}\itbox{)<D}_{0}\itbox{,}\overline{\itbox{L}}\itbox{{\ensuremath{'}},}\overline{\itbox{L}}\itbox{>.m(}\overline{\itbox{e}}\itbox{)}}, the
premises except for $\ensuremath{\itbox{C}_{0}\itbox{.m}} \p \ensuremath{\itbox{<D}_{0}\itbox{,}\overline{\itbox{L}}\itbox{{\ensuremath{'}},}\overline{\itbox{L}}\itbox{> ok}}$ and
$\LSet \LEQ_{sw} \set{\ensuremath{\itbox{}\overline{\itbox{L}}\itbox{}}}$---they are explained in detail
below---are similar to \rn{T-Invk}.  The method signature is obtained
by using the current cursor \ensuremath{\itbox{<D}_{0}\itbox{,}\overline{\itbox{L}}\itbox{{\ensuremath{'}},}\overline{\itbox{L}}\itbox{>}}.  The rule \rn{T-InvkAL}
for a method invoked by \ensuremath{\itbox{superproceed}} is similar.  One difference is
that the method signature is obtained by using \(\pmtype\); the
receiver is derived from a \ensuremath{\itbox{superproceed}} call that originated from a
superlayer of \ensuremath{\itbox{L}_{0}\itbox{}}, hence \(\ensuremath{\itbox{L}_{0}\itbox{}} \LEQ_{sw} \ensuremath{\itbox{L}_{1}\itbox{}}\).
\end{sloppypar}

The condition $\LSet \LEQ_{sw} \set{\ensuremath{\itbox{}\overline{\itbox{L}}\itbox{}}}$ relates the layer sequence
\ensuremath{\itbox{}\overline{\itbox{L}}\itbox{}} in the cursor and $\LSet$, which intuitively represents the set
of layers activated at this program point.  In many cases,
$\LSet = \set{\ensuremath{\itbox{}\overline{\itbox{L}}\itbox{}}}$ holds but if \ensuremath{\itbox{super}} and \ensuremath{\itbox{proceed}} calls are
surrounded by \ensuremath{\itbox{with}} or \ensuremath{\itbox{swap}}, they can be different.  The relation
$\LEQ_{sw}$ is similar to $\LEQ_w$ but the additional clauses
$\exists \ensuremath{\itbox{L}_{2}\itbox{}} \in \dom(\LT). \ensuremath{\itbox{L}_{2}\itbox{ swappable}}$ and $\ensuremath{\itbox{L}_{0}\itbox{}}
\LEQ_w \ensuremath{\itbox{L}_{2}\itbox{}} \text{ and } \ensuremath{\itbox{L}_{1}\itbox{}} \LEQ_w \ensuremath{\itbox{L}_{2}\itbox{}}$ take into account the
possibility that a layer in $\LSet$ may be activated by swapping
layers in \set{\ensuremath{\itbox{}\overline{\itbox{L}}\itbox{}}} out.

The judgment $\ensuremath{\itbox{C.m}} \p \ensuremath{\itbox{<D,}\overline{\itbox{L}}\itbox{}_{1}\itbox{,}\overline{\itbox{L}}\itbox{}_{2}\itbox{> ok}}$, which means that the
cursor is well formed with respect to method \ensuremath{\itbox{m}} in class \ensuremath{\itbox{C}}, is
defined by \rn{Wf-Cursor}.  It requires that \ensuremath{\itbox{D}} to be a superclass of
\ensuremath{\itbox{C}} and \ensuremath{\itbox{}\overline{\itbox{L}}\itbox{}_{2}\itbox{}} to be well formed.  The last condition
\(\ndp(\ensuremath{\itbox{m}},\ensuremath{\itbox{D}_{0}\itbox{}},\ensuremath{\itbox{}\overline{\itbox{L}}\itbox{{\ensuremath{'}}}},\ensuremath{\itbox{}\overline{\itbox{L}}\itbox{}})\) (standing for ``non-dangling
\ensuremath{\itbox{proceed}}'') intuitively means ``a chain of \ensuremath{\itbox{proceed}} calls from the
given cursor location \ensuremath{\itbox{<D}_{0}\itbox{,}\overline{\itbox{L}}\itbox{{\ensuremath{'}},}\overline{\itbox{L}}\itbox{>}} eventually reaches a (partial)
method that does \emph{not} call \ensuremath{\itbox{proceed}}'' and is defined by the
rules \rn{NDP-Class}, \rn{NDP-Layer} and \rn{NDP-Super}, which are
straightforward.  (Here, ``$\ensuremath{\itbox{proceed}} \not \in \pmbody(\ensuremath{\itbox{m}},\ensuremath{\itbox{C}}, \ensuremath{\itbox{L}_{0}\itbox{}})$''
means that there is no \ensuremath{\itbox{proceed}} calls in the method body obtained by
$\pmbody(\ensuremath{\itbox{m}},\ensuremath{\itbox{C}}, \ensuremath{\itbox{L}_{0}\itbox{}})$.)  This predicate represents an invariant condition
throughout a chain of \ensuremath{\itbox{proceed}} calls and ensures there will not be a dangling \ensuremath{\itbox{proceed}} call.

\subsection{Subject Reduction}

The proof of subject reduction is done by induction on
$\reduceto{\ensuremath{\itbox{}\overline{\itbox{L}}\itbox{}}}{\ensuremath{\itbox{e}}}{\ensuremath{\itbox{e{\ensuremath{'}}}}}$.  Similarly to FJ, one main lemma is the
Substitution Lemma, which is used in the case where \ensuremath{\itbox{e}} is a method
invocation and states substitution of values of types \ensuremath{\itbox{}\overline{\itbox{T}}\itbox{}} for
variables of types \ensuremath{\itbox{}\overline{\itbox{S}}\itbox{}}, where \ensuremath{\itbox{}\overline{\itbox{S}}\itbox{}} are subtypes of \ensuremath{\itbox{}\overline{\itbox{T}}\itbox{}}, in a well
typed term preserves typing.  Another important lemma here is
\lemref{def:substitution-super}, which states substitution for
\ensuremath{\itbox{proceed}}, \ensuremath{\itbox{super}}, and \ensuremath{\itbox{superproceed}} preserves typing.

We state several main lemmas to prove the theorems above; their proofs
as well as other lemmas and proofs are found in Appendix.  We fix
\(\CT\)
and \(\LT\)
and assume \((\CT, \LT)\ensuremath{\itbox{ ok}}\) in the rest of this section.

As usual, adding an unused variable to the type environment preserves
typing (Weakening).  Narrowing usually refers to the property that
replacing the type of a variable in the type environment with its
subtype preserves typing; here, we need narrowing with respect to
(extended) layer set subtyping $\LEQ_{sw}$.  The next lemma states
that a well typed value remains well typed regardless of its typing
context ($\Loc; \LSet; \Gamma$).

\begin{lemma}[Weakening]\label{lem:def:weakening}
  If $\LLGp \ensuremath{\itbox{e}} : \ensuremath{\itbox{T}}$, then $\Loc; \LSet; \Gamma, \ensuremath{\itbox{x}}\colon
  \ensuremath{\itbox{S}} \p \ensuremath{\itbox{e}} : \ensuremath{\itbox{T}}$.
\end{lemma}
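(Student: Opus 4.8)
The plan is to prove the lemma by induction on the derivation of $\LLGp \ensuremath{\itbox{e}} : \ensuremath{\itbox{T}}$, with the statement read as universally quantified over $\Loc$, $\LSet$, $\Gamma$, \ensuremath{\itbox{e}}, \ensuremath{\itbox{T}}, \ensuremath{\itbox{x}}, \ensuremath{\itbox{S}}, and with \ensuremath{\itbox{x}} fresh for $\Gamma$ (i.e., $\ensuremath{\itbox{x}} \notin \dom(\Gamma)$, the standing convention behind the notation $\Gamma, \ensuremath{\itbox{x}}\colon\ensuremath{\itbox{S}}$). The key structural observation is that in every expression-typing rule of \figref{CFJ:exp_typing} and \figref{CFJ:runtime_rules}, the type environment occurs unchanged in each premise that is itself an expression-typing judgment; only the location $\Loc$ and the layer set $\LSet$ may differ between a rule's conclusion and its premises (as in \rn{T-With}, \rn{T-Swap}, \rn{T-Proceed}, \rn{T-SuperP}, \rn{T-SuperProceed}). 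All the remaining side conditions --- $\fields$, $\mtype$, $\pmtype$, the relations $\LEQ$, $\LEQ_w$, and $\LEQ_{sw}$, the \ensuremath{\itbox{req}} and \ensuremath{\itbox{swappable}} predicates, the cursor well-formedness judgment of \rn{Wf-Cursor}, and the $\ndp$ predicate --- do not mention $\Gamma$ at all. Hence every inductive case is discharged uniformly: apply the induction hypothesis to each expression-typing premise (keeping the same extension $\ensuremath{\itbox{x}}\colon\ensuremath{\itbox{S}}$), leave all other premises untouched, and re-apply the same typing rule to conclude $\Loc; \LSet; \Gamma, \ensuremath{\itbox{x}}\colon\ensuremath{\itbox{S}} \p \ensuremath{\itbox{e}} : \ensuremath{\itbox{T}}$.

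The only case that is not purely mechanical is the base case \rn{T-Var}, where $\ensuremath{\itbox{e}} = \ensuremath{\itbox{x}_{i}\itbox{}}$, $\Gamma = \ensuremath{\itbox{}\overline{\itbox{x}}\itbox{:}\overline{\itbox{T}}\itbox{}}$, and $\ensuremath{\itbox{T}} = \ensuremath{\itbox{T}_{i}\itbox{}}$. Here the freshness assumption $\ensuremath{\itbox{x}} \notin \set{\ensuremath{\itbox{}\overline{\itbox{x}}\itbox{}}}$ guarantees that the extended environment $\Gamma, \ensuremath{\itbox{x}}\colon\ensuremath{\itbox{S}}$ still binds $\ensuremath{\itbox{x}_{i}\itbox{}}$ to $\ensuremath{\itbox{T}_{i}\itbox{}}$ in the same position, so \rn{T-Var} applies again and yields type $\ensuremath{\itbox{T}_{i}\itbox{}}$ as required. (Without the freshness convention the statement would fail, since \rn{T-Var} reads a variable's type off $\Gamma$ positionally; if that convention is not already declared globally, the lemma should be stated with $\ensuremath{\itbox{x}} \notin \dom(\Gamma)$ as an explicit hypothesis.)

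I expect no genuine obstacle: this is the standard weakening lemma for an FJ-style calculus, and the COP-specific apparatus --- layer sets, cursors, the two layer subtyping relations, and swapping --- only enlarges the case analysis without affecting the argument, precisely because none of it inspects $\Gamma$. The single point worth stating explicitly in the final write-up is that the induction must be on the typing derivation rather than on the structure of \ensuremath{\itbox{e}}, and the statement must be generalized over $\Loc$ and $\LSet$, so that the induction hypothesis can legitimately be invoked on subderivations whose location and layer set differ from those of the conclusion.
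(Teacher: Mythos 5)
Your proposal is correct and takes the same route as the paper, which simply proves this lemma ``by straightforward induction on $\LLGp \ensuremath{\itbox{e}} : \ensuremath{\itbox{T}}$''; your elaboration of why each case goes through (the environment is threaded unchanged through every premise, and only \rn{T-Var} needs the freshness convention) is exactly the content the paper leaves implicit.
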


\begin{lemma}[Layer Set Narrowing]\label{lem:def:narrowing}
  If $\LLGp \ensuremath{\itbox{e}} : \ensuremath{\itbox{T}}$ and $\LSet' \LEQ_{sw} \LSet$, then $\Loc; \LSet'; \Gp \ensuremath{\itbox{e}} : \ensuremath{\itbox{T}}$.
\end{lemma}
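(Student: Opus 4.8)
The plan is to prove the statement by induction on the derivation of $\Loc; \LSet; \Gamma \p \ensuremath{\itbox{e}} : \ensuremath{\itbox{T}}$, with a case analysis on the last rule applied. For the rules in which the layer set plays no essential role in deriving the type of the conclusion --- \rn{T-Var}, \rn{T-Field}, \rn{T-New}, \rn{T-NewL} --- the result follows immediately from the induction hypotheses on the subexpressions, the hypothesis $\LSet' \LEQ_{sw} \LSet$ being carried unchanged to the premises. For \rn{T-SuperB}, \rn{T-SuperP}, \rn{T-Proceed}, and \rn{T-SuperProceed}, the signature of the invoked (partial) method is looked up via $\mtype$ or $\pmtype$ under the empty set, a \ensuremath{\itbox{requires}} set $\LSet'$, or $\LSet' \cup \set{\ensuremath{\itbox{L}}}$, none of which mentions the ambient layer set; hence only the typings of the argument expressions need the induction hypothesis, and the same signature --- thus the same result type --- is obtained.

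The genuinely load-bearing cases are \rn{T-Invk}, \rn{T-With}, \rn{T-Swap}, and the run-time rules \rn{T-InvkA}, \rn{T-InvkAL}. For \rn{T-Invk} the method type comes from $\mtype(\ensuremath{\itbox{m}}, \ensuremath{\itbox{C}_0}, \LSet)$, so I would first establish an auxiliary \emph{$\mtype$-monotonicity lemma}: if $\p (\CT, \LT)\ensuremath{\itbox{ ok}}$, $\LSet'_1 \LEQ_{sw} \LSet_1$, $\LSet'_2 \LEQ_{sw} \LSet_2$, and $\mtype(\ensuremath{\itbox{m}}, \ensuremath{\itbox{C}}, \LSet_1, \LSet_2)$ is defined, then $\mtype(\ensuremath{\itbox{m}}, \ensuremath{\itbox{C}}, \LSet'_1, \LSet'_2)$ is defined and equal to it (and likewise for $\pmtype$ along the superlayer chain). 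This is proved by induction on the derivation of $\mtype(\ensuremath{\itbox{m}}, \ensuremath{\itbox{C}}, \LSet_1, \LSet_2)$: in the \rn{MT-Class} case the same base definition is found regardless of layers; in the \rn{MT-PMethod} case, the layer $\ensuremath{\itbox{L}} \in \LSet_1$ supplying $\ensuremath{\itbox{C.m}}$ is covered by some $\ensuremath{\itbox{L}'} \in \LSet'_1$ via \rn{LSSW-Intro}, and $\ensuremath{\itbox{L}'}$ --- being either a weak subtype of $\ensuremath{\itbox{L}}$, or a sublayer of a swappable layer of which $\ensuremath{\itbox{L}}$ is also a sublayer --- supplies a partial method $\ensuremath{\itbox{C.m}}$ with the \emph{same} signature, using $\noconflict$, inheritance of partial methods along \ensuremath{\itbox{extends}}, and the \rn{T-LayerSW} requirement that all sublayers of a swappable layer declare exactly the same set of partial methods with identical signatures; in the \rn{MT-Super} case one additionally checks that no layer of $\LSet'_1$ reintroduces $\ensuremath{\itbox{C.m}}$ --- were there one, the covering argument in reverse would produce such a layer inside $\LSet_1$, contradicting the \rn{MT-Super} premise --- and then concludes by the induction hypothesis at the superclass.

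For \rn{T-With} and \rn{T-Swap} the work is to push $\LEQ_{sw}$-narrowing through the premises. The body is typed under an enlarged set, and $\LSet' \LEQ_{sw} \LSet$ yields $\LSet' \cup \set{\ensuremath{\itbox{L}}} \LEQ_{sw} \LSet \cup \set{\ensuremath{\itbox{L}}}$ (and the analogous fact for $\LSet_{rm}$) directly from reflexivity of $\LEQ_w$. The \ensuremath{\itbox{requires}} side-conditions --- $\LSet \LEQ_w \LSet'$ in \rn{T-With}, $\LSet_{rm} \LEQ_w \LSet'$ in \rn{T-Swap} --- must survive narrowing; this needs the auxiliary fact that $\LSet'' \LEQ_{sw} \LSet$ together with $\LSet \LEQ_w \LSet'$ imply $\LSet'' \LEQ_w \LSet'$, which in turn rests on the linearity of each layer's \ensuremath{\itbox{extends}} chain and on the \rn{T-LayerSW} condition that no proper sublayer of a swappable layer is \ensuremath{\itbox{require}}d by any layer --- so the swappable disjunct of \rn{LSSW-Intro} cannot interfere with membership in a \ensuremath{\itbox{requires}} set. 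Finally, \rn{T-InvkA} and \rn{T-InvkAL} carry the side condition $\LSet \LEQ_{sw} \set{\ensuremath{\itbox{}\overline{\itbox{L}}\itbox{}}}$, which under narrowing becomes $\LSet' \LEQ_{sw} \set{\ensuremath{\itbox{}\overline{\itbox{L}}\itbox{}}}$ by transitivity of $\LEQ_{sw}$; since in those rules the $\mtype$/$\pmtype$ lookups use the cursor rather than the ambient $\LSet$, they are untouched, and the typings of $\overline{\ensuremath{\itbox{e}}}$ follow from the induction hypothesis.

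I expect the $\mtype$-monotonicity lemma --- and in particular its interaction with the swappable-layer disjunct of \rn{LSSW-Intro} --- to be the main obstacle: each of the \rn{T-LayerSW} side-conditions (identical \ensuremath{\itbox{requires}} clauses, no newly introduced partial methods, not required elsewhere) is needed to exclude a way in which narrowing could change a lookup result, and one must keep precise track of which of the two layer-set arguments of $\mtype$ is being refined in each invocation rule. Once that lemma and the two small facts about the interplay of $\LEQ_{sw}$ and $\LEQ_w$ are in hand, the remaining cases are routine bookkeeping.
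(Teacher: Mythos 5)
Your overall strategy coincides with the paper's: induction on the typing derivation, an auxiliary lemma stating that $\mtype$ is stable under $\LEQ_{sw}$-refinement of both layer-set arguments (the paper's Lemma~\ref{lem:sublayer-mtype}), transitivity of $\LEQ_{sw}$ for the run-time rules, and the observation that a \texttt{requires} set contains no sublayers of swappable layers so that $\LEQ_{sw}$ against such a set collapses to $\LEQ_w$ in the \rn{T-With}/\rn{T-Swap} side conditions. The \rn{MT-Class} and \rn{MT-PMethod} cases of your auxiliary lemma are argued essentially as in the paper (covering layer via \rn{LSSW-Intro}, then $\noconflict$ and the \rn{T-LayerSW} constraints).

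There is, however, a genuine error in your \rn{MT-Super} case. You claim that no layer of the narrowed set $\LSet'_1$ can "reintroduce" \texttt{C.m}, arguing that "the covering argument in reverse" would place such a layer in $\LSet_1$. But $\LSet'_1 \LEQ_{sw} \LSet_1$ only says that every layer of $\LSet_1$ is covered by some layer of $\LSet'_1$; it places no constraint whatsoever on the \emph{additional} layers in $\LSet'_1$, and the reverse covering you invoke is simply not available. This situation is not pathological: it is exactly what happens in subject reduction, where the run-time layer sequence (which may activate many layers beyond those demanded by \texttt{requires} clauses) narrows the statically assumed set, and one of the extra layers may well define a partial method \texttt{C.m} that no layer of $\LSet_1$ defines. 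The paper therefore does \emph{not} exclude this case; it admits that $\mtype(\texttt{m},\texttt{C},\LSet_3,\LSet_4)$ may now be derived by \rn{MT-PMethod} via some new layer $\ensuremath{\itbox{L{\ensuremath{'}}}}$, and shows the resulting signature still agrees with the one obtained through the superclass by appealing to the program-level well-formedness predicate $\override^h(\ensuremath{\itbox{L{\ensuremath{'}}}},\texttt{C})$ (together with the fact that $\mtype(\texttt{m},\texttt{C},\emptyset,\dom(\LT))$ coincides with the superclass lookup). Your proof as written would get stuck at this point; replacing the false non-reintroduction claim with the $\override^h$ argument repairs it, and the rest of your proposal then goes through.
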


\begin{lemma}[Strengthening for values]\label{lem:def:value-strengthening}
  If $\LLGp \ensuremath{\itbox{v}} : \ensuremath{\itbox{T}}$ then, $\Loc'; \LSet'; \Gamma' \p \ensuremath{\itbox{v}} : \ensuremath{\itbox{T}}$.
\end{lemma}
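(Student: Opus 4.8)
The plan is to argue by induction on the structure of the value \ensuremath{\itbox{v}} (equivalently, on the derivation of $\LLGp \ensuremath{\itbox{v}} : \ensuremath{\itbox{T}}$). The key observation is that, by the grammar in \figref{CFJ:syntax}, a value is either \ensuremath{\itbox{new L()}} or \ensuremath{\itbox{new C(}\overline{\itbox{v}}\itbox{)}}, and since the type system has no subsumption rule, the last rule of the derivation is forced by the syntactic form of \ensuremath{\itbox{v}}: it is \rn{T-NewL} in the first case and \rn{T-New} in the second. (The run-time invocation expressions, which are the only other forms that receive types, are not values.)

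In the case $\ensuremath{\itbox{v}} = \ensuremath{\itbox{new L}_{0}\itbox{()}}$, rule \rn{T-NewL} has no premises and assigns the type $\ensuremath{\itbox{L}_{0}\itbox{}}$ irrespective of $\Loc$, $\LSet$, and $\Gamma$; hence $\ensuremath{\itbox{T}} = \ensuremath{\itbox{L}_{0}\itbox{}}$ and $\Loc'; \LSet'; \Gamma' \p \ensuremath{\itbox{new L}_{0}\itbox{()}} : \ensuremath{\itbox{L}_{0}\itbox{}}$ follows by one more application of \rn{T-NewL}. In the case $\ensuremath{\itbox{v}} = \ensuremath{\itbox{new C}_{0}\itbox{(}\overline{\itbox{v}}\itbox{)}}$, inverting \rn{T-New} gives $\ensuremath{\itbox{T}} = \ensuremath{\itbox{C}_{0}\itbox{}}$, $\fields(\ensuremath{\itbox{C}_{0}\itbox{}}) = \ensuremath{\itbox{}\overline{\itbox{T}}\itbox{ }\overline{\itbox{f}}\itbox{}}$, $\LLGp \ensuremath{\itbox{v}_{i}\itbox{}} : \ensuremath{\itbox{S}_{i}\itbox{}}$ for each $i$, and $\ensuremath{\itbox{}\overline{\itbox{S}}\itbox{}} \LEQ \ensuremath{\itbox{}\overline{\itbox{T}}\itbox{}}$. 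Each argument $\ensuremath{\itbox{v}_{i}\itbox{}}$ is again a value, so the induction hypothesis yields $\Loc'; \LSet'; \Gamma' \p \ensuremath{\itbox{v}_{i}\itbox{}} : \ensuremath{\itbox{S}_{i}\itbox{}}$. Since neither $\fields$ nor the subtyping relation $\LEQ$ depends on the typing context, re-applying \rn{T-New} gives $\Loc'; \LSet'; \Gamma' \p \ensuremath{\itbox{new C}_{0}\itbox{(}\overline{\itbox{v}}\itbox{)}} : \ensuremath{\itbox{C}_{0}\itbox{}}$, which is the required conclusion.

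There is essentially no obstacle here: the lemma holds precisely because the value grammar excludes every construct whose typing rule consults the location (the \ensuremath{\itbox{super}}, \ensuremath{\itbox{proceed}}, and \ensuremath{\itbox{superproceed}} rules), the layer set (the $\mtype$ premise of \rn{T-Invk}), or a free variable (\rn{T-Var}). The only point that deserves a moment's care is checking that the case analysis on the final rule is exhaustive---that no value can be typed by any rule other than \rn{T-New} or \rn{T-NewL}---which is immediate from inspecting \figref{CFJ:exp_typing} and \figref{CFJ:runtime_rules}. This lemma will be used wherever a value needs to be re-typed in a different typing context, e.g.\ when a receiver value migrates into a substituted method body during subject reduction.
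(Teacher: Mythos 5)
Your proof is correct and matches the paper's own argument, which is exactly a straightforward induction on the typing derivation of the value; your case analysis on \rn{T-New} and \rn{T-NewL} and the observation that their premises are context-independent is the whole content of that induction. Nothing further is needed.
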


The statement of the Substitution Lemma is straightforward.

\begin{lemma}[Substitution]\label{lem:def:substitution}
  If $\Loc; \LSet; \Gamma, \ensuremath{\itbox{}\overline{\itbox{x}}\itbox{}} \colon \ensuremath{\itbox{}\overline{\itbox{T}}\itbox{}} \p \ensuremath{\itbox{e}} : \ensuremath{\itbox{T}}$ and $\LLGp
  \ensuremath{\itbox{}\overline{\itbox{v}}\itbox{}} : \ensuremath{\itbox{}\overline{\itbox{S}}\itbox{}}$ and $\ensuremath{\itbox{}\overline{\itbox{S}}\itbox{}} \LEQ \ensuremath{\itbox{}\overline{\itbox{T}}\itbox{}}$, then $\LLGp [\ensuremath{\itbox{}\overline{\itbox{v}}\itbox{}}/\ensuremath{\itbox{}\overline{\itbox{x}}\itbox{}}]\ensuremath{\itbox{e}} :
  \ensuremath{\itbox{S}}$ and $\ensuremath{\itbox{S}} \LEQ \ensuremath{\itbox{T}}$ for some \ensuremath{\itbox{S}}.
\end{lemma}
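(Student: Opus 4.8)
The plan is to prove the lemma by induction on the derivation of $\Loc; \LSet; \Gamma, \overline{x}\colon\overline{T} \p \ensuremath{\itbox{e}} : \ensuremath{\itbox{T}}$, with a case analysis on the last typing rule used. In each case I would apply the induction hypothesis to the immediate subexpressions, re-derive the same rule with the resulting types, and discharge the residual subtyping side-conditions by transitivity of $\LEQ$. A fact used pervasively is that the $\overline{v}$ are values, so by Strengthening for values they can be re-typed in whatever context a subderivation is stated under. The cases \rn{T-Var} (when $\ensuremath{\itbox{e}}=\ensuremath{\itbox{x}_{i}\itbox{}}$ the substituted term is $\ensuremath{\itbox{v}_{i}\itbox{}}$, whose type $\ensuremath{\itbox{S}_{i}\itbox{}} \LEQ \ensuremath{\itbox{T}_{i}\itbox{}}$ transfers by Strengthening; otherwise substitution is the identity), \rn{T-New}, and \rn{T-NewL} are immediate. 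For \rn{T-SuperB}, \rn{T-SuperP}, \rn{T-Proceed}, and \rn{T-SuperProceed} the method signature is looked up from the static location $\Loc$ and the \texttt{requires} clause (or, for \rn{T-SuperProceed}, from the superlayer), so it does not change under substitution; only the argument-subtyping premise needs re-checking, which follows from the induction hypothesis and transitivity.

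For \rn{T-With} and \rn{T-Swap}, the induction hypothesis applied to $\ensuremath{\itbox{e}_{l}\itbox{}}$ gives a type $\ensuremath{\itbox{L{\ensuremath{'}}}} \LEQ \ensuremath{\itbox{L}}$ (necessarily a layer type, since only layer names are subtypes of a layer name), and since $\LEQ$ on layer types is \emph{normal} subtyping, \rn{LS-Extends} guarantees $\ensuremath{\itbox{L{\ensuremath{'}}}}$ has the same \texttt{requires} clause as $\ensuremath{\itbox{L}}$; hence the premise $\LSet \LEQ_w \LSet'$ (resp.\ $\LSet_{rm} \LEQ_w \LSet'$) still holds with $\ensuremath{\itbox{L{\ensuremath{'}}}}$ in place of $\ensuremath{\itbox{L}}$. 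The body is typed under an enlarged layer set; applying the induction hypothesis there (re-typing $\overline{v}$ by Strengthening) and then using Layer Set Narrowing---legitimate because $\ensuremath{\itbox{L{\ensuremath{'}}}} \LEQ_w \ensuremath{\itbox{L}}$ implies $\LSet \cup \set{\ensuremath{\itbox{L{\ensuremath{'}}}}} \LEQ_{sw} \LSet \cup \set{\ensuremath{\itbox{L}}}$---puts the body in the shape needed to re-apply the rule. The run-time cases \rn{T-InvkA} and \rn{T-InvkAL} are handled analogously: the receiver $\ensuremath{\itbox{new C}_{0}\itbox{(}\overline{\itbox{v}}\itbox{)}}$ is a value and re-typed by Strengthening, the cursor-well-formedness judgments $\Loc; \LSet; \ldots$ do not mention $\Gamma$ and are untouched, and $\overline{e}$ is handled by the induction hypothesis together with transitivity.

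The substantive cases are \rn{T-Field} and \rn{T-Invk}. For \rn{T-Field}, the induction hypothesis gives the receiver type $\ensuremath{\itbox{C}_{0}\itbox{{\ensuremath{'}}}} \LEQ \ensuremath{\itbox{C}_{0}\itbox{}}$, and an easy auxiliary fact---that $\fields(\ensuremath{\itbox{C}_{0}\itbox{}})$ is a prefix of $\fields(\ensuremath{\itbox{C}_{0}\itbox{{\ensuremath{'}}}})$, read off the defining rules of $\fields$---ensures the accessed field keeps its type. For \rn{T-Invk}, the induction hypothesis gives receiver type $\ensuremath{\itbox{C}_{0}\itbox{{\ensuremath{'}}}} \LEQ \ensuremath{\itbox{C}_{0}\itbox{}}$ and argument types that are subtypes of the method's parameter types by transitivity; the remaining ingredient is a lemma that $\mtype$ is monotone in its class argument with a covariant return type: if $\ensuremath{\itbox{C}_{0}\itbox{{\ensuremath{'}}}} \LEQ \ensuremath{\itbox{C}_{0}\itbox{}}$ and $\mtype(\ensuremath{\itbox{m}},\ensuremath{\itbox{C}_{0}\itbox{}},\LSet) = \ensuremath{\itbox{}\overline{\itbox{T}}\itbox{\(\rightarrow\)T}_{0}\itbox{}}$, then $\mtype(\ensuremath{\itbox{m}},\ensuremath{\itbox{C}_{0}\itbox{{\ensuremath{'}}}},\LSet) = \ensuremath{\itbox{}\overline{\itbox{T}}\itbox{\(\rightarrow\)T}_{0}\itbox{{\ensuremath{'}}}}$ for some $\ensuremath{\itbox{T}_{0}\itbox{{\ensuremath{'}}}} \LEQ \ensuremath{\itbox{T}_{0}\itbox{}}$. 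This lemma is the main obstacle: its proof goes by an inner induction along the inheritance chain from $\ensuremath{\itbox{C}_{0}\itbox{{\ensuremath{'}}}}$ up to $\ensuremath{\itbox{C}_{0}\itbox{}}$, with a case analysis on which of \rn{MT-Class}, \rn{MT-PMethod}, \rn{MT-Super} yields each signature, and it relies on the global well-formedness conditions checked in \rn{T-Table}---$\noconflict$, $\override^h$, and $\override^v$---to reconcile the base-method and partial-method branches of the lookup and to obtain return-type covariance. The care required there is precisely the interaction of the two ``dimensions'' of method lookup, since a layer may add a method that the base class lacks.
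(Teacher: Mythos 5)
Your proposal is correct and takes essentially the same route as the paper's proof: induction on the typing derivation, using Strengthening for values, preservation of the \texttt{requires} clause along normal layer subtyping plus Layer Set Narrowing in the \rn{T-With}/\rn{T-Swap} cases, and the auxiliary monotonicity facts that $\fields$ extends along class subtyping and that $\mtype$ is preserved with a covariant return type on a subclass (established from $\override^h$ and $\override^v$), exactly as the paper does. No gaps.
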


The next lemma states that substitution for \ensuremath{\itbox{proceed}}, \ensuremath{\itbox{super}}, and
\ensuremath{\itbox{superproceed}} preserves typing.  The first item is for an invocation
of a partial method, which may contain \ensuremath{\itbox{proceed}} and \ensuremath{\itbox{superproceed}}
calls as well as \ensuremath{\itbox{super}} calls; the second is for a base method, which
may contain only \ensuremath{\itbox{super}} calls.  The conditions, which look rather
complicated, correspond to the premises of \rn{T-InvkA} and
\rn{T-InvkAL}.  

\begin{lemma}[Substitution for \ensuremath{\itbox{super}}, \ensuremath{\itbox{proceed}} and \ensuremath{\itbox{superproceed}}]\label{lem:def:substitution-super}
  \
  \begin{enumerate}
  \item 
    \begin{sloppypar}
      If \(\bullet; \LSet; \Gp \ensuremath{\itbox{new C}_{0}\itbox{(}\overline{\itbox{v}}\itbox{)}} : \ensuremath{\itbox{C}_{0}\itbox{}}\) and 
      $\ensuremath{\itbox{L.C.m}}; \LSet; \Gp \ensuremath{\itbox{e}} : \ensuremath{\itbox{T}}$ and
      $\ensuremath{\itbox{C}_{0}\itbox{.m}} \p \ensuremath{\itbox{<C,(}\overline{\itbox{L}}\itbox{{\ensuremath{'}};L{\ensuremath{'}}{\ensuremath{'}}),}\overline{\itbox{L}}\itbox{> ok}}$ and
      \ensuremath{\itbox{C \(\triangleleft\) D}} and \ensuremath{\itbox{L{\ensuremath{'}}{\ensuremath{'}} \(\Leq\)}_{w}\itbox{ L \(\triangleleft\) L{\ensuremath{'}}}} and
      $\LSet \LEQ_{sw} \set{\ensuremath{\itbox{}\overline{\itbox{L}}\itbox{}}}$ and
      $\ensuremath{\itbox{proceed}} \in \ensuremath{\itbox{e}} \implies \ndp(\ensuremath{\itbox{m}},\ensuremath{\itbox{C}},\ensuremath{\itbox{}\overline{\itbox{L}}\itbox{{\ensuremath{'}}}},\ensuremath{\itbox{}\overline{\itbox{L}}\itbox{}})$,
      then $\bullet; \LSet; \Gp S\ensuremath{\itbox{e}} : \ensuremath{\itbox{T}}$
      where
      \[S = 
      \left[\begin{array}{l@{/}l}
          \ensuremath{\itbox{new C}_{0}\itbox{(}\overline{\itbox{v}}\itbox{)<C,}\overline{\itbox{L}}\itbox{{\ensuremath{'}},}\overline{\itbox{L}}\itbox{>.m}} & \ensuremath{\itbox{proceed}}, \\
          \ensuremath{\itbox{new C}_{0}\itbox{(}\overline{\itbox{v}}\itbox{)<D,}\overline{\itbox{L}}\itbox{,}\overline{\itbox{L}}\itbox{>}} &  \ensuremath{\itbox{super}}, \\
          \ensuremath{\itbox{new C}_{0}\itbox{(}\overline{\itbox{v}}\itbox{)<C,L{\ensuremath{'}},(}\overline{\itbox{L}}\itbox{{\ensuremath{'}};L{\ensuremath{'}}{\ensuremath{'}}),}\overline{\itbox{L}}\itbox{>.m}} &  \ensuremath{\itbox{superproceed}}
        \end{array}\right].\]
    \end{sloppypar}

  \item If \(\bullet; \LSet; \Gp \ensuremath{\itbox{new C}_{0}\itbox{(}\overline{\itbox{v}}\itbox{)}} : \ensuremath{\itbox{C}_{0}\itbox{}}\) and 
    $\ensuremath{\itbox{C.m}}; \LSet; \Gp \ensuremath{\itbox{e}} : \ensuremath{\itbox{T}}$ and 
    $\ensuremath{\itbox{C}_{0}\itbox{.m}} \p \ensuremath{\itbox{<C,}\overline{\itbox{L}}\itbox{{\ensuremath{'}},}\overline{\itbox{L}}\itbox{> ok}}$ and
    \ensuremath{\itbox{C \(\triangleleft\) D}} and $\LSet \LEQ_{sw} \set{\ensuremath{\itbox{}\overline{\itbox{L}}\itbox{}}}$, then
    \(\bullet; \LSet; \Gp [\ensuremath{\itbox{new C}_{0}\itbox{(}\overline{\itbox{v}}\itbox{)<D,}\overline{\itbox{L}}\itbox{,}\overline{\itbox{L}}\itbox{>}} / \ensuremath{\itbox{super}}] \ensuremath{\itbox{e}} : \ensuremath{\itbox{T}}\).
  \end{enumerate} 
\end{lemma}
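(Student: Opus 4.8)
The plan is to prove each item by induction on the derivation of the typing judgment of the method body $e$ — for item~1 a derivation of $L.C.m;\LSet;\Gamma\p e:T$, and for item~2 a derivation of $C.m;\LSet;\Gamma\p e:T$ — generalising in both cases over $\LSet$ so that the induction hypothesis stays applicable to subexpressions that are typed under an enlarged or \ensuremath{\itbox{swap}}-modified layer set. Since $S$ rewrites only the keyword calls \ensuremath{\itbox{proceed}}, \ensuremath{\itbox{super}}, and \ensuremath{\itbox{superproceed}}, for every other expression form it commutes with the constructor and the corresponding case is routine: apply the induction hypothesis to each subexpression and reassemble the derivation. The type of each subexpression is preserved, so the subtyping side conditions of \rn{T-Field}, \rn{T-Invk}, \rn{T-New} and the argument premises of the keyword rules carry over verbatim, and for \rn{T-Var} and \rn{T-NewL} we have $Se=e$. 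One convenient observation for applying the induction hypothesis to subexpressions: once $\ensuremath{\itbox{proceed}}\in e$ holds (which it does exactly in the one sub-case below that uses it), the hypothesis $\ensuremath{\itbox{proceed}}\in e\implies\ndp(m,C,\overline{L'},\overline{L})$ discharges its conclusion unconditionally, and since that conclusion concerns the fixed cursor it is then inherited by every subexpression for free; the remaining hypotheses of the lemma (apart from $\LSet\LEQ_{sw}\set{\overline{L}}$, discussed next) do not mention $\LSet$, so they carry over as well.

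The mildly non-trivial bookkeeping cases are \rn{T-With} and \rn{T-Swap}, where the body is re-typed under a changed layer set; I would reapply the same rule after invoking the induction hypothesis on the activation expression and on the body, the only delicate point being that $\LSet\LEQ_{sw}\set{\overline{L}}$ must survive the change of $\LSet$. For \rn{T-With} the set only grows, and enlarging the left-hand side preserves $\LEQ_{sw}$ by inspection of \rn{LSSW-Intro}. For \rn{T-Swap} the new set is $\LSet_{rm}\cup\set{L}$ with $L\LEQ_w L_{sw}$ and $L_{sw}$ \ensuremath{\itbox{swappable}}; using the forest shape of layer inheritance (and that swappable layers are never nested, by \rn{T-Layer}/\rn{T-LayerSW}), any layer of $\set{\overline{L}}$ whose $\LEQ_{sw}$-witness in $\LSet$ was deleted is $\LEQ_w$-comparable with $L_{sw}$, and $L$ then supplies a replacement witness in either direction. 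I would isolate this as a small auxiliary lemma asserting that $\swap$ preserves $\LEQ_{sw}$.

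The heart of the argument is the three keyword cases, where $S$ turns the call into a run-time invocation that must be typed by \rn{T-InvkA} (for \ensuremath{\itbox{proceed}} and \ensuremath{\itbox{super}}) or by \rn{T-InvkAL} (for \ensuremath{\itbox{superproceed}}). In all of them the receiver is well typed and $\LSet\LEQ_{sw}\set{\overline{L}}$ holds by hypothesis, and the argument premises follow from the induction hypothesis together with the subtyping premise of \rn{T-Proceed}/\rn{T-SuperP}/\rn{T-SuperProceed}; what remains are the cursor-well-formedness judgment and the signature premise. The \ensuremath{\itbox{superproceed}} case is the cleanest: the required cursor judgment is one of the hypotheses verbatim, the side condition $L''\LEQ_w L'$ of \rn{T-InvkAL} follows from $L''\LEQ_w L$ and $L\triangleleft L'$, and the signature is read off by the same $\pmtype(m,C,L')$ used in \rn{T-SuperProceed}. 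For \ensuremath{\itbox{proceed}}, the cursor judgment reduces via \rn{Wf-Cursor}, reusing $C_0\LEQ C$ and $\set{\overline{L}}\,\WF$ from the hypotheses, to $\ndp(m,C,\overline{L'},\overline{L})$, which the hypothesis supplies because $e$ is itself a \ensuremath{\itbox{proceed}} call; and the signature premise $\mtype(m,C,\set{\overline{L'}},\set{\overline{L}})$ coincides with \rn{T-Proceed}'s $\mtype(m,C,\LSet',\LSet'\cup\set{L})$ because both are $\mtype$ at \emph{the same class} $C$, and in a well-typed program $\noconflict$ together with $\override^h$ force all defined values of $\mtype(m,C,-,-)$ to agree — an appendix lemma I would cite.

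The \ensuremath{\itbox{super}} case is where the real work lies and is the step for which I would budget the most care. The new cursor is $\langle D,\overline{L},\overline{L}\rangle$ with $D$ the superclass of $C$, so $C_0\LEQ D$ (from $C_0\LEQ C\triangleleft D$) and $\set{\overline{L}}\,\WF$ are immediate and the cursor judgment reduces to $\ndp(m',D,\overline{L},\overline{L})$. Unlike in the \ensuremath{\itbox{proceed}} case this is not handed to us; it has to be re-derived from the facts that the \ensuremath{\itbox{super}} call type-checked — so the lookup of $m'$ in $D$ succeeds, which in \rn{T-SuperB} uses the empty layer set and therefore forces a genuine base method somewhere in $D$'s class chain, whereas in \rn{T-SuperP} it uses the \ensuremath{\itbox{requires}}-derived set — together with $\set{\overline{L}}\,\WF$. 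I would dispatch this through an appendix lemma stating that a successful method lookup under a well-formed layer set admits a non-dangling chain of \ensuremath{\itbox{proceed}} calls; this is exactly the place where Sanity Condition~8 (no layer adds a method to \ensuremath{\itbox{Object}}) and the \ensuremath{\itbox{requires}} discipline become essential. Once $\ndp(m',D,\overline{L},\overline{L})$ is established it also yields definedness of $\mtype(m',D,\set{\overline{L}},\set{\overline{L}})$, so the same-class stability lemma closes the signature premise against the $\mtype$ value used by \rn{T-SuperP}/\rn{T-SuperB}. Item~2 is the strictly easier half: only \ensuremath{\itbox{super}} is substituted, its superclass lookup uses the empty layer set so the target is necessarily a base method and $\ndp$ follows immediately from \rn{NDP-Class} and \rn{NDP-Super}, and the \rn{T-With}/\rn{T-Swap} cases go exactly as in item~1.
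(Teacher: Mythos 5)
Your proposal is correct and follows essentially the same route as the paper's proof: induction on the typing derivation, routine congruence cases, \rn{T-With}/\rn{T-Swap} closed by preservation of $\LEQ_{sw}$ (which \rn{LSSW-Intro} already gives, so no separate swap lemma is needed), and the three keyword cases closed by \rn{T-InvkA}/\rn{T-InvkAL}, with the \ensuremath{\itbox{super}} case being the one where $\ndp$ must be re-derived from well-formedness of the runtime layer set. The auxiliary lemmas you propose to isolate are exactly the paper's \lemref{def:wp}, \lemref{def:wp_mtype}, and \lemref{mtype_layer}; the only step you compress is that before invoking the $\WF$-implies-$\ndp$ lemma one must first transfer definedness of $\mtype$ from the \ensuremath{\itbox{requires}}-derived set to $\set{\ensuremath{\itbox{}\overline{\itbox{L}}\itbox{}}}$ via \lemref{def:layer-set-wf1} and \lemref{sublayer-mtype}, which is the precise role of the ``\ensuremath{\itbox{requires}} discipline'' you gesture at.
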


The next two lemmas state method bodies obtained by $\pmbody$ and
$\mbody$ are well typed according to the type information obtained by
$\pmtype$ and $\mtype$, respectively.

\begin{lemma}[Inversion for partial method body]\label{lem:def:pmbody}
  If $\pmbody(\ensuremath{\itbox{m}},\ensuremath{\itbox{C}},\ensuremath{\itbox{L}}) = \ensuremath{\itbox{}\overline{\itbox{x}}\itbox{.e}_{0}\itbox{ in L{\ensuremath{'}}}}$ and $\ensuremath{\itbox{L req }} \LSet$ and
  $\pmtype(\ensuremath{\itbox{m}},\ensuremath{\itbox{C}},\ensuremath{\itbox{L}}) = \ensuremath{\itbox{}\overline{\itbox{T}}\itbox{ \(\rightarrow\) T}_{0}\itbox{}}$, then $\ensuremath{\itbox{L.C.m}}; \LSet \cup
  \set{\ensuremath{\itbox{L}}}; \ensuremath{\itbox{}\overline{\itbox{x}}\itbox{}}:\ensuremath{\itbox{}\overline{\itbox{T}}\itbox{}}, \ensuremath{\itbox{this}}:\ensuremath{\itbox{C}} \p \ensuremath{\itbox{e}_{0}\itbox{}} : \ensuremath{\itbox{S}_{0}\itbox{}}$ for some $\ensuremath{\itbox{S}_{0}\itbox{}}
  \LEQ_w \ensuremath{\itbox{T}_{0}\itbox{}}$.
\end{lemma}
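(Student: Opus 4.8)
The plan is to prove the lemma by induction on the derivation of $\pmbody(m,C,L) = \overline{x}.e_0 \IN L'$, whose last rule is either \rn{PMB-Layer} or \rn{PMB-Super}. In the base case \rn{PMB-Layer} we have $L' = L$ and $\LT(L)(C.m)$ is a declaration with body $e_0$ whose declared signature is, by \rn{PMT-Layer}, exactly $\pmtype(m,C,L) = \overline{T}\to T_0$. From the standing assumption $\p (\CT,\LT)\ \mathrm{ok}$ we get $\LT(L)\ \mathrm{ok}$ (by \rn{T-Layer} or \rn{T-LayerSW}), and in either case its premise that the partial methods of $L$ are well formed feeds \rn{T-PMethod}, which yields $L.C.m;\ \LSet_L\cup\set{L};\ \overline{x}{:}\overline{T}, \mathit{this}{:}C \p e_0 : S_0$ with $S_0 \LEQ T_0$, where $\LSet_L$ is the layer set required by $L$. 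Since the \texttt{requires} clause of $L$ determines that set, $\LSet_L = \LSet$, and $S_0 \LEQ T_0$ entails the claimed bound.

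In the inductive case \rn{PMB-Super} we have $\LT(L)(C.m)$ undefined, $L \triangleleft L''$, and $\pmbody(m,C,L'') = \overline{x}.e_0 \IN L'$ by a shorter derivation. By \rn{PMT-Super}, $\pmtype(m,C,L'') = \pmtype(m,C,L) = \overline{T}\to T_0$, so the induction hypothesis (with $\LSet_{L''}$ the set required by $L''$) gives $L''.C.m;\ \LSet_{L''}\cup\set{L''};\ \overline{x}{:}\overline{T}, \mathit{this}{:}C \p e_0 : S_0$ with $S_0 \LEQ_w T_0$. Two adjustments then remain: shrinking the ambient layer set to $\LSet\cup\set{L}$, and moving the location from $L''.C.m$ to $L.C.m$. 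For the first, well-typedness of $\LT(L)$ forces $\LSet \LEQ_w \LSet_{L''}$ (the covariance condition of \rn{T-Layer}, an equality in \rn{T-LayerSW}); with $L \LEQ_w L''$ this gives $\LSet\cup\set{L} \LEQ_w \LSet_{L''}\cup\set{L''}$, hence $\LEQ_{sw}$, so Layer Set Narrowing (Lemma~\ref{lem:def:narrowing}) replaces the ambient set by $\LSet\cup\set{L}$ while leaving the location at $L''.C.m$.

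For the location change I would establish a small auxiliary claim by structural induction on the typing derivation: under the standing hypotheses ($L \triangleleft L''$, $\LT(L)(C.m)$ undefined, $\LSet\cup\set{L} \LEQ_w \LSet_{L''}\cup\set{L''}$), a term typed at $L''.C.m$ is also typed at $L.C.m$. All rules but \rn{T-SuperP}, \rn{T-Proceed}, and \rn{T-SuperProceed} pass the location through unchanged, so only those three need checking. In \rn{T-SuperProceed}, the superlayer of $L$ is precisely $L''$, and $\pmtype(m,C,L'')$ is defined (its search walks up to $L'$, or finds the method already in $L''$ when $L'' = L'$) and, by the whole-program $\noconflict$ condition, equals the signature $\overline{T}\to T_0$ that the original derivation read from the superlayer of $L''$; so the rule reapplies with the same result type. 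In \rn{T-SuperP} and \rn{T-Proceed}, the signature is obtained via $\mtype$ from the set required by the location's layer together with that layer itself; since $\LSet\cup\set{L} \LEQ_w \LSet_{L''}\cup\set{L''}$, it is unchanged by the monotonicity of $\mtype$ under $\LEQ_w$, a property that holds in a well-typed program. Combining the two adjustments gives $L.C.m;\ \LSet\cup\set{L};\ \overline{x}{:}\overline{T}, \mathit{this}{:}C \p e_0 : S_0$ with $S_0 \LEQ_w T_0$, as required.

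The main obstacle is this location-change step, and in particular the \rn{T-SuperProceed} case: moving the ``current layer'' one level down the inheritance chain forces the \texttt{superproceed} type lookup to be recomputed, and arguing that it returns the same answer genuinely relies on single inheritance (so the path from $L$ to the declaring layer $L'$ is linear and no layer strictly below $L'$ on it redeclares $C.m$) together with the global consistency conditions of the program. Everything else — the base case, the layer-set narrowing, and the location-oblivious cases of the auxiliary induction — is routine, modulo the monotonicity property of $\mtype$, which belongs to the basic metatheory and presumably appears among the supporting lemmas.
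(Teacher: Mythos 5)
Your proposal is correct and follows essentially the same route as the paper: induction on the $\pmbody$ derivation, with the \rn{PMB-Layer} case discharged directly by \rn{T-PMethod}, and the \rn{PMB-Super} case by the induction hypothesis followed by Layer Set Narrowing (Lemma~\ref{lem:def:narrowing}) and a location-change argument. The ``small auxiliary claim'' you prove inline is exactly the paper's separate lemma that a judgment at location \ensuremath{\itbox{L{\ensuremath{'}}{\ensuremath{'}}.C.m}} can be relocated to \ensuremath{\itbox{L.C.m}} for \ensuremath{\itbox{L \(\Leq\)}_{w}\itbox{ L{\ensuremath{'}}{\ensuremath{'}}}}, established by induction on typing with the same three critical cases (\rn{T-SuperP}, \rn{T-Proceed}, \rn{T-SuperProceed}) and the same invariance of $\pmtype$/$\mtype$ under weak layer subtyping that you invoke.
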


\begin{lemma}[Inversion for method body]\label{lem:def:mbody-mtype}
  \begin{sloppypar}
    Suppose 
    \(\set{\ensuremath{\itbox{}\overline{\itbox{L}}\itbox{}}}\, \WF\) and 
    $\mbody(\ensuremath{\itbox{m}}, \ensuremath{\itbox{C}}, \ensuremath{\itbox{}\overline{\itbox{L}}\itbox{{\ensuremath{'}}}}, \ensuremath{\itbox{}\overline{\itbox{L}}\itbox{}}) = \ensuremath{\itbox{}\overline{\itbox{x}}\itbox{.e}_{0}\itbox{}} \IN \ensuremath{\itbox{C{\ensuremath{'}}}},\ensuremath{\itbox{}\overline{\itbox{L}}\itbox{{\ensuremath{'}}{\ensuremath{'}}}}$ and
    $\mtype(\ensuremath{\itbox{m}}, \ensuremath{\itbox{C}}, \set{\ensuremath{\itbox{}\overline{\itbox{L}}\itbox{{\ensuremath{'}}}}}, \set{\ensuremath{\itbox{}\overline{\itbox{L}}\itbox{}}}) = \ensuremath{\itbox{}\overline{\itbox{T}}\itbox{\(\rightarrow\)T}_{0}\itbox{}}$ and
    $\ndp(\ensuremath{\itbox{m}},\ensuremath{\itbox{C}},\ensuremath{\itbox{}\overline{\itbox{L}}\itbox{{\ensuremath{'}}}},\ensuremath{\itbox{}\overline{\itbox{L}}\itbox{}})$.
    \begin{enumerate}
    \item If $\ensuremath{\itbox{}\overline{\itbox{L}}\itbox{{\ensuremath{'}}{\ensuremath{'}}}} = \ensuremath{\itbox{}\overline{\itbox{L}}\itbox{{\ensuremath{'}}{\ensuremath{'}}{\ensuremath{'}};L}_{0}\itbox{}}$, then 
	  $\ensuremath{\itbox{L}_{0}\itbox{ req }}\LSet$ and 
	  $\ensuremath{\itbox{L}_{0}\itbox{.C{\ensuremath{'}}.m}}; \LSet \cup \set{\ensuremath{\itbox{L}_{0}\itbox{}}}; \ensuremath{\itbox{}\overline{\itbox{x}}\itbox{}}: \ensuremath{\itbox{}\overline{\itbox{T}}\itbox{}}, \ensuremath{\itbox{this}}:\ensuremath{\itbox{C{\ensuremath{'}}}} \p \ensuremath{\itbox{e}_{0}\itbox{}} : \ensuremath{\itbox{U}_{0}\itbox{}}$ and
	  $\ensuremath{\itbox{C}} \LEQ \ensuremath{\itbox{C{\ensuremath{'}}}}$ and 
	  $\ensuremath{\itbox{U}_{0}\itbox{}} \LEQ \ensuremath{\itbox{T}_{0}\itbox{}}$ and
      $\ndp(\ensuremath{\itbox{m}},\ensuremath{\itbox{C{\ensuremath{'}}}},\ensuremath{\itbox{}\overline{\itbox{L}}\itbox{{\ensuremath{'}}{\ensuremath{'}}}},\ensuremath{\itbox{}\overline{\itbox{L}}\itbox{}})$ for some $\LSet$ and \ensuremath{\itbox{U}_{0}\itbox{}}. 

    \item If $\ensuremath{\itbox{}\overline{\itbox{L}}\itbox{{\ensuremath{'}}{\ensuremath{'}}}} = \bullet$, then $\ensuremath{\itbox{C{\ensuremath{'}}.m}}; \emptyset; \ensuremath{\itbox{}\overline{\itbox{x}}\itbox{}}:\ensuremath{\itbox{}\overline{\itbox{T}}\itbox{}},
	  \ensuremath{\itbox{this}}:\ensuremath{\itbox{C{\ensuremath{'}}}} \p \ensuremath{\itbox{e}_{0}\itbox{}} : \ensuremath{\itbox{U}_{0}\itbox{}}$ and 
	  $\ensuremath{\itbox{C}} \LEQ \ensuremath{\itbox{C{\ensuremath{'}}}}$ and 
	  $\ensuremath{\itbox{U}_{0}\itbox{}} \LEQ \ensuremath{\itbox{T}_{0}\itbox{}}$ and
      $\ndp(\ensuremath{\itbox{m}},\ensuremath{\itbox{C{\ensuremath{'}}}},\bullet,\ensuremath{\itbox{}\overline{\itbox{L}}\itbox{}})$ for some \ensuremath{\itbox{U}_{0}\itbox{}}.
    \end{enumerate}
    
  \end{sloppypar}
\end{lemma}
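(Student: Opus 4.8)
The plan is to induct on the derivation of $\mbody(m, C, \overline{L}', \overline{L}) = \overline{x}.e_0 \IN C', \overline{L}''$, with a case split on the last rule used, and in each case to determine which $\mtype$ rule must have produced the hypothesis $\mtype(m, C, \set{\overline{L}'}, \set{\overline{L}}) = \overline{T} \to T_0$. The point is that the $\mbody$ and $\mtype$ rule sets are in tight correspondence: $\pmbody(m, C, L)$ is defined exactly when $\pmtype(m, C, L)$ is, and the ``class'' rules \rn{MB-Class}/\rn{MT-Class} are the only ones enabled when the layer-sequence argument is $\bullet$ and $C$ declares a base $m$; so the applicable $\mtype$ rule is essentially forced by the shape of the $\mbody$ derivation. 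I then use $\noconflict$, $\override^h$, and $\override^v$---all consequences of $\p (\CT,\LT)\,\ensuremath{\itbox{ ok}}$---to conclude that the signature $\overline{T}\to T_0$ reported by $\mtype$ is the declared signature of the (partial) method body that $\mbody$ actually selects, even when $\mtype$ reads it off a different layer, or off the base class, than the one $\mbody$ stops at.

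In the base case \rn{MB-Class} we have $\overline{L}' = \bullet$, $C' = C$, $\overline{L}'' = \bullet$, and $C$ declares $m$; hence $\mtype$ was derived by \rn{MT-Class} and $\overline{T}\to T_0$ is the declared signature of $C.m$. Part~(2) then follows from $\CT(C)\,\ensuremath{\itbox{ ok}}$ via \rn{T-Class} and \rn{T-Method}, taking $U_0$ to be the type assigned to $e_0$ there; $C\LEQ C'$ is reflexivity and $\ndp(m,C,\bullet,\overline{L})$ is a hypothesis. In the base case \rn{MB-Layer} we have $\overline{L}' = (\overline{L}''';L_0)$ with $\pmbody(m,C,L_0) = \overline{x}.e_0 \IN L_1$ defined, so $C' = C$ and $\overline{L}'' = \overline{L}'$; the $\mtype$ derivation cannot be \rn{MT-Super} and is therefore \rn{MT-PMethod} (for some layer in $\set{\overline{L}'}$) or \rn{MT-Class}, whence $\noconflict$ (resp.\ $\override^h$) gives $\pmtype(m,C,L_0) = \overline{T}\to T_0$. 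Part~(1) now follows from \lemref{def:pmbody} applied to $\pmbody(m,C,L_0)$, with $\LSet$ chosen as the set required by $L_0$; here $C\LEQ C'$ is reflexivity, and $\ndp(m,C,\overline{L}'',\overline{L})$ is literally the hypothesis $\ndp(m,C,\overline{L}',\overline{L})$.

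The inductive cases are \rn{MB-NextLayer} ($\overline{L}' = (\overline{L}''';L_0)$ with $\pmbody(m,C,L_0)$ undefined) and \rn{MB-Super} ($\overline{L}' = \bullet$ and $C$ with no base $m$). For each I must check two invariants before applying the induction hypothesis to the premise $\mbody$-judgement. First, the value of $\mtype$ is unchanged along the recursive call: dropping $L_0$ from the layer set cannot matter because $\pmtype(m,C,L_0)$ is undefined, and with empty layer set and no base $m$ the only enabled rule is \rn{MT-Super}, which passes the signature on to the superclass. Second, $\ndp$ is preserved: a derivation of $\ndp(m,C,\overline{L}',\overline{L})$ must, by inspection of \rn{NDP-Class}, \rn{NDP-Layer}, \rn{NDP-Super}, reduce to $\ndp$ of the shorter layer prefix or of the superclass---$L_0$ cannot witness \rn{NDP-Layer} in the \rn{MB-NextLayer} case since $\pmbody(m,C,L_0)$ is undefined, and \rn{NDP-Class} is impossible in the \rn{MB-Super} case since $C$ has no base $m$. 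Applying the induction hypothesis then delivers the stated conclusion verbatim, except that $C\LEQ C'$ must be re-derived: it is unchanged under \rn{MB-NextLayer}, and follows by transitivity from $C\triangleleft D$ and the inductively obtained $D\LEQ C'$ under \rn{MB-Super}.

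I expect the main obstacle to be exactly this synchronization of the $\mtype$ and $\mbody$ derivations. Because the two functions traverse the layer/class space in slightly different ways---and because $\mtype$ is only a genuine function on well-typed programs---one must repeatedly fall back on $\noconflict$, $\override^h$, and $\override^v$ to identify the signature returned by $\mtype$ with the declared signature of the method body chosen by $\mbody$, and one must thread the $\ndp$ invariant faithfully through the recursion, being careful about the cases where its witness lies in a layer that $\mbody$ moves past or at the current class. Everything else reduces to routine appeals to \lemref{def:pmbody}, \rn{T-Method}, and the reflexivity and transitivity of the subtyping relations.
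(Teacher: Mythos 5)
Your proposal is correct and follows essentially the same route as the paper's proof: induction on the $\mbody$ derivation with a case split on the last rule, forcing the matching $\mtype$ rule in each case, identifying the reported signature with the declared one via $\noconflict$ and $\override^h$, and threading the $\ndp$ invariant through the recursive calls. The only local differences are that you invoke \lemref{def:pmbody} in the \rn{MB-Layer} case where the paper inlines an induction on $\ensuremath{\itbox{L}_{0}\itbox{}} \LEQ_w \ensuremath{\itbox{L}_{1}\itbox{}}$, and in the \rn{MB-Super} case you obtain $\ndp(\ensuremath{\itbox{m}},\ensuremath{\itbox{D}},\ensuremath{\itbox{}\overline{\itbox{L}}\itbox{}},\ensuremath{\itbox{}\overline{\itbox{L}}\itbox{}})$ by inverting the $\ndp$ derivation where the paper rederives it from \lemref{def:wp}; both variants are sound.
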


We also need additional lemmas derived from runtime conditions.
Layer-set wellformedness $\LSet\, \WF$ provides two important
properties.  The first states that a well formed layer set is closed
under the \ensuremath{\itbox{requires}} clause and the second that, if method \ensuremath{\itbox{m}} is
found in \ensuremath{\itbox{C}} (under the assumption that $\LSet$ activated) but not in
its direct superclass \ensuremath{\itbox{D}}, then at least one of those methods does not
call \ensuremath{\itbox{proceed}}.  This lemma is used to
prove the next lemma (Lemma \ref{lem:def:wp}), which
derives $\ndp$ for an initial cursor of the form \ensuremath{\itbox{<C,}\overline{\itbox{L}}\itbox{,}\overline{\itbox{L}}\itbox{>}}.  

\begin{lemma}\label{lem:def:layer-set-wf1}]
  If $\LSet \, \WF$, then $\forall \ensuremath{\itbox{L}} \in \LSet, \forall \ensuremath{\itbox{L{\ensuremath{'}}}} \text{
    s.t. } \ensuremath{\itbox{L req L{\ensuremath{'}}}}, \exists \ensuremath{\itbox{L{\ensuremath{'}}{\ensuremath{'}}}} \in \LSet. \ensuremath{\itbox{L{\ensuremath{'}}{\ensuremath{'}}}} \LEQ_w \ensuremath{\itbox{L{\ensuremath{'}}}}$.
\end{lemma}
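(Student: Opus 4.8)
The plan is to prove the statement by induction on the derivation of $\LSet\, \WF$, with one case per rule \rn{Wf-Empty}, \rn{Wf-With}, \rn{Wf-Swap}. The case \rn{Wf-Empty} is vacuous, since then $\LSet = \emptyset$. For \rn{Wf-With}, the conclusion is $\LSet \cup \set{\ensuremath{\itbox{L}_{a}\itbox{}}}\, \WF$ and the premises are $\LSet\, \WF$, $\ensuremath{\itbox{L}_{a}\itbox{ req }} \LSet'$, and $\LSet \LEQ_w \LSet'$. Given $\ensuremath{\itbox{L}} \in \LSet \cup \set{\ensuremath{\itbox{L}_{a}\itbox{}}}$ and \ensuremath{\itbox{L{\ensuremath{'}}}} with $\ensuremath{\itbox{L req L{\ensuremath{'}}}}$, I would split on whether $\ensuremath{\itbox{L}} \in \LSet$ (apply the induction hypothesis directly) or $\ensuremath{\itbox{L}} = \ensuremath{\itbox{L}_{a}\itbox{}}$ (then $\ensuremath{\itbox{L{\ensuremath{'}}}} \in \LSet'$, and $\LSet \LEQ_w \LSet'$ read through \rn{LSS-Intro} yields some $\ensuremath{\itbox{L{\ensuremath{'}}{\ensuremath{'}}}} \in \LSet$ with $\ensuremath{\itbox{L{\ensuremath{'}}{\ensuremath{'}}}} \LEQ_w \ensuremath{\itbox{L{\ensuremath{'}}}}$). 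In both cases the witness lies in $\LSet$, hence in $\LSet \cup \set{\ensuremath{\itbox{L}_{a}\itbox{}}}$.

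The substantial case is \rn{Wf-Swap}, whose conclusion is $\LSet_{rm} \cup \set{\ensuremath{\itbox{L}}}\, \WF$ and whose premises are $\LSet\, \WF$, $\ensuremath{\itbox{L}_{{sw}}\itbox{ swappable}}$, $\ensuremath{\itbox{L}} \LEQ_w \ensuremath{\itbox{L}_{{sw}}\itbox{}}$, $\ensuremath{\itbox{L req }} \LSet'$, $\LSet_{rm} = \LSet \setminus \set{\ensuremath{\itbox{L{\ensuremath{'}}}} \mid \ensuremath{\itbox{L{\ensuremath{'}}}} \LEQ_w \ensuremath{\itbox{L}_{{sw}}\itbox{}}}$, and $\LSet_{rm} \LEQ_w \LSet'$. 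I take $\ensuremath{\itbox{M}} \in \LSet_{rm} \cup \set{\ensuremath{\itbox{L}}}$ with $\ensuremath{\itbox{M req M{\ensuremath{'}}}}$. If $\ensuremath{\itbox{M}} = \ensuremath{\itbox{L}}$, then $\ensuremath{\itbox{M{\ensuremath{'}}}} \in \LSet'$ and $\LSet_{rm} \LEQ_w \LSet'$ already supplies a witness inside $\LSet_{rm}$. The delicate subcase is $\ensuremath{\itbox{M}} \in \LSet_{rm} \subseteq \LSet$: the induction hypothesis yields some $\ensuremath{\itbox{N}} \in \LSet$ with $\ensuremath{\itbox{N}} \LEQ_w \ensuremath{\itbox{M{\ensuremath{'}}}}$, but \ensuremath{\itbox{N}} might have been removed in forming $\LSet_{rm}$, namely when $\ensuremath{\itbox{N}} \LEQ_w \ensuremath{\itbox{L}_{{sw}}\itbox{}}$. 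In that situation I would show that \ensuremath{\itbox{L}} itself is a replacement witness. Since each layer has a unique superlayer and $\triangleleft$ is acyclic (Sanity Condition~\ref{lang:extends}), the $\LEQ_w$-ancestors of \ensuremath{\itbox{N}} form a chain, so \ensuremath{\itbox{M{\ensuremath{'}}}} and $\ensuremath{\itbox{L}_{{sw}}\itbox{}}$ are $\LEQ_w$-comparable; and \ensuremath{\itbox{M{\ensuremath{'}}}} cannot be a \emph{proper} sublayer of the swappable layer $\ensuremath{\itbox{L}_{{sw}}\itbox{}}$, for then $\LT(\ensuremath{\itbox{M{\ensuremath{'}}}})$ would have to be derived by \rn{T-LayerSW} (the only rule applicable to a layer that is a proper sublayer of a swappable one; here using $(\CT, \LT)\ensuremath{\itbox{ ok}}$ and $\ensuremath{\itbox{M{\ensuremath{'}}}} \in \dom(\LT)$ by Sanity Condition~\ref{lang:base}), whose last premise forbids any layer from requiring \ensuremath{\itbox{M{\ensuremath{'}}}}, contradicting $\ensuremath{\itbox{M req M{\ensuremath{'}}}}$ with $\ensuremath{\itbox{M}} \in \dom(\LT)$. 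Hence $\ensuremath{\itbox{M{\ensuremath{'}}}} = \ensuremath{\itbox{L}_{{sw}}\itbox{}}$ or $\ensuremath{\itbox{L}_{{sw}}\itbox{}} \LEQ_w \ensuremath{\itbox{M{\ensuremath{'}}}}$, and in either case $\ensuremath{\itbox{L}} \LEQ_w \ensuremath{\itbox{M{\ensuremath{'}}}}$ follows from $\ensuremath{\itbox{L}} \LEQ_w \ensuremath{\itbox{L}_{{sw}}\itbox{}}$ and transitivity of $\LEQ_w$; since $\ensuremath{\itbox{L}} \in \LSet_{rm} \cup \set{\ensuremath{\itbox{L}}}$, the subcase is done.

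The main obstacle is exactly this last subcase of \rn{Wf-Swap}: the witness handed over by the induction hypothesis can be one of the layers that the swap deactivates, so it must be traded for the newly activated layer \ensuremath{\itbox{L}}. The two ingredients that make this trade work, which I would isolate as small auxiliary observations, are (i) single inheritance of layers --- so that $\LEQ_w$ totally orders the ancestors of any layer, which justifies the comparability step --- and (ii) the global restriction in \rn{T-LayerSW} that no proper sublayer of a swappable layer is required by any layer, which eliminates the bad alternative. Everything else, including the \rn{Wf-Empty} and \rn{Wf-With} cases and the $\ensuremath{\itbox{M}} = \ensuremath{\itbox{L}}$ subcase of \rn{Wf-Swap}, is routine once \rn{LSS-Intro} is unfolded.
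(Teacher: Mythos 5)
Your proof is correct and follows essentially the same route as the paper's: induction on the derivation of $\LSet\,\WF$, with the \rn{Wf-Swap} case resolved exactly as the paper does it---trading a deactivated witness for the newly activated layer by observing that the weak supertypes of the removed witness form a chain (single inheritance) and that the required layer cannot be a proper sublayer of a swappable because of the last premise of \rn{T-LayerSW}. The only divergence is that the paper proves a strengthened statement in which \ensuremath{\itbox{req}} is replaced by the composition $(\LEQ_w;\ensuremath{\itbox{req}})$, since that stronger form is what is actually invoked in some later proofs (e.g., the \ensuremath{\itbox{superproceed}} cases), but your induction for the lemma as literally stated goes through on its own.
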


\begin{lemma}\label{lem:def:layer-set-wf2}
  If $\LSet \, \WF$ and $\mtype(\ensuremath{\itbox{m}},\ensuremath{\itbox{C}},\LSet) \defined$ and
  $\mtype(\ensuremath{\itbox{m}},\ensuremath{\itbox{D}},\LSet) \undf$ and \ensuremath{\itbox{C \(\triangleleft\) D}}, then 
  $(\exists \ensuremath{\itbox{L{\ensuremath{'}}}} \in \LSet.  \ensuremath{\itbox{proceed}}
  \not \in \pmbody(\ensuremath{\itbox{m}},\ensuremath{\itbox{C}},\ensuremath{\itbox{L{\ensuremath{'}}}})) \text{
    or } \mtype(\ensuremath{\itbox{m}},\ensuremath{\itbox{C}},\emptyset, \LSet) \defined$.
\end{lemma}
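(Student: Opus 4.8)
The plan is to dispose of the trivial disjunct first and then to follow the chain of \ensuremath{\itbox{proceed}} calls. If $C$ declares $m$, then $\mtype(m,C,\emptyset,\LSet)$ is derivable by \rn{MT-Class} and the right disjunct holds; so assume $C$ does not declare $m$. Then $\mtype(m,C,\LSet)$ being defined cannot come from \rn{MT-Class}, nor from \rn{MT-Super} (which would require $\mtype(m,D,\LSet)$ defined, against our hypothesis), so it comes from \rn{MT-PMethod}: some $L\in\LSet$ has $\pmtype(m,C,L)$ defined, hence also $\pmbody(m,C,L)$ defined. If the body returned by $\pmbody(m,C,L)$ contains no \ensuremath{\itbox{proceed}} we are done with the left disjunct, so suppose, towards a contradiction, that \emph{every} $L'\in\LSet$ with $\pmbody(m,C,L')$ defined returns a body containing \ensuremath{\itbox{proceed}}.

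The engine of the argument concerns a partial method whose body contains \ensuremath{\itbox{proceed}}. If $\pmbody(m,C,L') = \overline{x}.e_0 \IN L_0$ with $\ensuremath{\itbox{proceed}}\in e_0$, then, since $\LT(L_0)(C.m)$ is well typed, \rn{T-PMethod} types $e_0$ under the layer set $\LSet_0\cup\{L_0\}$, where $L_0$ \ensuremath{\itbox{requires}} exactly $\LSet_0$, and \rn{T-Proceed} then forces $\mtype(m,C,\LSet_0,\LSet_0\cup\{L_0\})$ to be defined. Moreover $L'\LEQ_w L_0$, and by \rn{T-Layer}/\rn{T-LayerSW} a sublayer's \ensuremath{\itbox{requires}} set is $\LEQ_w$ that of its superlayer, so the \ensuremath{\itbox{requires}} set of $L'$ is $\LEQ_w\LSet_0$; combining this with \lemref{def:layer-set-wf1} applied to $L'\in\LSet$ (which gives $\LSet\LEQ_w$ the \ensuremath{\itbox{requires}} set of $L'$) and with $L'\LEQ_w L_0$ yields $\LSet\LEQ_w\LSet_0\cup\{L_0\}$ whenever $L'\in\LSet$. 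I will also need a routine \emph{monotonicity lemma}: if $\mtype(m,C,\LSet_2)$ is defined and $\LSet_1\LEQ_w\LSet_2$, then $\mtype(m,C,\LSet_1)$ is defined --- proved by induction on the $\mtype$ derivation, the only delicate point being that \rn{MT-PMethod} is preserved because $\pmtype$-definedness propagates down the sublayer chain (a small induction on $\LEQ_w$).

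Now I follow the chain. From the $L\in\LSet$ above and the layer $L_0$ in which $C.m$ is actually found, $\mtype(m,C,\LSet_0,\LSet_0\cup\{L_0\})$ is defined and $C$ does not declare $m$, so it is derived by \rn{MT-Super} or by \rn{MT-PMethod}. In the \rn{MT-Super} case $\mtype(m,D,\LSet_0\cup\{L_0\})$ is defined, and the monotonicity lemma with $\LSet\LEQ_w\LSet_0\cup\{L_0\}$ gives $\mtype(m,D,\LSet)$ defined --- contradicting the hypothesis. In the \rn{MT-PMethod} case some layer in $\LSet_0$ carries a partial method for $C.m$, hence so does some sublayer of it lying in $\LSet$ (by $\LSet\LEQ_w\LSet_0$ and propagation of $\pmtype$-definedness), and by the standing assumption that layer's body again contains \ensuremath{\itbox{proceed}}, so the step repeats. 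To show this cannot recur forever I would carry out the whole argument by induction on the derivation of $\LSet\,\WF$: the \rn{Wf-Empty} case is immediate since then $\mtype(m,C,\emptyset)$ can only hold by \rn{MT-Class}; in the \rn{Wf-With} case $\LSet=\LSet_1\cup\{L_a\}$, and either the partial method found above can be chosen inside $\LSet_1$ --- whence $\mtype(m,C,\LSet_1)$ is defined, $\mtype(m,D,\LSet_1)$ is undefined (by monotonicity, since enlarging to $\LSet_1\cup\{L_a\}$ only adds derivations), and the induction hypothesis for $\LSet_1\,\WF$ applies --- or it is available only through $L_a$, and then the single \ensuremath{\itbox{proceed}} step out of $L_a$'s partial method, analysed as above, lands either in $D$ (contradiction via monotonicity) or back in $\LSet_1$ (contradicting that no partial method for $C.m$ was found there).

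The hard part will be the \rn{Wf-Swap} case, i.e.\ making precise that the \ensuremath{\itbox{proceed}} chain terminates even when layers may be swapped out. There $\LSet$ is obtained from an earlier well-formed set by \emph{removing} all sublayers of a \ensuremath{\itbox{swappable}} layer and adding one new sublayer, so the induction hypothesis --- available only for the pre-swap set --- is of no direct use. The way around this leans on the conditions enforced by \rn{T-LayerSW}: no sublayer of a \ensuremath{\itbox{swappable}} layer is ever \ensuremath{\itbox{require}}d by any layer, and such a sublayer has the same \ensuremath{\itbox{requires}} clause and interface as its \ensuremath{\itbox{swappable}} superlayer. The first condition ensures that no removed layer can occur in any \ensuremath{\itbox{requires}} set met along the chain, so every layer set encountered is $\LEQ_w$ the pre-swap set and the pre-swap analysis (with the monotonicity lemma) still applies; the second keeps $\mtype$ and $\pmtype$ lookups for the swapped-in layer in step with those for its \ensuremath{\itbox{swappable}} superlayer. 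Discharging this case, verifying the monotonicity lemma, and doing the remaining \rn{MT-*} bookkeeping is most of the work, but it is all routine once \lemref{def:layer-set-wf1} and the monotonicity lemma are available.
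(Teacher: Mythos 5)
Your overall strategy is the same as the paper's: induct on the derivation of $\LSet\,\WF$, and in each case use the premise of \rn{T-Proceed} (a \texttt{proceed} in a well-typed partial method forces $\mtype(\texttt{m},\texttt{C},\LSet',\LSet'\cup\{\texttt{L}\})$ to be defined for the \texttt{requires} set $\LSet'$) together with monotonicity of $\mtype$ to rule out \texttt{proceed} in the most recently activated layer. Your \rn{Wf-Empty} and \rn{Wf-With} cases are essentially the paper's argument, and your ``monotonicity lemma'' is the paper's Lemma~A.2 restricted to $\LEQ_w$.

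The gap is exactly where you locate it, in \rn{Wf-Swap}, and it is not routine: the statement as given is too weak to serve as its own induction hypothesis there. With $\LSet=\LSet_{rm}\cup\{\texttt{L}_a\}$ obtained from $\LSet_0$ by removing the sublayers of a swappable $\texttt{L}_{sw}$, the induction hypothesis for $\LSet_0$ hands you \emph{some} witness $\texttt{L}'\in\LSet_0$ with a \texttt{proceed}-free body; if that witness happens to be one of the removed sublayers of $\texttt{L}_{sw}$, you are stuck. The two \rn{T-LayerSW} conditions you invoke constrain \emph{signatures} and \texttt{requires} clauses, not \emph{bodies}: two sublayers of the same swappable may give $\texttt{C.m}$ entirely different bodies, one without \texttt{proceed} and one with a \texttt{proceed} that is legitimately justified by a partial method in the common \texttt{requires} set (hence in $\LSet_{rm}$). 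So knowing $\texttt{L}'$'s body is \texttt{proceed}-free tells you nothing about $\texttt{L}_a$'s, and chasing $\texttt{L}_a$'s \texttt{proceed} just re-enters $\LSet_{rm}$, where your termination measure (the $\WF$ derivation) is unavailable because $\LSet_{rm}$ itself is not known to be well formed. The paper closes this by strengthening the induction hypothesis: the witness is required to satisfy the stronger predicate $\npr$, namely that \emph{every} sublayer of \emph{any} swappable superlayer of the witness also has a \texttt{proceed}-free body. That extra conjunct is established in the \rn{Wf-With} case (the ``single \texttt{proceed} step'' argument you already have applies uniformly to all siblings because they share the same \texttt{requires} clause) and is precisely what is consumed in \rn{Wf-Swap} to transfer \texttt{proceed}-freeness from the removed $\texttt{L}'$ to the incoming $\texttt{L}_a$. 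You need this strengthening (or an equivalent separate well-founded measure on \texttt{requires} chains inside a well-formed set) to complete the proof.
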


\begin{lemma}\label{lem:def:wp}
  If $\set{\ensuremath{\itbox{}\overline{\itbox{L}}\itbox{}}}\, \WF$ and
  $\mtype(\ensuremath{\itbox{m}},\ensuremath{\itbox{C}},\set{\ensuremath{\itbox{}\overline{\itbox{L}}\itbox{}}},\set{\ensuremath{\itbox{}\overline{\itbox{L}}\itbox{}}}) = \ensuremath{\itbox{}\overline{\itbox{T}}\itbox{\(\rightarrow\)T}_{0}\itbox{}}$, then
  $\ndp(\ensuremath{\itbox{m}},\ensuremath{\itbox{C}},\ensuremath{\itbox{}\overline{\itbox{L}}\itbox{}},\ensuremath{\itbox{}\overline{\itbox{L}}\itbox{}})$.
\end{lemma}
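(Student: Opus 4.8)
The plan is to argue along the chain of superclasses of \ensuremath{\itbox{C}} rather than by induction on the derivation of \(\mtype(\ensuremath{\itbox{m}},\ensuremath{\itbox{C}},\set{\ensuremath{\itbox{}\overline{\itbox{L}}\itbox{}}})\); the latter would be awkward because in the \rn{MT-PMethod} case there is no subderivation about the superclass to appeal to. Write \(\ensuremath{\itbox{C}} = \ensuremath{\itbox{C}_{0}\itbox{}} \triangleleft \ensuremath{\itbox{C}_{1}\itbox{}} \triangleleft \cdots \triangleleft \ensuremath{\itbox{C}_{k}\itbox{}} = \ensuremath{\itbox{Object}}\) for the ancestor chain of \ensuremath{\itbox{C}} (finite by Sanity Condition~\ref{lang:extends}). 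We may assume \(\ensuremath{\itbox{C}} \neq \ensuremath{\itbox{Object}}\), since \(\mtype(\ensuremath{\itbox{m}},\ensuremath{\itbox{Object}},\set{\ensuremath{\itbox{}\overline{\itbox{L}}\itbox{}}})\) is never defined (\ensuremath{\itbox{Object}} has no base methods and, by Sanity Condition (8), no partial methods), so the claim holds vacuously in that case.

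First I would establish a monotonicity fact: if \(\ensuremath{\itbox{C}_{j}\itbox{}} \triangleleft \ensuremath{\itbox{C}_{{j+1}}\itbox{}}\) and \(\mtype(\ensuremath{\itbox{m}},\ensuremath{\itbox{C}_{{j+1}}\itbox{}},\set{\ensuremath{\itbox{}\overline{\itbox{L}}\itbox{}}})\) is defined, then \(\mtype(\ensuremath{\itbox{m}},\ensuremath{\itbox{C}_{j}\itbox{}},\set{\ensuremath{\itbox{}\overline{\itbox{L}}\itbox{}}})\) is defined. This follows by cases on \(\ensuremath{\itbox{C}_{j}\itbox{}}\): if \ensuremath{\itbox{m}} is a base method of \(\ensuremath{\itbox{C}_{j}\itbox{}}\), use \rn{MT-Class}; otherwise, if \(\pmtype(\ensuremath{\itbox{m}},\ensuremath{\itbox{C}_{j}\itbox{}},\ensuremath{\itbox{L}})\) is defined for some \(\ensuremath{\itbox{L}} \in \set{\ensuremath{\itbox{}\overline{\itbox{L}}\itbox{}}}\), use \rn{MT-PMethod}; otherwise use \rn{MT-Super}, whose remaining premise is precisely that \(\mtype(\ensuremath{\itbox{m}},\ensuremath{\itbox{C}_{{j+1}}\itbox{}},\set{\ensuremath{\itbox{}\overline{\itbox{L}}\itbox{}}})\) is defined. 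Consequently \(\{\, j \mid \mtype(\ensuremath{\itbox{m}},\ensuremath{\itbox{C}_{j}\itbox{}},\set{\ensuremath{\itbox{}\overline{\itbox{L}}\itbox{}}}) \text{ defined} \,\}\) is downward closed in \(\{0,\dots,k\}\); it contains \(0\) by hypothesis and excludes \(k\) (again \ensuremath{\itbox{Object}}), hence equals \(\{0,\dots,j^{*}\}\) for some \(0 \le j^{*} < k\), and by maximality \(\mtype(\ensuremath{\itbox{m}},\ensuremath{\itbox{C}_{{j^{*}+1}}\itbox{}},\set{\ensuremath{\itbox{}\overline{\itbox{L}}\itbox{}}})\) is undefined.

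Then I would apply \lemref{def:layer-set-wf2} to the boundary pair \(\ensuremath{\itbox{C}_{{j^{*}}}\itbox{}} \triangleleft \ensuremath{\itbox{C}_{{j^{*}+1}}\itbox{}}\), using \(\set{\ensuremath{\itbox{}\overline{\itbox{L}}\itbox{}}}\,\WF\), \(\mtype(\ensuremath{\itbox{m}},\ensuremath{\itbox{C}_{{j^{*}}}\itbox{}},\set{\ensuremath{\itbox{}\overline{\itbox{L}}\itbox{}}})\) defined, and \(\mtype(\ensuremath{\itbox{m}},\ensuremath{\itbox{C}_{{j^{*}+1}}\itbox{}},\set{\ensuremath{\itbox{}\overline{\itbox{L}}\itbox{}}})\) undefined. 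In the first case the lemma gives \(\ensuremath{\itbox{L{\ensuremath{'}}}} \in \set{\ensuremath{\itbox{}\overline{\itbox{L}}\itbox{}}}\) with \(\ensuremath{\itbox{proceed}} \notin \pmbody(\ensuremath{\itbox{m}},\ensuremath{\itbox{C}_{{j^{*}}}\itbox{}},\ensuremath{\itbox{L{\ensuremath{'}}}})\), so \rn{NDP-Layer} yields \(\ndp(\ensuremath{\itbox{m}},\ensuremath{\itbox{C}_{{j^{*}}}\itbox{}},\ensuremath{\itbox{}\overline{\itbox{L}}\itbox{}},\ensuremath{\itbox{}\overline{\itbox{L}}\itbox{}})\). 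In the second case it gives that \(\mtype(\ensuremath{\itbox{m}},\ensuremath{\itbox{C}_{{j^{*}}}\itbox{}},\emptyset,\set{\ensuremath{\itbox{}\overline{\itbox{L}}\itbox{}}})\) is defined; since \rn{MT-PMethod} cannot apply to the empty set and \rn{MT-Super} would require \(\mtype(\ensuremath{\itbox{m}},\ensuremath{\itbox{C}_{{j^{*}+1}}\itbox{}},\set{\ensuremath{\itbox{}\overline{\itbox{L}}\itbox{}}})\) to be defined, this must be by \rn{MT-Class}, i.e. \ensuremath{\itbox{m}} is a base method of \(\ensuremath{\itbox{C}_{{j^{*}}}\itbox{}}\), and \rn{NDP-Class} yields \(\ndp(\ensuremath{\itbox{m}},\ensuremath{\itbox{C}_{{j^{*}}}\itbox{}},\ensuremath{\itbox{}\overline{\itbox{L}}\itbox{}},\ensuremath{\itbox{}\overline{\itbox{L}}\itbox{}})\). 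Either way \(\ndp(\ensuremath{\itbox{m}},\ensuremath{\itbox{C}_{{j^{*}}}\itbox{}},\ensuremath{\itbox{}\overline{\itbox{L}}\itbox{}},\ensuremath{\itbox{}\overline{\itbox{L}}\itbox{}})\) holds.

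Finally I would propagate this downward by a short induction on \(j\) from \(j^{*}\) to \(0\): from \(\ndp(\ensuremath{\itbox{m}},\ensuremath{\itbox{C}_{j}\itbox{}},\ensuremath{\itbox{}\overline{\itbox{L}}\itbox{}},\ensuremath{\itbox{}\overline{\itbox{L}}\itbox{}})\) and \(\ensuremath{\itbox{C}_{{j-1}}\itbox{}} \triangleleft \ensuremath{\itbox{C}_{j}\itbox{}}\), rule \rn{NDP-Super} gives \(\ndp(\ensuremath{\itbox{m}},\ensuremath{\itbox{C}_{{j-1}}\itbox{}},\ensuremath{\itbox{}\overline{\itbox{L}}\itbox{}},\ensuremath{\itbox{}\overline{\itbox{L}}\itbox{}})\), and at \(j=0\) this is the goal \(\ndp(\ensuremath{\itbox{m}},\ensuremath{\itbox{C}},\ensuremath{\itbox{}\overline{\itbox{L}}\itbox{}},\ensuremath{\itbox{}\overline{\itbox{L}}\itbox{}})\). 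This step only relies on the two layer-sequence arguments coinciding here, so that the shapes \(\ensuremath{\itbox{}\overline{\itbox{L}}\itbox{}}\) and \((\ensuremath{\itbox{}\overline{\itbox{L}}\itbox{}};\bullet)\) match what \rn{NDP-Class}, \rn{NDP-Layer} and \rn{NDP-Super} demand. I expect the main obstacle to be the monotonicity step together with the resulting choice of the boundary class \(\ensuremath{\itbox{C}_{{j^{*}}}\itbox{}}\): one has to be sure that ``\(\mtype\) is defined'' can switch off at most once as one ascends the class hierarchy, so that \lemref{def:layer-set-wf2} is invoked at exactly one, well-chosen class; everything after that is routine manipulation of the \(\ndp\) rules.
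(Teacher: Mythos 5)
Your proposal is correct and is essentially the paper's own argument: the paper proves this by induction on the length of the superclass chain from \ensuremath{\itbox{C}} to \ensuremath{\itbox{Object}}, splitting on whether $\mtype(\ensuremath{\itbox{m}},\ensuremath{\itbox{D}},\set{\ensuremath{\itbox{}\overline{\itbox{L}}\itbox{}}})$ is defined at the direct superclass \ensuremath{\itbox{D}}---invoking \lemref{def:layer-set-wf2} in the undefined case (yielding \rn{NDP-Class} or \rn{NDP-Layer}) and the induction hypothesis plus \rn{NDP-Super} in the defined case. Your version merely unrolls that induction: the separate monotonicity step and explicit boundary class $\ensuremath{\itbox{C}_{{j^{*}}}\itbox{}}$ are sound but unnecessary, since the paper's case split applies \lemref{def:layer-set-wf2} exactly where definedness of $\mtype$ first fails without having to prove in advance that it fails at most once.
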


As stated below, the predicate \(\ndp\) ensures the existence of a method:

\begin{lemma}\label{lem:def:wp_mtype}
  If $\ndp(\ensuremath{\itbox{m}},\ensuremath{\itbox{C}},\ensuremath{\itbox{}\overline{\itbox{L}}\itbox{{\ensuremath{'}}}},\ensuremath{\itbox{}\overline{\itbox{L}}\itbox{}})$ holds for some \ensuremath{\itbox{m}}, \ensuremath{\itbox{C}}, \ensuremath{\itbox{}\overline{\itbox{L}}\itbox{{\ensuremath{'}}}} and
  \ensuremath{\itbox{}\overline{\itbox{L}}\itbox{}}, then $\mtype(\ensuremath{\itbox{m}},\ensuremath{\itbox{C}},\set{\ensuremath{\itbox{}\overline{\itbox{L}}\itbox{{\ensuremath{'}}}}}, \set{\ensuremath{\itbox{}\overline{\itbox{L}}\itbox{}}}) = \ensuremath{\itbox{}\overline{\itbox{T}}\itbox{\(\rightarrow\)T}_{0}\itbox{}}$ for
  some \ensuremath{\itbox{}\overline{\itbox{T}}\itbox{}} and \ensuremath{\itbox{T}_{0}\itbox{}}.
\end{lemma}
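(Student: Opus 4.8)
The plan is to prove the claim by induction on the derivation of $\ndp(\ensuremath{\itbox{m}},\ensuremath{\itbox{C}},\ensuremath{\itbox{}\overline{\itbox{L}}\itbox{{\ensuremath{'}}}},\ensuremath{\itbox{}\overline{\itbox{L}}\itbox{}})$, analysing the last rule applied (\rn{NDP-Class}, \rn{NDP-Layer}, or \rn{NDP-Super}). In every case the goal is to exhibit one of the three rules defining $\mtype$ (\rn{MT-Class}, \rn{MT-PMethod}, \rn{MT-Super}) all of whose premises hold, so that $\mtype(\ensuremath{\itbox{m}},\ensuremath{\itbox{C}},\set{\ensuremath{\itbox{}\overline{\itbox{L}}\itbox{{\ensuremath{'}}}}},\set{\ensuremath{\itbox{}\overline{\itbox{L}}\itbox{}}})$ is defined. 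It helps that $\mtype$'s two layer-set arguments play the same role as $\ndp$'s two layer-sequence arguments (current cursor prefix versus full activated sequence), so the prefix bookkeeping transfers unchanged; and that the statement only asserts that $\mtype$ returns \emph{some} signature, which sidesteps the fact that $\mtype$ is merely a relation in general.

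For the base cases: if the last rule is \rn{NDP-Class}, then \ensuremath{\itbox{C}} declares a base method \ensuremath{\itbox{m}}, and \rn{MT-Class} gives a signature for $\mtype(\ensuremath{\itbox{m}},\ensuremath{\itbox{C}},\dots)$ irrespective of the layer sets. If it is \rn{NDP-Layer}, then some $\ensuremath{\itbox{L}_{0}\itbox{}} \in \set{\ensuremath{\itbox{}\overline{\itbox{L}}\itbox{{\ensuremath{'}}}}}$ satisfies $\ensuremath{\itbox{proceed}} \notin \pmbody(\ensuremath{\itbox{m}},\ensuremath{\itbox{C}},\ensuremath{\itbox{L}_{0}\itbox{}})$, so in particular $\pmbody(\ensuremath{\itbox{m}},\ensuremath{\itbox{C}},\ensuremath{\itbox{L}_{0}\itbox{}})$ is defined. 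Here I would first dispatch a routine auxiliary fact: $\pmbody(\ensuremath{\itbox{m}},\ensuremath{\itbox{C}},\ensuremath{\itbox{L}})$ defined implies $\pmtype(\ensuremath{\itbox{m}},\ensuremath{\itbox{C}},\ensuremath{\itbox{L}})$ defined, which follows by a direct induction on the $\pmbody$ derivation since \rn{PMB-Layer}/\rn{PMB-Super} mirror \rn{PMT-Layer}/\rn{PMT-Super} and share their side conditions. Applying it to $\ensuremath{\itbox{L}_{0}\itbox{}}$ and then \rn{MT-PMethod} (with $\ensuremath{\itbox{L}} = \ensuremath{\itbox{L}_{0}\itbox{}} \in \set{\ensuremath{\itbox{}\overline{\itbox{L}}\itbox{{\ensuremath{'}}}}}$ as the witness) finishes this case.

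The inductive case is \rn{NDP-Super}: here $\ensuremath{\itbox{C \(\triangleleft\) D}}$ and the premise $\ndp(\ensuremath{\itbox{m}},\ensuremath{\itbox{D}},\ensuremath{\itbox{}\overline{\itbox{L}}\itbox{}},\ensuremath{\itbox{}\overline{\itbox{L}}\itbox{}})$ holds by a strictly smaller derivation, $\ensuremath{\itbox{}\overline{\itbox{L}}\itbox{}}$ being the full sequence of the conclusion. The induction hypothesis yields $\mtype(\ensuremath{\itbox{m}},\ensuremath{\itbox{D}},\set{\ensuremath{\itbox{}\overline{\itbox{L}}\itbox{}}},\set{\ensuremath{\itbox{}\overline{\itbox{L}}\itbox{}}})$ defined. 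Now I would split on the fixed program: (i) if \ensuremath{\itbox{C}} declares a base method \ensuremath{\itbox{m}}, apply \rn{MT-Class}; (ii) else if $\pmtype(\ensuremath{\itbox{m}},\ensuremath{\itbox{C}},\ensuremath{\itbox{L}})$ is defined for some $\ensuremath{\itbox{L}} \in \set{\ensuremath{\itbox{}\overline{\itbox{L}}\itbox{{\ensuremath{'}}}}}$, apply \rn{MT-PMethod}; (iii) otherwise all such $\pmtype$ lookups are undefined and \ensuremath{\itbox{m}} is not among \ensuremath{\itbox{C}}'s declared methods, so \rn{MT-Super} applies, its recursive premise $\mtype(\ensuremath{\itbox{m}},\ensuremath{\itbox{D}},\set{\ensuremath{\itbox{}\overline{\itbox{L}}\itbox{}}},\set{\ensuremath{\itbox{}\overline{\itbox{L}}\itbox{}}})$ being exactly what the induction hypothesis provides. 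These three subcases are exhaustive, so $\mtype(\ensuremath{\itbox{m}},\ensuremath{\itbox{C}},\set{\ensuremath{\itbox{}\overline{\itbox{L}}\itbox{{\ensuremath{'}}}}},\set{\ensuremath{\itbox{}\overline{\itbox{L}}\itbox{}}})$ is defined.

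I do not expect a genuine obstacle; the argument is essentially a matching-up of the $\ndp$ and $\mtype$ rule sets. The only points needing attention are (a) noticing that a $\ndp$ derivation ending in \rn{NDP-Super} does not forbid \ensuremath{\itbox{C}} itself from supplying \ensuremath{\itbox{m}} as a base or partial method, so the three-way split in the inductive case is genuinely needed rather than a direct appeal to \rn{MT-Super}; and (b) checking that the two layer-set arguments threaded through \rn{MT-Super} line up with the two sequences threaded through \rn{NDP-Super} — which they do, since \rn{NDP-Super}'s premise uses $(\ensuremath{\itbox{}\overline{\itbox{L}}\itbox{}};\ensuremath{\itbox{}\overline{\itbox{L}}\itbox{}})$ in both positions and \rn{MT-Super} recurses with $\mtype(\ensuremath{\itbox{m}},\ensuremath{\itbox{D}},\LSet_2,\LSet_2)$.
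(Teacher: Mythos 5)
Your proposal is correct and follows exactly the route the paper takes: the paper's own proof is the single line ``by induction on $\ndp(\ensuremath{\itbox{m}},\ensuremath{\itbox{C}},\ensuremath{\itbox{}\overline{\itbox{L}}\itbox{{\ensuremath{'}}}},\ensuremath{\itbox{}\overline{\itbox{L}}\itbox{}})$,'' and your case analysis (matching \rn{NDP-Class}/\rn{NDP-Layer}/\rn{NDP-Super} against \rn{MT-Class}/\rn{MT-PMethod}/\rn{MT-Super}, with the auxiliary fact that $\pmbody$ defined implies $\pmtype$ defined) is precisely the elided detail. The three-way split in the \rn{NDP-Super} case and the observation that both layer arguments coincide in the recursive premise are the right points to check, and they go through as you describe.
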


\subsection{Progress}

To prove the Progress Theorem, we need the following two lemmas, which
show the existence of a method body from well definedness of $\pmtype$
and $\mtype$.

\begin{lemma}\label{lem:def:pmtype}
  If $\pmtype(\ensuremath{\itbox{m}}, \ensuremath{\itbox{C}}, \ensuremath{\itbox{L}}) = \ensuremath{\itbox{}\overline{\itbox{T}}\itbox{\(\rightarrow\)T}_{0}\itbox{}}$, then there exist \ensuremath{\itbox{}\overline{\itbox{x}}\itbox{}} and
  \ensuremath{\itbox{e}_{0}\itbox{}} and \ensuremath{\itbox{L{\ensuremath{'}}}} ($\neq \ensuremath{\itbox{Base}}$) such that $\pmbody(\ensuremath{\itbox{m}}, \ensuremath{\itbox{C}}, \ensuremath{\itbox{L}}) =
  \ensuremath{\itbox{}\overline{\itbox{x}}\itbox{.e}_{0}\itbox{}} \IN \ensuremath{\itbox{L{\ensuremath{'}}}}$ and the lengths of \ensuremath{\itbox{}\overline{\itbox{x}}\itbox{}} and \ensuremath{\itbox{}\overline{\itbox{T}}\itbox{}} are equal and
  $\ensuremath{\itbox{L}} \LEQ_w \ensuremath{\itbox{L{\ensuremath{'}}}}$.
\end{lemma}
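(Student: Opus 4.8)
The plan is to proceed by induction on the derivation of $\pmtype(\ensuremath{\itbox{m}}, \ensuremath{\itbox{C}}, \ensuremath{\itbox{L}}) = \ensuremath{\itbox{}\overline{\itbox{T}}\itbox{\(\rightarrow\)T}_{0}\itbox{}}$, exploiting the fact that the two rules defining $\pmtype$, namely \rn{PMT-Layer} and \rn{PMT-Super}, are in one-to-one correspondence with the two rules defining $\pmbody$, namely \rn{PMB-Layer} and \rn{PMB-Super}, and that each pair of rules is triggered by exactly the same side conditions on $\LT(\ensuremath{\itbox{L}})(\ensuremath{\itbox{C.m}})$ and on the superlayer relation $\ensuremath{\itbox{L \(\triangleleft\) L{\ensuremath{'}}}}$. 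The proof is thus a structural reflection of $\pmtype$-derivations into $\pmbody$-derivations, carrying along the arity and weak-subtyping bookkeeping.

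In the base case \rn{PMT-Layer} we have $\LT(\ensuremath{\itbox{L}})(\ensuremath{\itbox{C.m}}) = \ensuremath{\itbox{T}_{0}\itbox{ C.m(}\overline{\itbox{T}}\itbox{ }\overline{\itbox{x}}\itbox{){\char'173} return e; {\char'175}}}$, so the very same premise lets \rn{PMB-Layer} fire, giving $\pmbody(\ensuremath{\itbox{m}},\ensuremath{\itbox{C}},\ensuremath{\itbox{L}}) = \ensuremath{\itbox{}\overline{\itbox{x}}\itbox{.e}} \IN \ensuremath{\itbox{L}}$; taking $\ensuremath{\itbox{L{\ensuremath{'}}}} = \ensuremath{\itbox{L}}$, the arity condition is immediate from the pairing convention "\ensuremath{\itbox{}\overline{\itbox{T}}\itbox{ }\overline{\itbox{x}}\itbox{}}", $\ensuremath{\itbox{L}} \LEQ_w \ensuremath{\itbox{L}}$ holds by \rn{LSw-Refl}, and $\ensuremath{\itbox{L}} \neq \ensuremath{\itbox{Base}}$ because $\LT(\ensuremath{\itbox{L}})$ is defined whereas $\ensuremath{\itbox{Base}} \notin \dom(\LT)$ by the sanity conditions. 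In the inductive case \rn{PMT-Super} we have $\LT(\ensuremath{\itbox{L}})(\ensuremath{\itbox{C.m}}) \undf$, $\ensuremath{\itbox{L \(\triangleleft\) L{\ensuremath{'}}}}$, and $\pmtype(\ensuremath{\itbox{m}},\ensuremath{\itbox{C}},\ensuremath{\itbox{L{\ensuremath{'}}}}) = \ensuremath{\itbox{}\overline{\itbox{T}}\itbox{\(\rightarrow\)T}_{0}\itbox{}}$; the induction hypothesis supplies $\ensuremath{\itbox{}\overline{\itbox{x}}\itbox{}}$, $\ensuremath{\itbox{e}_{0}\itbox{}}$, and $\ensuremath{\itbox{L{\ensuremath{'}}{\ensuremath{'}}}} \neq \ensuremath{\itbox{Base}}$ with $\pmbody(\ensuremath{\itbox{m}},\ensuremath{\itbox{C}},\ensuremath{\itbox{L{\ensuremath{'}}}}) = \ensuremath{\itbox{}\overline{\itbox{x}}\itbox{.e}_{0}\itbox{}} \IN \ensuremath{\itbox{L{\ensuremath{'}}{\ensuremath{'}}}}$, matching arities, and $\ensuremath{\itbox{L{\ensuremath{'}}}} \LEQ_w \ensuremath{\itbox{L{\ensuremath{'}}{\ensuremath{'}}}}$. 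Feeding this into \rn{PMB-Super}, whose first two premises are exactly the side conditions already in hand, yields $\pmbody(\ensuremath{\itbox{m}},\ensuremath{\itbox{C}},\ensuremath{\itbox{L}}) = \ensuremath{\itbox{}\overline{\itbox{x}}\itbox{.e}_{0}\itbox{}} \IN \ensuremath{\itbox{L{\ensuremath{'}}{\ensuremath{'}}}}$; finally $\ensuremath{\itbox{L}} \LEQ_w \ensuremath{\itbox{L{\ensuremath{'}}}}$ by \rn{LSw-Extends} and hence $\ensuremath{\itbox{L}} \LEQ_w \ensuremath{\itbox{L{\ensuremath{'}}{\ensuremath{'}}}}$ by \rn{LSw-Trans}, while $\ensuremath{\itbox{L{\ensuremath{'}}{\ensuremath{'}}}} \neq \ensuremath{\itbox{Base}}$ is inherited from the induction hypothesis.

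There is essentially no hard part here; the lemma is a routine induction. The only points requiring a moment's care are (i) observing that the arity claim follows from the shorthand "\ensuremath{\itbox{}\overline{\itbox{T}}\itbox{ }\overline{\itbox{x}}\itbox{}}" denoting paired sequences of equal length, and (ii) discharging $\ensuremath{\itbox{L{\ensuremath{'}}}} \neq \ensuremath{\itbox{Base}}$, which in the base case uses $\ensuremath{\itbox{Base}} \notin \dom(\LT)$ and in the inductive case is simply propagated (note also that \rn{PMT-Super} cannot be rooted at $\ensuremath{\itbox{Base}}$, since $\ensuremath{\itbox{Base}}$ has no superlayer, so the induction is well founded).
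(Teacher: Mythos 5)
Your proof is correct and follows essentially the same route as the paper's: induction on the derivation of $\pmtype(\ensuremath{\itbox{m}},\ensuremath{\itbox{C}},\ensuremath{\itbox{L}})$, mirroring \rn{PMT-Layer} into \rn{PMB-Layer} (with reflexivity of $\LEQ_w$ and the arity coming from the well-formed partial method) and \rn{PMT-Super} into \rn{PMB-Super} (with \rn{LSw-Extends} and transitivity). The only cosmetic difference is that the paper cites \rn{T-PMethod} for the arity claim where you appeal directly to the paired-sequence convention; both are adequate.
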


\begin{lemma}\label{lem:def:mtype}
  If $\mtype(\ensuremath{\itbox{m}}, \ensuremath{\itbox{C}}, \set{\ensuremath{\itbox{}\overline{\itbox{L}}\itbox{{\ensuremath{'}}}}}, \set{\ensuremath{\itbox{}\overline{\itbox{L}}\itbox{}}}) = \ensuremath{\itbox{}\overline{\itbox{T}}\itbox{\(\rightarrow\)T}_{0}\itbox{}}$ and \ensuremath{\itbox{}\overline{\itbox{L}}\itbox{{\ensuremath{'}}}}
  is a prefix of \ensuremath{\itbox{}\overline{\itbox{L}}\itbox{}} and \(\set{\ensuremath{\itbox{}\overline{\itbox{L}}\itbox{}}}\, \WF\), then there exist \ensuremath{\itbox{}\overline{\itbox{x}}\itbox{}}
  and \ensuremath{\itbox{e}_{0}\itbox{}} and \ensuremath{\itbox{}\overline{\itbox{L}}\itbox{{\ensuremath{'}}{\ensuremath{'}}}} and \ensuremath{\itbox{C{\ensuremath{'}}}} ($\neq \ensuremath{\itbox{Object}}$) such that
  $\mbody(\ensuremath{\itbox{m}}, \ensuremath{\itbox{C}}, \ensuremath{\itbox{}\overline{\itbox{L}}\itbox{}}, \ensuremath{\itbox{}\overline{\itbox{L}}\itbox{{\ensuremath{'}}}}) = \ensuremath{\itbox{}\overline{\itbox{x}}\itbox{.e}_{0}\itbox{}} \IN \ensuremath{\itbox{C{\ensuremath{'}}}}, \ensuremath{\itbox{}\overline{\itbox{L}}\itbox{{\ensuremath{'}}{\ensuremath{'}}}}$ and the
  lengths of \ensuremath{\itbox{}\overline{\itbox{x}}\itbox{}} and \ensuremath{\itbox{}\overline{\itbox{T}}\itbox{}} are equal and, if \ensuremath{\itbox{}\overline{\itbox{L}}\itbox{{\ensuremath{'}}{\ensuremath{'}}}} is not empty, the last layer name of \ensuremath{\itbox{}\overline{\itbox{L}}\itbox{{\ensuremath{'}}{\ensuremath{'}}}} is not \ensuremath{\itbox{Base}}.
\end{lemma}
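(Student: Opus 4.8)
The plan is to prove the lemma by induction on the derivation of $\mtype(\ensuremath{\itbox{m}}, \ensuremath{\itbox{C}}, \set{\ensuremath{\itbox{}\overline{\itbox{L}}\itbox{{\ensuremath{'}}}}}, \set{\ensuremath{\itbox{}\overline{\itbox{L}}\itbox{}}}) = \ensuremath{\itbox{}\overline{\itbox{T}}\itbox{\(\rightarrow\)T}_{0}\itbox{}}$, with a case split on the last rule applied (\rn{MT-Class}, \rn{MT-PMethod}, \rn{MT-Super}), building the matching $\mbody(\ensuremath{\itbox{m}},\ensuremath{\itbox{C}},\ensuremath{\itbox{}\overline{\itbox{L}}\itbox{{\ensuremath{'}}}},\ensuremath{\itbox{}\overline{\itbox{L}}\itbox{}})$ derivation rule by rule. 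A fact I would isolate first and use repeatedly is that $\pmbody(\ensuremath{\itbox{m}},\ensuremath{\itbox{C}},\ensuremath{\itbox{L}})$ is defined exactly when $\pmtype(\ensuremath{\itbox{m}},\ensuremath{\itbox{C}},\ensuremath{\itbox{L}})$ is: one direction is immediate from the parallel shapes of \rn{PMB-Layer}/\rn{PMB-Super} and \rn{PMT-Layer}/\rn{PMT-Super}, and the other is \lemref{def:pmtype}, which in addition tells us the returned parameter list has the same length as the argument-type list of the signature. I would also note once that a $\WF$ layer set contains only names in $\dom(\LT)$ (the rules \rn{Wf-With}/\rn{Wf-Swap} mention $\ensuremath{\itbox{req}}$, which presupposes this), so by the sanity conditions $\ensuremath{\itbox{Base}}$ occurs in neither $\ensuremath{\itbox{}\overline{\itbox{L}}\itbox{}}$ nor any of its prefixes; this is what yields the ``last layer is not $\ensuremath{\itbox{Base}}$'' clause.

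For \rn{MT-Class}, $\ensuremath{\itbox{m}}$ is a base method of $\ensuremath{\itbox{C}}$, so $\ensuremath{\itbox{C}}\in\dom(\CT)$ (hence $\ensuremath{\itbox{C}}\neq\ensuremath{\itbox{Object}}$) and \rn{MB-Class} gives $\mbody = \ensuremath{\itbox{}\overline{\itbox{x}}\itbox{.e}}\IN\ensuremath{\itbox{C}},\bullet$ with $\ensuremath{\itbox{}\overline{\itbox{x}}\itbox{}}$ read off the very declaration that \rn{MT-Class} used, so $|\ensuremath{\itbox{}\overline{\itbox{x}}\itbox{}}|=|\ensuremath{\itbox{}\overline{\itbox{T}}\itbox{}}|$ and the clause about the last layer of $\ensuremath{\itbox{}\overline{\itbox{L}}\itbox{{\ensuremath{'}}{\ensuremath{'}}}}=\bullet$ is vacuous. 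For \rn{MT-Super}, we have $\ensuremath{\itbox{class C \(\triangleleft\) D}}$, $\ensuremath{\itbox{m}}$ not among $\ensuremath{\itbox{C}}$'s own methods, $\pmtype(\ensuremath{\itbox{m}},\ensuremath{\itbox{C}},\ensuremath{\itbox{L}})$ undefined for all $\ensuremath{\itbox{L}}\in\set{\ensuremath{\itbox{}\overline{\itbox{L}}\itbox{{\ensuremath{'}}}}}$, and a sub-derivation of $\mtype(\ensuremath{\itbox{m}},\ensuremath{\itbox{D}},\set{\ensuremath{\itbox{}\overline{\itbox{L}}\itbox{}}},\set{\ensuremath{\itbox{}\overline{\itbox{L}}\itbox{}}}) = \ensuremath{\itbox{}\overline{\itbox{T}}\itbox{\(\rightarrow\)T}_{0}\itbox{}}$. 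By the auxiliary fact $\pmbody(\ensuremath{\itbox{m}},\ensuremath{\itbox{C}},\ensuremath{\itbox{L}})$ is undefined for every $\ensuremath{\itbox{L}}\in\set{\ensuremath{\itbox{}\overline{\itbox{L}}\itbox{{\ensuremath{'}}}}}$, so iterating \rn{MB-NextLayer} reduces $\mbody(\ensuremath{\itbox{m}},\ensuremath{\itbox{C}},\ensuremath{\itbox{}\overline{\itbox{L}}\itbox{{\ensuremath{'}}}},\ensuremath{\itbox{}\overline{\itbox{L}}\itbox{}})$ to $\mbody(\ensuremath{\itbox{m}},\ensuremath{\itbox{C}},\bullet,\ensuremath{\itbox{}\overline{\itbox{L}}\itbox{}})$, which by \rn{MB-Super} (since $\ensuremath{\itbox{m}}$ is not in $\ensuremath{\itbox{}\overline{\itbox{M}}\itbox{}}$) equals $\mbody(\ensuremath{\itbox{m}},\ensuremath{\itbox{D}},\ensuremath{\itbox{}\overline{\itbox{L}}\itbox{}},\ensuremath{\itbox{}\overline{\itbox{L}}\itbox{}})$; applying the induction hypothesis to the $\ensuremath{\itbox{D}}$-sub-derivation (with $\ensuremath{\itbox{}\overline{\itbox{L}}\itbox{}}$ as both the prefix and the full sequence, and $\set{\ensuremath{\itbox{}\overline{\itbox{L}}\itbox{}}}\,\WF$) supplies all the required data.

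The remaining case \rn{MT-PMethod} is the delicate one. There is some $\ensuremath{\itbox{L}}\in\set{\ensuremath{\itbox{}\overline{\itbox{L}}\itbox{{\ensuremath{'}}}}}$ with $\pmtype(\ensuremath{\itbox{m}},\ensuremath{\itbox{C}},\ensuremath{\itbox{L}}) = \ensuremath{\itbox{}\overline{\itbox{T}}\itbox{\(\rightarrow\)T}_{0}\itbox{}}$, hence $\pmbody(\ensuremath{\itbox{m}},\ensuremath{\itbox{C}},\ensuremath{\itbox{L}})$ is defined. I would then show, by a side induction on $|\ensuremath{\itbox{}\overline{\itbox{L}}\itbox{{\ensuremath{'}}}}|$, that whenever a sequence $\ensuremath{\itbox{}\overline{\itbox{K}}\itbox{}}$ contains some layer at which $\pmbody(\ensuremath{\itbox{m}},\ensuremath{\itbox{C}},-)$ is defined, the right-to-left scan performed by $\mbody(\ensuremath{\itbox{m}},\ensuremath{\itbox{C}},\ensuremath{\itbox{}\overline{\itbox{K}}\itbox{}},\ensuremath{\itbox{}\overline{\itbox{L}}\itbox{}})$ (using \rn{MB-NextLayer} while $\pmbody$ is undefined) stops at the first such layer $\ensuremath{\itbox{L}_{0}\itbox{}}$ via \rn{MB-Layer}, returning $\ensuremath{\itbox{}\overline{\itbox{x}}\itbox{.e}_{0}\itbox{}}\IN\ensuremath{\itbox{C}},(\ensuremath{\itbox{}\overline{\itbox{K}}\itbox{{\ensuremath{'}}}};\ensuremath{\itbox{L}_{0}\itbox{}})$ for a prefix $\ensuremath{\itbox{}\overline{\itbox{K}}\itbox{{\ensuremath{'}}}};\ensuremath{\itbox{L}_{0}\itbox{}}$ of $\ensuremath{\itbox{}\overline{\itbox{K}}\itbox{}}$. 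Instantiating $\ensuremath{\itbox{}\overline{\itbox{K}}\itbox{}}:=\ensuremath{\itbox{}\overline{\itbox{L}}\itbox{{\ensuremath{'}}}}$ then gives $\ensuremath{\itbox{C{\ensuremath{'}}}}=\ensuremath{\itbox{C}}$ (with $\ensuremath{\itbox{C}}\neq\ensuremath{\itbox{Object}}$, as layers never define partial methods for $\ensuremath{\itbox{Object}}$ by the sanity conditions) and $\ensuremath{\itbox{}\overline{\itbox{L}}\itbox{{\ensuremath{'}}{\ensuremath{'}}}}=\ensuremath{\itbox{}\overline{\itbox{K}}\itbox{{\ensuremath{'}}}};\ensuremath{\itbox{L}_{0}\itbox{}}$, whose last element $\ensuremath{\itbox{L}_{0}\itbox{}}$ lies in $\ensuremath{\itbox{}\overline{\itbox{L}}\itbox{{\ensuremath{'}}}}$, hence is not $\ensuremath{\itbox{Base}}$. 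For $|\ensuremath{\itbox{}\overline{\itbox{x}}\itbox{}}|=|\ensuremath{\itbox{}\overline{\itbox{T}}\itbox{}}|$, since $\ensuremath{\itbox{L}_{0}\itbox{}}$ need not equal the witness $\ensuremath{\itbox{L}}$, I would use a short simultaneous-induction observation that $\pmbody(\ensuremath{\itbox{m}},\ensuremath{\itbox{C}},\ensuremath{\itbox{L}})$ and $\pmtype(\ensuremath{\itbox{m}},\ensuremath{\itbox{C}},\ensuremath{\itbox{L}})$ read the body, resp.\ the signature, from the same declaration (the first ancestor of $\ensuremath{\itbox{L}}$ that declares $\ensuremath{\itbox{C.m}}$), together with the $\noconflict$ clause of $(\CT,\LT)\ensuremath{\itbox{ ok}}$, which forces the $\ensuremath{\itbox{C.m}}$-declarations in any two layers to agree; hence $\pmtype(\ensuremath{\itbox{m}},\ensuremath{\itbox{C}},\ensuremath{\itbox{L}_{0}\itbox{}}) = \pmtype(\ensuremath{\itbox{m}},\ensuremath{\itbox{C}},\ensuremath{\itbox{L}}) = \ensuremath{\itbox{}\overline{\itbox{T}}\itbox{\(\rightarrow\)T}_{0}\itbox{}}$ and $\ensuremath{\itbox{}\overline{\itbox{x}}\itbox{}}$ has the matching length.

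I expect \rn{MT-PMethod} to be the main obstacle: it is where the order-insensitive existential test of $\mtype$ over the set $\set{\ensuremath{\itbox{}\overline{\itbox{L}}\itbox{{\ensuremath{'}}}}}$ must be matched against the concrete sequential traversal of $\mbody$, and where the mismatch between the layer the type-level lookup picks and the layer the term-level lookup stops at forces an appeal to the \emph{global} $\noconflict$ condition rather than anything local. Once the biconditional between definedness of $\pmbody$ and $\pmtype$ is in place, the other two cases and the bookkeeping about $\ensuremath{\itbox{C{\ensuremath{'}}}}\neq\ensuremath{\itbox{Object}}$ and non-$\ensuremath{\itbox{Base}}$ layer tails are routine.
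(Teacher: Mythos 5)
Your overall strategy---inducting on the $\mtype$ derivation and casing on its \emph{last rule}---is genuinely different from the paper's, which uses a lexicographic induction on the derivation and the length of \ensuremath{\itbox{}\overline{\itbox{L}}\itbox{{\ensuremath{'}}}}, casing instead on the shape of \ensuremath{\itbox{}\overline{\itbox{L}}\itbox{{\ensuremath{'}}}} and on whether $\pmtype(\ensuremath{\itbox{m}},\ensuremath{\itbox{C}},\ensuremath{\itbox{L}_{0}\itbox{}})$ is defined at its last element, so that the case split mirrors the $\mbody$ rules directly. Your \rn{MT-PMethod} case (the right-to-left scan argument plus $\noconflict$) and your \rn{MT-Super} case are sound; the former essentially packages the paper's two nonempty-prefix cases into one side induction.

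The genuine gap is in your \rn{MT-Class} case. \rn{MB-Class} only concludes $\mbody(\ensuremath{\itbox{m}},\ensuremath{\itbox{C}},\bullet,\ensuremath{\itbox{}\overline{\itbox{L}}\itbox{}})$; it does not apply to $\mbody(\ensuremath{\itbox{m}},\ensuremath{\itbox{C}},\ensuremath{\itbox{}\overline{\itbox{L}}\itbox{{\ensuremath{'}}}},\ensuremath{\itbox{}\overline{\itbox{L}}\itbox{}})$ when \ensuremath{\itbox{}\overline{\itbox{L}}\itbox{{\ensuremath{'}}}} is nonempty, and \rn{MT-Class} places no constraint whatsoever on $\set{\ensuremath{\itbox{}\overline{\itbox{L}}\itbox{{\ensuremath{'}}}}}$, so you cannot treat the layer sequence as irrelevant. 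If some $\ensuremath{\itbox{L}}\in\set{\ensuremath{\itbox{}\overline{\itbox{L}}\itbox{{\ensuremath{'}}}}}$ defines \ensuremath{\itbox{C.m}}, the scan stops at that partial method via \rn{MB-Layer}, so the result is \emph{not} of the form $\IN\ensuremath{\itbox{C}},\bullet$, and the arity claim no longer ``reads off the same declaration'': you must invoke $\override^h(\ensuremath{\itbox{L}},\ensuremath{\itbox{C}})$ (or the functionality of $\mtype$, \lemref{mtype_layer}) to match the partial method's parameter list against the base signature \ensuremath{\itbox{}\overline{\itbox{T}}\itbox{}} produced by \rn{MT-Class}. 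If no such layer exists, you still need to establish that $\pmbody(\ensuremath{\itbox{m}},\ensuremath{\itbox{C}},-)$ is undefined at every element of \ensuremath{\itbox{}\overline{\itbox{L}}\itbox{{\ensuremath{'}}}} (which, unlike in \rn{MT-Super}, is not a premise of the rule) in order to iterate \rn{MB-NextLayer} down to $\bullet$ before \rn{MB-Class} can fire. So the \rn{MT-Class} case needs the same bifurcation you built for \rn{MT-PMethod}; the proof is repairable along your lines, but that case is not ``routine'' as written.
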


\section{Related Work}
\label{sec:relwork}

\paragraph{Foundation of Context-Oriented Programming}
Our work is a direct descendant of Igarashi, Hirschfeld, and
Masuhara~\cite{contextfj2011,DynamicLayer2012contextfj}, where a tiny
COP language ContextFJ is developed and its type system is proved to
be sound.  ContextFJ is not equipped with layer inheritance, layer
subtyping, or first-class layers but allows baseless partial methods to be
declared in the second type system~\cite{DynamicLayer2012contextfj},
in which \ensuremath{\itbox{requires}} declarations are first introduced into COP.

Our swappable layers resemble atomic layers in
ContextL~\cite{costanza2008feature}, in which mutual exclusion between
layers can be specified and activation of an atomic layer
automatically deactivates another layer in conflict.  Our syntax is a
little verbose in that the swappable layer name such as
\texttt{Weather} has to be explicit because a layer may have more than one swappable layer in its superlayers.  It may be a reasonable idea to disallow a sublayer of a swappable layer to be swappable for the sake of syntactic conciseness.

Similarly to our swappable layers for layer deactivation, Kamina et
al.~\cite{KaminaAotaniIgarashi14COP,KaminaAotaniMasuharaIgarashi17} also show
another approach to safe layer deactivation mechanism and formalized
its semantics and type safety with an extension of ContextFJ.  Their
approach is also based on \ensuremath{\itbox{requires}} clauses.  The key idea
is to modify the method lookup so that it searches not only activated
layers but all layers that are required by those activated layers.

Besides block-style layer activation mechanisms as in JCop, there are
other mechanisms such as imperative activation of Subjective-C~\cite{gonzalez2010subjective}, event-based
activation of EventCJ~\cite{kamina2011eventcj}, and implicit activation of Flute~\cite{bainomugisha2012interruptible}.
The original JCop also supports implicit layer
activation~\cite{appeltauer2013declarative}, but currently we omit it
from our formalization.  ServalCJ~\cite{kamina2015generalized}
provides a generalized layer activation mechanism that can treat the 
layer activation mechanisms above uniformly.
Although some of them
\cite{aotani2011featherweight,kamina2016generalized} study formal
semantics, they do not discuss type soundness of languages with
baseless partial methods; e.g., ServalCJ does not support baseless
partial methods.

Clarke and Sergey~\cite{clarke2009semantics} independently formalize a
core language (also called ContextFJ) for context-oriented programming
and develop such a type system. In their type system, each
partial/base method (rather than a layer) is equipped with
dependency information, a set of
the signatures of the methods that it may call.  Dependency information
is very fine-grained but their calculus does
not support class nor layer inheritances.

The JCop compiler transforms a JCop program into a plain Java code
which contains auxiliary classes and methods, constructing a kind of
double dispatch.  Appeltauer et al.~\cite{appeltauer2010layered} discusses
two implementation schemes of JCop's method dispatch mechanism: one
that rely on the translation to plain Java code and the other
that rely on \ensuremath{\itbox{invokedynamic}} of Java 7.

There are several studies to enrich description of relationships
between contexts.  Subjective-C~\cite{gonzalez2010subjective}, an
extension of Objective-C with COP, adopts imperative
context activation with imperative context relationship description,
which supports various kinds of declarations of dependency between
layers, such as implication, requirement, and exclusion.  Context
Petri Nets~\cite{cardozo2013modeling,cardozo2015semantics} is a
context-oriented extension of Petri Nets, and helps formalization of
description of context dependencies in Subjective-C.
$ML_{CoDa}$~\cite{degano2014two,degano2016two} provides two kinds of
components; one for declarative description of context dependencies
and the other for functional computation.  It also provides a type and
effect system and a loading-time verification mechanism that detects
failures in adaptation.

\paragraph{Dynamic Software Product Line}
Software product line (SPL) is a paradigm of industrial software
development that enables to create various variations of software by
mostly reusing common modules.  Programming languages for SPL, such as
Feature-Oriented
Programming~\cite{prehofer1997feature,batory2004scaling} and
Delta-Oriented Programming (DOP)~\cite{schaefer2010delta}, have been
studied.  They provide modules that refine existing classes and
combine them according to a given configuration at compile time or build
time.

Recent studies~\cite{hallsteinsen2008dynamic, rosenmuller2011flexible}
reveal that SPL also needs dynamic reconfiguration of software, and so
dynamic DOP~\cite{damiani2011dynamic,damiani2017core} is proposed.
Dynamic DOP provides mainly three kinds of modules; a delta module for
describing refinement of classes (similar to a layer of COP), a
product-line declaration for describing valid configurations, and a
dynamic reconfiguration graph for replacing heap objects dynamically.
Unlike COP, the composition order of delta modules is determined
uniquely by a given product line declaration; this property is called
\textit{unambiguity}.  A type system of dynamic DOP also ensures that
all valid reconfigurations lead to type-safe products.

\paragraph{Type systems for advanced composition mechanisms of OOP}

There are many type systems proposed for advanced composition
mechanisms such as mixins~\cite{bono1999core,flatt1998classes},
traits~\cite{liquori2008feathertrait,smith2005chai}, open classes (a.k.a.\
inter-type declarations)~\cite{clifton2006multijava}, and
revisers~\cite{chiba2010mostly}.  A common idea is to let programmers
declare dependency between modules as required interfaces; our
\ensuremath{\itbox{requires}} declarations basically follow it.  In most work, however,
composition is done at compile or link time unlike COP languages.  We
think that it is interesting that the same idea works even for dynamic
composition found in COP languages.

Kamina and Tamai~\cite{kamina2004mcjava} propose McJava, in which
mixin-based composition can be deferred to object instantiation.  In
fact, \ensuremath{\itbox{new}} expressions can specify a class and mixins to instantiate
an object.  So, the type of an object also consists of a class name
and a sequence of mixin names.  Whereas composition is per-instance
basis in McJava, it is global in \fsname{}.  However, in McJava,
composition cannot be changed once an object is instantiated.

Drossopoulou et al.~\cite{Drossopoulou2002fickle2} proposed \emph{Fickle}$_{\textrm{I\!I}}$,
a class-based object-oriented language with \emph{dynamic
  reclassification}, which allows an object to change its class at run
time.  Their idea of root classes, which serve as interface, is
similar to our swappable layers; their restriction that state classes
cannot be used as type for fields is similar to ours that a sublayer
of a swappable cannot be \ensuremath{\itbox{require}}d by any other layer.

Bettini et al.~\cite{bettini2013flexible} developed a type system for
\emph{dynamic trait replacement}, which allows methods in an object to
be exchanged at run time.  They introduce the notion of
\emph{replaceable} to describe the signatures of replaceable methods;
a replaceable appears as part of the type of an object and the trait
to replace methods of the object has to provide the methods in that
replaceable.  The roles of replaceables and traits are somewhat similar to
those of swappable layers, which provide interfaces common to swapped
layers, and sublayers of swappable.

Several studies for dynamism of objects in distributed settings exist.
MoMi~\cite{bettini2005momi} presents an approach to having a process
communicate mobile code to other processes in a safe manner; well-typed code
sent from external processes can be merged into local code without
recompilation.  Dynamic class~\cite{johnsen2009dynamic} is a mechanism
to modify classes and a class hierarchy in a type-safe way, where
objects are distributed and long-lived, and so a number of upgrade
operations are performed; a series of upgrade operations is used to
type-check the next upgrade operation.  Unlike these two approaches,
COP realizes its dynamism with its method dispatching mechanism.  It
is quite interesting to consider how COP mechanism works safely in the
above distributed settings.

Although not a type system, Burton and
Sekerinski~\cite{burton2015safety} studies interference problem of
dynamic mixin composition, in which some order of mixin composition
breaks required specification of class methods.  They develop a
refinement calculus in order to formalize dynamic mixin composition.

\section{Concluding Remarks}
\label{sec:discussion}

We have developed a formal type system for a small COP language with
layer inheritance, layer subtyping, layer swapping, and first-class
layers, and shown that the type system is sound with respect to
the operational semantics.  As in previous work, \ensuremath{\itbox{requires}} declarations
are important to guarantee safety in the presence of baseless partial methods.
Subtyping for first-class layers is subtle because there are two kinds
of substitutability.  We have introduced weak subtyping for checking
whether a \ensuremath{\itbox{requires}} clause is satisfied and normal subtyping for
usual substitutability.

In JCop, a layer definition can contain field and (ordinary) method
declarations so that a layer instance can act just like an ordinary
object.  Typechecking accesses to these members of layer instances
is the same as ordinary objects.  If we model fields of layer
instances, we will have to modify the reduction relation so that the
sequence of activated layers consists of layer instances (with their field values) rather than
layer names.

JCop also provides special variable \ensuremath{\itbox{thislayer}}, which can be used in
partial methods and is similar to \ensuremath{\itbox{this}} of classes.  It represents the
layer instance in which the invoked partial method is found at
run time and can be used to access fields and methods of that layer
instance.  In operational semantics, the layer instance would be
substituted for \ensuremath{\itbox{thislayer}}, similarly to \ensuremath{\itbox{this}}.  Typing \ensuremath{\itbox{thislayer}}
is also similar to \ensuremath{\itbox{this}} in the sense that it is given the name of
the layer in which it appears but \ensuremath{\itbox{thislayer}} cannot be used for layer
activation because, at run time, it may be bound to an instance of a
\emph{weak} subtype.

We have not fully investigated the interaction between our type system
with other features in Java, such as concurrency, generics, and
lambda, although we expect most of them are orthogonal.


\paragraph*{Acknowledgments.}
We thank Tomoyuki Aotani, Malte Appeltauer, Robert Hirsch\-feld, and
Tetsuo Kamina for valuable discussions on the subject.
This work was supported in part by Kyoto University
Design School (Inoue) and MEXT KAKENHI Grant Number 23220001 (Igarashi).

\bibliographystyle{abbrv}
\bibliography{local}

\begin{thebibliography}{10}

\bibitem{aotani2011featherweight}
T.~Aotani, T.~Kamina, and H.~Masuhara.
\newblock Featherweight {EventCJ}: a core calculus for a context-oriented
  language with event-based per-instance layer transition.
\newblock In {\em Proceedings of the 3rd International Workshop on
  Context-Oriented Programming}. ACM, 2011.

\bibitem{appeltauer2010layered}
M.~Appeltauer, M.~Haupt, and R.~Hirschfeld.
\newblock Layered method dispatch with invokedynamic: an implementation study.
\newblock In {\em Proceedings of the 2nd International Workshop on
  Context-Oriented Programming}, page~4. ACM, 2010.

\bibitem{appeltauer2012jcop}
M.~Appeltauer and R.~Hirschfeld.
\newblock {\em The JCop language specification: Version 1.0, April 2012}.
\newblock Number~59. Universit{\"a}tsverlag Potsdam, 2012.

\bibitem{appeltauer2013declarative}
M.~Appeltauer, R.~Hirschfeld, and J.~Lincke.
\newblock Declarative layer composition with the {JCop} programming language.
\newblock {\em Journal of Object Technology}, 12, 2013.

\bibitem{bainomugisha2012interruptible}
E.~Bainomugisha, J.~Vallejos, C.~De~Roover, A.~Lombide~Carreton, and
  W.~De~Meuter.
\newblock Interruptible context-dependent executions: a fresh look at
  programming context-aware applications.
\newblock In {\em Proceedings of the ACM international symposium on New ideas,
  new paradigms, and reflections on programming and software}, Onward! 2012,
  pages 67--84. ACM, 2012.

\bibitem{batory2004scaling}
D.~Batory, J.~N. Sarvela, and A.~Rauschmayer.
\newblock Scaling step-wise refinement.
\newblock {\em Software Engineering, IEEE Transactions on}, 30(6):355--371,
  2004.

\bibitem{bettini2005momi}
L.~Bettini, V.~Bono, and B.~Venneri.
\newblock {MoMi: A Calculus for Mobile Mixins}.
\newblock {\em Acta Informatica}, 42(2-3):143 -- 190, 2005.

\bibitem{bettini2013flexible}
L.~Bettini, S.~Capecchi, and F.~Damiani.
\newblock On flexible dynamic trait replacement for {Java}-like languages.
\newblock {\em Science of Computer Programming}, 78(7):907--932, 2013.

\bibitem{bono1999core}
V.~Bono, A.~Patel, and V.~Shmatikov.
\newblock A core calculus of classes and mixins.
\newblock In {\em ECOOP'99--Object-Oriented Programming}, pages 43--66.
  Springer, 1999.

\bibitem{burton2015safety}
E.~Burton and E.~Sekerinski.
\newblock The safety of dynamic mixin composition.
\newblock In {\em Proceedings of the 30th Annual ACM Symposium on Applied
  Computing}, pages 1992--1999. ACM, 2015.

\bibitem{cardozo2013modeling}
N.~Cardozo, S.~Gonz{\'a}lez, K.~Mens, R.~Van Der~Straeten, and T.~D'Hondt.
\newblock Modeling and analyzing self-adaptive systems with context {Petri}
  nets.
\newblock In {\em Proc.\ of TASE}, pages 191--198. IEEE, 2013.

\bibitem{cardozo2015semantics}
N.~Cardozo, S.~Gonz{\'a}lez, K.~Mens, R.~Van Der~Straeten, J.~Vallejos, and
  T.~D'Hondt.
\newblock Semantics for consistent activation in context-oriented systems.
\newblock {\em Information and Software Technology}, 58:71--94, 2015.

\bibitem{chiba2010mostly}
S.~Chiba, A.~Igarashi, and S.~Zakirov.
\newblock Mostly modular compilation of crosscutting concerns by contextual
  predicate dispatch.
\newblock In {\em Proc.\ of the ACM OOPSLA}, pages 539--554, 2010.

\bibitem{clarke2009semantics}
D.~Clarke and I.~Sergey.
\newblock A semantics for context-oriented programming with layers.
\newblock In {\em International Workshop on Context-Oriented Programming},
  page~10. ACM, 2009.

\bibitem{clifton2006multijava}
C.~Clifton, T.~Millstein, G.~T. Leavens, and C.~Chambers.
\newblock {MultiJava}: Design rationale, compiler implementation, and
  applications.
\newblock {\em ACM Trans. Prog. Lang. Syst.}, 28(3):517--575, 2006.

\bibitem{costanza2008feature}
P.~Costanza and T.~D'Hondt.
\newblock Feature descriptions for context-oriented programming.
\newblock In {\em Proc.\ of SPLC (2)}, pages 9--14, 2008.
\newblock Workshop on Dynamic Software Product Lines (DSPL 2008).

\bibitem{damiani2017core}
F.~Damiani, L.~Padovani, I.~Schaefer, and C.~Seidl.
\newblock A core calculus for dynamic delta-oriented programming.
\newblock {\em Acta Informatica}, pages 1--39, 2017.

\bibitem{damiani2011dynamic}
F.~Damiani and I.~Schaefer.
\newblock Dynamic delta-oriented programming.
\newblock In {\em Proceedings of the 15th International Software Product Line
  Conference, Volume 2}, page~34. ACM, 2011.

\bibitem{degano2014two}
P.~Degano, G.-L. Ferrari, and L.~Galletta.
\newblock A two-phase static analysis for reliable adaptation.
\newblock In {\em International Conference on Software Engineering and Formal
  Methods}, pages 347--362. Springer, 2014.

\bibitem{degano2016two}
P.~Degano, G.-L. Ferrari, and L.~Galletta.
\newblock A two-component language for adaptation: Design, semantics and
  program analysis.
\newblock {\em IEEE Transactions on Software Engineering}, 42(6):505--529,
  2016.

\bibitem{Drossopoulou2002fickle2}
S.~Drossopoulou, F.~Damiani, M.~Dezani-Ciancaglini, and P.~Giannini.
\newblock More dynamic object reclassification: Fickle$_\textrm{I\!I}$.
\newblock {\em ACM Trans. Prog. Lang. Syst.}, 24(2):153--191, 2002.

\bibitem{flatt1998classes}
M.~Flatt, S.~Krishnamurthi, and M.~Felleisen.
\newblock Classes and mixins.
\newblock In {\em Proc.\ of the ACM POPL}, pages 171--183. ACM, 1998.

\bibitem{gonzalez2010subjective}
S.~Gonz{\'a}lez, N.~Cardozo, K.~Mens, A.~C{\'a}diz, J.-C. Libbrecht, and
  J.~Goffaux.
\newblock Subjective-{C}.
\newblock In {\em International Conference on Software Language Engineering},
  pages 246--265. Springer, 2010.

\bibitem{hallsteinsen2008dynamic}
S.~Hallsteinsen, M.~Hinchey, S.~Park, and K.~Schmid.
\newblock Dynamic software product lines.
\newblock {\em Computer}, 41(4), 2008.

\bibitem{hirschfeld2008cop}
R.~Hirschfeld, P.~Costanza, and O.~Nierstrasz.
\newblock Context-oriented programming.
\newblock {\em Journal of Object Technology}, 7(3):125--151, 2008.

\bibitem{contextfj2011}
R.~Hirschfeld, A.~Igarashi, and H.~Masuhara.
\newblock {ContextFJ}: A minimal core calculus for context-oriented
  programming.
\newblock In {\em Proc.\ of Foundations of Aspect-Oriented Languages (FOAL)},
  Mar. 2011.

\bibitem{DynamicLayer2012contextfj}
A.~Igarashi, R.~Hirschfeld, and H.~Masuhara.
\newblock A type system for dynamic layer composition.
\newblock In {\em Proc.\ of FOOL}, Oct. 2012.

\bibitem{IgarashiPierceWadler01TOPLAS_FJ}
A.~Igarashi, B.~C. Pierce, and P.~Wadler.
\newblock {Featherweight Java}: A minimal core calculus for {Java} and {GJ}.
\newblock {\em ACM TOPLAS}, 23(3):396--450, 2001.

\bibitem{inoue2015sound}
H.~Inoue and A.~Igarashi.
\newblock A sound type system for layer subtyping and dynamically activated
  first-class layers.
\newblock In {\em Asian Symposium on Programming Languages and Systems}, pages
  445--462. Springer, 2015.

\bibitem{inoue2014towards}
H.~Inoue, A.~Igarashi, M.~Appeltauer, and R.~Hirschfeld.
\newblock Towards type-safe {JCop}: A type system for layer inheritance and
  first-class layers.
\newblock In {\em Proc.\ of the Workshop on Context-Oriented Programming},
  pages 7:1--7:6. ACM, 2014.

\bibitem{johnsen2009dynamic}
E.~B. Johnsen, M.~Kyas, and I.~C. Yu.
\newblock Dynamic classes: Modular asynchronous evolution of distributed
  concurrent objects.
\newblock In {\em {FM} 2009: Formal Methods, Second World Congress, Eindhoven,
  The Netherlands, November 2-6, 2009. Proceedings}, volume 5850 of {\em
  Lecture Notes in Computer Science}, pages 596--611. Springer, 2009.

\bibitem{KaminaAotaniIgarashi14COP}
T.~Kamina, T.~Aotani, and A.~Igarashi.
\newblock On-demand layer activation for type-safe deactivation.
\newblock In {\em Proc.\ of COP'14}, Uppsala, Sweden, July 2014.

\bibitem{kamina2011eventcj}
T.~Kamina, T.~Aotani, and H.~Masuhara.
\newblock Event{CJ}: a context-oriented programming language with declarative
  event-based context transition.
\newblock In {\em Proc. of ACM AOSD}, pages 253--264. ACM, 2011.

\bibitem{kamina2015generalized}
T.~Kamina, T.~Aotani, and H.~Masuhara.
\newblock Generalized layer activation mechanism through contexts and
  subscribers.
\newblock In {\em Proceedings of the 14th International Conference on
  Modularity}, pages 14--28. ACM, 2015.

\bibitem{kamina2016generalized}
T.~Kamina, T.~Aotani, and H.~Masuhara.
\newblock {\em Generalized Layer Activation Mechanism for Context-Oriented
  Programming}, pages 123--166.
\newblock Springer International Publishing, Cham, 2016.

\bibitem{KaminaAotaniMasuharaIgarashi17}
T.~Kamina, T.~Aotani, H.~Masuhara, and A.~Igarashi.
\newblock Method safety mechanism for asynchronous layer deactivation.
\newblock {\em Sci. Comput. Programming}, 156:104--120, Mar. 2018.
\newblock A preliminary version was presented at COP'15.

\bibitem{kamina2004mcjava}
T.~Kamina and T.~Tamai.
\newblock {McJava}--a design and implementation of {Java} with mixin-types.
\newblock In {\em Proc.\ of APLAS}, pages 398--414, 2004.

\bibitem{liquori2008feathertrait}
L.~Liquori and A.~Spiwack.
\newblock Feather{Trait}: A modest extension of {Featherweight Java}.
\newblock {\em ACM Trans. Prog. Lang. Syst.}, 30(2):11, 2008.

\bibitem{prehofer1997feature}
C.~Prehofer.
\newblock Feature-oriented programming: A fresh look at objects.
\newblock In {\em ECOOP'97--Object-Oriented Programming}, pages 419--443.
  Springer, 1997.

\bibitem{rosenmuller2011flexible}
M.~Rosenm{\"u}ller, N.~Siegmund, S.~Apel, and G.~Saake.
\newblock Flexible feature binding in software product lines.
\newblock {\em Automated Software Engineering}, 18(2):163--197, 2011.

\bibitem{schaefer2010delta}
I.~Schaefer, L.~Bettini, V.~Bono, F.~Damiani, and N.~Tanzarella.
\newblock Delta-oriented programming of software product lines.
\newblock {\em Software Product Lines: Going Beyond}, pages 77--91, 2010.

\bibitem{smith2005chai}
C.~Smith and S.~Drossopoulou.
\newblock \emph{Chai}: Traits for java-like languages.
\newblock In {\em {ECOOP} 2005 - Object-Oriented Programming, 19th European
  Conference, Glasgow, UK, July 25-29, 2005, Proceedings}, volume 3586 of {\em
  Lecture Notes in Computer Science}, pages 453--478. Springer, 2005.

\bibitem{wright1994syntactic}
A.~K. Wright and M.~Felleisen.
\newblock A syntactic approach to type soundness.
\newblock {\em Information and computation}, 115(1):38--94, 1994.

\end{thebibliography}
\iffull
\clearpage


\appendix
\newtheorem{lemmaapp}{Lemma}[section]
\newtheorem{theoremapp}{Theorem}[section]

\clearpage
\section{Proofs}
\label{sec:proofs}

We fix \(\CT\) and \(\LT\) and assume \((\CT, \LT)\ensuremath{\itbox{ ok}}\)
throughout this section.

\begin{lemmaapp}\label{lem:sublayer-pmtype}
  If $\pmtype(\ensuremath{\itbox{m}}, \ensuremath{\itbox{C}}, \ensuremath{\itbox{L}_{2}\itbox{}}) = \ensuremath{\itbox{}\overline{\itbox{T}}\itbox{\(\rightarrow\)T}_{0}\itbox{}}$ and $\ensuremath{\itbox{L}_{1}\itbox{}} \LEQ_w \ensuremath{\itbox{L}_{2}\itbox{}}$, 
  then $\pmtype(\ensuremath{\itbox{m}}, \ensuremath{\itbox{C}}, \ensuremath{\itbox{L}_{1}\itbox{}}) = \ensuremath{\itbox{}\overline{\itbox{T}}\itbox{\(\rightarrow\)T}_{0}\itbox{}}$.
\end{lemmaapp}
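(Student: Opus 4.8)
The plan is to prove this by induction on the derivation of $\ensuremath{\itbox{L}_{1}\itbox{}} \LEQ_w \ensuremath{\itbox{L}_{2}\itbox{}}$, which by definition is the reflexive transitive closure of the $\ensuremath{\itbox{\(\triangleleft\)}}$ relation between layers. The case \rn{LSw-Refl} ($\ensuremath{\itbox{L}_{1}\itbox{}} = \ensuremath{\itbox{L}_{2}\itbox{}}$) is immediate. For the case \rn{LSw-Trans}, where $\ensuremath{\itbox{L}_{1}\itbox{}} \LEQ_w \ensuremath{\itbox{L}_{3}\itbox{}}$ and $\ensuremath{\itbox{L}_{3}\itbox{}} \LEQ_w \ensuremath{\itbox{L}_{2}\itbox{}}$ for some intermediate $\ensuremath{\itbox{L}_{3}\itbox{}}$, I would apply the induction hypothesis twice: first to $\pmtype(\ensuremath{\itbox{m}},\ensuremath{\itbox{C}},\ensuremath{\itbox{L}_{2}\itbox{}}) = \ensuremath{\itbox{}\overline{\itbox{T}}\itbox{\(\rightarrow\)T}_{0}\itbox{}}$ with $\ensuremath{\itbox{L}_{3}\itbox{}} \LEQ_w \ensuremath{\itbox{L}_{2}\itbox{}}$ to get $\pmtype(\ensuremath{\itbox{m}},\ensuremath{\itbox{C}},\ensuremath{\itbox{L}_{3}\itbox{}}) = \ensuremath{\itbox{}\overline{\itbox{T}}\itbox{\(\rightarrow\)T}_{0}\itbox{}}$, then to this with $\ensuremath{\itbox{L}_{1}\itbox{}} \LEQ_w \ensuremath{\itbox{L}_{3}\itbox{}}$. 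So the only real work is the one-step case \rn{LSw-Extends}, where $\ensuremath{\itbox{L}_{1}\itbox{ \(\triangleleft\) L}_{2}\itbox{}}$.

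For that case I would do a case analysis on whether $\LT(\ensuremath{\itbox{L}_{1}\itbox{}})(\ensuremath{\itbox{C.m}})$ is defined. If it is undefined, then, since $\ensuremath{\itbox{L}_{2}\itbox{}}$ is the superlayer of $\ensuremath{\itbox{L}_{1}\itbox{}}$, rule \rn{PMT-Super} applies and gives $\pmtype(\ensuremath{\itbox{m}},\ensuremath{\itbox{C}},\ensuremath{\itbox{L}_{1}\itbox{}}) = \pmtype(\ensuremath{\itbox{m}},\ensuremath{\itbox{C}},\ensuremath{\itbox{L}_{2}\itbox{}}) = \ensuremath{\itbox{}\overline{\itbox{T}}\itbox{\(\rightarrow\)T}_{0}\itbox{}}$ at once. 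If it is defined, with signature $\ensuremath{\itbox{}\overline{\itbox{S}}\itbox{\(\rightarrow\)S}_{0}\itbox{}}$, then \rn{PMT-Layer} gives $\pmtype(\ensuremath{\itbox{m}},\ensuremath{\itbox{C}},\ensuremath{\itbox{L}_{1}\itbox{}}) = \ensuremath{\itbox{}\overline{\itbox{S}}\itbox{\(\rightarrow\)S}_{0}\itbox{}}$, and it remains to show $\ensuremath{\itbox{}\overline{\itbox{S}}\itbox{}},\ensuremath{\itbox{S}_{0}\itbox{}} = \ensuremath{\itbox{}\overline{\itbox{T}}\itbox{}},\ensuremath{\itbox{T}_{0}\itbox{}}$.

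For this sub-case I would first record a small auxiliary observation, provable by a straightforward induction on the derivation of $\pmtype$: whenever $\pmtype(\ensuremath{\itbox{m}},\ensuremath{\itbox{C}},\ensuremath{\itbox{L}})$ is defined, there is a layer $\ensuremath{\itbox{L{\ensuremath{'}}}}$ with $\ensuremath{\itbox{L}} \LEQ_w \ensuremath{\itbox{L{\ensuremath{'}}}}$ such that $\LT(\ensuremath{\itbox{L{\ensuremath{'}}}})(\ensuremath{\itbox{C.m}})$ is defined --- namely the first layer on the superlayer chain starting at $\ensuremath{\itbox{L}}$ whose $\LT$-entry for $\ensuremath{\itbox{C.m}}$ is defined --- and by \rn{PMT-Layer} the signature declared there is exactly $\pmtype(\ensuremath{\itbox{m}},\ensuremath{\itbox{C}},\ensuremath{\itbox{L}})$. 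Applying this to $\ensuremath{\itbox{L}_{2}\itbox{}}$ yields an $\ensuremath{\itbox{L{\ensuremath{'}}}} \in \dom(\LT)$ (with $\ensuremath{\itbox{L{\ensuremath{'}}}} \neq \ensuremath{\itbox{Base}}$, since $\ensuremath{\itbox{Base}}$ has no partial methods) declaring $\ensuremath{\itbox{C.m}}$ with signature $\ensuremath{\itbox{}\overline{\itbox{T}}\itbox{\(\rightarrow\)T}_{0}\itbox{}}$. Since $\ensuremath{\itbox{L}_{1}\itbox{}}$ also lies in $\dom(\LT)$ and declares $\ensuremath{\itbox{C.m}}$, with signature $\ensuremath{\itbox{}\overline{\itbox{S}}\itbox{\(\rightarrow\)S}_{0}\itbox{}}$, the invariant $\noconflict(\ensuremath{\itbox{L}_{1}\itbox{}},\ensuremath{\itbox{L{\ensuremath{'}}}})$ --- one of the premises of $\p (\CT,\LT)\ensuremath{\itbox{ ok}}$ --- forces $\ensuremath{\itbox{}\overline{\itbox{S}}\itbox{}},\ensuremath{\itbox{S}_{0}\itbox{}} = \ensuremath{\itbox{}\overline{\itbox{T}}\itbox{}},\ensuremath{\itbox{T}_{0}\itbox{}}$, finishing the case and the induction. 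The one mildly subtle point is that reconciling the signature declared directly in $\ensuremath{\itbox{L}_{1}\itbox{}}$ with the one reached by climbing from $\ensuremath{\itbox{L}_{2}\itbox{}}$ relies on the \emph{global} $\noconflict$ check over all pairs of layers rather than on any local overriding condition; everything else is routine bookkeeping over the definitions of $\pmtype$ and $\LEQ_w$.
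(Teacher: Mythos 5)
Your proof is correct and follows essentially the same route as the paper's, which is a one-line argument: induction on the derivation of $\ensuremath{\itbox{L}_{1}\itbox{}} \LEQ_w \ensuremath{\itbox{L}_{2}\itbox{}}$, invoking $\noconflict$ in the \rn{LSw-Extends} case. Your elaboration is in fact slightly more careful than the paper's terse ``using $\noconflict(\ensuremath{\itbox{L{\ensuremath{'}}}},\ensuremath{\itbox{L}})$'': you correctly observe that the signature obtained from $\ensuremath{\itbox{L}_{2}\itbox{}}$ may originate in a layer higher up its superlayer chain, so the instance of $\noconflict$ actually needed is between $\ensuremath{\itbox{L}_{1}\itbox{}}$ and that declaring layer, which is available because \rn{T-Table} quantifies over all pairs in $\dom(\LT)$.
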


\begin{proof}
  By induction on $\ensuremath{\itbox{L{\ensuremath{'}}}} \LEQ_w \ensuremath{\itbox{L}}$, using \(\noconflict(\ensuremath{\itbox{L{\ensuremath{'}}}}, \ensuremath{\itbox{L}})\)
  in the case where $\ensuremath{\itbox{L{\ensuremath{'}} \(\triangleleft\) L}}$. \qed
\end{proof}

\begin{lemmaapp}\label{lem:sublayer-mtype}
  If \(\mtype(\ensuremath{\itbox{m}}, \ensuremath{\itbox{C}}, \LSet_1, \LSet_2) = \ensuremath{\itbox{}\overline{\itbox{T}}\itbox{\(\rightarrow\)T}_{0}\itbox{}}\) and
  \(\LSet_3 \LEQ_{sw} \LSet_1\) and \(\LSet_4 \LEQ_{sw} \LSet_2\) and
  \(\LSet_1 \subseteq \LSet_2\) and \(\LSet_3 \subseteq \LSet_4\), then
  \(\mtype(\ensuremath{\itbox{m}}, \ensuremath{\itbox{C}}, \LSet_3, \LSet_4) = \ensuremath{\itbox{}\overline{\itbox{T}}\itbox{\(\rightarrow\)T}_{0}\itbox{}}\).
\end{lemmaapp}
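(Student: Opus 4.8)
The plan is to prove the lemma by induction on the derivation of $\mtype(\ensuremath{\itbox{m}}, \ensuremath{\itbox{C}}, \LSet_1, \LSet_2) = \ensuremath{\itbox{}\overline{\itbox{T}}\itbox{\(\rightarrow\)T}_{0}\itbox{}}$, with a case split on the last rule applied (\rn{MT-Class}, \rn{MT-PMethod}, or \rn{MT-Super}). Throughout I would use that $\LEQ_w$ and $\LEQ_{sw}$ are reflexive, that $\dom(\LT) \LEQ_{sw} \LSet$ holds whenever $\LSet \subseteq \dom(\LT)$ (each name witnesses itself), and an auxiliary fact — a consequence of \rn{T-LayerSW} and $\noconflict$ — that all sublayers of a swappable layer $\ensuremath{\itbox{L}_{{sw}}\itbox{}}$ share the same partial-method signatures: if $\ensuremath{\itbox{L}} \LEQ_w \ensuremath{\itbox{L}_{{sw}}\itbox{}}$ and $\pmtype(\ensuremath{\itbox{m}},\ensuremath{\itbox{C}},\ensuremath{\itbox{L}})$ is defined, then $\pmtype(\ensuremath{\itbox{m}},\ensuremath{\itbox{C}},\ensuremath{\itbox{L}}) = \pmtype(\ensuremath{\itbox{m}},\ensuremath{\itbox{C}},\ensuremath{\itbox{L}_{{sw}}\itbox{}})$. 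The \rn{MT-Class} case is immediate, since its conclusion ignores the layer-set arguments, so \rn{MT-Class} re-derives $\mtype(\ensuremath{\itbox{m}}, \ensuremath{\itbox{C}}, \LSet_3, \LSet_4) = \ensuremath{\itbox{}\overline{\itbox{T}}\itbox{\(\rightarrow\)T}_{0}\itbox{}}$. For \rn{MT-PMethod}, some $\ensuremath{\itbox{L}} \in \LSet_1$ has $\pmtype(\ensuremath{\itbox{m}},\ensuremath{\itbox{C}},\ensuremath{\itbox{L}}) = \ensuremath{\itbox{}\overline{\itbox{T}}\itbox{\(\rightarrow\)T}_{0}\itbox{}}$; from $\LSet_3 \LEQ_{sw} \LSet_1$ I would pick a witness $\ensuremath{\itbox{L}_{1}\itbox{}} \in \LSet_3$. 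In the first disjunct $\ensuremath{\itbox{L}_{1}\itbox{}} \LEQ_w \ensuremath{\itbox{L}}$, so Lemma~\ref{lem:sublayer-pmtype} gives $\pmtype(\ensuremath{\itbox{m}},\ensuremath{\itbox{C}},\ensuremath{\itbox{L}_{1}\itbox{}}) = \ensuremath{\itbox{}\overline{\itbox{T}}\itbox{\(\rightarrow\)T}_{0}\itbox{}}$; in the second, $\ensuremath{\itbox{L}}$ and $\ensuremath{\itbox{L}_{1}\itbox{}}$ are both below a swappable $\ensuremath{\itbox{L}_{2}\itbox{}}$, so the auxiliary fact applied to each yields the same signature. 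Either way \rn{MT-PMethod} with $\ensuremath{\itbox{L}_{1}\itbox{}} \in \LSet_3$ concludes.

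The substantive case is \rn{MT-Super}, where $\ensuremath{\itbox{C \(\triangleleft\) D}}$, $\ensuremath{\itbox{m}}$ is not a base method of $\ensuremath{\itbox{C}}$, no layer in $\LSet_1$ carries a partial method for $\ensuremath{\itbox{C.m}}$, and $\mtype(\ensuremath{\itbox{m}},\ensuremath{\itbox{D}},\LSet_2,\LSet_2) = \ensuremath{\itbox{}\overline{\itbox{T}}\itbox{\(\rightarrow\)T}_{0}\itbox{}}$. First I would apply the induction hypothesis to that subderivation, with both new layer sets taken to be $\LSet_4$ (legitimate since $\LSet_4 \LEQ_{sw} \LSet_2$), to obtain $\mtype(\ensuremath{\itbox{m}},\ensuremath{\itbox{D}},\LSet_4,\LSet_4) = \ensuremath{\itbox{}\overline{\itbox{T}}\itbox{\(\rightarrow\)T}_{0}\itbox{}}$. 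Then I would split on whether any layer of $\LSet_3$ carries a partial method for $\ensuremath{\itbox{C.m}}$. If none does, \rn{MT-Super} applies directly to $\LSet_3,\LSet_4$ (the condition $\ensuremath{\itbox{m}}\notin\ensuremath{\itbox{}\overline{\itbox{M}}\itbox{}}$ is about the same class $\ensuremath{\itbox{C}}$). Otherwise, fixing $\ensuremath{\itbox{L}_{1}\itbox{}} \in \LSet_3$ with $\pmtype(\ensuremath{\itbox{m}},\ensuremath{\itbox{C}},\ensuremath{\itbox{L}_{1}\itbox{}}) = \ensuremath{\itbox{}\overline{\itbox{S}}\itbox{\(\rightarrow\)S}_{0}\itbox{}}$, I must show $\ensuremath{\itbox{}\overline{\itbox{S}}\itbox{\(\rightarrow\)S}_{0}\itbox{}} = \ensuremath{\itbox{}\overline{\itbox{T}}\itbox{\(\rightarrow\)T}_{0}\itbox{}}$ and conclude via \rn{MT-PMethod}. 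For this, I trace $\pmtype$ to a layer $\ensuremath{\itbox{L}_{0}\itbox{}}$ that declares $\ensuremath{\itbox{C.m}}$ directly (with signature $\ensuremath{\itbox{}\overline{\itbox{S}}\itbox{\(\rightarrow\)S}_{0}\itbox{}}$, using $\noconflict$) and invoke $\override^h(\ensuremath{\itbox{L}_{0}\itbox{}},\ensuremath{\itbox{C}})$, available from $\p (\CT, \LT)\ensuremath{\itbox{ ok}}$: it forces $\ensuremath{\itbox{}\overline{\itbox{S}}\itbox{\(\rightarrow\)S}_{0}\itbox{}}$ to agree with $\mtype(\ensuremath{\itbox{m}},\ensuremath{\itbox{C}},\emptyset,\dom(\LT))$. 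Since $\ensuremath{\itbox{m}}$ is not a base method of $\ensuremath{\itbox{C}}$ and the layer set is empty, \rn{MT-Super} rewrites this to $\mtype(\ensuremath{\itbox{m}},\ensuremath{\itbox{D}},\dom(\LT),\dom(\LT))$, and a further use of the induction hypothesis on $\mtype(\ensuremath{\itbox{m}},\ensuremath{\itbox{D}},\LSet_2,\LSet_2) = \ensuremath{\itbox{}\overline{\itbox{T}}\itbox{\(\rightarrow\)T}_{0}\itbox{}}$ (using $\dom(\LT) \LEQ_{sw} \LSet_2$) pins it to $\ensuremath{\itbox{}\overline{\itbox{T}}\itbox{\(\rightarrow\)T}_{0}\itbox{}}$; hence $\ensuremath{\itbox{}\overline{\itbox{S}}\itbox{\(\rightarrow\)S}_{0}\itbox{}} = \ensuremath{\itbox{}\overline{\itbox{T}}\itbox{\(\rightarrow\)T}_{0}\itbox{}}$.

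The hard part is precisely this last subcase of \rn{MT-Super}: passing from $\LSet_1$ to a $\LEQ_{sw}$-smaller $\LSet_3$ can \emph{expose} a partial method for $\ensuremath{\itbox{C.m}}$ that was invisible at the coarser set (a sublayer may add partial methods), so a priori $\mtype$ could return something different and the monotonicity could fail. Ruling out such a clash of signatures is what forces the argument out of the purely structural induction and into the whole-program overriding discipline $\override^h$, together with the slightly circular-looking but sound reuse of the induction hypothesis at the full layer table $\dom(\LT)$ to determine the value of $\mtype(\ensuremath{\itbox{m}},\ensuremath{\itbox{C}},\emptyset,\dom(\LT))$. Everything else — including the swappable-layer disjunct of $\LEQ_{sw}$ — is routine once the uniform-interface fact for swappable layers is in hand.
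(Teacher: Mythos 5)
Your proposal is correct and follows essentially the same route as the paper's proof: induction on the derivation with the same three-way case split, the same use of Lemma~\ref{lem:sublayer-pmtype} and the swappable-layer uniformity (from \rn{T-LayerSW} and $\noconflict$) in the \rn{MT-PMethod} case, and in the \rn{MT-Super} case the same resolution of a newly exposed partial method via $\override^h$ together with a reuse of the induction hypothesis at $\dom(\LT)$ to pin $\mtype(\ensuremath{\itbox{m}},\ensuremath{\itbox{C}},\emptyset,\dom(\LT))$ to $\ensuremath{\itbox{}\overline{\itbox{T}}\itbox{\(\rightarrow\)T}_{0}\itbox{}}$. No gaps.
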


\begin{proof}
  By induction on the derivation of \(\mtype(\ensuremath{\itbox{m}}, \ensuremath{\itbox{C}}, \LSet_1, \LSet_2) = \ensuremath{\itbox{}\overline{\itbox{T}}\itbox{\(\rightarrow\)T}_{0}\itbox{}}\)
  with case analysis on the last rule used.
  \begin{rneqncase}{MT-Class}{
      \ensuremath{\itbox{class C \(\triangleleft\) D {\char'173}... T}_{0}\itbox{ m(}\overline{\itbox{T}}\itbox{ }\overline{\itbox{x}}\itbox{){\char'173} return e; {\char'175} ...{\char'175}}}
    }
    \rn{MT-Class} finishes the case.
  \end{rneqncase}

  \begin{rneqncase}{MT-PMethod}{
      \exists \ensuremath{\itbox{L}_{1}\itbox{}} \in \LSet_1. \pmtype(\ensuremath{\itbox{m}}, \ensuremath{\itbox{C}}, \ensuremath{\itbox{L}_{1}\itbox{}}) = \ensuremath{\itbox{}\overline{\itbox{T}}\itbox{\(\rightarrow\)T}_{0}\itbox{}}
    }
  By $\LSet_3 \LEQ_{sw} \LSet_1$, there exists \(\ensuremath{\itbox{L}_{3}\itbox{}} \in \LSet_3\) such that
  either (1) \(\ensuremath{\itbox{L}_{3}\itbox{}} \LEQ_w \ensuremath{\itbox{L}_{1}\itbox{}}\) or (2) there exists \ensuremath{\itbox{L}} such that
  \ensuremath{\itbox{L swappable}} and \(\ensuremath{\itbox{L}_{1}\itbox{}}, \ensuremath{\itbox{L}_{3}\itbox{}} \LEQ_w \ensuremath{\itbox{L}}\).  In the case (1), 
  \lemref{sublayer-pmtype} and \rn{MT-PMethod} finish the case.
  In the case (2), by \rn{T-LayerSW} and \(\noconflict(\ensuremath{\itbox{L}_{1}\itbox{}},\ensuremath{\itbox{L}_{3}\itbox{}})\),
  it is easy to show \(\pmtype(\ensuremath{\itbox{m}}, \ensuremath{\itbox{C}}, \ensuremath{\itbox{L}_{3}\itbox{}}) = \ensuremath{\itbox{}\overline{\itbox{T}}\itbox{\(\rightarrow\)T}_{0}\itbox{}}\).
  Then, \lemref{sublayer-pmtype} and \rn{MT-PMethod} finish the case.
  \end{rneqncase}
  
  \begin{rneqncase}{MT-Super}{
      \ensuremath{\itbox{class C \(\triangleleft\) D {\char'173}... }\overline{\itbox{M}}\itbox{{\char'175}}} \andalso \ensuremath{\itbox{m}} \not \in \ensuremath{\itbox{}\overline{\itbox{M}}\itbox{}} \\
      \forall \ensuremath{\itbox{L}} \in \LSet_1. \pmtype(\ensuremath{\itbox{m}},\ensuremath{\itbox{C}},\ensuremath{\itbox{L}}) \mbox{ undefined} \andalso
      \mtype(\ensuremath{\itbox{m}}, \ensuremath{\itbox{D}}, \LSet_2, \LSet_2) = \ensuremath{\itbox{}\overline{\itbox{T}}\itbox{\(\rightarrow\)T}_{0}\itbox{}} \\
    }
    If \(\pmtype(\ensuremath{\itbox{m}},\ensuremath{\itbox{C}},\ensuremath{\itbox{L}})\) is undefined for all \(\ensuremath{\itbox{L}} \in \LSet_3\), 
    then the induction hypothesis and \rn{MT-Super} finish the case.
    Otherwise, we have \(\pmtype(\ensuremath{\itbox{m}},\ensuremath{\itbox{C}},\ensuremath{\itbox{L}}) = \ensuremath{\itbox{S}_{0}\itbox{ m(}\overline{\itbox{S}}\itbox{ }\overline{\itbox{x}}\itbox{){\char'173}...{\char'175}}}\) for some 
    \(\ensuremath{\itbox{L}}  \in \LSet_3\).  Then, \(\mtype(\ensuremath{\itbox{m}},\ensuremath{\itbox{C}},\LSet_3, \LSet_4) = \ensuremath{\itbox{}\overline{\itbox{S}}\itbox{\(\rightarrow\)S}_{0}\itbox{}}\) holds
    by \rn{MT-PMethod} and also there exists \ensuremath{\itbox{L{\ensuremath{'}}}} such that $\LT(\ensuremath{\itbox{L{\ensuremath{'}}}})(\ensuremath{\itbox{C.m}}) = \ensuremath{\itbox{S}_{0}\itbox{ m(}\overline{\itbox{S}}\itbox{ }\overline{\itbox{x}}\itbox{){\char'173}...{\char'175}}}$.
    
    By the induction hypothesis, 
    \(\mtype(\ensuremath{\itbox{m}}, \ensuremath{\itbox{D}}, \dom(\LT), \dom(\LT)) = \ensuremath{\itbox{}\overline{\itbox{T}}\itbox{\(\rightarrow\)T}_{0}\itbox{}}\) (since  \(\dom(\LT) \LEQ_w \LSet_2\)).
    By \rn{MT-Super}, \(\mtype(\ensuremath{\itbox{m}}, \ensuremath{\itbox{C}}, \emptyset, \dom(\LT)) = \ensuremath{\itbox{}\overline{\itbox{T}}\itbox{\(\rightarrow\)T}_{0}\itbox{}}\).
    Finally, \ensuremath{\itbox{}\overline{\itbox{S}}\itbox{}}, \ensuremath{\itbox{}\overline{\itbox{S}}\itbox{}_{0}\itbox{}} = \ensuremath{\itbox{}\overline{\itbox{T}}\itbox{}}, \ensuremath{\itbox{T}_{0}\itbox{}} follows from \(\override^h(\ensuremath{\itbox{L{\ensuremath{'}}}}, \ensuremath{\itbox{C}})\), finishing the case. \qed
  \end{rneqncase}
\end{proof}

\begin{lemmaapp}\label{lem:mtype_layer}
  If $\mtype(\ensuremath{\itbox{m}}, \ensuremath{\itbox{C}}, \LSet_1, \LSet_2) = \ensuremath{\itbox{}\overline{\itbox{T}}\itbox{\(\rightarrow\)T}_{0}\itbox{}}$ and $\mtype(\ensuremath{\itbox{m}},
  \ensuremath{\itbox{C}}, \LSet_3, \LSet_4) = \ensuremath{\itbox{}\overline{\itbox{T}}\itbox{{\ensuremath{'}}\(\rightarrow\)T}_{0}\itbox{{\ensuremath{'}}}}$, then \ensuremath{\itbox{}\overline{\itbox{T}}\itbox{}}, \ensuremath{\itbox{T}_{0}\itbox{}} = \ensuremath{\itbox{}\overline{\itbox{T}}\itbox{{\ensuremath{'}}}},
  \ensuremath{\itbox{T}_{0}\itbox{{\ensuremath{'}}}}.
\end{lemmaapp}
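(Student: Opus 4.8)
The plan is to argue by induction on the derivation of $\mtype(\ensuremath{\itbox{m}}, \ensuremath{\itbox{C}}, \LSet_1, \LSet_2) = \overline{\ensuremath{\itbox{T}}}{\rightarrow}\ensuremath{\itbox{T}}_0$, performing a case analysis on its last rule together with a secondary case analysis on the last rule used to derive $\mtype(\ensuremath{\itbox{m}}, \ensuremath{\itbox{C}}, \LSet_3, \LSet_4) = \overline{\ensuremath{\itbox{T}}}'{\rightarrow}\ensuremath{\itbox{T}}_0'$. The whole point is that, although $\mtype$ is only a relation in general, the well-formedness conditions $\noconflict$ and $\override^h$ demanded by \rn{T-Table} make it single-valued in the signature component, regardless of the layer sets at which it is evaluated; no relation between $\LSet_1,\LSet_2$ and $\LSet_3,\LSet_4$ is needed. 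The case $\ensuremath{\itbox{C}} = \ensuremath{\itbox{Object}}$ is vacuous, since no $\mtype$ rule applies to $\ensuremath{\itbox{Object}}$ under the sanity conditions. Besides $\noconflict$ and $\override^h$, the only external ingredient I would use is \lemref{sublayer-mtype}, the monotonicity of $\mtype$ under $\LEQ_{sw}$.

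The diagonal subcases are routine. If both derivations end with \rn{MT-Class}, both signatures are read off the unique declaration of $\ensuremath{\itbox{m}}$ in the fixed class $\ensuremath{\itbox{C}}$, so they coincide. If both end with \rn{MT-PMethod}, inverting the two $\pmtype$ derivations yields layers $\ensuremath{\itbox{L}}_1', \ensuremath{\itbox{L}}_3' \in \dom(\LT)$ for which $\LT(\ensuremath{\itbox{L}}_1')(\ensuremath{\itbox{C.m}})$ and $\LT(\ensuremath{\itbox{L}}_3')(\ensuremath{\itbox{C.m}})$ are defined with the respective signatures, and $\noconflict(\ensuremath{\itbox{L}}_1', \ensuremath{\itbox{L}}_3')$ equates them. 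If both end with \rn{MT-Super}, they reduce to $\mtype(\ensuremath{\itbox{m}}, \ensuremath{\itbox{D}}, \LSet_2, \LSet_2)$ and $\mtype(\ensuremath{\itbox{m}}, \ensuremath{\itbox{D}}, \LSet_4, \LSet_4)$ for the direct superclass $\ensuremath{\itbox{D}}$ of $\ensuremath{\itbox{C}}$, and the induction hypothesis closes the subcase. The cross subcase \rn{MT-Class}/\rn{MT-Super} (in either order) cannot arise, because \rn{MT-Super} has the premise $\ensuremath{\itbox{m}} \notin \overline{\ensuremath{\itbox{M}}}$ for the methods $\overline{\ensuremath{\itbox{M}}}$ of $\ensuremath{\itbox{C}}$, which contradicts the base method $\ensuremath{\itbox{m}}$ in $\ensuremath{\itbox{C}}$ demanded by \rn{MT-Class}.

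The substance lies in the mixed cases where one side finds a partial method and the other a base method. For \rn{MT-Class} against \rn{MT-PMethod}: \rn{MT-Class} also yields $\mtype(\ensuremath{\itbox{m}}, \ensuremath{\itbox{C}}, \emptyset, \dom(\LT)) = \overline{\ensuremath{\itbox{T}}}{\rightarrow}\ensuremath{\itbox{T}}_0$ (the layer-set arguments being irrelevant to \rn{MT-Class}), while inversion on $\pmtype$ gives $\ensuremath{\itbox{L}}' \in \dom(\LT)$ with $\LT(\ensuremath{\itbox{L}}')(\ensuremath{\itbox{C.m}})$ of signature $\overline{\ensuremath{\itbox{T}}}'{\rightarrow}\ensuremath{\itbox{T}}_0'$; then $\override^h(\ensuremath{\itbox{L}}', \ensuremath{\itbox{C}})$ forces $\overline{\ensuremath{\itbox{T}}}, \ensuremath{\itbox{T}}_0 = \overline{\ensuremath{\itbox{T}}}', \ensuremath{\itbox{T}}_0'$. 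For \rn{MT-PMethod} against \rn{MT-Super}: inversion on $\pmtype$ again gives $\ensuremath{\itbox{L}}' \in \dom(\LT)$ with $\LT(\ensuremath{\itbox{L}}')(\ensuremath{\itbox{C.m}})$ of signature $\overline{\ensuremath{\itbox{T}}}{\rightarrow}\ensuremath{\itbox{T}}_0$, and \rn{MT-Super} tells us that $\ensuremath{\itbox{C}}$ has no base method $\ensuremath{\itbox{m}}$ and that $\mtype(\ensuremath{\itbox{m}}, \ensuremath{\itbox{D}}, \LSet_4, \LSet_4) = \overline{\ensuremath{\itbox{T}}}'{\rightarrow}\ensuremath{\itbox{T}}_0'$ for the direct superclass $\ensuremath{\itbox{D}}$. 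I would bridge the two by first applying \lemref{sublayer-mtype} with the always-valid $\dom(\LT) \LEQ_w \LSet_4$ (hence $\LEQ_{sw}$) to obtain $\mtype(\ensuremath{\itbox{m}}, \ensuremath{\itbox{D}}, \dom(\LT), \dom(\LT)) = \overline{\ensuremath{\itbox{T}}}'{\rightarrow}\ensuremath{\itbox{T}}_0'$, then reconstructing $\mtype(\ensuremath{\itbox{m}}, \ensuremath{\itbox{C}}, \emptyset, \dom(\LT)) = \overline{\ensuremath{\itbox{T}}}'{\rightarrow}\ensuremath{\itbox{T}}_0'$ by \rn{MT-Super} (its first layer-set argument being empty, and $\ensuremath{\itbox{C}}$ having no base $\ensuremath{\itbox{m}}$), and finally invoking $\override^h(\ensuremath{\itbox{L}}', \ensuremath{\itbox{C}})$. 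The two remaining cross cases are mirror images.

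The main obstacle, and essentially the only place where anything nontrivial happens, is this last cross case: $\override^h$ is phrased solely in terms of $\mtype(\ensuremath{\itbox{m}}, \ensuremath{\itbox{C}}, \emptyset, \dom(\LT))$, so one must first promote the superclass lookup $\mtype(\ensuremath{\itbox{m}}, \ensuremath{\itbox{D}}, \LSet_4, \LSet_4)$ to a lookup over the whole layer table and only then re-wrap it into a lookup about $\ensuremath{\itbox{C}}$ before $\override^h$ is applicable; \lemref{sublayer-mtype} is what makes this promotion legitimate. Everything else reduces to inspecting rule premises and using that $\CT$ and $\LT$ are fixed.
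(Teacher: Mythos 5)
Your proposal is correct and follows essentially the same route as the paper's proof: induction on the first derivation with a secondary case analysis on the second, $\noconflict$ for the partial/partial case, $\override^h$ for the mixed cases, and \lemref{sublayer-mtype} to promote the superclass lookup to $\dom(\LT)$ before re-wrapping it via \rn{MT-Super} so that $\override^h$ applies. The only differences are presentational (you make the diagonal/cross structure and the impossibility of \rn{MT-Class} vs.\ \rn{MT-Super} more explicit than the paper does).
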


\begin{proof}
  By induction on the derivation of $\mtype(\ensuremath{\itbox{m}}, \ensuremath{\itbox{C}}, \LSet_1, \LSet_2) = \ensuremath{\itbox{}\overline{\itbox{T}}\itbox{\(\rightarrow\)T}_{0}\itbox{}}$ 
  with case analysis on the last rule used.
  \begin{rneqncase}{MT-Class}{
      \ensuremath{\itbox{class C \(\triangleleft\) D {\char'173}... T}_{0}\itbox{ m(}\overline{\itbox{T}}\itbox{ }\overline{\itbox{x}}\itbox{){\char'173}...{\char'175} ...{\char'175}}}
    }
    Easy.  Use \(\override^h(\ensuremath{\itbox{L}}, \ensuremath{\itbox{C}})\) for $\ensuremath{\itbox{L}} \in \dom(\LT)$ if $\mtype(\ensuremath{\itbox{m}}, \ensuremath{\itbox{C}}, \LSet_3, \LSet_4) = \ensuremath{\itbox{}\overline{\itbox{T}}\itbox{{\ensuremath{'}}\(\rightarrow\)T}_{0}\itbox{{\ensuremath{'}}}}$ is derived by \rn{MT-PMethod}, in which case there exists \ensuremath{\itbox{L{\ensuremath{'}}}} such that \ensuremath{\itbox{L \(\Leq\)}_{w}\itbox{ L{\ensuremath{'}}}} and \(\LT(\ensuremath{\itbox{L{\ensuremath{'}}}})(\ensuremath{\itbox{C.m}}) = \ensuremath{\itbox{T}_{0}\itbox{{\ensuremath{'}} m(}\overline{\itbox{T}}\itbox{{\ensuremath{'}} }\overline{\itbox{x}}\itbox{) {\char'173}...{\char'175}}}.\)
    (Note that $\mtype(\ensuremath{\itbox{m}}, \ensuremath{\itbox{C}}, \LSet_3, \LSet_4) = \ensuremath{\itbox{}\overline{\itbox{T}}\itbox{{\ensuremath{'}}\(\rightarrow\)T}_{0}\itbox{{\ensuremath{'}}}}$
    cannot be derived by \rn{MT-Super}.)
  \end{rneqncase}
  \begin{rneqncase}{MT-PMethod}{
      \exists \ensuremath{\itbox{L}_{1}\itbox{}} \in \LSet_1. \pmtype(\ensuremath{\itbox{m}}, \ensuremath{\itbox{C}}, \ensuremath{\itbox{L}_{1}\itbox{}})= \ensuremath{\itbox{}\overline{\itbox{T}}\itbox{\(\rightarrow\)T}_{0}\itbox{}}
    }
    There exists \ensuremath{\itbox{L}_{1}\itbox{{\ensuremath{'}}}} such that \ensuremath{\itbox{L}_{1}\itbox{ \(\Leq\)}_{w}\itbox{ L}_{1}\itbox{{\ensuremath{'}}}} and \(\LT(\ensuremath{\itbox{L}_{1}\itbox{{\ensuremath{'}}}})(\ensuremath{\itbox{C.m}}) = \ensuremath{\itbox{T}_{0}\itbox{ m(}\overline{\itbox{T}}\itbox{ }\overline{\itbox{x}}\itbox{) {\char'173}...{\char'175}}}.\) 
    Further case analysis on $\mtype(\ensuremath{\itbox{m}}, \ensuremath{\itbox{C}}, \LSet_3, \LSet_4) = \ensuremath{\itbox{}\overline{\itbox{T}}\itbox{{\ensuremath{'}}\(\rightarrow\)T}_{0}\itbox{{\ensuremath{'}}}}$.
    \begin{rnsubcase}{MT-Class}
      Similar to the above case.
    \end{rnsubcase}
    
    \begin{rneqnsubcase}{MT-PMethod}{
        \exists \ensuremath{\itbox{L}_{2}\itbox{}} \in \LSet_3. \pmtype(\ensuremath{\itbox{m}}, \ensuremath{\itbox{C}}, \ensuremath{\itbox{L}_{2}\itbox{}})= \ensuremath{\itbox{}\overline{\itbox{T}}\itbox{{\ensuremath{'}}\(\rightarrow\)T}_{0}\itbox{{\ensuremath{'}}}}
     }
There exists \ensuremath{\itbox{L}_{2}\itbox{{\ensuremath{'}}}} such that \ensuremath{\itbox{L}_{2}\itbox{ \(\Leq\)}_{w}\itbox{ L}_{2}\itbox{{\ensuremath{'}}}} and \(\LT(\ensuremath{\itbox{L}_{2}\itbox{{\ensuremath{'}}}})(\ensuremath{\itbox{C.m}}) = \ensuremath{\itbox{T}_{0}\itbox{{\ensuremath{'}} m(}\overline{\itbox{T}}\itbox{{\ensuremath{'}} }\overline{\itbox{x}}\itbox{) {\char'173}...{\char'175}}}.\) Then, 
     \(\noconflict(\ensuremath{\itbox{L}_{1}\itbox{{\ensuremath{'}}}},\ensuremath{\itbox{L}_{2}\itbox{{\ensuremath{'}}}})\) finishes the case.
    \end{rneqnsubcase}
    
    \begin{rneqnsubcase}{MT-Super}{
        \ensuremath{\itbox{class C \(\triangleleft\) D {\char'173}... }\overline{\itbox{M}}\itbox{{\char'175}}} & \ensuremath{\itbox{m}} \not \in \ensuremath{\itbox{}\overline{\itbox{M}}\itbox{}} \\
        \forall \ensuremath{\itbox{L}}\in \LSet_3. \pmtype (\ensuremath{\itbox{m}}, \ensuremath{\itbox{C}}, \ensuremath{\itbox{L}}) \undf &
        \mtype(\ensuremath{\itbox{m}}, \ensuremath{\itbox{D}}, \LSet_4, \LSet_4) = \ensuremath{\itbox{}\overline{\itbox{T}}\itbox{{\ensuremath{'}}\(\rightarrow\)T}_{0}\itbox{{\ensuremath{'}}}}
      }
      In this case,
      \(\mtype(\ensuremath{\itbox{m}}, \ensuremath{\itbox{C}}, \emptyset, \dom(\LT)) = \ensuremath{\itbox{}\overline{\itbox{T}}\itbox{{\ensuremath{'}}\(\rightarrow\)T}_{0}\itbox{{\ensuremath{'}}}}\)
      because we can show that 
      \begin{eqnarray*}
        \mtype(\ensuremath{\itbox{m}}, \ensuremath{\itbox{D}}, \LSet_4, \LSet_4) & = &
        \mtype(\ensuremath{\itbox{m}}, \ensuremath{\itbox{D}}, \dom(\LT), \dom(\LT)) \\ & = &
        \mtype(\ensuremath{\itbox{m}}, \ensuremath{\itbox{C}}, \emptyset, \dom(\LT)).
      \end{eqnarray*}
      by \lemref{sublayer-mtype} and \rn{MT-Super}.
      Then, \(\override^h(\ensuremath{\itbox{L}_{1}\itbox{{\ensuremath{'}}}}, \ensuremath{\itbox{C}})\) finishes the case.
    \end{rneqnsubcase}
  \end{rneqncase}

  \begin{rneqncase}{MT-Super}{
        \ensuremath{\itbox{class C \(\triangleleft\) D {\char'173}... }\overline{\itbox{M}}\itbox{{\char'175}}} & \ensuremath{\itbox{m}} \not \in \ensuremath{\itbox{}\overline{\itbox{M}}\itbox{}} \\
        \forall \ensuremath{\itbox{L}}\in \LSet_1. \pmtype (\ensuremath{\itbox{m}}, \ensuremath{\itbox{C}}, \ensuremath{\itbox{L}}) \undf &
        \mtype(\ensuremath{\itbox{m}}, \ensuremath{\itbox{D}}, \LSet_2, \LSet_2) = \ensuremath{\itbox{}\overline{\itbox{T}}\itbox{\(\rightarrow\)T}_{0}\itbox{}}
    }
    Further case analysis on $\mtype(\ensuremath{\itbox{m}}, \ensuremath{\itbox{C}}, \LSet_3, \LSet_4) = \ensuremath{\itbox{}\overline{\itbox{T}}\itbox{{\ensuremath{'}}\(\rightarrow\)T}_{0}\itbox{{\ensuremath{'}}}}$.
    \begin{rnsubcase}{MT-PMethod}
      Similar to the subcase \rn{MT-Super} above.
    \end{rnsubcase}
    \begin{rnsubcase}{MT-Class}
      Cannot happen.
    \end{rnsubcase}
    \begin{rnsubcase}{MT-Super}
      By the induction hypothesis,
      $\mtype(\ensuremath{\itbox{m}}, \ensuremath{\itbox{D}}, \LSet_2, \LSet_2) = \mtype(\ensuremath{\itbox{m}}, \ensuremath{\itbox{D}}, \LSet_4, \LSet_4)$. \qed
    \end{rnsubcase}
  \end{rneqncase}
\end{proof}

\begin{lemmaapp}\label{lem:subtype-fields}
  If $\fields(\ensuremath{\itbox{C}}) = \ensuremath{\itbox{}\overline{\itbox{T}}\itbox{ }\overline{\itbox{f}}\itbox{}}$ and $\ensuremath{\itbox{D}} \LEQ \ensuremath{\itbox{C}}$, then $\fields(\ensuremath{\itbox{D}}) =
  \ensuremath{\itbox{}\overline{\itbox{T}}\itbox{ }\overline{\itbox{f}}\itbox{}} , \ensuremath{\itbox{}\overline{\itbox{S}}\itbox{ }\overline{\itbox{g}}\itbox{}}$ for some \ensuremath{\itbox{}\overline{\itbox{S}}\itbox{}}, \ensuremath{\itbox{}\overline{\itbox{g}}\itbox{}}.
\end{lemmaapp}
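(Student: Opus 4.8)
The plan is to prove this by a straightforward induction on the derivation of $\ensuremath{\itbox{D}} \LEQ \ensuremath{\itbox{C}}$, following the three rules \rn{CL-Refl}, \rn{CL-Trans}, and \rn{CL-Extends} that define class subtyping in \figref{CFJ:subtyping}. The essential observation is that the rule \rn{F-Class} (\figref{CFJ:auxfuns}) computes the fields of a class by \emph{appending} its own field declarations after those inherited from its direct superclass, so descending a subtyping chain never removes or reorders fields---it only tacks more on at the end. This is exactly the standard FJ argument, untouched by the COP features of \fsname.

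In the base case \rn{CL-Refl} we have $\ensuremath{\itbox{D}} = \ensuremath{\itbox{C}}$, and the claim holds with $\ensuremath{\itbox{}\overline{\itbox{S}}\itbox{ }\overline{\itbox{g}}\itbox{}} = \bullet$. In the case \rn{CL-Extends}, the direct superclass of $\ensuremath{\itbox{D}}$ is $\ensuremath{\itbox{C}}$, so by \rn{F-Class} we get $\fields(\ensuremath{\itbox{D}}) = \ensuremath{\itbox{}\overline{\itbox{T}}\itbox{ }\overline{\itbox{f}}\itbox{}}, \ensuremath{\itbox{}\overline{\itbox{S}}\itbox{ }\overline{\itbox{g}}\itbox{}}$, where $\ensuremath{\itbox{}\overline{\itbox{S}}\itbox{ }\overline{\itbox{g}}\itbox{}}$ are precisely the fields declared in $\ensuremath{\itbox{D}}$, which finishes the case. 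In the case \rn{CL-Trans}, we have $\ensuremath{\itbox{D}} \LEQ \ensuremath{\itbox{E}}$ and $\ensuremath{\itbox{E}} \LEQ \ensuremath{\itbox{C}}$ for some $\ensuremath{\itbox{E}}$; applying the induction hypothesis to $\ensuremath{\itbox{E}} \LEQ \ensuremath{\itbox{C}}$ yields $\fields(\ensuremath{\itbox{E}}) = \ensuremath{\itbox{}\overline{\itbox{T}}\itbox{ }\overline{\itbox{f}}\itbox{}}, \ensuremath{\itbox{}\overline{\itbox{U}}\itbox{ }\overline{\itbox{g}}\itbox{}_{1}\itbox{}}$, and then applying it to $\ensuremath{\itbox{D}} \LEQ \ensuremath{\itbox{E}}$ yields $\fields(\ensuremath{\itbox{D}}) = \fields(\ensuremath{\itbox{E}}), \ensuremath{\itbox{}\overline{\itbox{V}}\itbox{ }\overline{\itbox{g}}\itbox{}_{2}\itbox{}}$; concatenating the two puts $\fields(\ensuremath{\itbox{D}})$ in the required form with $\ensuremath{\itbox{}\overline{\itbox{S}}\itbox{ }\overline{\itbox{g}}\itbox{}} = \ensuremath{\itbox{}\overline{\itbox{U}}\itbox{ }\overline{\itbox{g}}\itbox{}_{1}\itbox{}}, \ensuremath{\itbox{}\overline{\itbox{V}}\itbox{ }\overline{\itbox{g}}\itbox{}_{2}\itbox{}}$.

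There is really no obstacle in this lemma; the only thing worth a word is that the statement presupposes that $\fields(\ensuremath{\itbox{D}})$ is defined, which is guaranteed by the sanity conditions---acyclicity of the $\triangleleft$ relation (Condition~\ref{lang:extends}) together with the fact that every class name occurring in $\CT$ lies in $\dom(\CT)$---so the relevant $\fields$ derivations do exist at each step. An equally short alternative is induction on the number of $\triangleleft$-steps from $\ensuremath{\itbox{D}}$ up to $\ensuremath{\itbox{C}}$, applying \rn{F-Class} once per step; I would pick whichever formulation matches the paper's established conventions for reasoning about $\LEQ$.
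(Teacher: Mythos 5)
Your proposal is correct and takes exactly the paper's approach: the paper's proof is a one-liner, ``By induction on $\ensuremath{\itbox{D}} \LEQ \ensuremath{\itbox{C}}$,'' and your three-case analysis over \rn{CL-Refl}, \rn{CL-Extends}, and \rn{CL-Trans} (with \rn{F-Class} supplying the appended fields) is just the expected elaboration of that induction.
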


\begin{proof}
  By induction on $\ensuremath{\itbox{D}} \LEQ \ensuremath{\itbox{C}}$. \qed
\end{proof}

\begin{lemmaapp}[Weakening, \lemref{def:weakening}] \label{lem:weakening} 
  If $\LLGp \ensuremath{\itbox{e}} : \ensuremath{\itbox{T}}$, then $\Loc; \LSet; \Gamma, \ensuremath{\itbox{x}}\colon
  \ensuremath{\itbox{S}} \p \ensuremath{\itbox{e}} : \ensuremath{\itbox{T}}$.
\end{lemmaapp}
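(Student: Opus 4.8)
The plan is to proceed by a routine structural induction on the derivation of $\LLGp \ensuremath{\itbox{e}} : \ensuremath{\itbox{T}}$, exactly as for FJ's weakening lemma, relying on the usual convention (implicit in the notation $\Gamma, \ensuremath{\itbox{x}}\colon\ensuremath{\itbox{S}}$) that $\ensuremath{\itbox{x}} \notin \dom(\Gamma)$. The key observation that makes every case go through mechanically is that the type environment $\Gamma$ is inspected \emph{directly} only by the rule \rn{T-Var}; every other expression typing rule—and both run-time expression rules \rn{T-InvkA} and \rn{T-InvkAL}—threads $\Gamma$ unchanged to the premises that type subexpressions, while its remaining premises (subtyping checks $\ensuremath{\itbox{}\overline{\itbox{S}}\itbox{}} \LEQ \ensuremath{\itbox{}\overline{\itbox{T}}\itbox{}}$, the lookup judgments $\mtype$, $\pmtype$, $\fields$, the side conditions on \ensuremath{\itbox{requires}} clauses, layer sets, cursor well-formedness $\ensuremath{\itbox{C.m}} \p \ensuremath{\itbox{<D,}\overline{\itbox{L}}\itbox{}_{1}\itbox{,}\overline{\itbox{L}}\itbox{}_{2}\itbox{> ok}}$, and so on) do not mention $\Gamma$ at all.

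First I would dispatch the base case \rn{T-Var}: if $\Gamma = \ensuremath{\itbox{}\overline{\itbox{x}}\itbox{:}\overline{\itbox{T}}\itbox{}}$, $\ensuremath{\itbox{e}} = \ensuremath{\itbox{x}_{i}\itbox{}}$, and $\ensuremath{\itbox{T}} = \ensuremath{\itbox{T}_{i}\itbox{}}$, then because $\ensuremath{\itbox{x}} \notin \set{\ensuremath{\itbox{}\overline{\itbox{x}}\itbox{}}}$ the environment $\Gamma, \ensuremath{\itbox{x}}\colon\ensuremath{\itbox{S}} = \ensuremath{\itbox{}\overline{\itbox{x}}\itbox{:}\overline{\itbox{T}}\itbox{}}, \ensuremath{\itbox{x}}\colon\ensuremath{\itbox{S}}$ is still a well-formed environment assigning $\ensuremath{\itbox{x}_{i}\itbox{}}$ the type $\ensuremath{\itbox{T}_{i}\itbox{}}$, so \rn{T-Var} re-derives the conclusion. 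For each inductive case I would apply the induction hypothesis to every premise of the form $\Loc'; \LSet'; \Gamma \p \ensuremath{\itbox{e{\ensuremath{'}}}} : \ensuremath{\itbox{T{\ensuremath{'}}}}$ occurring in the derivation—note that no expression typing rule alters $\Gamma$ when passing to subexpressions—to obtain the same judgment under $\Gamma, \ensuremath{\itbox{x}}\colon\ensuremath{\itbox{S}}$, and then re-apply the very same rule with its (unchanged) non-$\Gamma$ side conditions. This uniformly handles \rn{T-Field}, \rn{T-Invk}, \rn{T-New}, \rn{T-NewL}, \rn{T-With}, \rn{T-Swap}, \rn{T-SuperB}, \rn{T-SuperP}, \rn{T-Proceed}, \rn{T-SuperProceed}, and the run-time rules \rn{T-InvkA}, \rn{T-InvkAL}.

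I do not expect any real obstacle here: the lemma is a bookkeeping fact about the shape of the typing rules, and the only point requiring (minimal) care is the freshness assumption on $\ensuremath{\itbox{x}}$ used in the \rn{T-Var} case. If one preferred not to assume $\ensuremath{\itbox{x}} \notin \dom(\Gamma)$, the statement would have to be read up to $\alpha$-renaming of the bound variables recorded in $\Gamma$, but under the standard convention already in force in the paper this is unnecessary and the proof is entirely straightforward.
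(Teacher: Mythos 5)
Your proposal is correct and matches the paper's proof, which simply states ``by straightforward induction on $\LLGp \ensuremath{\itbox{e}} : \ensuremath{\itbox{T}}$''; you have merely spelled out the routine details (the \rn{T-Var} base case and the observation that no other rule inspects $\Gamma$). Nothing further is needed.
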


\begin{proof}
  By straightforward induction on $\LLGp \ensuremath{\itbox{e}} : \ensuremath{\itbox{T}}$. \qed
\end{proof}

\begin{lemmaapp}\label{lem:subtype-mtype}
  \sloppy If $\mtype(\ensuremath{\itbox{m}}, \ensuremath{\itbox{C}}, \LSet) = \ensuremath{\itbox{}\overline{\itbox{T}}\itbox{\(\rightarrow\)T}_{0}\itbox{}}$ and $\ensuremath{\itbox{D}} \LEQ \ensuremath{\itbox{C}}$, 
  then $\mtype(\ensuremath{\itbox{m}}, \ensuremath{\itbox{D}}, \LSet) = \ensuremath{\itbox{}\overline{\itbox{T}}\itbox{\(\rightarrow\)S}_{0}\itbox{}}$ and $\ensuremath{\itbox{S}_{0}\itbox{}} \LEQ \ensuremath{\itbox{T}_{0}\itbox{}}$ for
  some \ensuremath{\itbox{S}_{0}\itbox{}}.
\end{lemmaapp}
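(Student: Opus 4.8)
The plan is to argue by induction on the derivation of $D \LEQ C$. The case \rn{CL-Refl} ($D = C$) is immediate, taking $S_0 = T_0$; the case \rn{CL-Trans} ($D \LEQ E \LEQ C$) follows by applying the induction hypothesis first to $E \LEQ C$ and then to $D \LEQ E$, composing the two resulting return-type subtypings by transitivity of $\LEQ$. So all the work is in the one-step case \rn{CL-Extends}, i.e.\ $D \triangleleft C$.

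For this case I would case-split on how $m$ resolves at $D$, paralleling the three rules defining $\mtype$: (i) $D$ declares a base method $m$; (ii) $D$ declares no base method $m$, but $\pmtype(m,D,L)$ is defined for some $L \in \LSet$; (iii) neither. Case (iii) is handled directly by \rn{MT-Super}: $\mtype(m,D,\LSet) = \mtype(m,C,\LSet) = \overline{T}\to T_0$, so $S_0 = T_0$. In case (i), \rn{MT-Class} gives $\mtype(m,D,\LSet) = \overline{S}\to S_0$ where $\overline{S}\to S_0$ is the declared signature of the base method; instantiating $\override^v(D)$ — whose hypothesis $\mtype(m,C,\dom(\LT),\dom(\LT)) = \overline{T}\to T_0$ I discharge using \lemref{sublayer-mtype} and the functionality lemma \lemref{mtype_layer} to identify $\mtype(m,C,\dom(\LT),\dom(\LT))$ with $\mtype(m,C,\LSet)$ — yields $\overline{S} = \overline{T}$ and $S_0 \LEQ T_0$, as required. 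In case (ii), \rn{MT-PMethod} gives $\mtype(m,D,\LSet) = \pmtype(m,D,L) = \overline{S}\to S_0$; a routine induction on the $\pmtype$ derivation produces a superlayer $L'$ of $L$ with $\LT(L')(D.m)$ declared with this exact signature; then $\override^h(L',D)$, together with the observation that $\mtype(m,D,\emptyset,\dom(\LT))$ reduces by \rn{MT-Super} to $\mtype(m,C,\dom(\LT),\dom(\LT))$ (again identified with $\mtype(m,C,\LSet) = \overline{T}\to T_0$), forces $\overline{S}\to S_0 = \overline{T}\to T_0$, which suffices.

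The main obstacle — and where this goes beyond the plain-FJ analogue — is the bookkeeping of the layer-set arguments: the program-level conditions $\override^h$ and $\override^v$ are phrased in terms of the ``maximal'' lookups $\mtype(m,-,\emptyset,\dom(\LT))$ and $\mtype(m,-,\dom(\LT),\dom(\LT))$, so I must first show these are defined, and equal to the corresponding lookups over $\LSet$, before the conditions can be applied. This is exactly the role of \lemref{sublayer-mtype} (using $\dom(\LT) \LEQ_{sw} \LSet$, which holds since every layer occurring in a well-typed term lies in $\dom(\LT)$ and $\LSet$ can contain no other layer that contributes to a lookup) together with \lemref{mtype_layer}. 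A related subtlety is that $D.m$ in case (ii) might be a \emph{baseless} partial method, in which case $\mtype(m,D,\emptyset,\dom(\LT))$ could a priori be undefined and $\override^h$ vacuous; but this does not arise here, because the lemma's hypothesis that $\mtype(m,C,\LSet)$ is defined propagates up the class chain and keeps the relevant maximal lookup defined. Once these points are pinned down, the write-up should be short.
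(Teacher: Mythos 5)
Your proof is correct and takes essentially the same route as the paper's: induction on $\ensuremath{\itbox{D}} \LEQ \ensuremath{\itbox{C}}$ with the one-step \rn{CL-Extends} case split according to whether \rn{MT-Class}, \rn{MT-PMethod}, or \rn{MT-Super} applies at \ensuremath{\itbox{D}}, discharging the first two via $\override^v(\ensuremath{\itbox{D}})$ and $\override^h$ respectively. You are in fact more explicit than the paper's own write-up about the necessary bookkeeping—converting the lookup over $\LSet$ to the lookups over $\dom(\LT)$ and $\emptyset$ required by the override predicates via \lemref{sublayer-mtype}, and locating the superlayer that actually declares the partial method before invoking $\override^h$—details the paper elides here but carries out in the proofs of its auxiliary lemmas.
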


\begin{proof}
  \sloppy
  By induction on $\ensuremath{\itbox{D}} \LEQ \ensuremath{\itbox{C}}$.
  We show only the case where $\ensuremath{\itbox{D extends C}}$.
  If \ensuremath{\itbox{class D \(\triangleleft\) C {\char'173}... S}_{0}\itbox{ m(}\overline{\itbox{S}}\itbox{ }\overline{\itbox{x}}\itbox{){\char'173} return e; {\char'175} ...{\char'175}}}, then $\mtype(\ensuremath{\itbox{m}},
  \ensuremath{\itbox{D}}, \LSet) = \ensuremath{\itbox{}\overline{\itbox{S}}\itbox{\(\rightarrow\)S}_{0}\itbox{}}$ for some \ensuremath{\itbox{}\overline{\itbox{S}}\itbox{}} by \rn{MT-Class}.  By \(\override^v(\ensuremath{\itbox{D}})\),
  $\ensuremath{\itbox{}\overline{\itbox{S}}\itbox{}} = \ensuremath{\itbox{}\overline{\itbox{T}}\itbox{}}$ and \ensuremath{\itbox{S}_{0}\itbox{ \(\Leq\) T}_{0}\itbox{}}.
  If \(\exists \ensuremath{\itbox{L}} \in \LSet. \pmtype(\ensuremath{\itbox{m}}, \ensuremath{\itbox{D}}, \ensuremath{\itbox{L}})=
  \ensuremath{\itbox{}\overline{\itbox{S}}\itbox{\(\rightarrow\)S}_{0}\itbox{}}\), then
  we have \(\mtype(\ensuremath{\itbox{m}}, \ensuremath{\itbox{D}}, \LSet) = \ensuremath{\itbox{}\overline{\itbox{S}}\itbox{\(\rightarrow\)S}_{0}\itbox{}}\) by \rn{MT-PMethod}.
  By \(\override^h(\ensuremath{\itbox{L}}, \ensuremath{\itbox{D}})\), we get $\ensuremath{\itbox{}\overline{\itbox{S}}\itbox{}} = \ensuremath{\itbox{}\overline{\itbox{T}}\itbox{}}$ and $\ensuremath{\itbox{S}_{0}\itbox{}} = \ensuremath{\itbox{T}_{0}\itbox{}}$.
  Otherwise, $\mtype(\ensuremath{\itbox{m}}, \ensuremath{\itbox{D}}, \LSet) = \ensuremath{\itbox{}\overline{\itbox{T}}\itbox{\(\rightarrow\)T}_{0}\itbox{}}$ by \rn{MT-Super}. \qed
\end{proof}

\begin{lemmaapp}[Narrowing, \lemref{def:narrowing}] \label{lem:narrowing}
  If $\LLGp \ensuremath{\itbox{e}} : \ensuremath{\itbox{T}}$ and $\LSet' \LEQ_{sw} \LSet$, then $\Loc; \LSet'; \Gp \ensuremath{\itbox{e}} : \ensuremath{\itbox{T}}$.
\end{lemmaapp}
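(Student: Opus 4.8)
The plan is to argue by induction on the derivation of $\LLGp \ensuremath{\itbox{e}} : \ensuremath{\itbox{T}}$, with a case split on the last rule. First I would record two routine facts about the layer orderings, both resting on the observation that under single inheritance the weak supertypes of a fixed layer form a linear chain, so any two weak supertypes of a common layer are $\LEQ_w$-comparable: (a) $\LEQ_{sw}$ is transitive; and (b) if $\LSet_1 \LEQ_{sw} \LSet_2 \LEQ_w \LSet_3$ and no layer in $\LSet_3$ is a proper sublayer of a swappable layer, then already $\LSet_1 \LEQ_w \LSet_3$. For (b) one fixes $\ensuremath{\itbox{L}_{0}\itbox{}} \in \LSet_3$, takes $\ensuremath{\itbox{L}_{1}\itbox{}} \in \LSet_2$ with $\ensuremath{\itbox{L}_{1}\itbox{}} \LEQ_w \ensuremath{\itbox{L}_{0}\itbox{}}$ and a witness $\ensuremath{\itbox{L}_{2}\itbox{}} \in \LSet_1$ for $\ensuremath{\itbox{L}_{1}\itbox{}}$: in the direct case $\ensuremath{\itbox{L}_{2}\itbox{}} \LEQ_w \ensuremath{\itbox{L}_{1}\itbox{}} \LEQ_w \ensuremath{\itbox{L}_{0}\itbox{}}$, and in the swappable-detour case there is a swappable $\ensuremath{\itbox{L}_{3}\itbox{}}$ with $\ensuremath{\itbox{L}_{1}\itbox{}},\ensuremath{\itbox{L}_{2}\itbox{}} \LEQ_w \ensuremath{\itbox{L}_{3}\itbox{}}$; since $\ensuremath{\itbox{L}_{0}\itbox{}}$ and $\ensuremath{\itbox{L}_{3}\itbox{}}$ are both weak supertypes of $\ensuremath{\itbox{L}_{1}\itbox{}}$ and hence comparable, and $\ensuremath{\itbox{L}_{0}\itbox{}} \LEQ_w \ensuremath{\itbox{L}_{3}\itbox{}}$ with $\ensuremath{\itbox{L}_{0}\itbox{}} \neq \ensuremath{\itbox{L}_{3}\itbox{}}$ (which would make $\ensuremath{\itbox{L}_{0}\itbox{}}$ a proper sublayer of the swappable $\ensuremath{\itbox{L}_{3}\itbox{}}$) is excluded by hypothesis, we get $\ensuremath{\itbox{L}_{3}\itbox{}} \LEQ_w \ensuremath{\itbox{L}_{0}\itbox{}}$, hence $\ensuremath{\itbox{L}_{2}\itbox{}} \LEQ_w \ensuremath{\itbox{L}_{0}\itbox{}}$. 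The hypothesis of (b) holds in particular for any $\ensuremath{\itbox{requires}}$ clause, because $(\CT,\LT)\ensuremath{\itbox{ ok}}$ and the last premise of \rn{T-LayerSW} forbids a proper sublayer of a swappable layer from being required by any layer.

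The induction itself is then largely bookkeeping. \rn{T-Var} and \rn{T-NewL} do not use $\LSet$; \rn{T-Field} and \rn{T-New} only thread it unchanged into the subderivations, so the induction hypotheses finish those cases. In \rn{T-Invk} the one premise involving $\LSet$ is $\mtype(\ensuremath{\itbox{m}},\ensuremath{\itbox{C}_{0}\itbox{}},\LSet) = \ensuremath{\itbox{}\overline{\itbox{T}}\itbox{ \(\rightarrow\) T}_{0}\itbox{}}$; \lemref{sublayer-mtype}, instantiated with $\LSet_1 = \LSet_2 = \LSet$ and $\LSet_3 = \LSet_4 = \LSet'$ (its inclusion side conditions being trivial since the two sets coincide), yields the same signature for $\mtype(\ensuremath{\itbox{m}},\ensuremath{\itbox{C}_{0}\itbox{}},\LSet')$, and the receiver and arguments are re-typed by the induction hypothesis while the argument-subtyping premises are untouched. \rn{T-SuperB}, \rn{T-SuperP}, \rn{T-Proceed}, \rn{T-SuperProceed} are easier still: the (partial) method type there is obtained from the enclosing layer's $\ensuremath{\itbox{requires}}$ clause, or $\emptyset$, or via $\pmtype$ --- never from $\LSet$ --- leaving only the argument subderivations to the induction hypothesis. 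For the run-time rules \rn{T-InvkA} and \rn{T-InvkAL}, the method type is read off the cursor and the sole occurrence of $\LSet$ is the side condition $\LSet \LEQ_{sw} \set{\ensuremath{\itbox{}\overline{\itbox{L}}\itbox{}}}$, which is recovered from $\LSet' \LEQ_{sw} \LSet \LEQ_{sw} \set{\ensuremath{\itbox{}\overline{\itbox{L}}\itbox{}}}$ by fact (a), with the subexpressions handled by the induction hypothesis.

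The crux is \rn{T-With} and \rn{T-Swap}, the only rules that test $\LSet$ against a $\ensuremath{\itbox{requires}}$ clause. For \rn{T-With}, from the premises $\LLGp \ensuremath{\itbox{e}_{l}\itbox{}} : \ensuremath{\itbox{L}}$, $\ensuremath{\itbox{L req }}\LSet''$, $\LSet \LEQ_w \LSet''$, and $\Loc; \LSet \cup \set{\ensuremath{\itbox{L}}}; \Gp \ensuremath{\itbox{e}_{0}\itbox{}} : \ensuremath{\itbox{T}_{0}\itbox{}}$: re-typing $\ensuremath{\itbox{e}_{l}\itbox{}}$ under $\LSet'$ is the induction hypothesis; re-typing $\ensuremath{\itbox{e}_{0}\itbox{}}$ under $\LSet' \cup \set{\ensuremath{\itbox{L}}}$ follows from the induction hypothesis once one notes $\LSet' \cup \set{\ensuremath{\itbox{L}}} \LEQ_{sw} \LSet \cup \set{\ensuremath{\itbox{L}}}$ (add $\ensuremath{\itbox{L}}$, witnessing itself, to both sides of $\LSet' \LEQ_{sw} \LSet$); and the remaining obligation $\LSet' \LEQ_w \LSet''$ is precisely fact (b), applicable because $\LSet''$ is a $\ensuremath{\itbox{requires}}$ clause. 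The \rn{T-Swap} case follows the same outline, but the set difference $\LSet_{rm} = \LSet \setminus \set{\ensuremath{\itbox{L{\ensuremath{'}}}} \mid \ensuremath{\itbox{L{\ensuremath{'}}}} \LEQ_w \ensuremath{\itbox{L}_{{sw}}\itbox{}}}$ must be tracked carefully through the narrowing: one has to establish both $\LSet_{rm}' \cup \set{\ensuremath{\itbox{L}}} \LEQ_{sw} \LSet_{rm} \cup \set{\ensuremath{\itbox{L}}}$ (for the induction hypothesis on $\ensuremath{\itbox{e}_{0}\itbox{}}$) and $\LSet_{rm}' \LEQ_w \LSet''$, the delicate point being that whenever a witness supplied by $\LSet' \LEQ_{sw} \LSet$ for a surviving layer is itself removed by the swap --- i.e.\ is a weak sublayer of $\ensuremath{\itbox{L}_{{sw}}\itbox{}}$ --- the freshly activated layer $\ensuremath{\itbox{L}} \LEQ_w \ensuremath{\itbox{L}_{{sw}}\itbox{}}$ has to take over as its replacement witness. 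Making this precise, while leaning on exactly those \rn{T-LayerSW} restrictions that pin down the $\ensuremath{\itbox{requires}}$ clause and the interface of every sublayer of a swappable layer, is the step I expect to demand the most care; everything else reduces to the induction hypothesis plus \lemref{sublayer-mtype} and facts (a) and (b).
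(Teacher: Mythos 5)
Your proposal is correct and follows essentially the same route as the paper's proof: induction on the typing derivation, \lemref{sublayer-mtype} for the $\mtype$ premises, transitivity of $\LEQ_{sw}$ for the run-time and \rn{T-With}/\rn{T-Swap} side conditions, and the downgrade from $\LEQ_{sw}$ to $\LEQ_w$ against a \ensuremath{\itbox{requires}} clause justified by the \rn{T-LayerSW} restriction that no proper sublayer of a swappable layer may be required. Your fact (b) and the witness-replacement analysis in the \rn{T-Swap} case merely spell out what the paper dismisses as ``easy to show,'' and they do so correctly.
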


\begin{proof}
  By induction on $\LLGp \ensuremath{\itbox{e}} : \ensuremath{\itbox{T}}$.  We show only some representative cases.
  
  \begin{rneqncase}{T-Invk}{
     \ensuremath{\itbox{e}} = \ensuremath{\itbox{e}_{0}\itbox{.m(}\overline{\itbox{e}}\itbox{)}} &
     \LLGp \ensuremath{\itbox{e}_{0}\itbox{}} : \ensuremath{\itbox{C}_{0}\itbox{}} \\
     \mtype(\ensuremath{\itbox{m}}, \ensuremath{\itbox{C}_{0}\itbox{}}, \LSet) = \ensuremath{\itbox{}\overline{\itbox{D}}\itbox{\(\rightarrow\)C}} &
     \LLGp \ensuremath{\itbox{}\overline{\itbox{e}}\itbox{}} : \ensuremath{\itbox{}\overline{\itbox{E}}\itbox{}} &
      \ensuremath{\itbox{}\overline{\itbox{E}}\itbox{}} \LEQ \ensuremath{\itbox{}\overline{\itbox{D}}\itbox{}}
    }
     
    By \lemref{sublayer-mtype}, $\mtype(\ensuremath{\itbox{m}}, \ensuremath{\itbox{C}_{0}\itbox{}}, \LSet') = \ensuremath{\itbox{}\overline{\itbox{D}}\itbox{\(\rightarrow\)C}}$.
    Then, the induction hypothesis and \rn{T-Invk} finish the case.
  \end{rneqncase}

  \begin{rneqncase}{T-With}{
      \ensuremath{\itbox{e}} = \ensuremath{\itbox{with e}_{l}\itbox{ e}_{0}\itbox{}} &
      \LLGp \ensuremath{\itbox{e}_{l}\itbox{}} : \ensuremath{\itbox{L}} &
      \ensuremath{\itbox{L req }} \LSet'' \\
      \LSet \LEQ_w \LSet'' &
      \Loc; \LSet \cup \set{\ensuremath{\itbox{L}}}; \Gp \ensuremath{\itbox{e}_{0}\itbox{}} : \ensuremath{\itbox{T}}
    }

    By \rn{LSSW-Intro}, we have $\LSet' \cup \set{\ensuremath{\itbox{L}}} \LEQ_{sw} \LSet
    \cup \set{\ensuremath{\itbox{L}}}$.  By the induction hypothesis, $\Loc; \LSet' \cup
    \set{\ensuremath{\itbox{L}}}; \Gp \ensuremath{\itbox{e}_{0}\itbox{}} : \ensuremath{\itbox{T}}$.

    It is easy to show that \(\LEQ_{sw}\) is transitive and so
    $\LSet' \LEQ_{sw} \LSet''$.  By the induction hypothesis, we also
    have \(\Loc; \LSet'; \Gp \ensuremath{\itbox{e}_{l}\itbox{}} : \ensuremath{\itbox{L}}\).

    Moreover, in a well-formed program, $\ensuremath{\itbox{L req }} \LSet''$
    means that $\LSet''$ does not contain any sublayer of \ensuremath{\itbox{swappable}}
    layers.  By these facts and \rn{LSS-Intro}, we get $\LSet' \LEQ_w
    \LSet''$.  Then, by \rn{T-With}, $\Loc; \LSet';
    \Gp \ensuremath{\itbox{with e}_{l}\itbox{ e}_{0}\itbox{}} : \ensuremath{\itbox{T}}$, finishing the case.
  \end{rneqncase}

    \begin{rneqncase}{T-Swap}{
      \ensuremath{\itbox{e}} = \ensuremath{\itbox{swap (e}_{l}\itbox{,L}_{{sw}}\itbox{) e}_{0}\itbox{}} &
      \LLGp \ensuremath{\itbox{e}_{l}\itbox{}} : \ensuremath{\itbox{L}} \\
      \ensuremath{\itbox{L}_{{sw}}\itbox{ swappable}} &
      \ensuremath{\itbox{L \(\Leq\)}_{w}\itbox{ L}_{{sw}}\itbox{}} &
      \ensuremath{\itbox{L req }} \LSet'' \\
      \LSet_{rm} = (\LSet \backslash \set{\ensuremath{\itbox{L{\ensuremath{'}}}} \mid \ensuremath{\itbox{L{\ensuremath{'}} \(\Leq\)}_{w}\itbox{ L}_{{sw}}\itbox{}}}) &
      \LSet_{rm} \LEQ_w \LSet'' &
      \Loc; \LSet_{rm} \cup \set{\ensuremath{\itbox{L}}}; \Gp \ensuremath{\itbox{e}_{0}\itbox{}} : \ensuremath{\itbox{T}}
      }
      
    It is easy to show that $(\LSet' \backslash \set{\ensuremath{\itbox{L{\ensuremath{'}}}} \mid
      \ensuremath{\itbox{L{\ensuremath{'}} \(\Leq\)}_{w}\itbox{ L}_{{sw}}\itbox{}}}) \cup \set{\ensuremath{\itbox{L}}} \LEQ_{sw} \LSet_{rm} \cup \set{\ensuremath{\itbox{L}}}$
    from $\LSet' \LEQ_{sw} \LSet$.
    By the induction hypothesis, $\Loc; (\LSet' \backslash \set{\ensuremath{\itbox{L{\ensuremath{'}}}} | \ensuremath{\itbox{L{\ensuremath{'}} \(\Leq\)}_{w}\itbox{ L}_{{sw}}\itbox{}}})
    \cup \set{\ensuremath{\itbox{L}}} ; \Gp \ensuremath{\itbox{e}_{0}\itbox{}} : \ensuremath{\itbox{T}}$.

    By \rn{LSSW-Intro}, we have
    $(\LSet' \backslash \set{\ensuremath{\itbox{L{\ensuremath{'}}}} | \ensuremath{\itbox{L{\ensuremath{'}} \(\Leq\)}_{w}\itbox{ L}_{{sw}}\itbox{}}}) \cup \set{\ensuremath{\itbox{L}}} \LEQ_{sw} \LSet''$.
    Moreover, in a well-formed program, $\ensuremath{\itbox{L req }}
    \LSet''$ means that $\LSet''$ does not have any sublayer of
    \ensuremath{\itbox{swappable}} layers.  By these facts and \rn{LSS-Intro}, we get
    $(\LSet' \backslash \set{\ensuremath{\itbox{L{\ensuremath{'}}}} | \ensuremath{\itbox{L{\ensuremath{'}} \(\Leq\)}_{w}\itbox{ L}_{{sw}}\itbox{}}}) \cup \set{\ensuremath{\itbox{L}}}
    \LEQ_w \LSet''$.
   By the induction hypothesis, we also have
    \(\Loc; \LSet'; \Gp \ensuremath{\itbox{e}_{l}\itbox{}} : \ensuremath{\itbox{L}}\).  Then, by \rn{T-Swap}, $\Loc;
    \LSet'; \Gp \ensuremath{\itbox{swap (e}_{l}\itbox{,L}_{{sw}}\itbox{) e}_{0}\itbox{}} : \ensuremath{\itbox{T}}$, finishing the case. \qed
  \end{rneqncase}
\end{proof}

\begin{lemmaapp}[Strengthening for values, \lemref{def:value-strengthening})] \label{lem:value-strengthening}
  If $\LLGp \ensuremath{\itbox{v}} : \ensuremath{\itbox{T}}$ then, $\Loc'; \LSet'; \Gamma' \p \ensuremath{\itbox{v}} : \ensuremath{\itbox{T}}$.
\end{lemmaapp}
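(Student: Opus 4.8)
The plan is to prove this by a straightforward structural induction on the value \ensuremath{\itbox{v}}---equivalently, by induction on the derivation of $\LLGp \ensuremath{\itbox{v}} : \ensuremath{\itbox{T}}$---with a case analysis on the two syntactic forms a value can take, namely $\ensuremath{\itbox{new C(}\overline{\itbox{v}}\itbox{)}}$ and $\ensuremath{\itbox{new L()}}$. The observation that makes the lemma go through is that neither form mentions a variable, a (partial) method invocation, \ensuremath{\itbox{super}}, \ensuremath{\itbox{proceed}}, \ensuremath{\itbox{superproceed}}, \ensuremath{\itbox{with}}, \ensuremath{\itbox{swap}}, or any run-time cursor expression; consequently the only typing rules that can conclude $\LLGp \ensuremath{\itbox{v}} : \ensuremath{\itbox{T}}$ are \rn{T-NewL} and \rn{T-New}, and the premises of these two rules do not constrain \(\Loc\), \(\LSet\), or \(\Gamma\).

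For the base case $\ensuremath{\itbox{v}} = \ensuremath{\itbox{new L}_{0}\itbox{()}}$, the derivation must end with \rn{T-NewL}, which has no premises and forces $\ensuremath{\itbox{T}} = \ensuremath{\itbox{L}_{0}\itbox{}}$; since \rn{T-NewL} is applicable under an arbitrary context, I immediately obtain $\Loc'; \LSet'; \Gamma' \p \ensuremath{\itbox{new L}_{0}\itbox{()}} : \ensuremath{\itbox{L}_{0}\itbox{}}$.

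For the inductive case $\ensuremath{\itbox{v}} = \ensuremath{\itbox{new C}_{0}\itbox{(}\overline{\itbox{v}}\itbox{)}}$, the derivation must end with \rn{T-New}, so $\ensuremath{\itbox{T}} = \ensuremath{\itbox{C}_{0}\itbox{}}$, $\fields(\ensuremath{\itbox{C}_{0}\itbox{}}) = \ensuremath{\itbox{}\overline{\itbox{T}}\itbox{ }\overline{\itbox{f}}\itbox{}}$, $\LLGp \ensuremath{\itbox{}\overline{\itbox{v}}\itbox{}} : \ensuremath{\itbox{}\overline{\itbox{S}}\itbox{}}$, and $\ensuremath{\itbox{}\overline{\itbox{S}}\itbox{}} \LEQ \ensuremath{\itbox{}\overline{\itbox{T}}\itbox{}}$. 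Because $\ensuremath{\itbox{new C}_{0}\itbox{(}\overline{\itbox{v}}\itbox{)}}$ is a value, each argument $\ensuremath{\itbox{v}_{i}\itbox{}}$ is itself a strictly smaller value, so the induction hypothesis applies and gives $\Loc'; \LSet'; \Gamma' \p \ensuremath{\itbox{v}_{i}\itbox{}} : \ensuremath{\itbox{S}_{i}\itbox{}}$ for each $i$. Since $\fields$ and the class-subtyping relation $\LEQ$ are both independent of the typing context, re-applying \rn{T-New} under $\Loc'; \LSet'; \Gamma'$ yields $\Loc'; \LSet'; \Gamma' \p \ensuremath{\itbox{new C}_{0}\itbox{(}\overline{\itbox{v}}\itbox{)}} : \ensuremath{\itbox{C}_{0}\itbox{}}$, closing the case.

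I do not anticipate any real obstacle here: the lemma is essentially bookkeeping, and the only point to check carefully is the claim stated above---that a value's typing derivation never genuinely depends on $\Loc$, $\LSet$, or $\Gamma$---so that the same derivation skeleton can be replayed verbatim under an arbitrary context, with the recursive hypotheses supplied by the induction hypothesis in the \rn{T-New} case. The lemma will be used wherever a value that was typed deep inside a method body (hence under a nontrivial location, layer set, and environment) must be retyped at the top level or under a different context, e.g.\ in the substitution lemmas and in the subject reduction proof.
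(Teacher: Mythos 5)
Your proposal is correct and matches the paper's proof, which is exactly a straightforward induction on the typing derivation of the value; you have simply spelled out the two cases (\rn{T-New} and \rn{T-NewL}) that the paper leaves implicit. No gaps.
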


\begin{proof}
  By straightforward induction on $\LLGp \ensuremath{\itbox{v}} : \ensuremath{\itbox{T}}$. \qed
\end{proof}

\begin{lemmaapp}[Substitution, \lemref{def:substitution}] \label{lem:substitution}
  If $\Loc; \LSet; \Gamma, \ensuremath{\itbox{}\overline{\itbox{x}}\itbox{}} \colon \ensuremath{\itbox{}\overline{\itbox{T}}\itbox{}} \p \ensuremath{\itbox{e}} : \ensuremath{\itbox{T}}$ and $\LLGp
  \ensuremath{\itbox{}\overline{\itbox{v}}\itbox{}} : \ensuremath{\itbox{}\overline{\itbox{S}}\itbox{}}$ and $\ensuremath{\itbox{}\overline{\itbox{S}}\itbox{}} \LEQ \ensuremath{\itbox{}\overline{\itbox{T}}\itbox{}}$, then $\LLGp [\ensuremath{\itbox{}\overline{\itbox{v}}\itbox{}}/\ensuremath{\itbox{}\overline{\itbox{x}}\itbox{}}]\ensuremath{\itbox{e}} :
  \ensuremath{\itbox{S}}$ and $\ensuremath{\itbox{S}} \LEQ \ensuremath{\itbox{T}}$ for some \ensuremath{\itbox{S}}.
\end{lemmaapp}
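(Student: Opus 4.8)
The plan is to prove this by induction on the derivation of $\Loc; \LSet; \Gamma, \ensuremath{\itbox{}\overline{\itbox{x}}\itbox{}} \colon \ensuremath{\itbox{}\overline{\itbox{T}}\itbox{}} \p \ensuremath{\itbox{e}} : \ensuremath{\itbox{T}}$, with a case analysis on the last typing rule applied. One preliminary observation streamlines the induction: by Strengthening for values (\lemref{def:value-strengthening}) each $\ensuremath{\itbox{v}_{i}\itbox{}}$ remains typable with type $\ensuremath{\itbox{S}_{i}\itbox{}}$ under \emph{any} location and layer set, so even when a subderivation of $\ensuremath{\itbox{e}}$'s typing is taken under a different location or a different layer set the induction hypothesis still applies to it with the same $\ensuremath{\itbox{}\overline{\itbox{v}}\itbox{}}$. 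Thus the only bookkeeping that needs attention is how the layer set $\LSet$ is threaded through the rules that manipulate it.

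For \rn{T-Var}, either $\ensuremath{\itbox{x}}$ is one of the $\ensuremath{\itbox{}\overline{\itbox{x}}\itbox{}}$, in which case $[\ensuremath{\itbox{}\overline{\itbox{v}}\itbox{}}/\ensuremath{\itbox{}\overline{\itbox{x}}\itbox{}}]\ensuremath{\itbox{x}}$ is the corresponding $\ensuremath{\itbox{v}_{i}\itbox{}}$ and the conclusion follows from the hypotheses together with \lemref{def:value-strengthening}, or $\ensuremath{\itbox{x}}$ is declared in $\Gamma$ and is left untouched by the substitution. For \rn{T-Field}, \rn{T-Invk}, \rn{T-New}, and \rn{T-NewL} I would apply the induction hypothesis to the immediate subexpressions and reapply the same rule: subexpression types may shrink to subtypes, but field types are stable under subclassing (\lemref{subtype-fields}) and a method signature shrinks covariantly with its receiver (\lemref{subtype-mtype}), so each premise is re-established and transitivity of $\LEQ$ absorbs the slack in the result type. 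The \ensuremath{\itbox{super}}/\ensuremath{\itbox{proceed}}/\ensuremath{\itbox{superproceed}} rules \rn{T-SuperB}, \rn{T-SuperP}, \rn{T-Proceed}, and \rn{T-SuperProceed} are handled likewise: substitution reaches only the argument expressions (the keywords themselves are the subject of \lemref{def:substitution-super}, not of this lemma), and the $\mtype$/$\pmtype$ lookups in their premises depend neither on $\Gamma$ nor on $\ensuremath{\itbox{}\overline{\itbox{v}}\itbox{}}$, so the same rule reapplies after invoking the induction hypothesis on the arguments.

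The cases requiring genuine care are \rn{T-With} and \rn{T-Swap}, since substituting into the layer subexpression $\ensuremath{\itbox{e}_{l}\itbox{}}$ may replace its static type $\ensuremath{\itbox{L}}$ by a proper normal subtype $\ensuremath{\itbox{L{\ensuremath{'}}}}$. First I would establish that normal layer subtyping preserves \ensuremath{\itbox{requires}} sets --- a short induction over the rules of \figref{CFJ:subtyping}, in which the only nontrivial rule \rn{LS-Extends} builds $\ensuremath{\itbox{L}_{1}\itbox{}} \LEQ \ensuremath{\itbox{L}_{2}\itbox{}}$ precisely when the two declare the same \ensuremath{\itbox{requires}} set --- so the premises $\ensuremath{\itbox{L req }}\LSet'$ and $\LSet \LEQ_w \LSet'$ of \rn{T-With} (respectively the computation of $\LSet_{rm}$ and the condition $\ensuremath{\itbox{L{\ensuremath{'}}}} \LEQ_w \ensuremath{\itbox{L}_{{sw}}\itbox{}}$, which follows from $\ensuremath{\itbox{L{\ensuremath{'}}}} \LEQ \ensuremath{\itbox{L}} \LEQ_w \ensuremath{\itbox{L}_{{sw}}\itbox{}}$, in \rn{T-Swap}) still hold with $\ensuremath{\itbox{L{\ensuremath{'}}}}$ in place of $\ensuremath{\itbox{L}}$. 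Next, the body $\ensuremath{\itbox{e}_{0}\itbox{}}$ was typed under $\LSet \cup \set{\ensuremath{\itbox{L}}}$ (resp.\ $\LSet_{rm} \cup \set{\ensuremath{\itbox{L}}}$); the induction hypothesis yields the substituted body at the same layer set, and since $\ensuremath{\itbox{L{\ensuremath{'}}}} \LEQ_w \ensuremath{\itbox{L}}$ gives $\LSet \cup \set{\ensuremath{\itbox{L{\ensuremath{'}}}}} \LEQ_{sw} \LSet \cup \set{\ensuremath{\itbox{L}}}$ via \rn{LSSW-Intro}, Layer Set Narrowing (\lemref{def:narrowing}) retypes it under $\LSet \cup \set{\ensuremath{\itbox{L{\ensuremath{'}}}}}$, after which \rn{T-With} (resp.\ \rn{T-Swap}) reassembles the whole expression with a type that is a subtype of the original $\ensuremath{\itbox{T}_{0}\itbox{}}$. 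I expect the main obstacle to be nothing deep, but exactly this layer-set plumbing: one has to notice that the dependence of \rn{T-With} and \rn{T-Swap} on the type of $\ensuremath{\itbox{e}_{l}\itbox{}}$ forces a simultaneous appeal to the \ensuremath{\itbox{requires}}-invariance of normal subtyping and to Narrowing, neither of which arises in the FJ substitution lemma.
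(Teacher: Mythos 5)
Your proposal is correct and follows essentially the same route as the paper's proof: induction on the typing derivation, with the only delicate cases being \rn{T-With} and \rn{T-Swap}, where the paper likewise uses the induction hypothesis on $\ensuremath{\itbox{e}_{l}\itbox{}}$ to obtain a normal subtype $\ensuremath{\itbox{L}_{0}\itbox{}} \LEQ \ensuremath{\itbox{L}}$, observes by induction on that subtyping derivation that $\ensuremath{\itbox{L}_{0}\itbox{}}$ has the same \ensuremath{\itbox{requires}} set, and then applies Layer Set Narrowing to retype the substituted body under $\LSet \cup \set{\ensuremath{\itbox{L}_{0}\itbox{}}}$ before reassembling with the same rule. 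The remaining cases are handled exactly as you describe, so no gap remains.
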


\begin{proof}
  By induction on $\Loc; \LSet; \Gamma, \ensuremath{\itbox{}\overline{\itbox{x}}\itbox{}} \colon \ensuremath{\itbox{}\overline{\itbox{T}}\itbox{}} \p \ensuremath{\itbox{e}} :
  \ensuremath{\itbox{T}}$ with case analysis on the last rule used.  
  We show main cases of \rn{T-With} and \rn{T-Swap}.

  \begin{rneqncase}{T-With}{
      \ensuremath{\itbox{e}} = \ensuremath{\itbox{with e}_{l}\itbox{ e}_{0}\itbox{}} &
      \Loc; \LSet; \Gamma, \ensuremath{\itbox{}\overline{\itbox{x}}\itbox{}} \colon \ensuremath{\itbox{}\overline{\itbox{T}}\itbox{}} \p \ensuremath{\itbox{e}_{l}\itbox{}} : \ensuremath{\itbox{L}} &
      \ensuremath{\itbox{L req }} \LSet' \\
      \LSet \LEQ_w \LSet' &
      \Loc; \LSet \cup \set{\ensuremath{\itbox{L}}}; \Gamma, \ensuremath{\itbox{}\overline{\itbox{x}}\itbox{}} \colon \ensuremath{\itbox{}\overline{\itbox{T}}\itbox{}} \p \ensuremath{\itbox{e}_{0}\itbox{}} : \ensuremath{\itbox{T}}
    }
    By the induction hypothesis, $\Loc; \LSet; \Gamma 
    \p [\ensuremath{\itbox{}\overline{\itbox{v}}\itbox{}}/\ensuremath{\itbox{}\overline{\itbox{x}}\itbox{}}]\ensuremath{\itbox{e}_{l}\itbox{}} : \ensuremath{\itbox{L}_{0}\itbox{}}$ and $\ensuremath{\itbox{L}_{0}\itbox{}} \LEQ \ensuremath{\itbox{L}}$ for some
    \ensuremath{\itbox{L}_{0}\itbox{}}.  By induction on $\ensuremath{\itbox{L}_{0}\itbox{}} \LEQ \ensuremath{\itbox{L}}$, it is easy to show that $\ensuremath{\itbox{L}_{0}\itbox{ req }} \LSet'$.  Since \ensuremath{\itbox{L}_{0}\itbox{ \(\Leq\) L}},
    we also have \ensuremath{\itbox{L}_{0}\itbox{ \(\Leq\)}_{w}\itbox{ L}} and so it is easy to show $\LSet \cup \set{\ensuremath{\itbox{L}_{0}\itbox{}}} \LEQ_w \LSet \cup \set{\ensuremath{\itbox{L}}}$.
    By the induction hypothesis,
    $\Loc; \LSet \cup \set{\ensuremath{\itbox{L}}}; \Gamma \p [\ensuremath{\itbox{}\overline{\itbox{v}}\itbox{}}/\ensuremath{\itbox{}\overline{\itbox{x}}\itbox{}}]\ensuremath{\itbox{e}_{0}\itbox{}} : \ensuremath{\itbox{S}}$ and
    $\ensuremath{\itbox{S}} \LEQ \ensuremath{\itbox{T}}$ for some \ensuremath{\itbox{S}}.  Then, by \lemref{narrowing},
    $\Loc; \LSet \cup \set{\ensuremath{\itbox{L}_{0}\itbox{}}}; \Gamma \p [\ensuremath{\itbox{}\overline{\itbox{v}}\itbox{}}/\ensuremath{\itbox{}\overline{\itbox{x}}\itbox{}}]\ensuremath{\itbox{e}_{0}\itbox{}} : \ensuremath{\itbox{S}}$;  and
    \rn{T-With} finish the case.
  \end{rneqncase}

    \begin{rneqncase}{T-Swap}{
      \ensuremath{\itbox{e}} = \ensuremath{\itbox{swap (e}_{l}\itbox{, L}_{{sw}}\itbox{) e}_{0}\itbox{}} &
      \Loc; \LSet; \Gamma, \ensuremath{\itbox{}\overline{\itbox{x}}\itbox{}} \colon \ensuremath{\itbox{}\overline{\itbox{T}}\itbox{}} \p \ensuremath{\itbox{e}_{l}\itbox{}} : \ensuremath{\itbox{L}} \\
      \ensuremath{\itbox{L}_{{sw}}\itbox{ swappable}} &
      \ensuremath{\itbox{L}} \LEQ_w \ensuremath{\itbox{L}_{{sw}}\itbox{}} &
      \ensuremath{\itbox{L req }} \LSet' \\
      \LSet_{rm} = \LSet \setminus \set{\ensuremath{\itbox{L{\ensuremath{'}}}} \mid \ensuremath{\itbox{L{\ensuremath{'}}}} \LEQ_w \ensuremath{\itbox{L}_{{sw}}\itbox{}}} &
      \LSet_{rm} \LEQ_w \LSet' \\
      \Loc; \LSet_{rm} \cup \set{\ensuremath{\itbox{L}}}; \Gamma, \ensuremath{\itbox{}\overline{\itbox{x}}\itbox{}} \colon \ensuremath{\itbox{}\overline{\itbox{T}}\itbox{}} \p \ensuremath{\itbox{e}_{0}\itbox{}} : \ensuremath{\itbox{T}}
    }
    
    By the induction hypothesis, $\Loc; \LSet; \Gp [\ensuremath{\itbox{}\overline{\itbox{v}}\itbox{}}/\ensuremath{\itbox{}\overline{\itbox{x}}\itbox{}}]\ensuremath{\itbox{e}_{l}\itbox{}} : \ensuremath{\itbox{L}_{0}\itbox{}}$ and $\ensuremath{\itbox{L}_{0}\itbox{}} \LEQ \ensuremath{\itbox{L}}$ for some
    \ensuremath{\itbox{L}_{0}\itbox{}}.  By induction on $\ensuremath{\itbox{L}_{0}\itbox{}} \LEQ \ensuremath{\itbox{L}}$, it is easy to show $\ensuremath{\itbox{L}_{0}\itbox{ req }} \LSet'$.  Since $\ensuremath{\itbox{L}_{0}\itbox{}}
    \LEQ \ensuremath{\itbox{L}}$, we have $\ensuremath{\itbox{L}_{0}\itbox{}} \LEQ_w \ensuremath{\itbox{L}}$ and $\ensuremath{\itbox{L}_{0}\itbox{}} \LEQ_w \ensuremath{\itbox{L}_{{sw}}\itbox{}}$ and $\LSet_{rm} \cup
    \set{\ensuremath{\itbox{L}_{0}\itbox{}}} \LEQ_w \LSet_{rm} \cup \set{\ensuremath{\itbox{L}}}$.  By
    the induction hypothesis, $\Loc; \LSet_{rm} \cup \set{\ensuremath{\itbox{L}}};
    \Gp [\ensuremath{\itbox{}\overline{\itbox{v}}\itbox{}}/\ensuremath{\itbox{}\overline{\itbox{x}}\itbox{}}]\ensuremath{\itbox{e}_{0}\itbox{}} : \ensuremath{\itbox{S}}$ and $\ensuremath{\itbox{S}}
    \LEQ \ensuremath{\itbox{T}}$ for some \ensuremath{\itbox{S}}.  Then, by \lemref{narrowing},
    $\Loc; \LSet_{rm} \cup \set{\ensuremath{\itbox{L}_{0}\itbox{}}}; \Gp [\ensuremath{\itbox{}\overline{\itbox{v}}\itbox{}}/\ensuremath{\itbox{}\overline{\itbox{x}}\itbox{}}]\ensuremath{\itbox{e}_{0}\itbox{}} : \ensuremath{\itbox{S}}$;
    and \rn{T-With} finishes the case. \qed
  \end{rneqncase}
\end{proof}

\begin{lemmaapp}\label{lem:sub-req}
  If \(\ensuremath{\itbox{L}_{1}\itbox{}} \LEQ_w \ensuremath{\itbox{L}_{2}\itbox{}}\) and \(\ensuremath{\itbox{L}_{1}\itbox{ req }} \LSet_1\) and \(\ensuremath{\itbox{L}_{2}\itbox{ req }} \LSet_2\), then
  \(\LSet_1 \LEQ_w \LSet_2\).
\end{lemmaapp}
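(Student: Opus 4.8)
The plan is to proceed by induction on the derivation of $\ensuremath{\itbox{L}_{1}\itbox{}} \LEQ_w \ensuremath{\itbox{L}_{2}\itbox{}}$, following the three rules \rn{LSw-Refl}, \rn{LSw-Trans}, and \rn{LSw-Extends} that define weak layer subtyping in \figref{CFJ:subtyping}.  Throughout I use that the \ensuremath{\itbox{requires}} set of a layer name is uniquely determined by $\LT$ (with the convention that $\ensuremath{\itbox{Base}}$ requires $\emptyset$, so that whenever $\LSet_2$ is the \ensuremath{\itbox{requires}} set of $\ensuremath{\itbox{Base}}$ the goal $\LSet_1 \LEQ_w \LSet_2$ holds vacuously by \rn{LSS-Intro}), together with the easy auxiliary fact that layer-set weak subtyping $\LEQ_w$ is transitive (unfold \rn{LSS-Intro} twice and chain the two witnesses by \rn{LSw-Trans}).

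The crucial case is \rn{LSw-Extends}, where $\ensuremath{\itbox{L}_{1}\itbox{}} \triangleleft \ensuremath{\itbox{L}_{2}\itbox{}}$.  Then this \ensuremath{\itbox{extends}} step comes from a layer definition in $\LT$, so $\ensuremath{\itbox{L}_{1}\itbox{}} \in \dom(\LT)$ and $\LT(\ensuremath{\itbox{L}_{1}\itbox{}})$ is a layer definition with superlayer $\ensuremath{\itbox{L}_{2}\itbox{}}$ and \ensuremath{\itbox{requires}} set $\LSet_1$.  If $\ensuremath{\itbox{L}_{2}\itbox{}} = \ensuremath{\itbox{Base}}$ we are done since $\LSet_2 = \emptyset$; otherwise $\ensuremath{\itbox{L}_{2}\itbox{}} \in \dom(\LT)$ by the sanity conditions.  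Since $\p (\CT, \LT)\ensuremath{\itbox{ ok}}$ we have $\LT(\ensuremath{\itbox{L}_{1}\itbox{}})\ensuremath{\itbox{ ok}}$, derived by \rn{T-Layer} or \rn{T-LayerSW}; in both rules the superlayer in the \ensuremath{\itbox{extends}} clause is $\ensuremath{\itbox{L}_{2}\itbox{}}$ and the premise constraining \emph{its} \ensuremath{\itbox{requires}} clause is exactly what we want.  \rn{T-Layer} has $\LSet_1 \LEQ_w \LSet'$ with $\LSet'$ the \ensuremath{\itbox{requires}} set of $\ensuremath{\itbox{L}_{2}\itbox{}}$, i.e.\ $\LSet' = \LSet_2$, which is literally $\LSet_1 \LEQ_w \LSet_2$; \rn{T-LayerSW} has the stronger premise $\LSet_1 = \LSet'$, i.e.\ $\LSet_1 = \LSet_2$, whence $\LSet_1 \LEQ_w \LSet_2$ by reflexivity (\rn{LSS-Intro} with \rn{LSw-Refl}).

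The remaining cases are routine.  For \rn{LSw-Refl} we have $\ensuremath{\itbox{L}_{1}\itbox{}} = \ensuremath{\itbox{L}_{2}\itbox{}}$, hence $\LSet_1 = \LSet_2$, and $\LSet_1 \LEQ_w \LSet_1$ by \rn{LSS-Intro} with \rn{LSw-Refl} on each element.  For \rn{LSw-Trans} there is an intermediate $\ensuremath{\itbox{L}_{3}\itbox{}}$ with $\ensuremath{\itbox{L}_{1}\itbox{}} \LEQ_w \ensuremath{\itbox{L}_{3}\itbox{}}$ and $\ensuremath{\itbox{L}_{3}\itbox{}} \LEQ_w \ensuremath{\itbox{L}_{2}\itbox{}}$; writing $\LSet_3$ for the \ensuremath{\itbox{requires}} set of $\ensuremath{\itbox{L}_{3}\itbox{}}$, the induction hypothesis gives $\LSet_1 \LEQ_w \LSet_3$ and $\LSet_3 \LEQ_w \LSet_2$, and transitivity of $\LEQ_w$ on sets closes the case.  (The degenerate possibility $\ensuremath{\itbox{L}_{3}\itbox{}} = \ensuremath{\itbox{Base}}$ forces $\ensuremath{\itbox{L}_{2}\itbox{}} = \ensuremath{\itbox{Base}}$: since $\ensuremath{\itbox{Base}} \notin \dom(\LT)$, nothing extends $\ensuremath{\itbox{Base}}$, so $\ensuremath{\itbox{Base}} \LEQ_w \ensuremath{\itbox{L}_{2}\itbox{}}$ can only hold reflexively; then the goal is again vacuous.)

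I do not anticipate a genuine obstacle.  The only points requiring care are the bookkeeping around $\ensuremath{\itbox{Base}}$ (which has no entry in $\LT$, so its \ensuremath{\itbox{requires}} set must be treated as $\emptyset$, trivializing the corresponding subgoals) and the small transitivity lemma for $\LEQ_w$ on layer sets.  The substance is simply that the well-formedness rules \rn{T-Layer} and \rn{T-LayerSW} bake covariance (respectively, equality) of \ensuremath{\itbox{requires}} clauses into single \ensuremath{\itbox{extends}} steps, and the induction lifts this to the reflexive transitive closure.
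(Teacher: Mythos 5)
Your proof is correct and follows essentially the same route as the paper's: induction on the derivation of \(\ensuremath{\itbox{L}_{1}\itbox{}} \LEQ_w \ensuremath{\itbox{L}_{2}\itbox{}}\), discharging the \rn{LSw-Extends} case via the premise on \ensuremath{\itbox{requires}} clauses in the layer well-formedness rules. You are somewhat more thorough than the paper (which only cites \rn{T-Layer}), since you also handle \rn{T-LayerSW} and the \ensuremath{\itbox{Base}} bookkeeping explicitly, but the substance is identical.
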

\begin{proof}
  By induction on \(\ensuremath{\itbox{L}_{1}\itbox{}} \LEQ_w \ensuremath{\itbox{L}_{2}\itbox{}}\).
  Use \rn{T-Layer} in the case for \rn{LSw-Extends}. \qed
\end{proof}

We prove a stronger property than \lemref{def:layer-set-wf1};
in the statement below,
$\mathrel{(\ensuremath{\itbox{\(\Leq\)}_{w}\itbox{}};\ensuremath{\itbox{req}})}$ stands for the composition
of the two relations \ensuremath{\itbox{\(\Leq\)}_{w}\itbox{}} and \ensuremath{\itbox{req}}.

\begin{lemmaapp} \label{lem:layer-set-wf1}
  If $\LSet \, \WF$, then $\forall \ensuremath{\itbox{L}} \in \LSet, \forall \ensuremath{\itbox{L{\ensuremath{'}}}} \text{
    s.t. } \ensuremath{\itbox{L}} \mathrel{(\ensuremath{\itbox{\(\Leq\)}_{w}\itbox{}};\ensuremath{\itbox{req}})} \ensuremath{\itbox{L{\ensuremath{'}}}}, \exists \ensuremath{\itbox{L{\ensuremath{'}}{\ensuremath{'}}}} \in \LSet. \ensuremath{\itbox{L{\ensuremath{'}}{\ensuremath{'}}}} \LEQ_w \ensuremath{\itbox{L{\ensuremath{'}}}}$.  

\end{lemmaapp}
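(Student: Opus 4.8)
The plan is to prove the lemma by induction on the derivation of $\LSet\,\WF$, with a case analysis on the last rule used (\rn{Wf-Empty}, \rn{Wf-With}, \rn{Wf-Swap}). The strengthening over \lemref{def:layer-set-wf1} --- using the composite relation $(\LEQ_w;\ensuremath{\itbox{req}})$ in place of $\ensuremath{\itbox{req}}$ --- is what makes the induction close: in the step cases the layer $L \in \LSet$ we are handed need not itself require the target layer, only some $\LEQ_w$-superlayer of it does, and the induction hypothesis must be able to absorb exactly that. Conversely, the strengthened form trivially implies \lemref{def:layer-set-wf1} by taking the $\LEQ_w$-step to be reflexivity.

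\rn{Wf-Empty} is vacuous. For \rn{Wf-With}, we have $\LSet = \LSet_0 \cup \{L_a\}$ with $\LSet_0\,\WF$, $L_a$ requiring exactly $\LSet'$, and $\LSet_0 \LEQ_w \LSet'$. Given $L \in \LSet$ with $L \LEQ_w L_0$ and $L_0$ requiring $L'$, split on $L \in \LSet_0$ versus $L = L_a$. If $L \in \LSet_0$, the induction hypothesis on $\LSet_0$ produces a witness already lying in $\LSet_0 \subseteq \LSet$. If $L = L_a$, then from $L_a \LEQ_w L_0$ and \lemref{sub-req} we obtain that $\LSet'$ weakly covers the full \ensuremath{\itbox{requires}} set of $L_0$; since $L'$ lies in that set, \rn{LSS-Intro} supplies some $P \in \LSet'$ with $P \LEQ_w L'$, and $\LSet_0 \LEQ_w \LSet'$ then supplies some $Q \in \LSet_0$ with $Q \LEQ_w P$; transitivity of $\LEQ_w$ gives $Q \LEQ_w L'$ with $Q \in \LSet$.

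The \rn{Wf-Swap} case carries the real content and is the only place the global well-formedness conditions enter. Here $\LSet = \LSet_{rm} \cup \{L\}$ with $\LSet_0\,\WF$, $L \LEQ_w L_{sw}$ and $L_{sw}$ swappable, $L$ requiring exactly $\LSet'$, $\LSet_{rm} = \LSet_0 \setminus \{L'' \mid L'' \LEQ_w L_{sw}\}$, and $\LSet_{rm} \LEQ_w \LSet'$; suppose $L_x \in \LSet$ with $L_x \LEQ_w L_0$ and $L_0$ requiring $M$. If $L_x = L$, the argument is verbatim the $L = L_a$ subcase above, using $\LSet_{rm} \LEQ_w \LSet'$ for the last step. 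If $L_x \in \LSet_{rm} \subseteq \LSet_0$, the induction hypothesis yields $N \in \LSet_0$ with $N \LEQ_w M$; if $N$ survives the swap, i.e.\ $N \in \LSet_{rm}$, we are done, so assume $N \LEQ_w L_{sw}$. Since every layer has a unique superlayer and \ensuremath{\itbox{extends}} is acyclic, the $\LEQ_w$-ancestors of $N$ form a chain, hence $L_{sw}$ and $M$ are $\LEQ_w$-comparable. The orientation $M \triangleleft^{+} L_{sw}$ is impossible: a proper sublayer of a swappable layer is necessarily typed by \rn{T-LayerSW}, whose premise $\neg\exists L_2 \in \dom(\LT).\ L_2 \mathrel{\ensuremath{\itbox{req}}} M$ contradicts $L_0$ requiring $M$. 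In the remaining case $L_{sw} \LEQ_w M$ (which also covers $M = L_{sw}$ and $M = \ensuremath{\itbox{Base}}$), we have $L \LEQ_w L_{sw} \LEQ_w M$, so the freshly activated layer $L \in \LSet$ is the required witness.

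The expected main obstacle is precisely this last situation in \rn{Wf-Swap}, where the witness handed back by the induction hypothesis is one of the layers deactivated by the swap. Handling it cleanly requires three facts to fit together: the tree shape of the layer hierarchy (so $L_{sw}$ and $M$ are comparable), the global invariant enforced by \rn{T-LayerSW} that no proper sublayer of a swappable layer is required by any layer (ruling out the ``bad'' orientation $M \triangleleft^{+} L_{sw}$), and the \rn{Wf-Swap} premise $L \LEQ_w L_{sw}$ (so $L$ can stand in for any swapped-out witness). Some care is also needed at the boundary $M = L_{sw}$, since \rn{T-LayerSW} restricts only \emph{proper} sublayers and a swappable layer may itself be required; this case is subsumed by the $L_{sw} \LEQ_w M$ branch.
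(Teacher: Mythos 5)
Your proof is correct and follows essentially the same route as the paper's: induction on the derivation of $\LSet\,\WF$, with the swap case resolved by observing that any witness removed by the swap must satisfy $\ensuremath{\itbox{L}_{{sw}}\itbox{}} \LEQ_w \ensuremath{\itbox{L{\ensuremath{'}}}}$ (since a required layer cannot be a proper sublayer of a swappable one, by the global condition of the sublayer typing rule), so the freshly activated layer can replace it. Your write-up is in fact slightly more explicit than the paper's about why $\ensuremath{\itbox{L}_{{sw}}\itbox{}}$ and the required layer are comparable (the chain structure of the \ensuremath{\itbox{extends}} hierarchy), a step the paper leaves implicit.
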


\begin{proof}
  Induction on the derivation of $\LSet \, \WF$.

  \begin{rncase}{Wf-Empty}
    Trivial.
  \end{rncase}

  \begin{rneqncase}{Wf-With}{
      \LSet = \LSet_0 \cup \set{\ensuremath{\itbox{L}_{a}\itbox{}}} \andalso
      \LSet_0 \, \WF \andalso
      \ensuremath{\itbox{L}_{a}\itbox{ req }} \LSet' \andalso
      \LSet_0 \LEQ_w \LSet'
    }
    By the induction hypothesis, we have $\forall \ensuremath{\itbox{L}} \in
    \LSet_0. \forall \ensuremath{\itbox{L{\ensuremath{'}}}} \text{ s.t. } \ensuremath{\itbox{L}} \mathrel{(\ensuremath{\itbox{\(\Leq\)}_{w}\itbox{}};\ensuremath{\itbox{req}})} \ensuremath{\itbox{L{\ensuremath{'}}}}. \exists \ensuremath{\itbox{L{\ensuremath{'}}{\ensuremath{'}}}}
    \in \LSet_0. \ensuremath{\itbox{L{\ensuremath{'}}{\ensuremath{'}}}} \LEQ_w \ensuremath{\itbox{L{\ensuremath{'}}}}$.  By $\LSet_0 \LEQ_w \LSet'$ and \lemref{sub-req}, we have
    $\forall \ensuremath{\itbox{L{\ensuremath{'}}}} \text{ s.t. } \ensuremath{\itbox{L}_{a}\itbox{}} \mathrel{(\ensuremath{\itbox{\(\Leq\)}_{w}\itbox{}};\ensuremath{\itbox{req}})} \ensuremath{\itbox{L{\ensuremath{'}}}}.  \exists \ensuremath{\itbox{L{\ensuremath{'}}{\ensuremath{'}}}}
    \in \LSet_0. \ensuremath{\itbox{L{\ensuremath{'}}{\ensuremath{'}}}} \LEQ_w \ensuremath{\itbox{L{\ensuremath{'}}}}$.  So, $\forall \ensuremath{\itbox{L}} \in
    \LSet. \forall \ensuremath{\itbox{L{\ensuremath{'}}}} \text{ s.t. } \ensuremath{\itbox{L}} \mathrel{(\ensuremath{\itbox{\(\Leq\)}_{w}\itbox{}};\ensuremath{\itbox{req}})} \ensuremath{\itbox{L{\ensuremath{'}}}}. \exists \ensuremath{\itbox{L{\ensuremath{'}}{\ensuremath{'}}}}
    \in \LSet. \ensuremath{\itbox{L{\ensuremath{'}}{\ensuremath{'}}}} \LEQ_w \ensuremath{\itbox{L{\ensuremath{'}}}}$.
  \end{rneqncase}

  \begin{rneqncase}{Wf-Swap}{
      \LSet = \LSet_{rm} \cup \set{\ensuremath{\itbox{L}_{a}\itbox{}}} \andalso
      \LSet_0\, \WF \andalso
      \ensuremath{\itbox{L}_{{sw}}\itbox{ swappable}} \andalso \ensuremath{\itbox{L}_{a}\itbox{}} \LEQ_w \ensuremath{\itbox{L}_{{sw}}\itbox{}} \\
      \ensuremath{\itbox{L}_{a}\itbox{ req }} \LSet_a \andalso
      \LSet_{rm} = \LSet_0 \setminus \set{\ensuremath{\itbox{L{\ensuremath{'}}}} \mid \ensuremath{\itbox{L{\ensuremath{'}}}} \LEQ_w \ensuremath{\itbox{L}_{{sw}}\itbox{}}}  \andalso
    \LSet_{rm} \LEQ_w \LSet_a
    }

    By the induction hypothesis, we have $$\forall \ensuremath{\itbox{L}} \in
    \LSet_0. \forall \ensuremath{\itbox{L{\ensuremath{'}}}} \text{ s.t. } \ensuremath{\itbox{L}} \mathrel{(\ensuremath{\itbox{\(\Leq\)}_{w}\itbox{}};\ensuremath{\itbox{req}})} \ensuremath{\itbox{L{\ensuremath{'}}}}. \exists \ensuremath{\itbox{L{\ensuremath{'}}{\ensuremath{'}}}}
    \in \LSet_0. \ensuremath{\itbox{L{\ensuremath{'}}{\ensuremath{'}}}} \LEQ_w \ensuremath{\itbox{L{\ensuremath{'}}}},$$ and so $$\forall \ensuremath{\itbox{L}} \in
    \LSet_{rm}. \forall \ensuremath{\itbox{L{\ensuremath{'}}}} \text{ s.t. } \ensuremath{\itbox{L}} \mathrel{(\ensuremath{\itbox{\(\Leq\)}_{w}\itbox{}};\ensuremath{\itbox{req}})} \ensuremath{\itbox{L{\ensuremath{'}}}},
    \exists \ensuremath{\itbox{L{\ensuremath{'}}{\ensuremath{'}}}} \in \LSet_0 \text{ s.t. } \ensuremath{\itbox{L{\ensuremath{'}}{\ensuremath{'}}}} \LEQ_w \ensuremath{\itbox{L{\ensuremath{'}}}}.$$
    In fact, we can show that $$\forall \ensuremath{\itbox{L}} \in
    \LSet_{rm}. \forall \ensuremath{\itbox{L{\ensuremath{'}}}} \text{ s.t. } \ensuremath{\itbox{L}} \mathrel{(\ensuremath{\itbox{\(\Leq\)}_{w}\itbox{}};\ensuremath{\itbox{req}})} \ensuremath{\itbox{L{\ensuremath{'}}}},
    \exists \ensuremath{\itbox{L{\ensuremath{'}}{\ensuremath{'}}}} \in (\LSet_{rm} \cup \set{\ensuremath{\itbox{L}_{a}\itbox{}}}) \text{ s.t. } \ensuremath{\itbox{L{\ensuremath{'}}{\ensuremath{'}}}} \LEQ_w \ensuremath{\itbox{L{\ensuremath{'}}}}:$$
    if \(\ensuremath{\itbox{L{\ensuremath{'}}{\ensuremath{'}}}} \in \set{\ensuremath{\itbox{L}_{b}\itbox{}} \mid \ensuremath{\itbox{L}_{b}\itbox{}} \LEQ_w \ensuremath{\itbox{L}}_{sw}}\) for given
    \ensuremath{\itbox{L}} and \ensuremath{\itbox{L{\ensuremath{'}}}}, then it must be the case that
    \(\ensuremath{\itbox{L}}_{sw} \LEQ_w \ensuremath{\itbox{L{\ensuremath{'}}}}\) because \ensuremath{\itbox{L{\ensuremath{'}}}} is required by some weak supertype of \ensuremath{\itbox{L}} and so
    must not be a sublayer of a swappable and that \(\ensuremath{\itbox{L}_{a}\itbox{}} \LEQ_w \ensuremath{\itbox{L{\ensuremath{'}}}}\).
    
    By $\ensuremath{\itbox{L}_{a}\itbox{ req }} \LSet_a$ and $\LSet_{rm} \LEQ_w \LSet_a$,
    we finally have
    $$\forall \ensuremath{\itbox{L}} \in
    \LSet. \forall \ensuremath{\itbox{L{\ensuremath{'}}}} \text{ s.t. } \ensuremath{\itbox{L}} \mathrel{(\ensuremath{\itbox{\(\Leq\)}_{w}\itbox{}};\ensuremath{\itbox{req}})} \ensuremath{\itbox{L{\ensuremath{'}}}},
    \exists \ensuremath{\itbox{L{\ensuremath{'}}{\ensuremath{'}}}} \in (\LSet_{rm} \cup \set{\ensuremath{\itbox{L}_{a}\itbox{}}}) \text{ s.t. } \ensuremath{\itbox{L{\ensuremath{'}}{\ensuremath{'}}}} \LEQ_w \ensuremath{\itbox{L{\ensuremath{'}}}}.$$ \qed
  \end{rneqncase}
\end{proof}

\begin{lemmaapp}[\lemref{def:layer-set-wf2}]\label{lem:layer-set-wf2}
  If $\LSet\, \WF$ and $\mtype(\ensuremath{\itbox{m}},\ensuremath{\itbox{C}},\LSet) \defined$ and
  $\mtype(\ensuremath{\itbox{m}},\ensuremath{\itbox{D}},\LSet) \undf$ and \ensuremath{\itbox{C \(\triangleleft\) D}}, then 
  $(\exists \ensuremath{\itbox{L{\ensuremath{'}}}} \in \LSet.  \ensuremath{\itbox{proceed}}
  \not \in \pmbody(\ensuremath{\itbox{m}},\ensuremath{\itbox{C}},\ensuremath{\itbox{L{\ensuremath{'}}}})) \text{
    or } \mtype(\ensuremath{\itbox{m}},\ensuremath{\itbox{C}},\emptyset, \LSet) \defined$.
\end{lemmaapp}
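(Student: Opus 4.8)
The plan is to argue by induction on the derivation of $\LSet\,\WF$, in the same style as the proof of \lemref{layer-set-wf1}. Before the case analysis I dispatch the trivial observation that if $\ensuremath{\itbox{m}}$ is a base method of $\ensuremath{\itbox{C}}$ (i.e.\ $\CT(\ensuremath{\itbox{C}})$ declares $\ensuremath{\itbox{m}}$) then $\mtype(\ensuremath{\itbox{m}},\ensuremath{\itbox{C}},\emptyset,\LSet)$ is defined by \rn{MT-Class}, which is the right disjunct; hence in all cases below I may assume $\ensuremath{\itbox{m}}$ is \emph{not} a base method of $\ensuremath{\itbox{C}}$. Under that assumption $\mtype(\ensuremath{\itbox{m}},\ensuremath{\itbox{C}},\LSet)\defined$ must have been derived by \rn{MT-PMethod} (it cannot be \rn{MT-Class}, and it cannot be \rn{MT-Super} since that would yield $\mtype(\ensuremath{\itbox{m}},\ensuremath{\itbox{D}},\LSet)\defined$, contrary to hypothesis), so some layer of $\LSet$ carries a partial method $\ensuremath{\itbox{C.m}}$ (possibly by layer inheritance). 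I will also repeatedly use \lemref{sublayer-mtype}, noting that $\LSet_0\subseteq\LSet$ implies $\LSet\LEQ_{sw}\LSet_0$, so that $\mtype(\ensuremath{\itbox{m}},\ensuremath{\itbox{D}},\LSet)\undf$ descends to every smaller layer set on which I need it.

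The base case \rn{Wf-Empty} is immediate, since $\mtype(\ensuremath{\itbox{m}},\ensuremath{\itbox{C}},\emptyset,\emptyset)=\mtype(\ensuremath{\itbox{m}},\ensuremath{\itbox{C}},\emptyset)$ is defined by hypothesis. For \rn{Wf-With}, $\LSet=\LSet_0\cup\set{\ensuremath{\itbox{L}_a}}$ with $\LSet_0\,\WF$, $\ensuremath{\itbox{L}_a\itbox{ req }}\LSet'$, and $\LSet_0\LEQ_w\LSet'$. If $\pmbody(\ensuremath{\itbox{m}},\ensuremath{\itbox{C}},\ensuremath{\itbox{L}_a})$ is undefined, the \rn{MT-PMethod} derivation uses a layer of $\LSet_0$, so $\mtype(\ensuremath{\itbox{m}},\ensuremath{\itbox{C}},\LSet_0)\defined$ and the induction hypothesis on $\LSet_0$ (whose witness layer lies in $\LSet_0\subseteq\LSet$) closes the case. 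If $\pmbody(\ensuremath{\itbox{m}},\ensuremath{\itbox{C}},\ensuremath{\itbox{L}_a})=\ensuremath{\itbox{}\overline{\itbox{x}}\itbox{.e}}\IN\ensuremath{\itbox{L}_{0}}$ with $\ensuremath{\itbox{proceed}}\notin\ensuremath{\itbox{e}}$, then $\ensuremath{\itbox{L}_a}\in\LSet$ is the required witness. The interesting sub-case is $\ensuremath{\itbox{proceed}}\in\ensuremath{\itbox{e}}$: here $\ensuremath{\itbox{e}}$ is the body of $\LT(\ensuremath{\itbox{L}_{0}})(\ensuremath{\itbox{C.m}})$, so \rn{T-PMethod} and \rn{T-Proceed} give $\mtype(\ensuremath{\itbox{m}},\ensuremath{\itbox{C}},\LSet'',\LSet''\cup\set{\ensuremath{\itbox{L}_{0}}})\defined$ where $\ensuremath{\itbox{L}_{0}\itbox{ req }}\LSet''$. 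Since $\ensuremath{\itbox{L}_a}\LEQ_w\ensuremath{\itbox{L}_{0}}$, \lemref{sub-req} gives $\LSet'\LEQ_w\LSet''$, hence $\LSet_0\LEQ_w\LSet''$; combining this with \lemref{layer-set-wf1} yields $\LSet\LEQ_w\LSet''\cup\set{\ensuremath{\itbox{L}_{0}}}$ and $\LSet_0\LEQ_{sw}\LSet''$, and then \lemref{sublayer-mtype} turns $\mtype(\ensuremath{\itbox{m}},\ensuremath{\itbox{C}},\LSet'',\LSet''\cup\set{\ensuremath{\itbox{L}_{0}}})\defined$ into $\mtype(\ensuremath{\itbox{m}},\ensuremath{\itbox{C}},\LSet_0,\LSet)\defined$. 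As $\mtype(\ensuremath{\itbox{m}},\ensuremath{\itbox{D}},\LSet)\undf$ and $\ensuremath{\itbox{m}}$ is not a base method, this forces $\mtype(\ensuremath{\itbox{m}},\ensuremath{\itbox{C}},\LSet_0)\defined$ by \rn{MT-PMethod}, and the induction hypothesis on $\LSet_0$ finishes.

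The main obstacle is \rn{Wf-Swap}, where $\LSet=\LSet_{rm}\cup\set{\ensuremath{\itbox{L}_a}}$ with $\LSet_{rm}=\LSet_0\setminus\set{\ensuremath{\itbox{L{\ensuremath{'}}}}\mid\ensuremath{\itbox{L{\ensuremath{'}}}}\LEQ_w\ensuremath{\itbox{L}_{sw}}}$, $\ensuremath{\itbox{L}_a}\LEQ_w\ensuremath{\itbox{L}_{sw}}$, and $\ensuremath{\itbox{L}_{sw}\itbox{ swappable}}$. The difficulty is that $\LSet_{rm}$ need not be well formed, so the induction hypothesis is available only for $\LSet_0$, and the witness it returns may be a \emph{swapped-out} layer (a sublayer of $\ensuremath{\itbox{L}_{sw}}$ not in $\LSet_{rm}$). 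The repair uses the restrictions \rn{T-LayerSW} imposes on sublayers of swappable layers — that each such layer has the same \ensuremath{\itbox{requires}} clause as $\ensuremath{\itbox{L}_{sw}}$, cannot be \ensuremath{\itbox{require}}d by any layer, and may only \emph{override} partial methods already present at $\ensuremath{\itbox{L}_{sw}}$. Concretely I split on whether $\pmtype(\ensuremath{\itbox{m}},\ensuremath{\itbox{C}},\ensuremath{\itbox{L}_{sw}})$ is defined. If it is not, then no sublayer of $\ensuremath{\itbox{L}_{sw}}$ can carry $\ensuremath{\itbox{C.m}}$ at all, so $\pmbody(\ensuremath{\itbox{m}},\ensuremath{\itbox{C}},\ensuremath{\itbox{L}_a})$ is undefined and the \rn{MT-PMethod} derivation uses a layer of $\LSet_{rm}\subseteq\LSet_0$; the induction hypothesis on $\LSet_0$ then necessarily returns a layer on which $\pmtype$ is defined, which therefore lies in $\LSet_{rm}\subseteq\LSet$. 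If $\pmtype(\ensuremath{\itbox{m}},\ensuremath{\itbox{C}},\ensuremath{\itbox{L}_{sw}})$ is defined, then so are $\pmtype(\ensuremath{\itbox{m}},\ensuremath{\itbox{C}},\ensuremath{\itbox{L}_a})$ and $\pmbody(\ensuremath{\itbox{m}},\ensuremath{\itbox{C}},\ensuremath{\itbox{L}_a})$; if this body has no \ensuremath{\itbox{proceed}} we are done with witness $\ensuremath{\itbox{L}_a}$, and otherwise we re-run the \rn{T-Proceed}/\lemref{sub-req}/\lemref{layer-set-wf1}/\lemref{sublayer-mtype} analysis of the \rn{Wf-With} case — now using the \emph{equality} of the \ensuremath{\itbox{requires}} clauses of $\ensuremath{\itbox{L}_a}$, $\ensuremath{\itbox{L}_{sw}}$ and the declaring layer, exactly as in \lemref{layer-set-wf1} — to reduce to $\mtype(\ensuremath{\itbox{m}},\ensuremath{\itbox{C}},\LSet_0)\defined$ and the induction hypothesis. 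The delicate point, and where I expect most of the work, is checking that this reduction is genuinely well founded (i.e.\ that the witness eventually obtained lies in $\LSet_{rm}\cup\set{\ensuremath{\itbox{L}_a}}$ rather than looping back through a removed layer); the clause ``no layer \ensuremath{\itbox{require}}s a sublayer of a swappable'' from \rn{T-LayerSW} is what makes the required-layer sets along the chain avoid the removed layers and hence live inside $\LSet_{rm}$.
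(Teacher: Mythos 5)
Your overall strategy (induction on the derivation of $\LSet\,\WF$) is the same as the paper's, and your treatment of \rn{Wf-Empty} and \rn{Wf-With} is sound: your \rn{Wf-With} argument is essentially the contrapositive of the paper's (you derive from \rn{T-Proceed} that $\mtype$ \emph{is} defined on the required set and push that down to $\mtype(\ensuremath{\itbox{m}},\ensuremath{\itbox{C}},\LSet_0)\defined$, where the paper shows the body \emph{cannot} contain \ensuremath{\itbox{proceed}} because that $\mtype$ is undefined), and both routes work there because the witness produced by the induction hypothesis on $\LSet_0$ survives into $\LSet = \LSet_0 \cup \set{\ensuremath{\itbox{L}_{a}\itbox{}}}$. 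The genuine gap is exactly where you flag ``the delicate point'': Case B of \rn{Wf-Swap} with $\ensuremath{\itbox{proceed}} \in \pmbody(\ensuremath{\itbox{m}},\ensuremath{\itbox{C}},\ensuremath{\itbox{L}_{a}\itbox{}})$. Your reduction lands at $\mtype(\ensuremath{\itbox{m}},\ensuremath{\itbox{C}},\LSet_0)\defined$ and one appeal to the induction hypothesis on $\LSet_0$ --- but the witness that hypothesis returns may itself be a swapped-out sublayer $\ensuremath{\itbox{L{\ensuremath{'}}}}$ of $\ensuremath{\itbox{L}_{sw}\itbox{}}$, i.e.\ a layer of $\LSet_0 \setminus \LSet_{rm}$ that is not in $\LSet$. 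At that point you are stuck: $\ensuremath{\itbox{L}_{a}\itbox{}}$ cannot serve as the witness (its body contains \ensuremath{\itbox{proceed}} by the sub-case assumption), $\ensuremath{\itbox{L{\ensuremath{'}}}}$ is not in $\LSet$, and nothing in the \emph{unstrengthened} statement transfers \ensuremath{\itbox{proceed}}-freeness from $\ensuremath{\itbox{L{\ensuremath{'}}}}$ to $\ensuremath{\itbox{L}_{a}\itbox{}}$ or to some layer of $\LSet_{rm}$: $\noconflict$ and the shared \ensuremath{\itbox{requires}} clause constrain signatures, not bodies, and the ``no layer \ensuremath{\itbox{require}}s a sublayer of a swappable'' clause only keeps required-layer chains out of the removed set --- it does not manufacture a \ensuremath{\itbox{proceed}}-free body inside $\LSet_{rm}\cup\set{\ensuremath{\itbox{L}_{a}\itbox{}}}$. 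There is also no ``eventual'' well-founded descent to appeal to: the induction hypothesis is invoked once and its answer can already be the bad one.

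The paper closes this hole by strengthening the induction hypothesis rather than by a case split. It proves, by the same induction, the predicate $\npr(\ensuremath{\itbox{m}},\ensuremath{\itbox{C}},\LSet)$: either $\mtype(\ensuremath{\itbox{m}},\ensuremath{\itbox{C}},\emptyset,\LSet)$ is defined, or there is a witness $\ensuremath{\itbox{L{\ensuremath{'}}}} \in \LSet$ with $\ensuremath{\itbox{proceed}} \not\in \pmbody(\ensuremath{\itbox{m}},\ensuremath{\itbox{C}},\ensuremath{\itbox{L{\ensuremath{'}}}})$ \emph{and}, for every swappable $\ensuremath{\itbox{L{\ensuremath{'}}{\ensuremath{'}}}}$ with $\ensuremath{\itbox{L{\ensuremath{'}}}} \LEQ_w \ensuremath{\itbox{L{\ensuremath{'}}{\ensuremath{'}}}}$, every $\ensuremath{\itbox{L{\ensuremath{'}}{\ensuremath{'}}{\ensuremath{'}}}} \LEQ_w \ensuremath{\itbox{L{\ensuremath{'}}{\ensuremath{'}}}}$ also satisfies $\ensuremath{\itbox{proceed}} \not\in \pmbody(\ensuremath{\itbox{m}},\ensuremath{\itbox{C}},\ensuremath{\itbox{L{\ensuremath{'}}{\ensuremath{'}}{\ensuremath{'}}}})$. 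That extra clause is established in the \rn{Wf-With} case by observing that all sublayers of a swappable share the same \ensuremath{\itbox{requires}} clause, so the \rn{T-Proceed} contradiction that forces the newly activated layer's body to be \ensuremath{\itbox{proceed}}-free applies uniformly to every sibling; and it is precisely what rescues \rn{Wf-Swap}: if the witness handed back by the induction hypothesis is a swapped-out sublayer of $\ensuremath{\itbox{L}_{sw}\itbox{}}$, the strengthened clause immediately yields $\ensuremath{\itbox{L}_{a}\itbox{}}$ itself as a replacement witness. You should adopt this strengthening; with it, your case split on whether $\pmtype(\ensuremath{\itbox{m}},\ensuremath{\itbox{C}},\ensuremath{\itbox{L}_{sw}\itbox{}})$ is defined becomes unnecessary.
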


\begin{proof}
  \sloppy
  We prove by induction on the derivation of $\WF$
  a stronger property:
  If $\LSet \, \WF$ and $\mtype(\ensuremath{\itbox{m}},\ensuremath{\itbox{C}},\LSet) \defined$ and
  $\mtype(\ensuremath{\itbox{m}},\ensuremath{\itbox{D}},\LSet) \undf$ and \ensuremath{\itbox{C \(\triangleleft\) D}}, then 
  $(\exists \ensuremath{\itbox{L{\ensuremath{'}}}} \in \LSet. \ensuremath{\itbox{proceed}} \not \in \pmbody(\ensuremath{\itbox{m}},\ensuremath{\itbox{C}},\ensuremath{\itbox{L{\ensuremath{'}}}})
  \land (\forall \ensuremath{\itbox{L{\ensuremath{'}}{\ensuremath{'}}}}, \ensuremath{\itbox{L{\ensuremath{'}}{\ensuremath{'}}{\ensuremath{'}}}} \text{ s.t. } \ensuremath{\itbox{L{\ensuremath{'}}}} \LEQ_w \ensuremath{\itbox{L{\ensuremath{'}}{\ensuremath{'}}}} \land
                                             \ensuremath{\itbox{L{\ensuremath{'}}{\ensuremath{'}} swappable}} \land
                                             \ensuremath{\itbox{L{\ensuremath{'}}{\ensuremath{'}}{\ensuremath{'}}}} \LEQ_w \ensuremath{\itbox{L{\ensuremath{'}}{\ensuremath{'}}}}.
         \ensuremath{\itbox{proceed}} \not \in \pmbody(\ensuremath{\itbox{m}},\ensuremath{\itbox{C}},\ensuremath{\itbox{L{\ensuremath{'}}{\ensuremath{'}}{\ensuremath{'}}}})))$ 
   or $\mtype(\ensuremath{\itbox{m}},\ensuremath{\itbox{C}},\emptyset, \LSet)$ is defined.

  In what follows, we define predicate
  $\npr(\ensuremath{\itbox{m}},\ensuremath{\itbox{C}},\LSet)$ by $(\exists \ensuremath{\itbox{L{\ensuremath{'}}}} \in \LSet.
  \ensuremath{\itbox{proceed}} \not \in \pmbody(\ensuremath{\itbox{m}},\ensuremath{\itbox{C}},\ensuremath{\itbox{L{\ensuremath{'}}}})
  \land (\forall \ensuremath{\itbox{L{\ensuremath{'}}{\ensuremath{'}}}}, \ensuremath{\itbox{L{\ensuremath{'}}{\ensuremath{'}}{\ensuremath{'}}}} \text{ s.t. } \ensuremath{\itbox{L{\ensuremath{'}}}} \LEQ_w \ensuremath{\itbox{L{\ensuremath{'}}{\ensuremath{'}}}} \land \ensuremath{\itbox{L{\ensuremath{'}}{\ensuremath{'}} swappable}} \land \ensuremath{\itbox{L{\ensuremath{'}}{\ensuremath{'}}{\ensuremath{'}}}} \LEQ_w \ensuremath{\itbox{L{\ensuremath{'}}{\ensuremath{'}}}}. \ensuremath{\itbox{proceed}} \not \in \pmbody(\ensuremath{\itbox{m}},\ensuremath{\itbox{C}},\ensuremath{\itbox{L{\ensuremath{'}}{\ensuremath{'}}{\ensuremath{'}}}})))$ or $
\mtype(\ensuremath{\itbox{m}},\ensuremath{\itbox{C}},\emptyset, \LSet)$ is defined.

  \begin{rncase}{Wf-Empty}
    Trivial.
  \end{rncase}

  \begin{rneqncase}{Wf-With}{
      \LSet = \LSet_0 \cup \set{\ensuremath{\itbox{L}_{a}\itbox{}}} \andalso
      \LSet_0 \, \WF \andalso
      \ensuremath{\itbox{L}_{a}\itbox{ req }} \LSet' \andalso
      \LSet_0 \LEQ_w \LSet'
    }
    If $\mtype(\ensuremath{\itbox{m}},\ensuremath{\itbox{C}},\LSet_0)$ is defined, by the induction
    hypothesis, $\npr(\ensuremath{\itbox{m}},\ensuremath{\itbox{C}},\LSet_0)$ holds.  Since
    $\LSet = \LSet_0 \cup \set{\ensuremath{\itbox{L}_{a}\itbox{}}}$, $\npr(\ensuremath{\itbox{m}},\ensuremath{\itbox{C}},\LSet)$ also
    holds.

    Otherwise, it must be the case that $\mtype(\ensuremath{\itbox{m}},\ensuremath{\itbox{C}},\LSet_0)
    \undf$ and $\mtype(\ensuremath{\itbox{m}},\ensuremath{\itbox{C}},\set{\ensuremath{\itbox{L}_{a}\itbox{}}}) \defined$.  Since $\LSet
    \LEQ_w \LSet_0 \LEQ_w \LSet'$ and neither $\mtype(\ensuremath{\itbox{m}},\ensuremath{\itbox{C}},\LSet_0)$ nor
    $\mtype(\ensuremath{\itbox{m}},\ensuremath{\itbox{D}},\LSet)$ is defined, $\mtype(\ensuremath{\itbox{m}},\ensuremath{\itbox{C}},\LSet',\LSet'
    \cup \set{\ensuremath{\itbox{L}_{a}\itbox{}}})$ is $\undf$.  Then, $\ensuremath{\itbox{proceed}} \not \in
    \pmbody(\ensuremath{\itbox{m}},\ensuremath{\itbox{C}},\ensuremath{\itbox{L}_{a}\itbox{}})$ holds since if the partial method had
    \ensuremath{\itbox{proceed}}, it would contradict the fact that \ensuremath{\itbox{L}_{a}\itbox{}}
    is well-typed (in particular,
    $\mtype(\ensuremath{\itbox{m}},\ensuremath{\itbox{C}},\LSet',\LSet' \cup \set{\ensuremath{\itbox{L}_{a}\itbox{}}})$ would not be
    defined, as opposed to what \rn{T-Proceed} requires).
    If \ensuremath{\itbox{L}_{a}\itbox{}} is a sublayer of swappable layer
    \ensuremath{\itbox{L}_{{sw}}\itbox{}}, for all $\ensuremath{\itbox{L}_{b}\itbox{}} \LEQ_w \ensuremath{\itbox{L}_{{sw}}\itbox{}}$, $\ensuremath{\itbox{proceed}} \not \in
    \pmbody(\ensuremath{\itbox{m}},\ensuremath{\itbox{C}},\ensuremath{\itbox{L}_{b}\itbox{}})$ through the same argument (note that \ensuremath{\itbox{L}_{b}\itbox{ req }}\(\LSet'\)).  Then,
    $\npr(\ensuremath{\itbox{m}},\ensuremath{\itbox{C}},\LSet)$ holds.
  \end{rneqncase}

  \begin{rneqncase}{Wf-Swap}{
      \LSet = \LSet_{rm} \cup \set{\ensuremath{\itbox{L}_{a}\itbox{}}} \andalso
      \LSet_0\, \WF \andalso
      \ensuremath{\itbox{L}_{{sw}}\itbox{ swappable}} \andalso \ensuremath{\itbox{L}_{a}\itbox{}} \LEQ_w \ensuremath{\itbox{L}_{{sw}}\itbox{}} \\
      \ensuremath{\itbox{L}_{a}\itbox{ req }} \LSet_a \andalso
      \LSet_{rm} = \LSet_0 \setminus \set{\ensuremath{\itbox{L{\ensuremath{'}}}} \mid \ensuremath{\itbox{L{\ensuremath{'}}}} \LEQ_w \ensuremath{\itbox{L}_{{sw}}\itbox{}}}  \andalso
      \LSet_{rm} \LEQ_w \LSet_a
    }
    It is easy to show \(\LSet \LEQ_{sw} \LSet_0\) and vice versa.
    By \lemref{sublayer-mtype}, $\mtype(\ensuremath{\itbox{m}}, \ensuremath{\itbox{C}}, \LSet_0)$ is defined
    and $\mtype(\ensuremath{\itbox{m}}, \ensuremath{\itbox{D}}, \LSet_0)$ is undefined.  By the induction hypothesis,
    $\npr(\ensuremath{\itbox{m}},\ensuremath{\itbox{C}},\LSet_0)$, that is,
    either (1) $\mtype(\ensuremath{\itbox{m}},\ensuremath{\itbox{C}},\emptyset, \LSet_0)$ is defined, or (2)
    $(\exists \ensuremath{\itbox{L{\ensuremath{'}}}} \in \LSet_0.
    \ensuremath{\itbox{proceed}} \not \in \pmbody(\ensuremath{\itbox{m}},\ensuremath{\itbox{C}},\ensuremath{\itbox{L{\ensuremath{'}}}})
    \land (\forall \ensuremath{\itbox{L{\ensuremath{'}}{\ensuremath{'}}}}, \ensuremath{\itbox{L{\ensuremath{'}}{\ensuremath{'}}{\ensuremath{'}}}} \text{ s.t. } \ensuremath{\itbox{L{\ensuremath{'}}}} \LEQ_w \ensuremath{\itbox{L{\ensuremath{'}}{\ensuremath{'}}}} \land \ensuremath{\itbox{L{\ensuremath{'}}{\ensuremath{'}} swappable}} \land \ensuremath{\itbox{L{\ensuremath{'}}{\ensuremath{'}}{\ensuremath{'}}}} \LEQ_w \ensuremath{\itbox{L{\ensuremath{'}}{\ensuremath{'}}}}. \ensuremath{\itbox{proceed}} \not \in \pmbody(\ensuremath{\itbox{m}},\ensuremath{\itbox{C}},\ensuremath{\itbox{L{\ensuremath{'}}{\ensuremath{'}}{\ensuremath{'}}}})))$.

    We show \(\npr(\ensuremath{\itbox{m}},\ensuremath{\itbox{C}}, \LSet)\) by case analysis.  In the case
    (1), we have $\mtype(\ensuremath{\itbox{m}},\ensuremath{\itbox{C}},\emptyset, \LSet)$ defined by
    \lemref{sublayer-mtype}.  The case (2) is also easy: if \(\ensuremath{\itbox{L{\ensuremath{'}}}} \in \LSet_{rm}\), then
    \(\ensuremath{\itbox{L{\ensuremath{'}}}} \in \LSet\); otherwise, \(\ensuremath{\itbox{proceed}} \not \in \pmbody(\ensuremath{\itbox{m}},\ensuremath{\itbox{C}},\ensuremath{\itbox{L}_{a}\itbox{}})\) because
    \(\ensuremath{\itbox{L{\ensuremath{'}}}} \LEQ_w \ensuremath{\itbox{L}_{{sw}}\itbox{}}\) and \(\ensuremath{\itbox{L}_{a}\itbox{}} \LEQ_w \ensuremath{\itbox{L}_{{sw}}\itbox{}}\) and \(\ensuremath{\itbox{L}_{{sw}}\itbox{ swappable}}\), hence
    \(\npr(\ensuremath{\itbox{m}}, \ensuremath{\itbox{C}}, \LSet)\). \qed

  \end{rneqncase}
  
\end{proof}

\begin{lemmaapp}[\lemref{def:wp}]\label{lem:wp}
  If $\set{\ensuremath{\itbox{}\overline{\itbox{L}}\itbox{}}}\, \WF$ and
  $\mtype(\ensuremath{\itbox{m}},\ensuremath{\itbox{C}},\set{\ensuremath{\itbox{}\overline{\itbox{L}}\itbox{}}},\set{\ensuremath{\itbox{}\overline{\itbox{L}}\itbox{}}}) = \ensuremath{\itbox{}\overline{\itbox{T}}\itbox{\(\rightarrow\)T}_{0}\itbox{}}$, then
  $\ndp(\ensuremath{\itbox{m}},\ensuremath{\itbox{C}},\ensuremath{\itbox{}\overline{\itbox{L}}\itbox{}},\ensuremath{\itbox{}\overline{\itbox{L}}\itbox{}})$.
\end{lemmaapp}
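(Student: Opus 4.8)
The plan is to prove the statement by well-founded induction on the class \ensuremath{\itbox{C}}, ordered along the superclass chain (which terminates by Sanity Condition~\ref{lang:extends}). Since in the goal both the cursor prefix and the full layer list are \ensuremath{\itbox{}\overline{\itbox{L}}\itbox{}} (so that, in the notation of the \rn{NDP}-rules, $\ensuremath{\itbox{}\overline{\itbox{L}}\itbox{}_{1}\itbox{}} = \ensuremath{\itbox{}\overline{\itbox{L}}\itbox{}}$ and $\ensuremath{\itbox{}\overline{\itbox{L}}\itbox{}_{2}\itbox{}} = \bullet$), it suffices to exhibit one of \rn{NDP-Class}, \rn{NDP-Layer}, \rn{NDP-Super} applying at \ensuremath{\itbox{C}}; in particular the premise of \rn{NDP-Super} needed for \ensuremath{\itbox{D}} is again of the shape $\ndp(\ensuremath{\itbox{m}},\ensuremath{\itbox{D}},\ensuremath{\itbox{}\overline{\itbox{L}}\itbox{}},\ensuremath{\itbox{}\overline{\itbox{L}}\itbox{}})$, which is exactly what the induction hypothesis provides. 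I would begin by noting that \ensuremath{\itbox{C}} cannot be \ensuremath{\itbox{Object}}: there are no base or partial methods for \ensuremath{\itbox{Object}} and no superclass of \ensuremath{\itbox{Object}}, so $\mtype$ at \ensuremath{\itbox{Object}} is never defined. Hence \ensuremath{\itbox{class C \(\triangleleft\) D {\char'173}...{\char'175}}} for some \ensuremath{\itbox{D}}.

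Next I would split into three cases according to how the method is reached at \ensuremath{\itbox{C}}. If \ensuremath{\itbox{C}} declares a base method named \ensuremath{\itbox{m}}, then \rn{NDP-Class} immediately gives $\ndp(\ensuremath{\itbox{m}},\ensuremath{\itbox{C}},\ensuremath{\itbox{}\overline{\itbox{L}}\itbox{}},\ensuremath{\itbox{}\overline{\itbox{L}}\itbox{}})$. Otherwise, if some activated layer $\ensuremath{\itbox{L}_{0}\itbox{}} \in \set{\ensuremath{\itbox{}\overline{\itbox{L}}\itbox{}}}$ has $\pmbody(\ensuremath{\itbox{m}},\ensuremath{\itbox{C}},\ensuremath{\itbox{L}_{0}\itbox{}})$ defined and free of \ensuremath{\itbox{proceed}} calls, then \rn{NDP-Layer} applies. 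The remaining case — \ensuremath{\itbox{C}} has no base \ensuremath{\itbox{m}} and no \ensuremath{\itbox{proceed}}-free partial \ensuremath{\itbox{C.m}} among the activated layers — is the crux, and is handled by passing to \ensuremath{\itbox{D}} via \rn{NDP-Super}.

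In that remaining case I would first argue that $\mtype(\ensuremath{\itbox{m}},\ensuremath{\itbox{D}},\set{\ensuremath{\itbox{}\overline{\itbox{L}}\itbox{}}})$ is defined. Suppose not; since $\set{\ensuremath{\itbox{}\overline{\itbox{L}}\itbox{}}}\, \WF$, $\mtype(\ensuremath{\itbox{m}},\ensuremath{\itbox{C}},\set{\ensuremath{\itbox{}\overline{\itbox{L}}\itbox{}}})$ is defined, and \ensuremath{\itbox{C \(\triangleleft\) D}}, \lemref{def:layer-set-wf2} yields either a \ensuremath{\itbox{proceed}}-free partial \ensuremath{\itbox{C.m}} in $\set{\ensuremath{\itbox{}\overline{\itbox{L}}\itbox{}}}$ — contradicting the case assumption — or that $\mtype(\ensuremath{\itbox{m}},\ensuremath{\itbox{C}},\emptyset,\set{\ensuremath{\itbox{}\overline{\itbox{L}}\itbox{}}})$ is defined; but a derivation of the latter cannot end in \rn{MT-PMethod} (the layer set is empty) nor in \rn{MT-Class} (\ensuremath{\itbox{C}} has no base \ensuremath{\itbox{m}}), so it ends in \rn{MT-Super}, which forces $\mtype(\ensuremath{\itbox{m}},\ensuremath{\itbox{D}},\set{\ensuremath{\itbox{}\overline{\itbox{L}}\itbox{}}},\set{\ensuremath{\itbox{}\overline{\itbox{L}}\itbox{}}})$ to be defined — a contradiction. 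With $\mtype(\ensuremath{\itbox{m}},\ensuremath{\itbox{D}},\set{\ensuremath{\itbox{}\overline{\itbox{L}}\itbox{}}},\set{\ensuremath{\itbox{}\overline{\itbox{L}}\itbox{}}})$ in hand and \ensuremath{\itbox{D}} strictly above \ensuremath{\itbox{C}}, the induction hypothesis gives $\ndp(\ensuremath{\itbox{m}},\ensuremath{\itbox{D}},\ensuremath{\itbox{}\overline{\itbox{L}}\itbox{}},\ensuremath{\itbox{}\overline{\itbox{L}}\itbox{}})$, and \rn{NDP-Super} concludes $\ndp(\ensuremath{\itbox{m}},\ensuremath{\itbox{C}},\ensuremath{\itbox{}\overline{\itbox{L}}\itbox{}},\ensuremath{\itbox{}\overline{\itbox{L}}\itbox{}})$.

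I expect the main obstacle to be precisely this last case — showing that ``falling through'' a class that has no \ensuremath{\itbox{proceed}}-free definition really does force \ensuremath{\itbox{m}} to reappear at the next class up. This is where layer-set well-formedness is indispensable: through \lemref{def:layer-set-wf2} it captures the invariant that any partial method calling \ensuremath{\itbox{proceed}} has all the layers it \ensuremath{\itbox{require}}s currently present, so a chain of \ensuremath{\itbox{proceed}} calls can never run off the end. The residual care is bookkeeping: treating ``$\ensuremath{\itbox{proceed}} \not\in \pmbody(\ensuremath{\itbox{m}},\ensuremath{\itbox{C}},\ensuremath{\itbox{L}_{0}\itbox{}})$'' as entailing that $\pmbody(\ensuremath{\itbox{m}},\ensuremath{\itbox{C}},\ensuremath{\itbox{L}_{0}\itbox{}})$ is defined wherever it is used, and checking that the inversions of $\mtype$ into \rn{MT-Super} are legitimate, which they are by the \rn{MT-Class}/\rn{MT-PMethod} exclusions established above.
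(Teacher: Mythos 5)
Your proof is correct and takes essentially the same route as the paper's: induction along the superclass chain from \ensuremath{\itbox{C}} to \ensuremath{\itbox{Object}}, with \lemref{def:layer-set-wf2} supplying the key disjunction in the case where the search must fall through to the superclass, and \rn{NDP-Super} plus the induction hypothesis closing that case. The only difference is organizational --- the paper cases first on whether $\mtype(\ensuremath{\itbox{m}},\ensuremath{\itbox{D}},\set{\ensuremath{\itbox{}\overline{\itbox{L}}\itbox{}}})$ is defined while you case on which \rn{NDP} rule fires, and you make explicit the inversion of $\mtype(\ensuremath{\itbox{m}},\ensuremath{\itbox{C}},\emptyset,\set{\ensuremath{\itbox{}\overline{\itbox{L}}\itbox{}}})$ that the paper leaves implicit --- but the substance is identical.
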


\begin{proof}
  By induction on the length of \(\ensuremath{\itbox{C}} \LEQ \ensuremath{\itbox{D}} \LEQ \cdots \ensuremath{\itbox{Object}}\).
  The case where the length is zero is trivial.
  \begin{eqncase}{
      \ensuremath{\itbox{C \(\triangleleft\) D}} \andalso
      \mtype(\ensuremath{\itbox{m}},\ensuremath{\itbox{D}},\set{\ensuremath{\itbox{}\overline{\itbox{L}}\itbox{}}}) \undf
    }
    \begin{sloppypar}\noindent
    By $\set{\ensuremath{\itbox{}\overline{\itbox{L}}\itbox{}}} \, \WF$ and \lemref{layer-set-wf2}, we have
    $\mtype(\ensuremath{\itbox{m}},\ensuremath{\itbox{C}},\emptyset, \set{\ensuremath{\itbox{}\overline{\itbox{L}}\itbox{}}})$ is defined or
    $\exists \ensuremath{\itbox{L}_{1}\itbox{}} \in \set{\ensuremath{\itbox{}\overline{\itbox{L}}\itbox{}}}. \ensuremath{\itbox{proceed}} \not \in \pmbody(\ensuremath{\itbox{m}},\ensuremath{\itbox{C}},\ensuremath{\itbox{L}_{1}\itbox{}})$.
    If $\mtype(\ensuremath{\itbox{m}},\ensuremath{\itbox{C}},\emptyset, \set{\ensuremath{\itbox{}\overline{\itbox{L}}\itbox{}}})$ is defined, class \ensuremath{\itbox{C}}
    must have the definition of method \ensuremath{\itbox{m}} since
    $\mtype(\ensuremath{\itbox{m}},\ensuremath{\itbox{D}},\set{\ensuremath{\itbox{}\overline{\itbox{L}}\itbox{}}})$ is undefined, and so \rn{NDP-Class}
    finishes the case.  In the other case, \rn{NDP-Layer} finishes the case.
    \end{sloppypar}
  \end{eqncase}
  
  \begin{eqncase}{
      \ensuremath{\itbox{C \(\triangleleft\) D}} \andalso
      \mtype(\ensuremath{\itbox{m}},\ensuremath{\itbox{D}},\set{\ensuremath{\itbox{}\overline{\itbox{L}}\itbox{}}}) \text{ defined}
    }
    By the induction hypothesis, $\ndp(\ensuremath{\itbox{m}},\ensuremath{\itbox{D}},\ensuremath{\itbox{}\overline{\itbox{L}}\itbox{}},\ensuremath{\itbox{}\overline{\itbox{L}}\itbox{}})$ holds. 
    Then, \rn{NDP-Super} finishes the case. \qed
  \end{eqncase}
\end{proof}

\begin{lemmaapp}[\lemref{def:wp_mtype}]\label{lem:wp_mtype}
  If $\ndp(\ensuremath{\itbox{m}},\ensuremath{\itbox{C}},\ensuremath{\itbox{}\overline{\itbox{L}}\itbox{{\ensuremath{'}}}},\ensuremath{\itbox{}\overline{\itbox{L}}\itbox{}})$, then $\mtype(\ensuremath{\itbox{m}},\ensuremath{\itbox{C}},\set{\ensuremath{\itbox{}\overline{\itbox{L}}\itbox{{\ensuremath{'}}}}}, \set{\ensuremath{\itbox{}\overline{\itbox{L}}\itbox{}}}) = \ensuremath{\itbox{}\overline{\itbox{T}}\itbox{\(\rightarrow\)T}_{0}\itbox{}}$ for
  some \ensuremath{\itbox{}\overline{\itbox{T}}\itbox{}} and \ensuremath{\itbox{T}_{0}\itbox{}}.
\end{lemmaapp}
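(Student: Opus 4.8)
The plan is to prove the lemma by induction on the derivation of $\ndp(\ensuremath{\itbox{m}},\ensuremath{\itbox{C}},\ensuremath{\itbox{}\overline{\itbox{L}}\itbox{{\ensuremath{'}}}},\ensuremath{\itbox{}\overline{\itbox{L}}\itbox{}})$, with a case analysis on the last rule applied. As a preliminary observation I would note that the defining rules of $\pmbody$ (\rn{PMB-Layer}, \rn{PMB-Super}) and those of $\pmtype$ (\rn{PMT-Layer}, \rn{PMT-Super}) have exactly parallel structure---both branch on whether $\LT(\ensuremath{\itbox{L}})(\ensuremath{\itbox{C.m}})$ is defined and otherwise recurse into the superlayer---so a straightforward induction on the layer hierarchy above \ensuremath{\itbox{L}} shows that $\pmbody(\ensuremath{\itbox{m}},\ensuremath{\itbox{C}},\ensuremath{\itbox{L}})$ is defined if and only if $\pmtype(\ensuremath{\itbox{m}},\ensuremath{\itbox{C}},\ensuremath{\itbox{L}})$ is defined. (The ``only if'' direction is already contained in \lemref{def:pmtype}.)

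In the case \rn{NDP-Class}, class \ensuremath{\itbox{C}} declares a base method \ensuremath{\itbox{m}}, so \rn{MT-Class} immediately yields $\mtype(\ensuremath{\itbox{m}},\ensuremath{\itbox{C}},\set{\ensuremath{\itbox{}\overline{\itbox{L}}\itbox{{\ensuremath{'}}}}},\set{\ensuremath{\itbox{}\overline{\itbox{L}}\itbox{}}})$ defined. In the case \rn{NDP-Layer}, there is some $\ensuremath{\itbox{L}_{0}\itbox{}} \in \set{\ensuremath{\itbox{}\overline{\itbox{L}}\itbox{{\ensuremath{'}}}}}$ with $\ensuremath{\itbox{proceed}} \not\in \pmbody(\ensuremath{\itbox{m}},\ensuremath{\itbox{C}},\ensuremath{\itbox{L}_{0}\itbox{}})$; in particular $\pmbody(\ensuremath{\itbox{m}},\ensuremath{\itbox{C}},\ensuremath{\itbox{L}_{0}\itbox{}})$ is defined, hence so is $\pmtype(\ensuremath{\itbox{m}},\ensuremath{\itbox{C}},\ensuremath{\itbox{L}_{0}\itbox{}})$ by the preliminary observation, and then \rn{MT-PMethod} (instantiated with this \ensuremath{\itbox{L}_{0}\itbox{}}) gives the result. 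In the inductive case \rn{NDP-Super}, we have $\ensuremath{\itbox{C \(\triangleleft\) D}}$ and $\ndp(\ensuremath{\itbox{m}},\ensuremath{\itbox{D}},\ensuremath{\itbox{}\overline{\itbox{L}}\itbox{}},\ensuremath{\itbox{}\overline{\itbox{L}}\itbox{}})$, so by the induction hypothesis $\mtype(\ensuremath{\itbox{m}},\ensuremath{\itbox{D}},\set{\ensuremath{\itbox{}\overline{\itbox{L}}\itbox{}}},\set{\ensuremath{\itbox{}\overline{\itbox{L}}\itbox{}}})$ is defined. To conclude that $\mtype(\ensuremath{\itbox{m}},\ensuremath{\itbox{C}},\set{\ensuremath{\itbox{}\overline{\itbox{L}}\itbox{{\ensuremath{'}}}}},\set{\ensuremath{\itbox{}\overline{\itbox{L}}\itbox{}}})$ is defined, I would split on which \rn{MT-} rule applies at \ensuremath{\itbox{C}}: if \ensuremath{\itbox{C}} declares \ensuremath{\itbox{m}}, use \rn{MT-Class}; else if some layer in $\set{\ensuremath{\itbox{}\overline{\itbox{L}}\itbox{{\ensuremath{'}}}}}$ provides a partial method \ensuremath{\itbox{C.m}} (i.e.\ $\pmtype(\ensuremath{\itbox{m}},\ensuremath{\itbox{C}},\ensuremath{\itbox{L}})$ is defined for some $\ensuremath{\itbox{L}} \in \set{\ensuremath{\itbox{}\overline{\itbox{L}}\itbox{{\ensuremath{'}}}}}$), use \rn{MT-PMethod}; otherwise the side conditions of \rn{MT-Super} are met and it reduces the lookup at \ensuremath{\itbox{C}} to $\mtype(\ensuremath{\itbox{m}},\ensuremath{\itbox{D}},\set{\ensuremath{\itbox{}\overline{\itbox{L}}\itbox{}}},\set{\ensuremath{\itbox{}\overline{\itbox{L}}\itbox{}}})$, which is defined by the induction hypothesis.

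The proof is essentially routine; the only point requiring care is the \rn{NDP-Super} case, where one must \emph{not} assume that \ensuremath{\itbox{C}} lacks \ensuremath{\itbox{m}} or that no layer in $\ensuremath{\itbox{}\overline{\itbox{L}}\itbox{{\ensuremath{'}}}}$ defines \ensuremath{\itbox{C.m}}---the rule \rn{NDP-Super} imposes no such side condition---so the three-way split on which \rn{MT-} rule fires for \ensuremath{\itbox{C}} is genuinely needed, and the induction hypothesis is invoked only to discharge the ``fall-through to the superclass'' subcase. No lemmas beyond \lemref{def:pmtype} (together with the trivial converse about definedness of $\pmbody$ versus $\pmtype$) are required.
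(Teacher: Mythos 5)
Your proof is correct and follows the same route as the paper, which proves this lemma by induction on the derivation of $\ndp(\ensuremath{\itbox{m}},\ensuremath{\itbox{C}},\ensuremath{\itbox{}\overline{\itbox{L}}\itbox{{\ensuremath{'}}}},\ensuremath{\itbox{}\overline{\itbox{L}}\itbox{}})$ (the paper leaves the case analysis implicit). Your added observations---that definedness of $\pmbody$ and $\pmtype$ coincide, and that \rn{NDP-Super} carries no side condition forcing the three-way split on which \rn{MT-} rule fires at \ensuremath{\itbox{C}}---are exactly the details needed to make the induction go through.
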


\begin{proof}
  By induction on $\ndp(\ensuremath{\itbox{m}},\ensuremath{\itbox{C}},\ensuremath{\itbox{}\overline{\itbox{L}}\itbox{{\ensuremath{'}}}},\ensuremath{\itbox{}\overline{\itbox{L}}\itbox{}})$. \qed
\end{proof}

\begin{lemmaapp}\label{lem:inh_pmbody}
  If $\ensuremath{\itbox{L.C.m}}; \LSet; \Gp \ensuremath{\itbox{e}} : \ensuremath{\itbox{T}}$ and $\ensuremath{\itbox{L{\ensuremath{'}} \(\Leq\)}_{w}\itbox{ L}}$, then
  $\ensuremath{\itbox{L{\ensuremath{'}}.C.m}}; \LSet; \Gp \ensuremath{\itbox{e}} : \ensuremath{\itbox{T}}$.
\end{lemmaapp}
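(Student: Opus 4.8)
The plan is to prove this by a straightforward induction on the derivation of $L.C.m;\,\LSet;\,\Gamma \p e : T$, with a case analysis on the last typing rule used. The guiding observation is that the location $\Loc$ is inspected only by the rules \rn{T-SuperB}, \rn{T-SuperP}, \rn{T-Proceed}, and \rn{T-SuperProceed}; every other typing rule merely threads $\Loc$ unchanged into its premises. Moreover, \rn{T-SuperB} cannot be the last rule here, because its conclusion carries a class-method location of the form $C.m$ rather than a layer location $L.C.m$. Hence the only cases that require real work are \rn{T-SuperP}, \rn{T-Proceed}, and \rn{T-SuperProceed}.

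For the remaining ``structural'' rules (\rn{T-Var}, \rn{T-Field}, \rn{T-Invk}, \rn{T-New}, \rn{T-NewL}, \rn{T-With}, \rn{T-Swap}) I would simply apply the induction hypothesis to each subderivation --- noting that \rn{T-With} and \rn{T-Swap} pass the same location to the subderivation for their body --- and re-apply the rule. For \rn{T-SuperP} and \rn{T-Proceed}, suppose the original derivation uses $\LSet'$ as the \ensuremath{\itbox{requires}} set of $L$, and let $\LSet''$ be the \ensuremath{\itbox{requires}} set of $L'$. Since $L' \LEQ_w L$, \lemref{sub-req} gives $\LSet'' \LEQ_w \LSet'$, and combining this with $L' \LEQ_w L$ yields $\LSet'' \cup \{L'\} \LEQ_w \LSet' \cup \{L\}$, and hence also $\LSet'' \cup \{L'\} \LEQ_{sw} \LSet' \cup \{L\}$ and $\LSet'' \LEQ_{sw} \LSet'$. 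Then \lemref{sublayer-mtype} transports the $\mtype$ premise of the rule --- $\mtype(m',E,\LSet''\cup\{L'\})$ for \rn{T-SuperP}, and $\mtype(m,C,\LSet'',\LSet''\cup\{L'\})$ for \rn{T-Proceed} --- to the very same signature $\overline{T}\rightarrow T_0$; the two $\subseteq$ side conditions of \lemref{sublayer-mtype} hold trivially in these instances. Applying the induction hypothesis to the argument subderivations and re-applying the rule closes these cases.

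The case \rn{T-SuperProceed} is where I expect the only friction. Here the original derivation uses $L \triangleleft L_0$ together with $\pmtype(m,C,L_0) = \overline{T}\rightarrow T_0$, where $L_0$ is the direct superlayer of $L$ (I rename the rule's metavariable to avoid clashing with the lemma's $L'$). If $L' = L$, the conclusion is immediate. Otherwise $L'$ is a proper weak subtype of $L$, so $L'$ has a direct superlayer $L_1$ lying on the chain from $L'$ up to $L$, whence $L_1 \LEQ_w L \LEQ_w L_0$. By \lemref{sublayer-pmtype}, $\pmtype(m,C,L_1) = \overline{T}\rightarrow T_0$, so \rn{T-SuperProceed} applies at location $L'.C.m$ with superlayer $L_1$; the induction hypothesis takes care of the argument subderivations. (The degenerate possibility $L' = \ensuremath{\itbox{Base}}$ cannot occur: it would force $L = \ensuremath{\itbox{Base}}$, but $\ensuremath{\itbox{Base}} \notin \dom(\LT)$, so no partial-method location of the form $\ensuremath{\itbox{Base}}.C.m$ exists.) Overall this is a routine location-change lemma; the main thing to keep straight is the superlayer bookkeeping in \rn{T-SuperProceed} and checking that the side conditions of \lemref{sublayer-mtype} and \lemref{sublayer-pmtype} are met at each application.
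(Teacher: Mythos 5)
Your proposal is correct and follows essentially the same route as the paper's proof: induction on the typing derivation, with \lemref{sub-req} giving $\LSet'' \LEQ_w \LSet'$, \lemref{sublayer-mtype} handling \rn{T-SuperP} and \rn{T-Proceed}, and \lemref{sublayer-pmtype} applied to the direct superlayer of $\ensuremath{\itbox{L{\ensuremath{'}}}}$ handling \rn{T-SuperProceed}. If anything, you are slightly more explicit than the paper about ruling out \rn{T-SuperB}, checking the $\subseteq$ side conditions of \lemref{sublayer-mtype}, and splitting the \rn{T-SuperProceed} case on whether $\ensuremath{\itbox{L{\ensuremath{'}}}} = \ensuremath{\itbox{L}}$.
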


\begin{proof}
  Suppose that $\ensuremath{\itbox{L req }} \LSet_0$ and $\ensuremath{\itbox{L{\ensuremath{'}} req }} \LSet_1$.  
  Since \ensuremath{\itbox{L}} and \ensuremath{\itbox{L{\ensuremath{'}}}} are well-formed, $\LSet_1 \LEQ_w \LSet_0$.  We
  proceed by induction on $\ensuremath{\itbox{L.C.m}}; \LSet; \Gp \ensuremath{\itbox{e}} : \ensuremath{\itbox{T}}$.
  We show only main cases.

  \begin{rneqncase}{T-SuperP}{
      \ensuremath{\itbox{e}} = \ensuremath{\itbox{super.m{\ensuremath{'}}(}\overline{\itbox{e}}\itbox{)}} \andalso
      \ensuremath{\itbox{class C \(\triangleleft\) D}} \andalso
      \ensuremath{\itbox{L req }} \LSet_0 \\
      \mtype(\ensuremath{\itbox{m{\ensuremath{'}}}},\ensuremath{\itbox{D}},\LSet_0 \cup \set{\ensuremath{\itbox{L}}}) = \ensuremath{\itbox{}\overline{\itbox{T}}\itbox{\(\rightarrow\)T}} \andalso
      \ensuremath{\itbox{L.C.m}}; \LSet; \Gp \ensuremath{\itbox{}\overline{\itbox{e}}\itbox{}} :
      \ensuremath{\itbox{}\overline{\itbox{S}}\itbox{}} \andalso \ensuremath{\itbox{}\overline{\itbox{S}}\itbox{}} \LEQ \ensuremath{\itbox{}\overline{\itbox{T}}\itbox{}}
    }
    
    Since \ensuremath{\itbox{L{\ensuremath{'}} \(\Leq\)}_{w}\itbox{ L}} and $\LSet_1 \LEQ_w \LSet_0$, we have $\LSet_1
    \cup \set{\ensuremath{\itbox{L{\ensuremath{'}}}}} \LEQ_w \LSet_0 \cup \set{\ensuremath{\itbox{L}}}$.  Then, by
    \lemref{sublayer-mtype}, $\mtype(\ensuremath{\itbox{m{\ensuremath{'}}}}, \ensuremath{\itbox{D}}, \LSet_1 \cup
    \set{\ensuremath{\itbox{L{\ensuremath{'}}}}}) = \ensuremath{\itbox{}\overline{\itbox{T}}\itbox{\(\rightarrow\)T}}$.  The induction hypothesis and
    \rn{T-SuperP} finish the case.
  \end{rneqncase}

  \begin{rneqncase}{T-Proceed}{
      \ensuremath{\itbox{e}} = \ensuremath{\itbox{proceed(}\overline{\itbox{e}}\itbox{)}} \andalso
      \ensuremath{\itbox{L req }} \LSet_0 \andalso
      \mtype(\ensuremath{\itbox{m}},\ensuremath{\itbox{C}},\LSet_0,\LSet_0 \cup \set{\ensuremath{\itbox{L}}}) = \ensuremath{\itbox{}\overline{\itbox{T}}\itbox{\(\rightarrow\)T}} \\
      \ensuremath{\itbox{L.C.m}}; \LGp \ensuremath{\itbox{}\overline{\itbox{e}}\itbox{}} : \ensuremath{\itbox{}\overline{\itbox{S}}\itbox{}} \andalso
      \ensuremath{\itbox{}\overline{\itbox{S}}\itbox{}} \LEQ \ensuremath{\itbox{}\overline{\itbox{T}}\itbox{}}
    }
    Since $\LSet_1 \LEQ_w \LSet_0$, we have $\LSet_1 \cup \set{\ensuremath{\itbox{L{\ensuremath{'}}}}}
    \LEQ_w \LSet_0 \cup \set{\ensuremath{\itbox{L}}}$. Then, by \lemref{sublayer-mtype},
    $\mtype(\ensuremath{\itbox{m}}, \ensuremath{\itbox{C}},\LSet_1,\LSet_1 \cup \set{\ensuremath{\itbox{L{\ensuremath{'}}}}}) = \ensuremath{\itbox{}\overline{\itbox{T}}\itbox{\(\rightarrow\)T}}$.
    The induction hypothesis and \rn{T-Proceed} finish the case.
  \end{rneqncase}

  \begin{rneqncase}{T-SuperProceed}{
      \ensuremath{\itbox{e}} = \ensuremath{\itbox{superproceed(}\overline{\itbox{e}}\itbox{)}} \andalso
      \ensuremath{\itbox{L \(\triangleleft\) L{\ensuremath{'}}{\ensuremath{'}}}}\andalso 
      \pmtype(\ensuremath{\itbox{m}},\ensuremath{\itbox{C}},\ensuremath{\itbox{L{\ensuremath{'}}{\ensuremath{'}}}}) = \ensuremath{\itbox{}\overline{\itbox{T}}\itbox{ \(\rightarrow\) T}} \\
      \ensuremath{\itbox{L.C.m}}; \LSet; \Gp \ensuremath{\itbox{}\overline{\itbox{e}}\itbox{}} : \ensuremath{\itbox{}\overline{\itbox{S}}\itbox{}} \andalso
      \ensuremath{\itbox{}\overline{\itbox{S}}\itbox{}} \LEQ \ensuremath{\itbox{}\overline{\itbox{T}}\itbox{}}      
    }
    We have that for some \ensuremath{\itbox{L{\ensuremath{'}}{\ensuremath{'}}{\ensuremath{'}}}}, \ensuremath{\itbox{L{\ensuremath{'}} \(\triangleleft\) L{\ensuremath{'}}{\ensuremath{'}}{\ensuremath{'}}}}.  Then, $\ensuremath{\itbox{L{\ensuremath{'}}{\ensuremath{'}}{\ensuremath{'}}}} \LEQ_w
    \ensuremath{\itbox{L{\ensuremath{'}}{\ensuremath{'}}}}$ and $\pmtype(\ensuremath{\itbox{m}},\ensuremath{\itbox{C}},\ensuremath{\itbox{L{\ensuremath{'}}{\ensuremath{'}}{\ensuremath{'}}}}) = \ensuremath{\itbox{}\overline{\itbox{T}}\itbox{ \(\rightarrow\) T}}$ by \lemref{sublayer-pmtype}.
    The induction hypothesis and \rn{T-SuperProceed} finish the case. \qed
  \end{rneqncase}    
\end{proof}

\begin{lemmaapp}[Inversion for partial method body, \lemref{def:pmbody}]\label{lem:pmbody}
  If $\pmbody(\ensuremath{\itbox{m}},\ensuremath{\itbox{C}},\ensuremath{\itbox{L}}) = \ensuremath{\itbox{}\overline{\itbox{x}}\itbox{.e}_{0}\itbox{ in L{\ensuremath{'}}}}$ and $\ensuremath{\itbox{L req }} \LSet$ and
  $\pmtype(\ensuremath{\itbox{m}},\ensuremath{\itbox{C}},\ensuremath{\itbox{L}}) = \ensuremath{\itbox{}\overline{\itbox{T}}\itbox{ \(\rightarrow\) T}_{0}\itbox{}}$, then $\ensuremath{\itbox{L.C.m}}; \LSet \cup
  \set{\ensuremath{\itbox{L}}}; \ensuremath{\itbox{}\overline{\itbox{x}}\itbox{}}:\ensuremath{\itbox{}\overline{\itbox{T}}\itbox{}}, \ensuremath{\itbox{this}}:\ensuremath{\itbox{C}} \p \ensuremath{\itbox{e}_{0}\itbox{}} : \ensuremath{\itbox{S}_{0}\itbox{}}$ for some $\ensuremath{\itbox{S}_{0}\itbox{}}
  \LEQ_w \ensuremath{\itbox{T}_{0}\itbox{}}$.
\end{lemmaapp}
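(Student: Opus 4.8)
The plan is to argue by induction on the derivation of $\pmbody(\ensuremath{\itbox{m}},\ensuremath{\itbox{C}},\ensuremath{\itbox{L}}) = \ensuremath{\itbox{}\overline{\itbox{x}}\itbox{.e}_{0}\itbox{ in L{\ensuremath{'}}}}$, with a case analysis on the last rule used, which is either \rn{PMB-Layer} or \rn{PMB-Super}. Throughout I use the global assumption $(\CT,\LT)\ensuremath{\itbox{ ok}}$, which by \rn{T-Table} gives $\LT(\ensuremath{\itbox{L}})\ensuremath{\itbox{ ok}}$ for every layer in $\dom(\LT)$; that is, each layer definition is derivable by \rn{T-Layer} or \rn{T-LayerSW}, both of whose premises contain $\ensuremath{\itbox{}\overline{\itbox{PM}}\itbox{ ok in L}}$ together with a comparison between the layer's own $\ensuremath{\itbox{requires}}$ set and that of its superlayer. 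I also use that $\ensuremath{\itbox{req}}$ is functional, so that the set $\LSet$ in the hypothesis $\ensuremath{\itbox{L req }}\LSet$ is precisely the (unique) $\ensuremath{\itbox{requires}}$ set of $\ensuremath{\itbox{L}}$, and the remark from the subtyping section that normal subtyping is contained in weak subtyping.

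For the base case \rn{PMB-Layer}, we have $\ensuremath{\itbox{L{\ensuremath{'}}}} = \ensuremath{\itbox{L}}$ and $\LT(\ensuremath{\itbox{L}})(\ensuremath{\itbox{C.m}})$ is the partial method with parameters $\ensuremath{\itbox{}\overline{\itbox{x}}\itbox{}}$ and body $\ensuremath{\itbox{e}_{0}\itbox{}}$. Since $\LT(\ensuremath{\itbox{L}})(\ensuremath{\itbox{C.m}})$ is defined, the derivation of $\pmtype(\ensuremath{\itbox{m}},\ensuremath{\itbox{C}},\ensuremath{\itbox{L}}) = \ensuremath{\itbox{}\overline{\itbox{T}}\itbox{\(\rightarrow\)T}_{0}\itbox{}}$ is forced to end with \rn{PMT-Layer}, so $\ensuremath{\itbox{}\overline{\itbox{T}}\itbox{\(\rightarrow\)T}_{0}\itbox{}}$ is exactly the declared signature. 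From $\LT(\ensuremath{\itbox{L}})\ensuremath{\itbox{ ok}}$ we extract $\ensuremath{\itbox{T}_{0}\itbox{ C.m(}\overline{\itbox{T}}\itbox{ }\overline{\itbox{x}}\itbox{) {\char'173} return e}_{0}\itbox{; {\char'175} ok in L}}$, and inverting \rn{T-PMethod} (identifying the $\ensuremath{\itbox{requires}}$ set it mentions with $\LSet$) yields $\ensuremath{\itbox{L.C.m}}; \LSet\cup\set{\ensuremath{\itbox{L}}}; \ensuremath{\itbox{}\overline{\itbox{x}}\itbox{}}{:}\ensuremath{\itbox{}\overline{\itbox{T}}\itbox{}},\ensuremath{\itbox{this}}{:}\ensuremath{\itbox{C}}\p\ensuremath{\itbox{e}_{0}\itbox{}}:\ensuremath{\itbox{S}_{0}\itbox{}}$ for some $\ensuremath{\itbox{S}_{0}\itbox{}}\LEQ\ensuremath{\itbox{T}_{0}\itbox{}}$, hence $\ensuremath{\itbox{S}_{0}\itbox{}}\LEQ_w\ensuremath{\itbox{T}_{0}\itbox{}}$, which is the desired conclusion.

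For the inductive case \rn{PMB-Super}, $\LT(\ensuremath{\itbox{L}})(\ensuremath{\itbox{C.m}})$ is undefined, $\ensuremath{\itbox{L}}\triangleleft\ensuremath{\itbox{L}_{1}\itbox{}}$ for the direct superlayer $\ensuremath{\itbox{L}_{1}\itbox{}}$, and $\pmbody(\ensuremath{\itbox{m}},\ensuremath{\itbox{C}},\ensuremath{\itbox{L}_{1}\itbox{}}) = \ensuremath{\itbox{}\overline{\itbox{x}}\itbox{.e}_{0}\itbox{ in L{\ensuremath{'}}}}$. Because $\LT(\ensuremath{\itbox{L}})(\ensuremath{\itbox{C.m}})$ is undefined, the derivation of $\pmtype(\ensuremath{\itbox{m}},\ensuremath{\itbox{C}},\ensuremath{\itbox{L}})$ must end with \rn{PMT-Super}, so $\pmtype(\ensuremath{\itbox{m}},\ensuremath{\itbox{C}},\ensuremath{\itbox{L}_{1}\itbox{}}) = \ensuremath{\itbox{}\overline{\itbox{T}}\itbox{\(\rightarrow\)T}_{0}\itbox{}}$. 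Writing $\ensuremath{\itbox{L}_{1}\itbox{ req }}\LSet_1$ and applying the induction hypothesis at $\ensuremath{\itbox{L}_{1}\itbox{}}$ gives $\ensuremath{\itbox{L}_{1}\itbox{}}.\ensuremath{\itbox{C.m}}; \LSet_1\cup\set{\ensuremath{\itbox{L}_{1}\itbox{}}}; \ensuremath{\itbox{}\overline{\itbox{x}}\itbox{}}{:}\ensuremath{\itbox{}\overline{\itbox{T}}\itbox{}},\ensuremath{\itbox{this}}{:}\ensuremath{\itbox{C}}\p\ensuremath{\itbox{e}_{0}\itbox{}}:\ensuremath{\itbox{S}_{0}\itbox{}}$ with $\ensuremath{\itbox{S}_{0}\itbox{}}\LEQ_w\ensuremath{\itbox{T}_{0}\itbox{}}$. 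It then remains to rewrite the location and the layer set. Since $\ensuremath{\itbox{L}}\triangleleft\ensuremath{\itbox{L}_{1}\itbox{}}$ gives $\ensuremath{\itbox{L}}\LEQ_w\ensuremath{\itbox{L}_{1}\itbox{}}$, \lemref{inh_pmbody} rewrites the location to $\ensuremath{\itbox{L.C.m}}$. Well-typedness of $\LT(\ensuremath{\itbox{L}})$ --- via the superlayer-comparison premise of \rn{T-Layer}/\rn{T-LayerSW} --- gives $\LSet\LEQ_w\LSet_1$, so $\LSet\cup\set{\ensuremath{\itbox{L}}}\LEQ_w\LSet_1\cup\set{\ensuremath{\itbox{L}_{1}\itbox{}}}$ and hence $\LSet\cup\set{\ensuremath{\itbox{L}}}\LEQ_{sw}\LSet_1\cup\set{\ensuremath{\itbox{L}_{1}\itbox{}}}$; \lemref{def:narrowing} then rewrites the layer set to $\LSet\cup\set{\ensuremath{\itbox{L}}}$, closing the case.

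The only step that calls for care is \rn{PMB-Super}: one must match the $\pmbody$ recursion against the corresponding $\pmtype$ step, and thread the superlayer's $\ensuremath{\itbox{requires}}$ set through the two adjustments so that \lemref{inh_pmbody} and \lemref{def:narrowing} apply with exactly the right parameters. None of these ingredients is deep, so I do not anticipate a genuine obstacle beyond this bookkeeping; in particular, the existence and consistency of the $\pmtype$ value is automatic once the shape of the $\pmbody$ derivation is fixed.
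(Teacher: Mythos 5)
Your proof is correct and follows essentially the same route as the paper's: induction on the $\pmbody$ derivation, forcing the $\pmtype$ derivation into lockstep, inverting \rn{T-PMethod} in the \rn{PMB-Layer} case, and closing the \rn{PMB-Super} case with \lemref{inh_pmbody} and layer-set narrowing. If anything you are slightly more careful than the paper's write-up, which elides the distinction between the \ensuremath{\itbox{requires}} set of \ensuremath{\itbox{L}} and that of its superlayer when stating the induction hypothesis.
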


\begin{proof}
  By induction on $\pmbody(\ensuremath{\itbox{m}},\ensuremath{\itbox{C}},\ensuremath{\itbox{L}}) = \ensuremath{\itbox{}\overline{\itbox{x}}\itbox{.e}_{0}\itbox{ in L{\ensuremath{'}}}}$.

  \begin{rneqncase}{PMB-Super}{
      \LT(\ensuremath{\itbox{L}})(\ensuremath{\itbox{C.m}}) \undf \andalso
      \ensuremath{\itbox{L \(\triangleleft\) L{\ensuremath{'}}{\ensuremath{'}}}} \andalso
      \pmbody(\ensuremath{\itbox{m}},\ensuremath{\itbox{C}},\ensuremath{\itbox{L{\ensuremath{'}}{\ensuremath{'}}}}) = \ensuremath{\itbox{}\overline{\itbox{x}}\itbox{.e}_{0}\itbox{ in L{\ensuremath{'}}}}
    }
    By $\pmtype(\ensuremath{\itbox{m}},\ensuremath{\itbox{C}},\ensuremath{\itbox{L}}) = \ensuremath{\itbox{}\overline{\itbox{T}}\itbox{ \(\rightarrow\) T}_{0}\itbox{}}$ and \rn{PMT-Super}, it
    must be the case that $\pmtype(\ensuremath{\itbox{m}},\ensuremath{\itbox{C}},\ensuremath{\itbox{L{\ensuremath{'}}{\ensuremath{'}}}}) = \ensuremath{\itbox{}\overline{\itbox{T}}\itbox{ \(\rightarrow\) T}_{0}\itbox{}}$.  By
    the induction hypothesis,
    $$\ensuremath{\itbox{L{\ensuremath{'}}{\ensuremath{'}}.C.m}}; \LSet \cup \set{\ensuremath{\itbox{L{\ensuremath{'}}{\ensuremath{'}}}}}; \ensuremath{\itbox{}\overline{\itbox{x}}\itbox{}}:\ensuremath{\itbox{}\overline{\itbox{T}}\itbox{}}, \ensuremath{\itbox{this}}:\ensuremath{\itbox{C}} \p
    \ensuremath{\itbox{e}_{0}\itbox{}} : \ensuremath{\itbox{S}_{0}\itbox{}}$$ for some $\ensuremath{\itbox{S}_{0}\itbox{}} \LEQ_w \ensuremath{\itbox{T}_{0}\itbox{}}$.
    Lemmas~\ref{lem:narrowing} and \ref{lem:inh_pmbody} finish the
    case.
  \end{rneqncase}
  
  \begin{rneqncase}{PMB-Layer}{
      \LT(\ensuremath{\itbox{L}})(\ensuremath{\itbox{C.m}}) = \ensuremath{\itbox{T}_{0}\itbox{ C.m(}\overline{\itbox{T}}\itbox{ }\overline{\itbox{x}}\itbox{){\char'173} return e; {\char'175}}} \andalso
      \ensuremath{\itbox{L{\ensuremath{'}}}} = \ensuremath{\itbox{L}}
    }
    By \rn{T-PMethod}, it must be the case that
    $$\ensuremath{\itbox{L.C.m}}; \LSet \cup \set{\ensuremath{\itbox{L}}}; \ensuremath{\itbox{}\overline{\itbox{x}}\itbox{}}:\ensuremath{\itbox{}\overline{\itbox{T}}\itbox{}}, \ensuremath{\itbox{this}}:\ensuremath{\itbox{C}} \p \ensuremath{\itbox{e}_{0}\itbox{}} : \ensuremath{\itbox{S}_{0}\itbox{}}$$
    for some \ensuremath{\itbox{S}_{0}\itbox{}} s.t.\ \ensuremath{\itbox{S}_{0}\itbox{ \(\Leq\)}_{w}\itbox{ T}_{0}\itbox{}}, finishing the case. \qed
  \end{rneqncase}  
\end{proof}

\begin{lemmaapp}[Substitution for \ensuremath{\itbox{super}}, \ensuremath{\itbox{proceed}} and \ensuremath{\itbox{superproceed}}, \lemref{def:substitution-super}]
  \label{lem:substitution-super} \
  \begin{enumerate}
  \item 
    \begin{sloppypar}
      If \(\bullet; \LSet; \Gp \ensuremath{\itbox{new C}_{0}\itbox{(}\overline{\itbox{v}}\itbox{)}} : \ensuremath{\itbox{C}_{0}\itbox{}}\) and 
      $\ensuremath{\itbox{L.C.m}}; \LSet; \Gp \ensuremath{\itbox{e}} : \ensuremath{\itbox{T}}$ and
      $\ensuremath{\itbox{C}_{0}\itbox{.m}} \p \ensuremath{\itbox{<C,(}\overline{\itbox{L}}\itbox{{\ensuremath{'}};L{\ensuremath{'}}{\ensuremath{'}}),}\overline{\itbox{L}}\itbox{> ok}}$ and
      \ensuremath{\itbox{C \(\triangleleft\) D}} and \ensuremath{\itbox{L{\ensuremath{'}}{\ensuremath{'}} \(\Leq\)}_{w}\itbox{ L \(\triangleleft\) L{\ensuremath{'}}}} and
      $\LSet \LEQ_{sw} \set{\ensuremath{\itbox{}\overline{\itbox{L}}\itbox{}}}$ and
      $\ensuremath{\itbox{proceed}} \in \ensuremath{\itbox{e}} \implies \ndp(\ensuremath{\itbox{m}},\ensuremath{\itbox{C}},\ensuremath{\itbox{}\overline{\itbox{L}}\itbox{{\ensuremath{'}}}},\ensuremath{\itbox{}\overline{\itbox{L}}\itbox{}})$,
      then $\bullet; \LSet; \Gp S\ensuremath{\itbox{e}} : \ensuremath{\itbox{T}}$
      where
      \[S = 
      \left[\begin{array}{l@{/}l}
          \ensuremath{\itbox{new C}_{0}\itbox{(}\overline{\itbox{v}}\itbox{)<C,}\overline{\itbox{L}}\itbox{{\ensuremath{'}},}\overline{\itbox{L}}\itbox{>.m}} & \ensuremath{\itbox{proceed}}, \\
          \ensuremath{\itbox{new C}_{0}\itbox{(}\overline{\itbox{v}}\itbox{)<D,}\overline{\itbox{L}}\itbox{,}\overline{\itbox{L}}\itbox{>}} &  \ensuremath{\itbox{super}}, \\
          \ensuremath{\itbox{new C}_{0}\itbox{(}\overline{\itbox{v}}\itbox{)<C,L{\ensuremath{'}},(}\overline{\itbox{L}}\itbox{{\ensuremath{'}};L{\ensuremath{'}}{\ensuremath{'}}),}\overline{\itbox{L}}\itbox{>.m}} &  \ensuremath{\itbox{superproceed}}
        \end{array}\right].\]
    \end{sloppypar}

  \item If \(\bullet; \LSet; \Gp \ensuremath{\itbox{new C}_{0}\itbox{(}\overline{\itbox{v}}\itbox{)}} : \ensuremath{\itbox{C}_{0}\itbox{}}\) and 
    $\ensuremath{\itbox{C.m}}; \LSet; \Gp \ensuremath{\itbox{e}} : \ensuremath{\itbox{T}}$ and 
    $\ensuremath{\itbox{C}_{0}\itbox{.m}} \p \ensuremath{\itbox{<C,}\overline{\itbox{L}}\itbox{{\ensuremath{'}},}\overline{\itbox{L}}\itbox{> ok}}$ and
    \ensuremath{\itbox{C \(\triangleleft\) D}} and $\LSet \LEQ_{sw} \set{\ensuremath{\itbox{}\overline{\itbox{L}}\itbox{}}}$, then
    \(\bullet; \LSet; \Gp [\ensuremath{\itbox{new C}_{0}\itbox{(}\overline{\itbox{v}}\itbox{)<D,}\overline{\itbox{L}}\itbox{,}\overline{\itbox{L}}\itbox{>}} / \ensuremath{\itbox{super}}] \ensuremath{\itbox{e}} : \ensuremath{\itbox{T}}\).
  \end{enumerate} 
\end{lemmaapp}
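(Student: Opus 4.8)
The plan is to prove both items simultaneously by induction on the derivation of the typing judgment for \ensuremath{\itbox{e}} (that is, $\ensuremath{\itbox{L.C.m}}; \LSet; \Gp \ensuremath{\itbox{e}} : \ensuremath{\itbox{T}}$ for item~1 and $\ensuremath{\itbox{C.m}}; \LSet; \Gp \ensuremath{\itbox{e}} : \ensuremath{\itbox{T}}$ for item~2), with a case analysis on the last typing rule used. The congruence cases---\rn{T-Var}, \rn{T-Field}, \rn{T-Invk}, \rn{T-New}, \rn{T-NewL}, \rn{T-With}, \rn{T-Swap}---are routine: $S$ commutes with each subexpression constructor, the location component is unchanged in the premises of these rules so the induction hypothesis applies to every subexpression, and the side condition of item~1 propagates downward because a \ensuremath{\itbox{proceed}} occurring in a subexpression of \ensuremath{\itbox{e}} also occurs in \ensuremath{\itbox{e}}. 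In item~1 the rules whose conclusion carries a location of the form \ensuremath{\itbox{C.m}} (notably \rn{T-SuperB}) cannot be the last rule used; symmetrically, in item~2 the rules \rn{T-Proceed}, \rn{T-SuperP}, \rn{T-SuperProceed} cannot occur and no subexpression of \ensuremath{\itbox{e}} contains \ensuremath{\itbox{proceed}} or \ensuremath{\itbox{superproceed}}, so item~2 is just the \rn{T-SuperB} instance of the argument below with the empty \ensuremath{\itbox{requires}} set (and $\set{\ensuremath{\itbox{}\overline{\itbox{L}}\itbox{}}} \LEQ_{sw} \emptyset$ holds trivially).

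The substantive cases of item~1 are \rn{T-Proceed}, \rn{T-SuperP} and \rn{T-SuperProceed}. In each of them $S\ensuremath{\itbox{e}}$ is a run-time invocation expression whose typing I would establish by \rn{T-InvkA} (for the \ensuremath{\itbox{proceed}} and \ensuremath{\itbox{super}} substituents, which carry triple cursors) or by \rn{T-InvkAL} (for the \ensuremath{\itbox{superproceed}} substituent, which carries a quadruple cursor). The premise $\bullet; \LGp \ensuremath{\itbox{new C}_{0}\itbox{(}\overline{\itbox{v}}\itbox{)}} : \ensuremath{\itbox{C}_{0}\itbox{}}$ and the condition $\LSet \LEQ_{sw} \set{\ensuremath{\itbox{}\overline{\itbox{L}}\itbox{}}}$ are among the hypotheses; the premise on the actual arguments ($\bullet; \LGp \ensuremath{\itbox{}\overline{\itbox{e}}\itbox{}} : \ensuremath{\itbox{}\overline{\itbox{S}}\itbox{}}$ with $\ensuremath{\itbox{}\overline{\itbox{S}}\itbox{}} \LEQ \ensuremath{\itbox{}\overline{\itbox{T}}\itbox{{\ensuremath{'}}}}$) follows by applying the induction hypothesis to each \ensuremath{\itbox{e}_{i}\itbox{}} (its location is still \ensuremath{\itbox{L.C.m}}, the \ensuremath{\itbox{proceed}} side condition is inherited as above), combined with the argument-subtyping premise already present in \rn{T-Proceed}/\rn{T-SuperP}/\rn{T-SuperProceed}. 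The remaining obligations are the cursor-well-formedness judgment and the method-signature premise. For \rn{T-SuperProceed} the cursor-well-formedness judgment is literally the hypothesis $\ensuremath{\itbox{C}_{0}\itbox{.m}} \p \ensuremath{\itbox{<C,(}\overline{\itbox{L}}\itbox{{\ensuremath{'}};L{\ensuremath{'}}{\ensuremath{'}}),}\overline{\itbox{L}}\itbox{> ok}}$, the side condition $\ensuremath{\itbox{L}_{0}\itbox{}} \LEQ_w \ensuremath{\itbox{L}_{1}\itbox{}}$ of \rn{T-InvkAL} instantiates to $\ensuremath{\itbox{L{\ensuremath{'}}{\ensuremath{'}}}} \LEQ_w \ensuremath{\itbox{L{\ensuremath{'}}}}$, which follows from \ensuremath{\itbox{L{\ensuremath{'}}{\ensuremath{'}} \(\Leq\)}_{w}\itbox{ L \(\triangleleft\) L{\ensuremath{'}}}}, and the signature premise $\pmtype(\ensuremath{\itbox{m}},\ensuremath{\itbox{C}},\ensuremath{\itbox{L{\ensuremath{'}}}}) = \ensuremath{\itbox{}\overline{\itbox{T}}\itbox{\(\rightarrow\)T}_{0}\itbox{}}$ is exactly the \rn{T-SuperProceed} premise. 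For \rn{T-Proceed} the required cursor judgment $\ensuremath{\itbox{C}_{0}\itbox{.m}} \p \ensuremath{\itbox{<C,}\overline{\itbox{L}}\itbox{{\ensuremath{'}},}\overline{\itbox{L}}\itbox{> ok}}$ is rebuilt with \rn{Wf-Cursor} from $\ensuremath{\itbox{C}_{0}\itbox{}} \LEQ \ensuremath{\itbox{C}}$ and $\set{\ensuremath{\itbox{}\overline{\itbox{L}}\itbox{}}}\, \WF$ (both read off the input cursor judgment) together with $\ndp(\ensuremath{\itbox{m}},\ensuremath{\itbox{C}},\ensuremath{\itbox{}\overline{\itbox{L}}\itbox{{\ensuremath{'}}}},\ensuremath{\itbox{}\overline{\itbox{L}}\itbox{}})$, which the extra hypothesis $\ensuremath{\itbox{proceed}} \in \ensuremath{\itbox{e}} \implies \ndp(\ldots)$ supplies since $\ensuremath{\itbox{e}} = \ensuremath{\itbox{proceed(}\overline{\itbox{e}}\itbox{)}}$ here. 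For \rn{T-SuperP} the cursor judgment $\ensuremath{\itbox{C}_{0}\itbox{.m{\ensuremath{'}}}} \p \ensuremath{\itbox{<D,}\overline{\itbox{L}}\itbox{,}\overline{\itbox{L}}\itbox{> ok}}$ is built with \rn{Wf-Cursor} from $\ensuremath{\itbox{C}_{0}\itbox{}} \LEQ \ensuremath{\itbox{C}}$, \ensuremath{\itbox{C \(\triangleleft\) D}} (so $\ensuremath{\itbox{C}_{0}\itbox{}} \LEQ \ensuremath{\itbox{D}}$), $\set{\ensuremath{\itbox{}\overline{\itbox{L}}\itbox{}}}\, \WF$, and $\ndp(\ensuremath{\itbox{m{\ensuremath{'}}}},\ensuremath{\itbox{D}},\ensuremath{\itbox{}\overline{\itbox{L}}\itbox{}},\ensuremath{\itbox{}\overline{\itbox{L}}\itbox{}})$, the last obtained from \lemref{wp} once the relevant \(\mtype\) is shown defined.

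The main obstacle is reconciling the two lookup conventions for the method signature: \rn{T-Proceed} and \rn{T-SuperP} compute it with \(\mtype\) at layer sets taken from the \ensuremath{\itbox{requires}} clause ($\LSet' \cup \set{\ensuremath{\itbox{L}}}$ with $\ensuremath{\itbox{L req }}\LSet'$), whereas \rn{T-InvkA} computes it with \(\mtype\) at the cursor's layer sequence $\set{\ensuremath{\itbox{}\overline{\itbox{L}}\itbox{}}}$, and I must show the two results coincide. The bridge is well-formedness of $\set{\ensuremath{\itbox{}\overline{\itbox{L}}\itbox{}}}$: since $\ndp$ forces $(\ensuremath{\itbox{}\overline{\itbox{L}}\itbox{{\ensuremath{'}}}};\ensuremath{\itbox{L{\ensuremath{'}}{\ensuremath{'}}}})$---hence also $\ensuremath{\itbox{}\overline{\itbox{L}}\itbox{{\ensuremath{'}}}}$---to be a prefix of $\ensuremath{\itbox{}\overline{\itbox{L}}\itbox{}}$, we have $\ensuremath{\itbox{L{\ensuremath{'}}{\ensuremath{'}}}} \in \set{\ensuremath{\itbox{}\overline{\itbox{L}}\itbox{}}}$ with $\ensuremath{\itbox{L{\ensuremath{'}}{\ensuremath{'}} \(\Leq\)}_{w}\itbox{ L}}$, so the strengthened \lemref{layer-set-wf1} (closure of well-formed layer sets under $(\ensuremath{\itbox{\(\Leq\)}_{w}\itbox{}};\ensuremath{\itbox{req}})$) gives $\set{\ensuremath{\itbox{}\overline{\itbox{L}}\itbox{}}} \LEQ_w \LSet' \cup \set{\ensuremath{\itbox{L}}}$, whence $\set{\ensuremath{\itbox{}\overline{\itbox{L}}\itbox{}}} \LEQ_{sw} \LSet' \cup \set{\ensuremath{\itbox{L}}}$. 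For \rn{T-Proceed} I then invoke \lemref{wp_mtype} (using $\ndp(\ensuremath{\itbox{m}},\ensuremath{\itbox{C}},\ensuremath{\itbox{}\overline{\itbox{L}}\itbox{{\ensuremath{'}}}},\ensuremath{\itbox{}\overline{\itbox{L}}\itbox{}})$) to obtain definedness of $\mtype(\ensuremath{\itbox{m}},\ensuremath{\itbox{C}},\set{\ensuremath{\itbox{}\overline{\itbox{L}}\itbox{{\ensuremath{'}}}}},\set{\ensuremath{\itbox{}\overline{\itbox{L}}\itbox{}}})$ and \lemref{mtype_layer} to force it to agree with the signature used by \rn{T-Proceed}; for \rn{T-SuperP} I apply \lemref{sublayer-mtype} directly to $\mtype(\ensuremath{\itbox{m{\ensuremath{'}}}},\ensuremath{\itbox{D}},\LSet'\cup\set{\ensuremath{\itbox{L}}})$ using $\set{\ensuremath{\itbox{}\overline{\itbox{L}}\itbox{}}} \LEQ_{sw} \LSet'\cup\set{\ensuremath{\itbox{L}}}$, which gives both definedness of $\mtype(\ensuremath{\itbox{m{\ensuremath{'}}}},\ensuremath{\itbox{D}},\set{\ensuremath{\itbox{}\overline{\itbox{L}}\itbox{}}},\set{\ensuremath{\itbox{}\overline{\itbox{L}}\itbox{}}})$ and equality, and then \lemref{wp} supplies the $\ndp$ needed for the cursor judgment. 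For \rn{T-SuperProceed} this reconciliation is lighter: both \rn{T-SuperProceed} and \rn{T-InvkAL} use \(\pmtype\) along the layer-inheritance chain, so agreement of the signatures is immediate from the two premises (at most one appeal to \lemref{sublayer-pmtype}), and \lemref{layer-set-wf1} is not needed.
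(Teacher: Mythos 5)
Your proposal is correct and follows essentially the same route as the paper's proof: induction on the typing derivation with the substantive cases being \rn{T-Proceed}, \rn{T-SuperP}, \rn{T-SuperProceed} (and \rn{T-SuperB} for item~2), discharging the \rn{T-InvkA}/\rn{T-InvkAL} premises via the same chain of lemmas---\lemref{layer-set-wf1} to get $\set{\ensuremath{\itbox{}\overline{\itbox{L}}\itbox{}}} \LEQ_w \LSet' \cup \set{\ensuremath{\itbox{L}}}$, \lemref{sublayer-mtype}/\lemref{wp_mtype}/\lemref{mtype_layer} to reconcile the two signature-lookup conventions, and \lemref{wp} for the $\ndp$ obligation. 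The only point you compress is the \rn{T-With}/\rn{T-Swap} cases, where the paper explicitly notes that $\LSet \cup \set{\ensuremath{\itbox{L}}} \LEQ_{sw} \set{\ensuremath{\itbox{}\overline{\itbox{L}}\itbox{}}}$ follows from \rn{LSSW-Intro} before applying the induction hypothesis, but that step is immediate.
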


\begin{proof}
  \begin{enumerate}
  \item By induction on $\ensuremath{\itbox{L.C.m}}; \LSet; \Gp \ensuremath{\itbox{e}} : \ensuremath{\itbox{T}}$ with case
    analysis on the last typing rule used.  We show main cases
    below.
    \begin{rncase}{T-SuperB} Cannot happen.
      
    \end{rncase}
    \begin{rneqncase}{T-SuperP}{
        \ensuremath{\itbox{e}} = \ensuremath{\itbox{super.m{\ensuremath{'}}(}\overline{\itbox{e}}\itbox{)}} &
        \mtype(\ensuremath{\itbox{m{\ensuremath{'}}}}, \ensuremath{\itbox{D}}, \LSet'\cup \set{\ensuremath{\itbox{L}}}) = \ensuremath{\itbox{}\overline{\itbox{T}}\itbox{{\ensuremath{'}}\(\rightarrow\)T}} \\
        \ensuremath{\itbox{L.C.m}}; \LSet; \Gp \ensuremath{\itbox{}\overline{\itbox{e}}\itbox{}} : \ensuremath{\itbox{}\overline{\itbox{S}}\itbox{{\ensuremath{'}}}} &
        \ensuremath{\itbox{L req }}\LSet' \andalso
        \ensuremath{\itbox{}\overline{\itbox{S}}\itbox{{\ensuremath{'}}}} \LEQ \ensuremath{\itbox{}\overline{\itbox{T}}\itbox{{\ensuremath{'}}}}
      }
      It suffices to show that $\bullet; \LSet; \Gp %
      \ensuremath{\itbox{new C}_{0}\itbox{(}\overline{\itbox{v}}\itbox{)<D,}\overline{\itbox{L}}\itbox{,}\overline{\itbox{L}}\itbox{>.m{\ensuremath{'}}(}}S\ensuremath{\itbox{}\overline{\itbox{e}}\itbox{)}} : \ensuremath{\itbox{T}}$.
      By assumption, we have \(\bullet; \LSet; \Gp \ensuremath{\itbox{new C}_{0}\itbox{(}\overline{\itbox{v}}\itbox{)}} : \ensuremath{\itbox{C}_{0}\itbox{}}\).
      Next, we show $\ensuremath{\itbox{C}_{0}\itbox{.m}} \p \ensuremath{\itbox{<D,}\overline{\itbox{L}}\itbox{,}\overline{\itbox{L}}\itbox{> ok}}$.
      By $\ensuremath{\itbox{C}_{0}\itbox{.m}} \p \ensuremath{\itbox{<C,}\overline{\itbox{L}}\itbox{{\ensuremath{'}},}\overline{\itbox{L}}\itbox{> ok}}$, 
      we have $\ensuremath{\itbox{C}_{0}\itbox{}} \LEQ \ensuremath{\itbox{C}}$, from which $\ensuremath{\itbox{C}_{0}\itbox{}} \LEQ \ensuremath{\itbox{D}}$ follows, and
      $\set{\ensuremath{\itbox{}\overline{\itbox{L}}\itbox{}}}\, \WF$.  
      By \lemref{layer-set-wf1} and \(\ensuremath{\itbox{L{\ensuremath{'}}{\ensuremath{'}}}} \in \set{\ensuremath{\itbox{}\overline{\itbox{L}}\itbox{}}}\) and
      \(\ensuremath{\itbox{L{\ensuremath{'}}{\ensuremath{'}}}} \LEQ_w \ensuremath{\itbox{L}}\), for any \ensuremath{\itbox{L}_{1}\itbox{}} such that
      \ensuremath{\itbox{L req L}_{1}\itbox{}}, there exists \(\ensuremath{\itbox{L}_{2}\itbox{}} \in \set{\ensuremath{\itbox{}\overline{\itbox{L}}\itbox{}}}\) such that
      \ensuremath{\itbox{L}_{2}\itbox{ \(\Leq\)}_{w}\itbox{ L}_{1}\itbox{}}; so, \(\set{\ensuremath{\itbox{}\overline{\itbox{L}}\itbox{}}} \LEQ_w \LSet'\cup \set{\ensuremath{\itbox{L}}}\).
      Then, by $\mtype(\ensuremath{\itbox{m{\ensuremath{'}}}}, \ensuremath{\itbox{D}}, \LSet'\cup \set{\ensuremath{\itbox{L}}}) = \ensuremath{\itbox{}\overline{\itbox{T}}\itbox{{\ensuremath{'}}\(\rightarrow\)T}}$ and
      \lemref{sublayer-mtype}, we have $\mtype(\ensuremath{\itbox{m{\ensuremath{'}}}}, \ensuremath{\itbox{D}}, \set{\ensuremath{\itbox{}\overline{\itbox{L}}\itbox{}}}) =
      \ensuremath{\itbox{}\overline{\itbox{T}}\itbox{{\ensuremath{'}}\(\rightarrow\)T}}$;
      moreover, by \lemref{wp}, $\ndp(\ensuremath{\itbox{m}}, \ensuremath{\itbox{D}}, \ensuremath{\itbox{}\overline{\itbox{L}}\itbox{}}, \ensuremath{\itbox{}\overline{\itbox{L}}\itbox{}})$.
      So,  $\ensuremath{\itbox{C}_{0}\itbox{.m}} \p \ensuremath{\itbox{<D,}\overline{\itbox{L}}\itbox{,}\overline{\itbox{L}}\itbox{> ok}}$.
      By the induction hypothesis, we have $\bullet; \LSet;
      \Gp S\ensuremath{\itbox{}\overline{\itbox{e}}\itbox{}} : \ensuremath{\itbox{}\overline{\itbox{S}}\itbox{{\ensuremath{'}}}}$ and, by assumption, $\ensuremath{\itbox{}\overline{\itbox{S}}\itbox{{\ensuremath{'}}}} \LEQ \ensuremath{\itbox{}\overline{\itbox{T}}\itbox{{\ensuremath{'}}}}$.  
      Finally, \rn{T-InvkA} finishes the case.

    \end{rneqncase}
	
    \begin{rneqncase}{T-Proceed}{
        \ensuremath{\itbox{e}} = \ensuremath{\itbox{proceed(}\overline{\itbox{e}}\itbox{)}} &
        \mtype(\ensuremath{\itbox{m}}, \ensuremath{\itbox{C}}, \LSet', \LSet' \cup \set{\ensuremath{\itbox{L}}}) = \ensuremath{\itbox{}\overline{\itbox{T}}\itbox{{\ensuremath{'}}\(\rightarrow\)T}} \\
        \ensuremath{\itbox{L.C.m}}; \LSet; \Gp \ensuremath{\itbox{}\overline{\itbox{e}}\itbox{}} : \ensuremath{\itbox{}\overline{\itbox{S}}\itbox{{\ensuremath{'}}}} &
        \ensuremath{\itbox{L req }}\LSet' \andalso
        \ensuremath{\itbox{}\overline{\itbox{S}}\itbox{{\ensuremath{'}}}} \LEQ \ensuremath{\itbox{}\overline{\itbox{T}}\itbox{{\ensuremath{'}}}}
      }
      It suffices to show that
      $\bullet; \LSet; \Gp \ensuremath{\itbox{new C}_{0}\itbox{<C,}\overline{\itbox{L}}\itbox{{\ensuremath{'}},}\overline{\itbox{L}}\itbox{>(}\overline{\itbox{v}}\itbox{).m(}}S\ensuremath{\itbox{}\overline{\itbox{e}}\itbox{)}} : \ensuremath{\itbox{T}}$.
      By assumption, we have \(\bullet; \LSet; \Gp \ensuremath{\itbox{new C}_{0}\itbox{(}\overline{\itbox{v}}\itbox{)}} : \ensuremath{\itbox{C}_{0}\itbox{}}\).
      Since $\ensuremath{\itbox{proceed}} \in \ensuremath{\itbox{e}}$, we have $\ndp(\ensuremath{\itbox{m}},\ensuremath{\itbox{C}},\ensuremath{\itbox{}\overline{\itbox{L}}\itbox{{\ensuremath{'}}}},\ensuremath{\itbox{}\overline{\itbox{L}}\itbox{}})$, from which
      $\ensuremath{\itbox{C}_{0}\itbox{.m}} \p \ensuremath{\itbox{<C,}\overline{\itbox{L}}\itbox{{\ensuremath{'}},}\overline{\itbox{L}}\itbox{> ok}}$ and follow.  By Lemmas~\ref{lem:wp_mtype} and
      \ref{lem:mtype_layer},
      we have $\mtype(\ensuremath{\itbox{m}}, \ensuremath{\itbox{C}}, \set{\ensuremath{\itbox{}\overline{\itbox{L}}\itbox{{\ensuremath{'}}}}}, \set{\ensuremath{\itbox{}\overline{\itbox{L}}\itbox{}}}) = \ensuremath{\itbox{}\overline{\itbox{T}}\itbox{{\ensuremath{'}}\(\rightarrow\)T}}$, too.
      By the induction hypothesis, we have $\bullet; \LSet;
      \Gp S\ensuremath{\itbox{}\overline{\itbox{e}}\itbox{}} : \ensuremath{\itbox{}\overline{\itbox{S}}\itbox{{\ensuremath{'}}}}$ and, by assumption, $\ensuremath{\itbox{}\overline{\itbox{S}}\itbox{{\ensuremath{'}}}} \LEQ \ensuremath{\itbox{}\overline{\itbox{T}}\itbox{{\ensuremath{'}}}}$.  
      Finally, \rn{T-InvkA} finishes the case.
    \end{rneqncase}

    \begin{rneqncase}{T-SuperProceed}{
        \ensuremath{\itbox{e}} = \ensuremath{\itbox{superproceed(}\overline{\itbox{e}}\itbox{)}} &
        \ensuremath{\itbox{L \(\triangleleft\) L{\ensuremath{'}}}} \\
        \pmtype(\ensuremath{\itbox{m}},\ensuremath{\itbox{C}},\ensuremath{\itbox{L{\ensuremath{'}}}}) = \ensuremath{\itbox{}\overline{\itbox{T}}\itbox{{\ensuremath{'}}\(\rightarrow\)T}} \\
        \ensuremath{\itbox{L.C.m}}; \LSet; \Gp \ensuremath{\itbox{}\overline{\itbox{e}}\itbox{}} : \ensuremath{\itbox{}\overline{\itbox{S}}\itbox{{\ensuremath{'}}}} &
        \ensuremath{\itbox{}\overline{\itbox{S}}\itbox{{\ensuremath{'}}}} \LEQ \ensuremath{\itbox{}\overline{\itbox{T}}\itbox{{\ensuremath{'}}}}
      }
      It suffices to show that
      $\bullet; \LSet; \Gp \ensuremath{\itbox{new C}_{0}\itbox{<C,L{\ensuremath{'}},(}\overline{\itbox{L}}\itbox{{\ensuremath{'}};L{\ensuremath{'}}{\ensuremath{'}}),}\overline{\itbox{L}}\itbox{>.m(}}S\ensuremath{\itbox{}\overline{\itbox{e}}\itbox{)}} :
      \ensuremath{\itbox{T}}$.

      By assumption, we have \(\bullet; \LSet; \Gp \ensuremath{\itbox{new C}_{0}\itbox{(}\overline{\itbox{v}}\itbox{)}} : \ensuremath{\itbox{C}_{0}\itbox{}}\)
      and $\ensuremath{\itbox{C}_{0}\itbox{.m}} \p \ensuremath{\itbox{<C,(}\overline{\itbox{L}}\itbox{{\ensuremath{'}};L{\ensuremath{'}}{\ensuremath{'}}),}\overline{\itbox{L}}\itbox{> ok}}$ and  $\ensuremath{\itbox{L{\ensuremath{'}}{\ensuremath{'}}}} \LEQ_w \ensuremath{\itbox{L{\ensuremath{'}}}}$.
      Also, $\pmtype(\ensuremath{\itbox{m}}, \ensuremath{\itbox{C}}, \ensuremath{\itbox{L{\ensuremath{'}}}}) = \ensuremath{\itbox{}\overline{\itbox{T}}\itbox{{\ensuremath{'}}\(\rightarrow\)T}}$, by assumption.
      By the induction hypothesis, we have $\bullet; \LSet;
      \Gp S\ensuremath{\itbox{}\overline{\itbox{e}}\itbox{}} : \ensuremath{\itbox{}\overline{\itbox{S}}\itbox{{\ensuremath{'}}}}$ and, by assumption, $\ensuremath{\itbox{}\overline{\itbox{S}}\itbox{{\ensuremath{'}}}} \LEQ \ensuremath{\itbox{}\overline{\itbox{T}}\itbox{{\ensuremath{'}}}}$.  
      Finally, \rn{T-InvkAL} finishes the case.
    \end{rneqncase}

    \begin{rneqncase}{T-With}{
        \ensuremath{\itbox{e}} = \ensuremath{\itbox{with e}_{l}\itbox{ e}_{0}\itbox{}} &
        \ensuremath{\itbox{L.C.m}}; \LSet; \Gp \ensuremath{\itbox{e}_{l}\itbox{}} : \ensuremath{\itbox{L}} &
        \ensuremath{\itbox{L req }} \LSet_0 \\
        \LSet \LEQ_w \LSet_0 &
        \ensuremath{\itbox{L.C.m}}; \LSet \cup \set{\ensuremath{\itbox{L}}}; \Gp \ensuremath{\itbox{e}_{0}\itbox{}} : \ensuremath{\itbox{T}}
      }
      Since $\LSet \LEQ_{sw} \set{\ensuremath{\itbox{}\overline{\itbox{L}}\itbox{}}}$, we have $\LSet \cup
      \set{\ensuremath{\itbox{L}}} \LEQ_{sw} \set{\ensuremath{\itbox{}\overline{\itbox{L}}\itbox{}}}$ by \rn{LSSW-Intro}.
      By the induction hypothesis,
      \(\bullet; \LSet; \Gp S\ensuremath{\itbox{e}_{l}\itbox{}} : \ensuremath{\itbox{L}}\)
      and \(\bullet; \LSet \cup \set{\ensuremath{\itbox{L}}}; \Gp S\ensuremath{\itbox{e}_{0}\itbox{}} : \ensuremath{\itbox{T}}\).
      \rn{T-With} finishes the case.
    \end{rneqncase}

    \begin{rneqncase}{T-Swap}{
        \ensuremath{\itbox{e}} = \ensuremath{\itbox{swap (e}_{l}\itbox{,L}_{{sw}}\itbox{) e}_{0}\itbox{}} &
        \ensuremath{\itbox{L.C.m}}; \LSet; \Gp \ensuremath{\itbox{e}_{l}\itbox{}} : \ensuremath{\itbox{L}} &
        \ensuremath{\itbox{L req }} \LSet_0 \\
        \ensuremath{\itbox{L}_{{sw}}\itbox{ swappable}} &
        \ensuremath{\itbox{L}} \LEQ_w \ensuremath{\itbox{L}_{{sw}}\itbox{}} \\
        \LSet_{rm} = \LSet \setminus \set{\ensuremath{\itbox{L{\ensuremath{'}}}} \mid \ensuremath{\itbox{L{\ensuremath{'}}}} \LEQ_w \ensuremath{\itbox{L}_{{sw}}\itbox{}}} &
        \LSet_{rm} \LEQ_w \LSet_0 \\
        \ensuremath{\itbox{L.C.m}}; \LSet_{rm} \cup \set{\ensuremath{\itbox{L}}}; \Gp \ensuremath{\itbox{e}_{0}\itbox{}} : \ensuremath{\itbox{T}}
      }

      Since $\LSet \LEQ_{sw} \set{\ensuremath{\itbox{}\overline{\itbox{L}}\itbox{}}}$, we have $\LSet_{rm} \cup
      \set{\ensuremath{\itbox{L}}} \LEQ_{sw} \set{\ensuremath{\itbox{}\overline{\itbox{L}}\itbox{}}}$ by \rn{LSSW-Intro}.  By the
      induction hypothesis, \(\bullet; \LSet; \Gp S\ensuremath{\itbox{e}_{l}\itbox{}} : \ensuremath{\itbox{L}}\) and
      \(\bullet; \LSet_{rm} \cup \set{\ensuremath{\itbox{L}}}; \Gp S\ensuremath{\itbox{e}_{0}\itbox{}} : \ensuremath{\itbox{T}}\).
      \rn{T-Swap} finishes the case.
    \end{rneqncase}


  \item \ By induction on $\ensuremath{\itbox{C.m}}; \LSet; \Gp \ensuremath{\itbox{e}} : \ensuremath{\itbox{T}_{0}\itbox{}}$ with
    case analysis on the last typing rule used.  We show only main
    cases below (note that none of the cases \rn{T-Proceed} and \rn{T-SuperP} and \rn{T-SuperProceed} can happen).
      \begin{rneqncase}{T-SuperB}{
        \ensuremath{\itbox{e}} = \ensuremath{\itbox{super.m{\ensuremath{'}}(}\overline{\itbox{e}}\itbox{)}} &
        \mtype(\ensuremath{\itbox{m{\ensuremath{'}}}}, \ensuremath{\itbox{D}}, \emptyset) = \ensuremath{\itbox{}\overline{\itbox{T}}\itbox{{\ensuremath{'}}\(\rightarrow\)T}_{0}\itbox{}} \\
        \ensuremath{\itbox{C.m}}; \LSet; \Gp \ensuremath{\itbox{}\overline{\itbox{e}}\itbox{}} : \ensuremath{\itbox{}\overline{\itbox{S}}\itbox{{\ensuremath{'}}}} &
        \ensuremath{\itbox{}\overline{\itbox{S}}\itbox{{\ensuremath{'}}}} \LEQ \ensuremath{\itbox{}\overline{\itbox{T}}\itbox{{\ensuremath{'}}}}
      }
      \begin{sloppypar}\noindent
      Let \(S = [\ensuremath{\itbox{new C}_{0}\itbox{(}\overline{\itbox{v}}\itbox{)<D,}\overline{\itbox{L}}\itbox{,}\overline{\itbox{L}}\itbox{>}} / \ensuremath{\itbox{super}}]\).  It suffices
      to show that $\bullet; \LSet; \Gp \ensuremath{\itbox{new C}_{0}\itbox{(}\overline{\itbox{v}}\itbox{)<D,}\overline{\itbox{L}}\itbox{,}\overline{\itbox{L}}\itbox{>.m{\ensuremath{'}}(}}S\ensuremath{\itbox{}\overline{\itbox{e}}\itbox{)}} : \ensuremath{\itbox{T}_{0}\itbox{}}$.
      By assumption, we have \(\bullet; \LSet; \Gp \ensuremath{\itbox{new C}_{0}\itbox{(}\overline{\itbox{v}}\itbox{)}} : \ensuremath{\itbox{C}_{0}\itbox{}}\).
      Next, we show $\ensuremath{\itbox{C}_{0}\itbox{.m}} \p \ensuremath{\itbox{<D,}\overline{\itbox{L}}\itbox{,}\overline{\itbox{L}}\itbox{> ok}}$.
      By $\ensuremath{\itbox{C}_{0}\itbox{.m}} \p \ensuremath{\itbox{<C,}\overline{\itbox{L}}\itbox{{\ensuremath{'}},}\overline{\itbox{L}}\itbox{> ok}}$, 
      we have $\ensuremath{\itbox{C}_{0}\itbox{}} \LEQ \ensuremath{\itbox{C}}$, from which $\ensuremath{\itbox{C}_{0}\itbox{}} \LEQ \ensuremath{\itbox{D}}$ follows, and
      $\set{\ensuremath{\itbox{}\overline{\itbox{L}}\itbox{}}}\, \WF$.  
      By $\mtype(\ensuremath{\itbox{m{\ensuremath{'}}}}, \ensuremath{\itbox{D}}, \emptyset) = \ensuremath{\itbox{}\overline{\itbox{T}}\itbox{{\ensuremath{'}}\(\rightarrow\)T}_{0}\itbox{}}$ and
      \lemref{sublayer-mtype}, we have $\mtype(\ensuremath{\itbox{m{\ensuremath{'}}}}, \ensuremath{\itbox{D}}, \set{\ensuremath{\itbox{}\overline{\itbox{L}}\itbox{}}}) =
      \ensuremath{\itbox{}\overline{\itbox{T}}\itbox{{\ensuremath{'}}\(\rightarrow\)T}_{0}\itbox{}}$;
      moreover, by \lemref{wp}, $\ndp(\ensuremath{\itbox{m}}, \ensuremath{\itbox{D}}, \ensuremath{\itbox{}\overline{\itbox{L}}\itbox{}}, \ensuremath{\itbox{}\overline{\itbox{L}}\itbox{}})$.
      So,  $\ensuremath{\itbox{C}_{0}\itbox{.m}} \p \ensuremath{\itbox{<D,}\overline{\itbox{L}}\itbox{,}\overline{\itbox{L}}\itbox{> ok}}$.
      By the induction hypothesis, we have $\bullet; \LSet;
      \Gp S\ensuremath{\itbox{}\overline{\itbox{e}}\itbox{}} : \ensuremath{\itbox{}\overline{\itbox{S}}\itbox{{\ensuremath{'}}}}$ and, by assumption, $\ensuremath{\itbox{}\overline{\itbox{S}}\itbox{{\ensuremath{'}}}} \LEQ \ensuremath{\itbox{}\overline{\itbox{T}}\itbox{{\ensuremath{'}}}}$.  
      Finally, \rn{T-InvkA} finishes the case.\qed
      \end{sloppypar}
    \end{rneqncase}
  \end{enumerate}
\end{proof}

\begin{lemmaapp}[Inversion for method body, \lemref{def:mbody-mtype}]\label{lem:mbody-mtype}
  \begin{sloppypar}
    Suppose 
    \(\set{\ensuremath{\itbox{}\overline{\itbox{L}}\itbox{}}}\, \WF\) and 
    $\mbody(\ensuremath{\itbox{m}}, \ensuremath{\itbox{C}}, \ensuremath{\itbox{}\overline{\itbox{L}}\itbox{{\ensuremath{'}}}}, \ensuremath{\itbox{}\overline{\itbox{L}}\itbox{}}) = \ensuremath{\itbox{}\overline{\itbox{x}}\itbox{.e}_{0}\itbox{}} \IN \ensuremath{\itbox{C{\ensuremath{'}}}},\ensuremath{\itbox{}\overline{\itbox{L}}\itbox{{\ensuremath{'}}{\ensuremath{'}}}}$ and
    $\mtype(\ensuremath{\itbox{m}}, \ensuremath{\itbox{C}}, \set{\ensuremath{\itbox{}\overline{\itbox{L}}\itbox{{\ensuremath{'}}}}}, \set{\ensuremath{\itbox{}\overline{\itbox{L}}\itbox{}}}) = \ensuremath{\itbox{}\overline{\itbox{T}}\itbox{\(\rightarrow\)T}_{0}\itbox{}}$ and
    $\ndp(\ensuremath{\itbox{m}},\ensuremath{\itbox{C}},\ensuremath{\itbox{}\overline{\itbox{L}}\itbox{{\ensuremath{'}}}},\ensuremath{\itbox{}\overline{\itbox{L}}\itbox{}})$.
    \begin{enumerate}
    \item If $\ensuremath{\itbox{}\overline{\itbox{L}}\itbox{{\ensuremath{'}}{\ensuremath{'}}}} = \ensuremath{\itbox{}\overline{\itbox{L}}\itbox{{\ensuremath{'}}{\ensuremath{'}}{\ensuremath{'}};L}_{0}\itbox{}}$, then 
	  $\ensuremath{\itbox{L}_{0}\itbox{ req }}\LSet$ and 
	  $\ensuremath{\itbox{L}_{0}\itbox{.C{\ensuremath{'}}.m}}; \LSet \cup \set{\ensuremath{\itbox{L}_{0}\itbox{}}}; \ensuremath{\itbox{}\overline{\itbox{x}}\itbox{}}: \ensuremath{\itbox{}\overline{\itbox{T}}\itbox{}}, \ensuremath{\itbox{this}}:\ensuremath{\itbox{C{\ensuremath{'}}}} \p \ensuremath{\itbox{e}_{0}\itbox{}} : \ensuremath{\itbox{U}_{0}\itbox{}}$ and
	  $\ensuremath{\itbox{C}} \LEQ \ensuremath{\itbox{C{\ensuremath{'}}}}$ and 
	  $\ensuremath{\itbox{U}_{0}\itbox{}} \LEQ \ensuremath{\itbox{T}_{0}\itbox{}}$ and
      $\ndp(\ensuremath{\itbox{m}},\ensuremath{\itbox{C{\ensuremath{'}}}},\ensuremath{\itbox{}\overline{\itbox{L}}\itbox{{\ensuremath{'}}{\ensuremath{'}}}},\ensuremath{\itbox{}\overline{\itbox{L}}\itbox{}})$ for some $\LSet$ and \ensuremath{\itbox{U}_{0}\itbox{}}. 

    \item If $\ensuremath{\itbox{}\overline{\itbox{L}}\itbox{{\ensuremath{'}}{\ensuremath{'}}}} = \bullet$, then $\ensuremath{\itbox{C{\ensuremath{'}}.m}}; \emptyset; \ensuremath{\itbox{}\overline{\itbox{x}}\itbox{}}:\ensuremath{\itbox{}\overline{\itbox{T}}\itbox{}},
	  \ensuremath{\itbox{this}}:\ensuremath{\itbox{C{\ensuremath{'}}}} \p \ensuremath{\itbox{e}_{0}\itbox{}} : \ensuremath{\itbox{U}_{0}\itbox{}}$ and 
	  $\ensuremath{\itbox{C}} \LEQ \ensuremath{\itbox{C{\ensuremath{'}}}}$ and 
	  $\ensuremath{\itbox{U}_{0}\itbox{}} \LEQ \ensuremath{\itbox{T}_{0}\itbox{}}$ and
      $\ndp(\ensuremath{\itbox{m}},\ensuremath{\itbox{C{\ensuremath{'}}}},\bullet,\ensuremath{\itbox{}\overline{\itbox{L}}\itbox{}})$ for some \ensuremath{\itbox{U}_{0}\itbox{}}.
    \end{enumerate}
    
  \end{sloppypar}
\end{lemmaapp}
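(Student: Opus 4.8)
The plan is to prove the lemma by induction on the derivation of $\mbody(m, C, \overline{L}', \overline{L}) = \overline{x}.e_0 \IN C', \overline{L}''$, with a case analysis on the last rule applied. Of the four rules (\rn{MB-Class}, \rn{MB-Layer}, \rn{MB-Super}, \rn{MB-NextLayer}), the first two are where the required typing judgement for $e_0$ is actually produced, whereas the last two only shift the ``cursor'', and the bulk of the work there is to check that the hypotheses $\set{\overline{L}}\,\WF$, $\mtype(m, C, \set{\overline{L}'}, \set{\overline{L}}) = \overline{T}\rightarrow T_0$, and $\ndp(m, C, \overline{L}', \overline{L})$ transport to the recursive premise so that the induction hypothesis applies. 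A recurring device is that $\mtype$, although formally a relation, is functional on a well-typed program (\lemref{mtype_layer}), so the signature $\overline{T}\rightarrow T_0$ is pinned down regardless of which $\mtype$ rule was used to derive it.

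In case \rn{MB-Class} we have $\overline{L}' = \bullet$, $C' = C$, $\overline{L}'' = \bullet$, and $C$ declares $m$ with body $\overline{x}.e_0$; then $\mtype(m, C, \emptyset, \set{\overline{L}})$ can only come from \rn{MT-Class}, so it reads off exactly the declared signature, well-typedness of $e_0$ with $U_0 \LEQ T_0$ follows from well-formedness of $\CT(C)$ via \rn{T-Class} and \rn{T-Method}, and $\ndp(m, C, \bullet, \overline{L})$ is among the hypotheses. In case \rn{MB-Layer} we have $\overline{L}' = \overline{L}'' = (\overline{L}''';L_0)$, $C' = C$, and $\pmbody(m, C, L_0)$ is defined; since $\pmbody$ defined entails $\pmtype$ defined, \rn{MT-PMethod} together with \lemref{mtype_layer} forces $\pmtype(m, C, L_0) = \overline{T}\rightarrow T_0$, and then \lemref{def:pmbody} --- taking $\LSet$ to be the set that $L_0$ requires --- delivers the typing judgement for $e_0$ at location $L_0.C.m$ under $\LSet \cup \set{L_0}$; here $C \LEQ C$ is reflexivity and $\ndp(m, C, \overline{L}', \overline{L})$ is again a hypothesis.

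For the step cases the task is to establish the premises needed to invoke the induction hypothesis. In \rn{MB-Super} we have $\overline{L}' = \bullet$, $C \triangleleft D$, $m\notin\overline{M}$, and the recursive call is $\mbody(m, D, \overline{L}, \overline{L})$; inverting $\mtype(m, C, \emptyset, \set{\overline{L}})$ (only \rn{MT-Super} applies) yields $\mtype(m, D, \set{\overline{L}}, \set{\overline{L}}) = \overline{T}\rightarrow T_0$, and inverting $\ndp(m, C, \bullet, \overline{L})$ (only \rn{NDP-Super} applies) yields $\ndp(m, D, \overline{L}, \overline{L})$; the induction hypothesis on the $D$-premise then gives the desired typing, subtyping, and $\ndp$ facts with the same $C'$ and $\overline{L}''$, and $C \LEQ C'$ follows from $C \LEQ D \LEQ C'$. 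In \rn{MB-NextLayer} we have $\overline{L}' = (\overline{L}''';L_0)$, $\pmbody(m, C, L_0)$ \emph{undefined}, and the recursive call is $\mbody(m, C, \overline{L}''', \overline{L})$; the key observation is that since $\pmbody(m, C, L_0)$ is undefined, $\pmtype(m, C, L_0)$ is undefined, so $L_0$ can be neither the layer witnessing a use of \rn{MT-PMethod} for $\mtype(m, C, \set{\overline{L}'}, \set{\overline{L}})$ nor the layer witnessing \rn{NDP-Layer} for $\ndp(m, C, \overline{L}', \overline{L})$; hence, by analysing the last rule of each derivation, both $\mtype(m, C, \set{\overline{L}'''}, \set{\overline{L}}) = \overline{T}\rightarrow T_0$ (using \lemref{mtype_layer} to keep the signature) and $\ndp(m, C, \overline{L}''', \overline{L})$ hold, and the induction hypothesis closes the case.

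The step I expect to be the main obstacle is this last bit of bookkeeping about $\ndp$. Unlike $\mtype$, $\ndp$ is not visibly monotone in the cursor, so one must argue rule-by-rule that shortening the cursor in \rn{MB-NextLayer} (respectively, descending to the superclass in \rn{MB-Super}) preserves it, and this argument leans precisely on the side condition $\pmbody(m, C, L_0)$ undefined to exclude the \rn{NDP-Layer} case --- the very ``no dangling \ensuremath{\itbox{proceed}}'' invariant for which $\ndp$ was introduced. A smaller, fussier point is reconciling the return-type inequality that comes out of \lemref{def:pmbody} with the $\LEQ$ demanded in the conclusion, but this is harmless since the declared return types of partial methods related along a superlayer chain coincide by $\noconflict$ and $\override^h$.
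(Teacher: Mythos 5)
Your proof is correct and takes essentially the same route as the paper's: a simultaneous induction on the $\mbody$ derivation in which the base rules \rn{MB-Class} and \rn{MB-Layer} are discharged by inverting class/partial-method well-formedness (the paper inlines the content of \lemref{pmbody} in the \rn{MB-Layer} case, walking up $\ensuremath{\itbox{L}_{0}\itbox{}} \LEQ_w \ensuremath{\itbox{L}_{1}\itbox{}}$ with narrowing, where you invoke that lemma directly), and the step rules are discharged by transporting $\mtype$ via its functionality (\lemref{mtype_layer}) and $\ndp$ via exactly your observation that $\pmbody(\ensuremath{\itbox{m}},\ensuremath{\itbox{C}},\ensuremath{\itbox{L}_{0}\itbox{}})$ being undefined excludes $\ensuremath{\itbox{L}_{0}\itbox{}}$ as the \rn{NDP-Layer} witness. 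The only small divergence is in \rn{MB-Super}, where you obtain $\ndp(\ensuremath{\itbox{m}},\ensuremath{\itbox{D}},\ensuremath{\itbox{}\overline{\itbox{L}}\itbox{}},\ensuremath{\itbox{}\overline{\itbox{L}}\itbox{}})$ by inverting the $\ndp$ hypothesis (valid, since with an empty prefix neither \rn{NDP-Class} nor \rn{NDP-Layer} can apply when \ensuremath{\itbox{m}} is absent from \ensuremath{\itbox{C}}), whereas the paper re-derives it from $\set{\ensuremath{\itbox{}\overline{\itbox{L}}\itbox{}}}\,\WF$ and $\mtype$ via \lemref{wp}; both variants are sound.
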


\begin{proof}
  Both 1 and 2 are proved simultaneously by induction on $\mbody(\ensuremath{\itbox{m}}, \ensuremath{\itbox{C}}, \ensuremath{\itbox{}\overline{\itbox{L}}\itbox{{\ensuremath{'}}}}, \ensuremath{\itbox{}\overline{\itbox{L}}\itbox{}}) = \ensuremath{\itbox{}\overline{\itbox{x}}\itbox{.e}_{0}\itbox{}} \IN \ensuremath{\itbox{C{\ensuremath{'}}}}, \ensuremath{\itbox{}\overline{\itbox{L}}\itbox{{\ensuremath{'}}{\ensuremath{'}}}}$.

  \begin{rneqncase}{MB-Class}{
      \ensuremath{\itbox{class C \(\triangleleft\) D {\char'173}... S}_{0}\itbox{ m(}\overline{\itbox{S}}\itbox{ }\overline{\itbox{x}}\itbox{){\char'173}return e}_{0}\itbox{;{\char'175} ...{\char'175}}} \\
      \ensuremath{\itbox{C{\ensuremath{'}}}} = \ensuremath{\itbox{C}} \andalso      
      \ensuremath{\itbox{}\overline{\itbox{L}}\itbox{{\ensuremath{'}}}} = \bullet \andalso
      \ensuremath{\itbox{}\overline{\itbox{L}}\itbox{{\ensuremath{'}}{\ensuremath{'}}}} = \bullet
    }
    By \rn{T-Class}, \rn{T-Method}, \rn{MT-Class}, 
    it must be the case that 
    $$
    \begin{bcpcasearray}
      \ensuremath{\itbox{T}_{0}\itbox{}}, \ensuremath{\itbox{}\overline{\itbox{T}}\itbox{}} = \ensuremath{\itbox{S}_{0}\itbox{}}, \ensuremath{\itbox{}\overline{\itbox{S}}\itbox{}} &
      \ensuremath{\itbox{C.m}}; \emptyset; \ensuremath{\itbox{}\overline{\itbox{x}}\itbox{}}:\ensuremath{\itbox{}\overline{\itbox{T}}\itbox{}}, \ensuremath{\itbox{this}} : \ensuremath{\itbox{C}} \p \ensuremath{\itbox{e}_{0}\itbox{}} : \ensuremath{\itbox{U}_{0}\itbox{}} &
      \ensuremath{\itbox{U}_{0}\itbox{}} \LEQ \ensuremath{\itbox{T}_{0}\itbox{}}
    \end{bcpcasearray}
    $$
    for some \ensuremath{\itbox{U}_{0}\itbox{}}.  We have $\ndp(\ensuremath{\itbox{m}},\ensuremath{\itbox{C}},\bullet,\ensuremath{\itbox{}\overline{\itbox{L}}\itbox{}})$ by \rn{NDP-Class},
    finishing the case.

  \end{rneqncase}

  \begin{rneqncase}{MB-Layer}{
      \pmbody(\ensuremath{\itbox{m}}, \ensuremath{\itbox{C}}, \ensuremath{\itbox{L}_{0}\itbox{}}) = \ensuremath{\itbox{}\overline{\itbox{x}}\itbox{.e}_{0}\itbox{ in L}_{1}\itbox{}} \andalso
      \ensuremath{\itbox{C{\ensuremath{'}}}} = \ensuremath{\itbox{C}} \andalso
      \ensuremath{\itbox{}\overline{\itbox{L}}\itbox{{\ensuremath{'}}{\ensuremath{'}}}} = \ensuremath{\itbox{}\overline{\itbox{L}}\itbox{{\ensuremath{'}}}}
    }
    \begin{sloppypar}\noindent
    By the definition of \pmbody, there exists some \ensuremath{\itbox{L}_{1}\itbox{}} such that
    $\LT(\ensuremath{\itbox{L}_{1}\itbox{}})(\ensuremath{\itbox{C.m}}) = \ensuremath{\itbox{S}_{0}\itbox{ C.m(}\overline{\itbox{S}}\itbox{ }\overline{\itbox{x}}\itbox{){\char'173} return e; {\char'175}}}$ and $\ensuremath{\itbox{L}_{0}\itbox{}} \LEQ_w \ensuremath{\itbox{L}_{1}\itbox{}}$.
    By \rn{T-PMethod}, it must be the case that 
    $$
    \begin{bcpcasearray}
      \ensuremath{\itbox{T}_{0}\itbox{}}, \ensuremath{\itbox{}\overline{\itbox{T}}\itbox{}} = \ensuremath{\itbox{S}_{0}\itbox{}}, \ensuremath{\itbox{}\overline{\itbox{S}}\itbox{}} &
      \ensuremath{\itbox{L}_{1}\itbox{ req }}\LSet_1 &
      \ensuremath{\itbox{L}_{1}\itbox{.C.m}}; \LSet_1 \cup \set{\ensuremath{\itbox{L}_{1}\itbox{}}}; \ensuremath{\itbox{}\overline{\itbox{x}}\itbox{}} : \ensuremath{\itbox{}\overline{\itbox{T}}\itbox{}}, \ensuremath{\itbox{this}} : \ensuremath{\itbox{C}} \p \ensuremath{\itbox{e}_{0}\itbox{}} : \ensuremath{\itbox{U}_{0}\itbox{}} &
      \ensuremath{\itbox{U}_{0}\itbox{}} \LEQ \ensuremath{\itbox{T}_{0}\itbox{}}
    \end{bcpcasearray}
    $$ for some \ensuremath{\itbox{U}_{0}\itbox{}} and \(\LSet_1\).  It is easy to show by induction on $\ensuremath{\itbox{L}_{0}\itbox{}}
    \LEQ_w \ensuremath{\itbox{L}_{1}\itbox{}}$ using \lemref{narrowing} and \rn{T-Layer} and
    \rn{T-LayerSW} that 
     $$\ensuremath{\itbox{L}_{0}\itbox{.C.m}}; \LSet \cup \set{\ensuremath{\itbox{L}_{0}\itbox{}}}; \ensuremath{\itbox{}\overline{\itbox{x}}\itbox{}} : \ensuremath{\itbox{}\overline{\itbox{T}}\itbox{}}, \ensuremath{\itbox{this}} : \ensuremath{\itbox{C}} \p
     \ensuremath{\itbox{e}_{0}\itbox{}} : \ensuremath{\itbox{U}_{0}\itbox{}}$$ for some $\LSet$ such that $\ensuremath{\itbox{L}_{0}\itbox{ req }}\LSet$.
    Finally, we have \ndp(\ensuremath{\itbox{m}},\ensuremath{\itbox{C{\ensuremath{'}}}},\ensuremath{\itbox{}\overline{\itbox{L}}\itbox{{\ensuremath{'}}{\ensuremath{'}}}},\ensuremath{\itbox{}\overline{\itbox{L}}\itbox{}}) by
    assumption, finishing the case.
    \end{sloppypar}
  \end{rneqncase}

  \begin{rneqncase}{MB-Super}{
      \ensuremath{\itbox{}\overline{\itbox{L}}\itbox{{\ensuremath{'}}}} = \bullet \andalso
      \ensuremath{\itbox{class C \(\triangleleft\) D {\char'173} ... }\overline{\itbox{M}}\itbox{ {\char'175}}} &
      \ensuremath{\itbox{m}} \not \in \ensuremath{\itbox{}\overline{\itbox{M}}\itbox{}} \\
      \mbody(\ensuremath{\itbox{m}},\ensuremath{\itbox{D}},\ensuremath{\itbox{}\overline{\itbox{L}}\itbox{}}, \ensuremath{\itbox{}\overline{\itbox{L}}\itbox{}}) = \ensuremath{\itbox{}\overline{\itbox{x}}\itbox{.e}_{0}\itbox{}} \IN \ensuremath{\itbox{C{\ensuremath{'}}}}, \ensuremath{\itbox{}\overline{\itbox{L}}\itbox{{\ensuremath{'}}{\ensuremath{'}}}}
    }
    By \rn{MT-Super}, it must be the case that
    $\mtype(\ensuremath{\itbox{m}},\ensuremath{\itbox{D}},\set{\ensuremath{\itbox{}\overline{\itbox{L}}\itbox{}}}, \set{\ensuremath{\itbox{}\overline{\itbox{L}}\itbox{}}}) = \ensuremath{\itbox{}\overline{\itbox{T}}\itbox{\(\rightarrow\)T}_{0}\itbox{}}$.  By
    \lemref{wp}, we have \ndp(\ensuremath{\itbox{m}},\ensuremath{\itbox{D}},\ensuremath{\itbox{}\overline{\itbox{L}}\itbox{}},\ensuremath{\itbox{}\overline{\itbox{L}}\itbox{}}).  The induction
    hypothesis and transitivity of subtyping finish the case.
  \end{rneqncase}

  \begin{rneqncase}{MB-NextLayer}{
      \ensuremath{\itbox{}\overline{\itbox{L}}\itbox{{\ensuremath{'}}}} = \ensuremath{\itbox{}\overline{\itbox{L}}\itbox{}_{b}\itbox{;L}_{1}\itbox{}} \andalso
      \pmbody(\ensuremath{\itbox{m}},\ensuremath{\itbox{C}},\ensuremath{\itbox{L}_{1}\itbox{}}) \undf \\
      \mbody(\ensuremath{\itbox{m}}, \ensuremath{\itbox{C}}, \ensuremath{\itbox{}\overline{\itbox{L}}\itbox{}_{b}\itbox{}}, \ensuremath{\itbox{}\overline{\itbox{L}}\itbox{}}) = \ensuremath{\itbox{}\overline{\itbox{x}}\itbox{.e}_{0}\itbox{}} \IN \ensuremath{\itbox{C{\ensuremath{'}}}}, \ensuremath{\itbox{}\overline{\itbox{L}}\itbox{{\ensuremath{'}}{\ensuremath{'}}}}
    }
    We show $\ndp(\ensuremath{\itbox{m}},\ensuremath{\itbox{C}},\ensuremath{\itbox{}\overline{\itbox{L}}\itbox{}_{b}\itbox{}},\ensuremath{\itbox{}\overline{\itbox{L}}\itbox{}})$ holds by case analysis
    on $\ndp(\ensuremath{\itbox{m}},\ensuremath{\itbox{C}},\ensuremath{\itbox{}\overline{\itbox{L}}\itbox{}_{b}\itbox{}};\ensuremath{\itbox{L}_{1}\itbox{}},\ensuremath{\itbox{}\overline{\itbox{L}}\itbox{}})$.  The cases \rn{NDP-Super}
    and \rn{NDP-Class} are easy.
    The case  \rn{NDP-Layer} is easy, too:
    since $\pmbody(\ensuremath{\itbox{m}},\ensuremath{\itbox{C}},\ensuremath{\itbox{L}_{1}\itbox{}}) \undf$, by \rn{NDP-Layer}, we have
    $\ndp(\ensuremath{\itbox{m}},\ensuremath{\itbox{C}},\ensuremath{\itbox{}\overline{\itbox{L}}\itbox{}_{b}\itbox{}},\ensuremath{\itbox{}\overline{\itbox{L}}\itbox{}})$.  Since $\pmbody(\ensuremath{\itbox{m}},\ensuremath{\itbox{C}},\ensuremath{\itbox{L}_{1}\itbox{}})$ is undefined
    and $\mtype(\ensuremath{\itbox{m}}, \ensuremath{\itbox{C}}, \set{\ensuremath{\itbox{}\overline{\itbox{L}}\itbox{{\ensuremath{'}}}}}, \set{\ensuremath{\itbox{}\overline{\itbox{L}}\itbox{}}}) = \ensuremath{\itbox{}\overline{\itbox{T}}\itbox{\(\rightarrow\)T}_{0}\itbox{}}$, it
    must be the case that $\mtype(\ensuremath{\itbox{m}}, \ensuremath{\itbox{C}}, \set{\ensuremath{\itbox{}\overline{\itbox{L}}\itbox{}_{b}\itbox{}}}, \set{\ensuremath{\itbox{}\overline{\itbox{L}}\itbox{}}})
    = \ensuremath{\itbox{}\overline{\itbox{T}}\itbox{\(\rightarrow\)T}_{0}\itbox{}}$.  Then, the induction hypothesis finishes the case. \qed
  \end{rneqncase}

\end{proof}

\begin{theoremapp}[Subject Reduction]\label{thm:subject-reduction}
  Suppose $\p (\CT, \LT)\ensuremath{\itbox{ ok}}$.  If $\bullet;
  \set{\ensuremath{\itbox{}\overline{\itbox{L}}\itbox{}}}; \Gp \ensuremath{\itbox{e}} : \ensuremath{\itbox{T}}$ and \set{\ensuremath{\itbox{}\overline{\itbox{L}}\itbox{}}}\, \WF \ and
  $\reduceto{\ensuremath{\itbox{}\overline{\itbox{L}}\itbox{}}}{\ensuremath{\itbox{e}}}{\ensuremath{\itbox{e{\ensuremath{'}}}}}$, then $\bullet; \set{\ensuremath{\itbox{}\overline{\itbox{L}}\itbox{}}}; \Gp \ensuremath{\itbox{e{\ensuremath{'}}}} :
  \ensuremath{\itbox{S}}$ for some \ensuremath{\itbox{S}} such that $\ensuremath{\itbox{S}} \LEQ \ensuremath{\itbox{T}}$.
\end{theoremapp}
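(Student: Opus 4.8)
The plan is to proceed by induction on the derivation of $\reduceto{\overline{L}}{e}{e'}$, with a case analysis on the last reduction rule applied. Since the calculus has no stand-alone subsumption rule, inverting the typing derivation $\bullet;\set{\overline{L}};\Gamma\p e : T$ is deterministic given the syntactic shape of $e$, so in each case I first read off the premises about the immediate subterms and then reassemble a derivation for $e'$, keeping track that the resulting type may only shrink with respect to $\LEQ$. Throughout the induction the side invariant $\set{\overline{L}}\,\WF$ is carried along; only the cases \rn{RC-With} and \rn{RC-Swap}, where the active subterm reduces under a modified layer sequence, need to re-establish it.

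The congruence cases \rn{RC-Field}, \rn{RC-InvkRecv}, \rn{RC-InvkArg}, \rn{RC-New}, \rn{RC-WithArg}, \rn{RC-SwapArg}, \rn{RC-InvkAArg1} and \rn{RC-InvkAArg2} reduce the active subterm under the same layer sequence $\overline{L}$, so the induction hypothesis applies immediately; re-applying the same typing rule closes each case, invoking \lemref{subtype-fields} (when the receiver of a field access gets a smaller type) or \lemref{subtype-mtype} (when the receiver of an invocation gets a smaller type) to re-derive the $\fields$/$\mtype$ premise, and transitivity of $\LEQ$ for the argument-subtyping side conditions. In \rn{RC-With} the body reduces under $\with(\ensuremath{\itbox{L}},\overline{L})$; the key facts are that $\set{\with(\ensuremath{\itbox{L}},\overline{L})}$ equals $\set{\overline{L}}\cup\set{\ensuremath{\itbox{L}}}$, which is exactly the layer set under which \rn{T-With} types the body, and that $\set{\with(\ensuremath{\itbox{L}},\overline{L})}\,\WF$ follows from $\set{\overline{L}}\,\WF$ by \rn{Wf-With} (whose side conditions come from inverting \rn{T-With}); the induction hypothesis then applies and \rn{T-With} rebuilds the derivation. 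Case \rn{RC-Swap} is analogous, this time noting that $\set{\swap(\ensuremath{\itbox{L}},\ensuremath{\itbox{L}_{{sw}}\itbox{}},\overline{L})}$ equals $(\set{\overline{L}}\setminus\{\,\ensuremath{\itbox{L{\ensuremath{'}}}}\mid\ensuremath{\itbox{L{\ensuremath{'}}}}\LEQ_w\ensuremath{\itbox{L}_{{sw}}\itbox{}}\,\})\cup\set{\ensuremath{\itbox{L}}}$ (using that $\LEQ_w$ is the reflexive transitive closure of $\triangleleft$), matching the body context of \rn{T-Swap}, with $\WF$-preservation via \rn{Wf-Swap}. For \rn{R-WithVal} and \rn{R-SwapVal} the body is already a value, which \lemref{value-strengthening} retypes under $\bullet;\set{\overline{L}};\Gamma$ at the same type; and \rn{R-Field} is immediate from inverting \rn{T-Field} and the definition of $\fields$.

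The substantive cases are the four method-invocation reductions. For \rn{R-Invk}, which rewrites a source-level invocation into the run-time form whose cursor is initialised to the currently activated layers, I rebuild the typing with \rn{T-InvkA}: the receiver and argument judgments come from inverting \rn{T-Invk}; the cursor-wellformedness judgment holds by \rn{Wf-Cursor} from $\set{\overline{L}}\,\WF$ together with the $\ndp$ judgment that \lemref{wp} extracts from the $\mtype$ premise; and $\set{\overline{L}}\LEQ_{sw}\set{\overline{L}}$ is reflexivity. For \rn{R-InvkB} and \rn{R-InvkP} I invert \rn{T-InvkA} to obtain the cursor-wellformedness judgment (hence $\set{\overline{L}}\,\WF$, a class-subtyping chain from the receiver's class up to the cursor's class annotation, and the $\ndp$ condition), apply \lemref{mbody-mtype} to learn that the looked-up body is well typed---at location $\ensuremath{\itbox{C{\ensuremath{'}}.m}}$ under the empty layer set for \rn{R-InvkB}, or at $\ensuremath{\itbox{L}_{0}\itbox{.C{\ensuremath{'}}.m}}$ under $\ensuremath{\itbox{L}_{0}\itbox{}}$'s required layer set together with $\ensuremath{\itbox{L}_{0}\itbox{}}$ itself for \rn{R-InvkP}---with return type below $\ensuremath{\itbox{T}_{0}\itbox{}}$ and with the $\ndp$ invariant propagated to the new cursor, then use \lemref{substitution} (with \lemref{weakening} and \lemref{value-strengthening} to accommodate the now-empty variable environment) to substitute the actual arguments and the receiver for $\ensuremath{\itbox{this}}$, and finally \lemref{substitution-super}---item~2 for \rn{R-InvkB}, where only $\ensuremath{\itbox{super}}$ can occur, and item~1 for \rn{R-InvkP}, where $\ensuremath{\itbox{super}}$, $\ensuremath{\itbox{proceed}}$ and $\ensuremath{\itbox{superproceed}}$ can all occur---to substitute the cursor-annotated receivers for those keywords and land back at location $\bullet$. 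Case \rn{R-InvkSP} follows the same pattern, except the body is sought by $\pmbody$ in a superlayer, so \lemref{pmbody} replaces \lemref{mbody-mtype}, and \lemref{wp_mtype} together with \lemref{mtype_layer} reconcile the $\pmtype$ information carried in the cursor with the $\mtype$ information \lemref{substitution-super} expects. In all cases transitivity of $\LEQ$ closes the gap between the body's return type and $\ensuremath{\itbox{T}_{0}\itbox{}}$.

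The hard part is \rn{R-InvkP} (and, in the same spirit, \rn{R-InvkSP}): the several cursor-annotated receivers it generates---for $\ensuremath{\itbox{this}}$, $\ensuremath{\itbox{super}}$, $\ensuremath{\itbox{proceed}}$ and $\ensuremath{\itbox{superproceed}}$---must each be matched against exactly the wellformedness and $\mtype$/$\pmtype$ side conditions demanded by \lemref{substitution-super}, and the ``non-dangling $\ensuremath{\itbox{proceed}}$'' invariant $\ndp$ has to be threaded from the cursor of the reduced invocation, through the conclusion of \lemref{mbody-mtype}, into the hypothesis ``$\ensuremath{\itbox{proceed}}\in e_0$ implies $\ndp$'' of \lemref{substitution-super}, so that a chain of $\ensuremath{\itbox{proceed}}$ calls is guaranteed to bottom out rather than run off the end of the activated layers. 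The delicate technical point inside this is the interplay of the weakened layer-set ordering $\LEQ_{sw}$ with swappable layers in those side conditions; indeed, the $\ndp$ predicate and the relation $\LEQ_{sw}$ were introduced in the first place precisely so that \lemref{substitution-super}---and hence this case---goes through.
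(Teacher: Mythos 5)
Your proposal is correct and follows essentially the same route as the paper's own proof: induction on the reduction derivation with the same congruence/substantive case split, the same reconstruction of \rn{R-Invk} via \rn{T-InvkA} using \rn{Wf-Cursor} and the $\ndp$ extraction, the same use of the inversion lemmas for (partial) method bodies followed by the two substitution lemmas, and the same threading of $\ndp$ and $\LEQ_{sw}$ through the \rn{R-InvkP}/\rn{R-InvkSP} cases. The one spot where your sketch is thinner than the paper is \rn{RC-WithArg}/\rn{RC-SwapArg}, which are not purely mechanical: there the induction hypothesis gives the layer expression a \emph{normal} subtype $\ensuremath{\itbox{L{\ensuremath{'}}}} \LEQ \ensuremath{\itbox{L}}$, and one must observe that \rn{LS-Extends} forces \ensuremath{\itbox{L{\ensuremath{'}}}} and \ensuremath{\itbox{L}} to have identical \ensuremath{\itbox{requires}} clauses and then apply layer-set narrowing to retype the body under $\LSet \cup \set{\ensuremath{\itbox{L{\ensuremath{'}}}}}$ --- a detail that fits your framework but is precisely the reason normal (rather than weak) subtyping is needed for first-class layers.
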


\begin{proof}
 By induction on $\reduceto{\ensuremath{\itbox{}\overline{\itbox{L}}\itbox{}}}{\ensuremath{\itbox{e}}}{\ensuremath{\itbox{e{\ensuremath{'}}}}}$ with case analysis on
 the last reduction rule used.  We show only main cases.
 
 \begin{rneqncase}{R-Field}{
     \ensuremath{\itbox{e}} = \ensuremath{\itbox{new C}_{0}\itbox{(}\overline{\itbox{v}}\itbox{).f}_{i}\itbox{}} &
     \fields(\ensuremath{\itbox{C}_{0}\itbox{}}) = \ensuremath{\itbox{}\overline{\itbox{C}}\itbox{ }\overline{\itbox{f}}\itbox{}} &
     \ensuremath{\itbox{e{\ensuremath{'}}}} = \ensuremath{\itbox{v}_{i}\itbox{}}
   }
   By \rn{T-Field} and \rn{T-New}, it must be the case that
   $$
   \begin{bcpcasearray}
     \bullet; \set{\ensuremath{\itbox{}\overline{\itbox{L}}\itbox{}}}; \Gp \ensuremath{\itbox{}\overline{\itbox{v}}\itbox{}} : \ensuremath{\itbox{}\overline{\itbox{D}}\itbox{}} & 
     \ensuremath{\itbox{}\overline{\itbox{D}}\itbox{}} \LEQ \ensuremath{\itbox{}\overline{\itbox{C}}\itbox{}} &
     \ensuremath{\itbox{C}} = \ensuremath{\itbox{C}_{i}\itbox{}}
   \end{bcpcasearray}
   $$
   Then, we have $\bullet; \set{\ensuremath{\itbox{}\overline{\itbox{L}}\itbox{}}}; \Gp \ensuremath{\itbox{v}_{i}\itbox{}} : \ensuremath{\itbox{D}_{i}\itbox{}}$ and $\ensuremath{\itbox{D}_{i}\itbox{}}
   \LEQ \ensuremath{\itbox{C}_{i}\itbox{}}$, finishing the case.
 \end{rneqncase}

  \begin{rneqncase}{R-Invk}{
      \ensuremath{\itbox{e}} = \ensuremath{\itbox{new C}_{0}\itbox{(}\overline{\itbox{v}}\itbox{).m(}\overline{\itbox{w}}\itbox{)}} \\
      \reduceto{\ensuremath{\itbox{}\overline{\itbox{L}}\itbox{}}}{\ensuremath{\itbox{new C}_{0}\itbox{(}\overline{\itbox{v}}\itbox{)<C}_{0}\itbox{,}\overline{\itbox{L}}\itbox{,}\overline{\itbox{L}}\itbox{>.m(}\overline{\itbox{w}}\itbox{)}}}{\ensuremath{\itbox{e{\ensuremath{'}}}}}
    }
    By \rn{T-Invk} and \rn{T-New}, it must be the case that
    $$
    \begin{bcpcasearray}
      \bullet; \set{\ensuremath{\itbox{}\overline{\itbox{L}}\itbox{}}}; \Gp \ensuremath{\itbox{}\overline{\itbox{v}}\itbox{}} : \ensuremath{\itbox{}\overline{\itbox{S}}\itbox{}} &
      \fields(\ensuremath{\itbox{C}_{0}\itbox{}}) = \ensuremath{\itbox{}\overline{\itbox{T}}\itbox{ }\overline{\itbox{f}}\itbox{}} &
      \ensuremath{\itbox{}\overline{\itbox{S}}\itbox{}} \LEQ \ensuremath{\itbox{}\overline{\itbox{T}}\itbox{}}  \\
      \mtype(\ensuremath{\itbox{m}}, \ensuremath{\itbox{C}_{0}\itbox{}}, \set{\ensuremath{\itbox{}\overline{\itbox{L}}\itbox{}}}) = \ensuremath{\itbox{}\overline{\itbox{T}}\itbox{{\ensuremath{'}}\(\rightarrow\)T}} &
      \bullet; \set{\ensuremath{\itbox{}\overline{\itbox{L}}\itbox{}}}; \Gp \ensuremath{\itbox{}\overline{\itbox{w}}\itbox{}} : \ensuremath{\itbox{}\overline{\itbox{S}}\itbox{{\ensuremath{'}}}} &
      \ensuremath{\itbox{}\overline{\itbox{S}}\itbox{{\ensuremath{'}}}} \LEQ \ensuremath{\itbox{}\overline{\itbox{T}}\itbox{{\ensuremath{'}}}} .
    \end{bcpcasearray}
    $$
    By \lemref{wp}, $\ndp(\ensuremath{\itbox{m}},\ensuremath{\itbox{C}_{0}\itbox{}},\ensuremath{\itbox{}\overline{\itbox{L}}\itbox{}},\ensuremath{\itbox{}\overline{\itbox{L}}\itbox{}})$ and so     $\ensuremath{\itbox{C}_{0}\itbox{.m}} \p
    \ensuremath{\itbox{<C}_{0}\itbox{,}\overline{\itbox{L}}\itbox{,}\overline{\itbox{L}}\itbox{> ok}}$ holds.  Since $\set{\ensuremath{\itbox{}\overline{\itbox{L}}\itbox{}}} \LEQ_{sw} \set{\ensuremath{\itbox{}\overline{\itbox{L}}\itbox{}}}$, we have
    $$\bullet; \set{\ensuremath{\itbox{}\overline{\itbox{L}}\itbox{}}}; \Gp \ensuremath{\itbox{new C}_{0}\itbox{(}\overline{\itbox{v}}\itbox{)<C}_{0}\itbox{,}\overline{\itbox{L}}\itbox{,}\overline{\itbox{L}}\itbox{>.m(}\overline{\itbox{w}}\itbox{)}} : \ensuremath{\itbox{T}}$$
    by \rn{T-InvkA}.  By the induction hypothesis, $\bullet;
    \set{\ensuremath{\itbox{}\overline{\itbox{L}}\itbox{}}}; \Gp \ensuremath{\itbox{e{\ensuremath{'}}}} : \ensuremath{\itbox{S}}$ for some $\ensuremath{\itbox{S}} \LEQ \ensuremath{\itbox{T}}$, finishing the
    case.
  \end{rneqncase}

  \begin{rneqncase}{R-InvkP}{
    \ensuremath{\itbox{e}} = \ensuremath{\itbox{new C}_{0}\itbox{(}\overline{\itbox{v}}\itbox{)<C{\ensuremath{'}},}\overline{\itbox{L}}\itbox{{\ensuremath{'}}{\ensuremath{'}},}\overline{\itbox{L}}\itbox{{\ensuremath{'}}>.m(}\overline{\itbox{w}}\itbox{)}} \\
    \mbody(\ensuremath{\itbox{m}}, \ensuremath{\itbox{C{\ensuremath{'}}}}, \ensuremath{\itbox{}\overline{\itbox{L}}\itbox{{\ensuremath{'}}{\ensuremath{'}}}}, \ensuremath{\itbox{}\overline{\itbox{L}}\itbox{{\ensuremath{'}}}}) = \ensuremath{\itbox{}\overline{\itbox{x}}\itbox{.e}_{0}\itbox{}} \IN \ensuremath{\itbox{C{\ensuremath{'}}{\ensuremath{'}}}}, (\ensuremath{\itbox{}\overline{\itbox{L}}\itbox{{\ensuremath{'}}{\ensuremath{'}}{\ensuremath{'}}}};\ensuremath{\itbox{L}_{0}\itbox{}}) \\
    \ensuremath{\itbox{C{\ensuremath{'}}{\ensuremath{'}} \(\triangleleft\) D}} \\
    \ensuremath{\itbox{L}_{0}\itbox{ \(\triangleleft\) L}_{1}\itbox{}} \\
    \ensuremath{\itbox{e{\ensuremath{'}}}} = \left[
      \begin{array}{l@{/}l}
        \ensuremath{\itbox{new C}_{0}\itbox{(}\overline{\itbox{v}}\itbox{)}} & \ensuremath{\itbox{this}}, \\
        \ensuremath{\itbox{}\overline{\itbox{w}}\itbox{}} & \ensuremath{\itbox{}\overline{\itbox{x}}\itbox{}}, \\
        \ensuremath{\itbox{new C}_{0}\itbox{(}\overline{\itbox{v}}\itbox{)<C{\ensuremath{'}}{\ensuremath{'}},}\overline{\itbox{L}}\itbox{{\ensuremath{'}}{\ensuremath{'}}{\ensuremath{'}},}\overline{\itbox{L}}\itbox{{\ensuremath{'}}>.m}} & \ensuremath{\itbox{proceed}}, \\
        \ensuremath{\itbox{new C}_{0}\itbox{(}\overline{\itbox{v}}\itbox{)<D,}\overline{\itbox{L}}\itbox{{\ensuremath{'}},}\overline{\itbox{L}}\itbox{{\ensuremath{'}}>}} & \ensuremath{\itbox{super}}, \\
        \ensuremath{\itbox{new C}_{0}\itbox{(}\overline{\itbox{v}}\itbox{)<C{\ensuremath{'}}{\ensuremath{'}},L}_{1}\itbox{,}}(\ensuremath{\itbox{}\overline{\itbox{L}}\itbox{{\ensuremath{'}}{\ensuremath{'}}{\ensuremath{'}};L}_{0}\itbox{}})\ensuremath{\itbox{,}\overline{\itbox{L}}\itbox{{\ensuremath{'}}>.m}} & \ensuremath{\itbox{superproceed}}
      \end{array}
      \right]\ensuremath{\itbox{e}_{0}\itbox{}}
    }
    By \rn{T-InvkA}, it must be the case that 
    $$
    \begin{bcpcasearray}
      \bullet;\set{\ensuremath{\itbox{}\overline{\itbox{L}}\itbox{}}};\Gp \ensuremath{\itbox{new C}_{0}\itbox{(}\overline{\itbox{v}}\itbox{)}} : \ensuremath{\itbox{C}_{0}\itbox{}} &
      \ensuremath{\itbox{C}_{0}\itbox{.m}} \p \ensuremath{\itbox{<C{\ensuremath{'}},}\overline{\itbox{L}}\itbox{{\ensuremath{'}}{\ensuremath{'}},}\overline{\itbox{L}}\itbox{{\ensuremath{'}}> ok}} &
      \set{\ensuremath{\itbox{}\overline{\itbox{L}}\itbox{}}} \LEQ_{sw} \set{\ensuremath{\itbox{}\overline{\itbox{L}}\itbox{{\ensuremath{'}}}}} \\
      \mtype(\ensuremath{\itbox{m}},\ensuremath{\itbox{C{\ensuremath{'}}}},\set{\ensuremath{\itbox{}\overline{\itbox{L}}\itbox{{\ensuremath{'}}{\ensuremath{'}}}}}, \set{\ensuremath{\itbox{}\overline{\itbox{L}}\itbox{{\ensuremath{'}}}}}) = \ensuremath{\itbox{}\overline{\itbox{T}}\itbox{{\ensuremath{'}}\(\rightarrow\)T}} &
      \bullet;\set{\ensuremath{\itbox{}\overline{\itbox{L}}\itbox{}}};\Gp \ensuremath{\itbox{}\overline{\itbox{w}}\itbox{}} : \ensuremath{\itbox{}\overline{\itbox{S}}\itbox{{\ensuremath{'}}}} &
      \ensuremath{\itbox{}\overline{\itbox{S}}\itbox{{\ensuremath{'}}}} \LEQ \ensuremath{\itbox{}\overline{\itbox{T}}\itbox{{\ensuremath{'}}}}
    \end{bcpcasearray}
    $$
    for some \ensuremath{\itbox{}\overline{\itbox{T}}\itbox{{\ensuremath{'}}}} and \ensuremath{\itbox{}\overline{\itbox{S}}\itbox{{\ensuremath{'}}}}.

    By \lemref{mbody-mtype},
    $$
    \begin{bcpcasearray}
    \ensuremath{\itbox{L}_{0}\itbox{.C{\ensuremath{'}}{\ensuremath{'}}.m}}; \LSet \cup \set{\ensuremath{\itbox{L}_{0}\itbox{}}}; \ensuremath{\itbox{}\overline{\itbox{x}}\itbox{}}:\ensuremath{\itbox{}\overline{\itbox{T}}\itbox{}}, \ensuremath{\itbox{this}}: \ensuremath{\itbox{C{\ensuremath{'}}{\ensuremath{'}}}} \p \ensuremath{\itbox{e}_{0}\itbox{}} : \ensuremath{\itbox{S}} \\
    \ensuremath{\itbox{L}_{0}\itbox{ req }} \LSet \\
    \ensuremath{\itbox{C{\ensuremath{'}}}} \LEQ \ensuremath{\itbox{C{\ensuremath{'}}{\ensuremath{'}}}} \\
    \ensuremath{\itbox{S}} \LEQ \ensuremath{\itbox{T}} \\
    \ndp(\ensuremath{\itbox{m}},\ensuremath{\itbox{C{\ensuremath{'}}{\ensuremath{'}}}},\ensuremath{\itbox{(}\overline{\itbox{L}}\itbox{{\ensuremath{'}}{\ensuremath{'}}{\ensuremath{'}};L}_{0}\itbox{)}},\ensuremath{\itbox{}\overline{\itbox{L}}\itbox{{\ensuremath{'}}}})
    \end{bcpcasearray}
    $$
    and for some \(\LSet\) and \ensuremath{\itbox{S}}.

    By \rn{S-Trans}, $\ensuremath{\itbox{C}_{0}\itbox{}} \LEQ \ensuremath{\itbox{C{\ensuremath{'}}{\ensuremath{'}}}}$.
    From $\ndp(\ensuremath{\itbox{m}},\ensuremath{\itbox{C{\ensuremath{'}}{\ensuremath{'}}}},\ensuremath{\itbox{(}\overline{\itbox{L}}\itbox{{\ensuremath{'}}{\ensuremath{'}}{\ensuremath{'}};L}_{0}\itbox{)}},\ensuremath{\itbox{}\overline{\itbox{L}}\itbox{{\ensuremath{'}}}})$ and $\ensuremath{\itbox{C}_{0}\itbox{.m}} \p \ensuremath{\itbox{<C{\ensuremath{'}},}\overline{\itbox{L}}\itbox{{\ensuremath{'}}{\ensuremath{'}},}\overline{\itbox{L}}\itbox{{\ensuremath{'}}> ok}}$,
    it follows that $\ensuremath{\itbox{C}_{0}\itbox{.m}} \p \ensuremath{\itbox{<C{\ensuremath{'}}{\ensuremath{'}},}}(\ensuremath{\itbox{}\overline{\itbox{L}}\itbox{{\ensuremath{'}}{\ensuremath{'}}{\ensuremath{'}};L}_{0}\itbox{}})\ensuremath{\itbox{,}\overline{\itbox{L}}\itbox{{\ensuremath{'}}> ok}}$.  
    
    By $\set{\ensuremath{\itbox{}\overline{\itbox{L}}\itbox{{\ensuremath{'}}}}}\, \WF$ and \lemref{layer-set-wf1} and $\ensuremath{\itbox{L}_{0}\itbox{}} \in \ensuremath{\itbox{}\overline{\itbox{L}}\itbox{{\ensuremath{'}}}}$ and $\ensuremath{\itbox{L}_{0}\itbox{ req }}\LSet$,
    we have $\forall \ensuremath{\itbox{L}} \in \LSet, \exists \ensuremath{\itbox{L{\ensuremath{'}}}} \in
    \ensuremath{\itbox{}\overline{\itbox{L}}\itbox{{\ensuremath{'}}}}. \ensuremath{\itbox{L{\ensuremath{'}}}} \LEQ_w \ensuremath{\itbox{L}}$.  So, by \rn{LSS-Intro}, we have
    $\set{\ensuremath{\itbox{}\overline{\itbox{L}}\itbox{{\ensuremath{'}}}}} = \set{\ensuremath{\itbox{}\overline{\itbox{L}}\itbox{{\ensuremath{'}}}}} \cup \set{\ensuremath{\itbox{L}_{0}\itbox{}}} \LEQ_w \LSet \cup \set{\ensuremath{\itbox{L}_{0}\itbox{}}}$.  By this fact and
    $\set{\ensuremath{\itbox{}\overline{\itbox{L}}\itbox{}}} \LEQ_{sw} \set{\ensuremath{\itbox{}\overline{\itbox{L}}\itbox{{\ensuremath{'}}}}}$, we get $\set{\ensuremath{\itbox{}\overline{\itbox{L}}\itbox{}}} \LEQ_{sw}
    \LSet \cup \set{\ensuremath{\itbox{L}_{0}\itbox{}}}$.  By \lemref{narrowing},
    $$ 
    \ensuremath{\itbox{L}_{0}\itbox{.C{\ensuremath{'}}{\ensuremath{'}}.m}}; \set{\ensuremath{\itbox{}\overline{\itbox{L}}\itbox{}}}; \ensuremath{\itbox{}\overline{\itbox{x}}\itbox{}}:\ensuremath{\itbox{}\overline{\itbox{T}}\itbox{}}, \ensuremath{\itbox{this}}: \ensuremath{\itbox{C{\ensuremath{'}}{\ensuremath{'}}}} \p \ensuremath{\itbox{e}_{0}\itbox{}} : \ensuremath{\itbox{S}} 
    $$
    By $\ndp(\ensuremath{\itbox{m}},\ensuremath{\itbox{C{\ensuremath{'}}{\ensuremath{'}}}},\ensuremath{\itbox{(}\overline{\itbox{L}}\itbox{{\ensuremath{'}}{\ensuremath{'}}{\ensuremath{'}};L}_{0}\itbox{)}},\ensuremath{\itbox{}\overline{\itbox{L}}\itbox{{\ensuremath{'}}}})$ and the definition of \ndp,
    $\ensuremath{\itbox{proceed}} \in \ensuremath{\itbox{e}_{0}\itbox{}}$ implies $\ndp(\ensuremath{\itbox{m}},\ensuremath{\itbox{C{\ensuremath{'}}}},\ensuremath{\itbox{}\overline{\itbox{L}}\itbox{{\ensuremath{'}}{\ensuremath{'}}{\ensuremath{'}}}},\ensuremath{\itbox{}\overline{\itbox{L}}\itbox{{\ensuremath{'}}}})$.
    Then, by Lemmas~\ref{lem:value-strengthening} and \lemref{substitution-super}(1),
    \begin{eqnarray*}
    \bullet; \set{\ensuremath{\itbox{}\overline{\itbox{L}}\itbox{}}}; \ensuremath{\itbox{}\overline{\itbox{x}}\itbox{}}:\ensuremath{\itbox{}\overline{\itbox{T}}\itbox{}}, \ensuremath{\itbox{this}}: \ensuremath{\itbox{C{\ensuremath{'}}{\ensuremath{'}}}} \p 
    \left[\begin{array}{l@{/}l}
      \ensuremath{\itbox{new C}_{0}\itbox{(}\overline{\itbox{v}}\itbox{)<C{\ensuremath{'}}{\ensuremath{'}},}\overline{\itbox{L}}\itbox{{\ensuremath{'}}{\ensuremath{'}}{\ensuremath{'}},}\overline{\itbox{L}}\itbox{{\ensuremath{'}}>.m}} & \ensuremath{\itbox{proceed}}, \\
      \ensuremath{\itbox{new C}_{0}\itbox{(}\overline{\itbox{v}}\itbox{)<D,}\overline{\itbox{L}}\itbox{{\ensuremath{'}},}\overline{\itbox{L}}\itbox{{\ensuremath{'}}>}} & \ensuremath{\itbox{super}}, \\
      \ensuremath{\itbox{new C}_{0}\itbox{(}\overline{\itbox{v}}\itbox{)<C{\ensuremath{'}}{\ensuremath{'}},L}_{1}\itbox{,}}(\ensuremath{\itbox{}\overline{\itbox{L}}\itbox{{\ensuremath{'}}{\ensuremath{'}}{\ensuremath{'}};L}_{0}\itbox{}})\ensuremath{\itbox{,}\overline{\itbox{L}}\itbox{{\ensuremath{'}}>.m}} & \ensuremath{\itbox{superproceed}}
    \end{array}\right] \ensuremath{\itbox{e}_{0}\itbox{}} : \ensuremath{\itbox{S}} 
    \end{eqnarray*}
    By Lemmas~\ref{lem:value-strengthening}, \ref{lem:weakening} and
    \ref{lem:substitution}, $\bullet; \set{\ensuremath{\itbox{}\overline{\itbox{L}}\itbox{}}}; \Gp \ensuremath{\itbox{e{\ensuremath{'}}}} : \ensuremath{\itbox{S{\ensuremath{'}}}}$ for
    some $\ensuremath{\itbox{S{\ensuremath{'}}}} \LEQ \ensuremath{\itbox{S}}$.  By \rn{S-Trans}, $\ensuremath{\itbox{S{\ensuremath{'}}}} \LEQ \ensuremath{\itbox{T}}$, finishing
    the case.
  \end{rneqncase}

  \begin{rneqncase}{R-InvkSP}{
    \ensuremath{\itbox{e}} = \ensuremath{\itbox{new C}_{0}\itbox{(}\overline{\itbox{v}}\itbox{)<C{\ensuremath{'}},L}_{1}\itbox{,(}\overline{\itbox{L}}\itbox{{\ensuremath{'}}{\ensuremath{'}};L}_{0}\itbox{),}\overline{\itbox{L}}\itbox{{\ensuremath{'}}>.m(}\overline{\itbox{w}}\itbox{)}} \\
    \pmbody(\ensuremath{\itbox{m}}, \ensuremath{\itbox{C{\ensuremath{'}}}},\ensuremath{\itbox{L}_{1}\itbox{}}) = \ensuremath{\itbox{}\overline{\itbox{x}}\itbox{.e}_{0}\itbox{}} \IN \ensuremath{\itbox{L}_{2}\itbox{}}\\
    \ensuremath{\itbox{C{\ensuremath{'}} \(\triangleleft\) D}} \\
    \ensuremath{\itbox{L}_{2}\itbox{ \(\triangleleft\) L}_{3}\itbox{}} \\
    \ensuremath{\itbox{e{\ensuremath{'}}}} = \left[
      \begin{array}{l@{/}l}
        \ensuremath{\itbox{new C}_{0}\itbox{(}\overline{\itbox{v}}\itbox{)}} & \ensuremath{\itbox{this}} \\
        \ensuremath{\itbox{}\overline{\itbox{w}}\itbox{}} & \ensuremath{\itbox{}\overline{\itbox{x}}\itbox{}} \\
        \ensuremath{\itbox{new C}_{0}\itbox{(}\overline{\itbox{v}}\itbox{)<C{\ensuremath{'}}{\ensuremath{'}},}\overline{\itbox{L}}\itbox{{\ensuremath{'}}{\ensuremath{'}},}\overline{\itbox{L}}\itbox{{\ensuremath{'}}>.m}} & \ensuremath{\itbox{proceed}} \\
        \ensuremath{\itbox{new C}_{0}\itbox{(}\overline{\itbox{v}}\itbox{)<D,}\overline{\itbox{L}}\itbox{{\ensuremath{'}},}\overline{\itbox{L}}\itbox{{\ensuremath{'}}>}} & \ensuremath{\itbox{super}}, \\
        \ensuremath{\itbox{new C}_{0}\itbox{(}\overline{\itbox{v}}\itbox{)<C{\ensuremath{'}}{\ensuremath{'}},L}_{3}\itbox{,(}\overline{\itbox{L}}\itbox{{\ensuremath{'}}{\ensuremath{'}};L}_{0}\itbox{),}\overline{\itbox{L}}\itbox{{\ensuremath{'}}>.m}} & \ensuremath{\itbox{superproceed}}
      \end{array}
      \right]\ensuremath{\itbox{e}_{0}\itbox{}}
    }
    By \rn{T-InvkAL}, it must be the case that 
    $$
    \begin{bcpcasearray}
      \bullet;\set{\ensuremath{\itbox{}\overline{\itbox{L}}\itbox{}}};\Gp \ensuremath{\itbox{new C}_{0}\itbox{(}\overline{\itbox{v}}\itbox{)}} : \ensuremath{\itbox{C}_{0}\itbox{}} &
      \ensuremath{\itbox{C.m}} \p \ensuremath{\itbox{<C{\ensuremath{'}},(}\overline{\itbox{L}}\itbox{{\ensuremath{'}}{\ensuremath{'}};L}_{0}\itbox{),}\overline{\itbox{L}}\itbox{{\ensuremath{'}}> ok}} &
      \set{\ensuremath{\itbox{}\overline{\itbox{L}}\itbox{}}} \LEQ_{sw} \set{\ensuremath{\itbox{}\overline{\itbox{L}}\itbox{{\ensuremath{'}}}}} \\
      \ensuremath{\itbox{L}_{0}\itbox{}} \LEQ_w \ensuremath{\itbox{L}_{1}\itbox{}} &
      \pmtype(\ensuremath{\itbox{m}},\ensuremath{\itbox{C{\ensuremath{'}}}},\ensuremath{\itbox{L}_{1}\itbox{}}) = \ensuremath{\itbox{}\overline{\itbox{T}}\itbox{{\ensuremath{'}}\(\rightarrow\)T}} &
      \bullet;\set{\ensuremath{\itbox{}\overline{\itbox{L}}\itbox{}}};\Gp \ensuremath{\itbox{}\overline{\itbox{w}}\itbox{}} : \ensuremath{\itbox{}\overline{\itbox{S}}\itbox{{\ensuremath{'}}}} &
      \ensuremath{\itbox{}\overline{\itbox{S}}\itbox{{\ensuremath{'}}}} \LEQ \ensuremath{\itbox{}\overline{\itbox{T}}\itbox{{\ensuremath{'}}}}
    \end{bcpcasearray}
    $$
    for some \ensuremath{\itbox{}\overline{\itbox{T}}\itbox{{\ensuremath{'}}}} and \ensuremath{\itbox{}\overline{\itbox{S}}\itbox{{\ensuremath{'}}}}.
    Let $\LSet$ be the layer set such that $\ensuremath{\itbox{L}_{1}\itbox{ req }} \LSet$.  By \lemref{pmbody},
    $$
    \ensuremath{\itbox{L}_{1}\itbox{.C{\ensuremath{'}}.m}}; \LSet \cup \set{\ensuremath{\itbox{L}_{1}\itbox{}}}; \ensuremath{\itbox{}\overline{\itbox{x}}\itbox{}}:\ensuremath{\itbox{}\overline{\itbox{T}}\itbox{}}, \ensuremath{\itbox{this}}: \ensuremath{\itbox{C{\ensuremath{'}}}} \p \ensuremath{\itbox{e}_{0}\itbox{}} : \ensuremath{\itbox{S}} 
    $$
    and $\ensuremath{\itbox{S}} \LEQ \ensuremath{\itbox{T}}$ for some \ensuremath{\itbox{S}}.

    Since \(\ensuremath{\itbox{L}_{0}\itbox{}} \LEQ_w \ensuremath{\itbox{L}_{1}\itbox{}}\), \ensuremath{\itbox{L}_{0}\itbox{}} requires all the layers that \ensuremath{\itbox{L}_{1}\itbox{}}
    requires (including \(\LSet\)).  By $\set{\ensuremath{\itbox{}\overline{\itbox{L}}\itbox{{\ensuremath{'}}}}}\, \WF$ and \lemref{layer-set-wf1} and
    $\ensuremath{\itbox{L}_{0}\itbox{}} \in \ensuremath{\itbox{}\overline{\itbox{L}}\itbox{{\ensuremath{'}}}}$, we have
    \(\forall \ensuremath{\itbox{L}} \in \LSet, \exists \ensuremath{\itbox{L{\ensuremath{'}}}} \in \set{\ensuremath{\itbox{}\overline{\itbox{L}}\itbox{{\ensuremath{'}}}}}\) such that
    \(\ensuremath{\itbox{L{\ensuremath{'}}}} \LEQ_w \ensuremath{\itbox{L}}\).  So,
    $\set{\ensuremath{\itbox{}\overline{\itbox{L}}\itbox{{\ensuremath{'}}}}} = \set{\ensuremath{\itbox{}\overline{\itbox{L}}\itbox{{\ensuremath{'}}}}} \cup \set{\ensuremath{\itbox{L}_{0}\itbox{}}} \LEQ_w \LSet \cup
    \set{\ensuremath{\itbox{L}_{1}\itbox{}}}$.  By this and $\set{\ensuremath{\itbox{}\overline{\itbox{L}}\itbox{}}} \LEQ_{sw} \set{\ensuremath{\itbox{}\overline{\itbox{L}}\itbox{{\ensuremath{'}}}}}$, we
    have $\set{\ensuremath{\itbox{}\overline{\itbox{L}}\itbox{}}} \LEQ_{sw} \LSet \cup \set{\ensuremath{\itbox{L}_{1}\itbox{}}}$.  By
    \lemref{narrowing},
    $$ 
    \ensuremath{\itbox{L}_{1}\itbox{.C{\ensuremath{'}}.m}}; \set{\ensuremath{\itbox{}\overline{\itbox{L}}\itbox{}}}; \ensuremath{\itbox{}\overline{\itbox{x}}\itbox{}}:\ensuremath{\itbox{}\overline{\itbox{T}}\itbox{}}, \ensuremath{\itbox{this}}: \ensuremath{\itbox{C{\ensuremath{'}}}} \p \ensuremath{\itbox{e}_{0}\itbox{}} : \ensuremath{\itbox{S}}.
    $$
    By $\ndp(\ensuremath{\itbox{m}},\ensuremath{\itbox{C{\ensuremath{'}}}},\ensuremath{\itbox{(}\overline{\itbox{L}}\itbox{{\ensuremath{'}}{\ensuremath{'}};L}_{0}\itbox{)}},\ensuremath{\itbox{}\overline{\itbox{L}}\itbox{{\ensuremath{'}}}})$ (which follows from $\ensuremath{\itbox{C.m}} \p
    \ensuremath{\itbox{<C{\ensuremath{'}},(}\overline{\itbox{L}}\itbox{{\ensuremath{'}}{\ensuremath{'}};L}_{0}\itbox{),}\overline{\itbox{L}}\itbox{{\ensuremath{'}}> ok}}$) and the definition of \ndp, $\ensuremath{\itbox{proceed}}
    \in \ensuremath{\itbox{e}_{0}\itbox{}}$ implies $\ndp(\ensuremath{\itbox{m}},\ensuremath{\itbox{C{\ensuremath{'}}}},\ensuremath{\itbox{}\overline{\itbox{L}}\itbox{{\ensuremath{'}}{\ensuremath{'}}}},\ensuremath{\itbox{}\overline{\itbox{L}}\itbox{{\ensuremath{'}}}})$ holds.  Then,
    by Lemmas~\ref{lem:value-strengthening}, \ref{lem:weakening}, \ref{lem:substitution}
    and \ref{lem:substitution-super}(1), $\bullet; \set{\ensuremath{\itbox{}\overline{\itbox{L}}\itbox{}}}; \Gp \ensuremath{\itbox{e{\ensuremath{'}}}} :
    \ensuremath{\itbox{S{\ensuremath{'}}}}$ for some $\ensuremath{\itbox{S{\ensuremath{'}}}} \LEQ \ensuremath{\itbox{S}}$.  By \rn{S-Trans}, $\ensuremath{\itbox{S{\ensuremath{'}}}} \LEQ \ensuremath{\itbox{T}}$,
    finishing the case.
  \end{rneqncase}

  \begin{rneqncase}{R-InvkB}{
    \ensuremath{\itbox{e}} = \ensuremath{\itbox{new C}_{0}\itbox{(}\overline{\itbox{v}}\itbox{)<C{\ensuremath{'}},}\overline{\itbox{L}}\itbox{{\ensuremath{'}}{\ensuremath{'}},}\overline{\itbox{L}}\itbox{{\ensuremath{'}}>.m(}\overline{\itbox{w}}\itbox{)}} \\
    \mbody(\ensuremath{\itbox{m}}, \ensuremath{\itbox{C{\ensuremath{'}}}}, \ensuremath{\itbox{}\overline{\itbox{L}}\itbox{{\ensuremath{'}}{\ensuremath{'}}}}, \ensuremath{\itbox{}\overline{\itbox{L}}\itbox{{\ensuremath{'}}}}) = \ensuremath{\itbox{}\overline{\itbox{x}}\itbox{.e}_{0}\itbox{}} \IN \ensuremath{\itbox{C{\ensuremath{'}}{\ensuremath{'}}}}, \bullet \\
    \ensuremath{\itbox{C{\ensuremath{'}}{\ensuremath{'}} \(\triangleleft\) D}} \\
    \ensuremath{\itbox{e{\ensuremath{'}}}} = \left[
      \begin{array}{l@{/}l}
        \ensuremath{\itbox{new C}_{0}\itbox{(}\overline{\itbox{v}}\itbox{)}} & \ensuremath{\itbox{this}} \\
        \ensuremath{\itbox{}\overline{\itbox{w}}\itbox{}} & \ensuremath{\itbox{}\overline{\itbox{x}}\itbox{}} \\
        \ensuremath{\itbox{new C}_{0}\itbox{(}\overline{\itbox{v}}\itbox{)<D,}\overline{\itbox{L}}\itbox{{\ensuremath{'}},}\overline{\itbox{L}}\itbox{{\ensuremath{'}}>}} & \ensuremath{\itbox{super}}
      \end{array}
    \right]\ensuremath{\itbox{e}_{0}\itbox{}}
  }

    By \rn{T-InvkA}, it must be the case that 
    $$
    \begin{bcpcasearray}
      \bullet;\set{\ensuremath{\itbox{}\overline{\itbox{L}}\itbox{}}};\Gp \ensuremath{\itbox{new C}_{0}\itbox{(}\overline{\itbox{v}}\itbox{)}} : \ensuremath{\itbox{C}_{0}\itbox{}} &
      \ensuremath{\itbox{C}_{0}\itbox{.m}} \p \ensuremath{\itbox{<C{\ensuremath{'}},}\overline{\itbox{L}}\itbox{{\ensuremath{'}}{\ensuremath{'}},}\overline{\itbox{L}}\itbox{{\ensuremath{'}}> ok}} &
      \set{\ensuremath{\itbox{}\overline{\itbox{L}}\itbox{}}} \LEQ_{sw} \set{\ensuremath{\itbox{}\overline{\itbox{L}}\itbox{{\ensuremath{'}}}}} \\
      \mtype(\ensuremath{\itbox{m}},\ensuremath{\itbox{C{\ensuremath{'}}}},\set{\ensuremath{\itbox{}\overline{\itbox{L}}\itbox{{\ensuremath{'}}{\ensuremath{'}}}}}, \set{\ensuremath{\itbox{}\overline{\itbox{L}}\itbox{{\ensuremath{'}}}}}) = \ensuremath{\itbox{}\overline{\itbox{T}}\itbox{{\ensuremath{'}}\(\rightarrow\)T}} &
      \bullet;\set{\ensuremath{\itbox{}\overline{\itbox{L}}\itbox{}}};\Gp \ensuremath{\itbox{}\overline{\itbox{w}}\itbox{}} : \ensuremath{\itbox{}\overline{\itbox{S}}\itbox{{\ensuremath{'}}}} &
      \ensuremath{\itbox{}\overline{\itbox{S}}\itbox{{\ensuremath{'}}}} \LEQ \ensuremath{\itbox{}\overline{\itbox{T}}\itbox{{\ensuremath{'}}}}
    \end{bcpcasearray}
    $$
    for some \ensuremath{\itbox{}\overline{\itbox{T}}\itbox{{\ensuremath{'}}}} and \ensuremath{\itbox{}\overline{\itbox{S}}\itbox{{\ensuremath{'}}}}.
    By \lemref{mbody-mtype},
    $$
    \ensuremath{\itbox{C{\ensuremath{'}}{\ensuremath{'}}.m}}; \emptyset; \ensuremath{\itbox{}\overline{\itbox{x}}\itbox{}}:\ensuremath{\itbox{}\overline{\itbox{T}}\itbox{}}, \ensuremath{\itbox{this}}: \ensuremath{\itbox{C{\ensuremath{'}}{\ensuremath{'}}}} \p \ensuremath{\itbox{e}_{0}\itbox{}} : \ensuremath{\itbox{S}} 
    $$
    and $\ensuremath{\itbox{C{\ensuremath{'}}}} \LEQ \ensuremath{\itbox{C{\ensuremath{'}}{\ensuremath{'}}}}$ and $\ensuremath{\itbox{S}} \LEQ \ensuremath{\itbox{T}}$ and 
    $\ndp(\ensuremath{\itbox{m}},\ensuremath{\itbox{C{\ensuremath{'}}{\ensuremath{'}}}},\bullet,\ensuremath{\itbox{}\overline{\itbox{L}}\itbox{{\ensuremath{'}}}})$ for some \ensuremath{\itbox{S}}.
    By \rn{S-Trans}, $\ensuremath{\itbox{C}_{0}\itbox{}} \LEQ \ensuremath{\itbox{C{\ensuremath{'}}{\ensuremath{'}}}}$.
    By \lemref{weakening},
    $$
    \ensuremath{\itbox{C{\ensuremath{'}}{\ensuremath{'}}.m}}; \set{\ensuremath{\itbox{}\overline{\itbox{L}}\itbox{}}}; \ensuremath{\itbox{}\overline{\itbox{x}}\itbox{}}:\ensuremath{\itbox{}\overline{\itbox{T}}\itbox{}}, \ensuremath{\itbox{this}}: \ensuremath{\itbox{C{\ensuremath{'}}{\ensuremath{'}}}} \p \ensuremath{\itbox{e}_{0}\itbox{}} : \ensuremath{\itbox{S}} 
    $$
    By Lemmas~\ref{lem:value-strengthening}, \ref{lem:weakening}, \ref{lem:substitution}
    and \ref{lem:substitution-super}(2), $\bullet; \set{\ensuremath{\itbox{}\overline{\itbox{L}}\itbox{}}}; \Gp \ensuremath{\itbox{e{\ensuremath{'}}}}
    : \ensuremath{\itbox{S{\ensuremath{'}}}}$ for some $\ensuremath{\itbox{S{\ensuremath{'}}}} \LEQ \ensuremath{\itbox{S}}$.  By \rn{S-Trans}, $\ensuremath{\itbox{S{\ensuremath{'}}}} \LEQ
    \ensuremath{\itbox{T}}$, finishing the case.
  \end{rneqncase}

  \begin{rneqncase}{RC-With}{
      \ensuremath{\itbox{e}} = \ensuremath{\itbox{with new L() e}_{0}\itbox{}} &
      \ensuremath{\itbox{e{\ensuremath{'}}}} = \ensuremath{\itbox{with new L() e}_{0}\itbox{{\ensuremath{'}}}} \\
      \with(\ensuremath{\itbox{L}},\ensuremath{\itbox{}\overline{\itbox{L}}\itbox{}}) = \ensuremath{\itbox{}\overline{\itbox{L}}\itbox{{\ensuremath{'}}}} &
      \reduceto{\ensuremath{\itbox{}\overline{\itbox{L}}\itbox{{\ensuremath{'}}}}}{\ensuremath{\itbox{e}_{0}\itbox{}}}{\ensuremath{\itbox{e}_{0}\itbox{{\ensuremath{'}}}}}
    }

    By \rn{T-With}, it must be the case that
    $$
    \begin{bcpcasearray}
      \bullet; \set{\ensuremath{\itbox{}\overline{\itbox{L}}\itbox{}}}; \Gp \ensuremath{\itbox{new L()}} : \ensuremath{\itbox{L}} &
      \ensuremath{\itbox{L req }} \LSet &
      \set{\ensuremath{\itbox{}\overline{\itbox{L}}\itbox{}}} \LEQ_w \LSet &
      \bullet; \set{\ensuremath{\itbox{}\overline{\itbox{L}}\itbox{}}} \cup \set{\ensuremath{\itbox{L}}}; \Gp \ensuremath{\itbox{e}_{0}\itbox{}} : \ensuremath{\itbox{T}}
    \end{bcpcasearray}
    $$
    for some \(\LSet\).  Here,
    $\set{\ensuremath{\itbox{}\overline{\itbox{L}}\itbox{{\ensuremath{'}}}}} = \set{\ensuremath{\itbox{}\overline{\itbox{L}}\itbox{}}} \cup \set{\ensuremath{\itbox{L}}}\, \WF$ by \rn{Wf-With}.  By
    the induction hypothesis,
    $\bullet; \set{\ensuremath{\itbox{}\overline{\itbox{L}}\itbox{}}} \cup \set{\ensuremath{\itbox{L}}}; \Gp \ensuremath{\itbox{e}_{0}\itbox{{\ensuremath{'}}}} : \ensuremath{\itbox{S}}$ for some
    $\ensuremath{\itbox{S}} \LEQ \ensuremath{\itbox{T}}$.  By \rn{T-With},
    $\bullet; \set{\ensuremath{\itbox{}\overline{\itbox{L}}\itbox{}}}; \Gp \ensuremath{\itbox{e}_{0}\itbox{{\ensuremath{'}}}} : \ensuremath{\itbox{S}}$, finishing the case.
  \end{rneqncase}

  \begin{rneqncase}{RC-WithArg}{
      \ensuremath{\itbox{e}} = \ensuremath{\itbox{with e}_{l}\itbox{ e}_{0}\itbox{}} &
      \ensuremath{\itbox{e{\ensuremath{'}}}} = \ensuremath{\itbox{with e}_{l}\itbox{{\ensuremath{'}} e}_{0}\itbox{}} &
      \reduceto{\ensuremath{\itbox{}\overline{\itbox{L}}\itbox{}}}{\ensuremath{\itbox{e}_{l}\itbox{}}}{\ensuremath{\itbox{e}_{l}\itbox{{\ensuremath{'}}}}}
    }
    By \rn{T-With}, it must be the case that
    $$
    \begin{bcpcasearray}
      \bullet; \set{\ensuremath{\itbox{}\overline{\itbox{L}}\itbox{}}}; \Gp \ensuremath{\itbox{e}_{l}\itbox{}} : \ensuremath{\itbox{L}} &
      \ensuremath{\itbox{L req }} \LSet &
      \set{\ensuremath{\itbox{}\overline{\itbox{L}}\itbox{}}} \LEQ_w \LSet &
      \bullet; \set{\ensuremath{\itbox{}\overline{\itbox{L}}\itbox{}}} \cup \set{\ensuremath{\itbox{L}}}; \Gp \ensuremath{\itbox{e}_{0}\itbox{}} : \ensuremath{\itbox{T}}
    \end{bcpcasearray}
    $$
    for some \(\LSet\).  By the induction hypothesis, we have
    $\bullet; \set{\ensuremath{\itbox{}\overline{\itbox{L}}\itbox{}}}; \Gp \ensuremath{\itbox{e}_{l}\itbox{{\ensuremath{'}}}} : \ensuremath{\itbox{L{\ensuremath{'}}}}$ for some \ensuremath{\itbox{L{\ensuremath{'}} \(\Leq\) L}}.  By
    \rn{LS-Extends}, \ensuremath{\itbox{L{\ensuremath{'}}}} and \ensuremath{\itbox{L}} have the same require clause
    $\LSet$.  Since \ensuremath{\itbox{L{\ensuremath{'}} \(\Leq\) L}}, we have $\ensuremath{\itbox{L{\ensuremath{'}}}} \LEQ_w \ensuremath{\itbox{L}}$, and
    $\set{\ensuremath{\itbox{}\overline{\itbox{L}}\itbox{}}} \cup \set{\ensuremath{\itbox{L{\ensuremath{'}}}}} \LEQ_w \set{\ensuremath{\itbox{}\overline{\itbox{L}}\itbox{}}} \cup \set{\ensuremath{\itbox{L}}}$.  By
    \lemref{narrowing} and \rn{T-With},
    $\bullet; \set{\ensuremath{\itbox{}\overline{\itbox{L}}\itbox{}}}; \Gp \ensuremath{\itbox{e{\ensuremath{'}}}} : \ensuremath{\itbox{T}}$.  Reflexivity of \ensuremath{\itbox{\(\Leq\)}}
    finishes the case.
  \end{rneqncase}

  \begin{rneqncase}{R-WithVal}{
      \ensuremath{\itbox{e}} = \ensuremath{\itbox{with new L() v}_{0}\itbox{}} &
      \ensuremath{\itbox{e{\ensuremath{'}}}} = \ensuremath{\itbox{v}_{0}\itbox{}}
    }
    By \rn{T-With}, it must be the case that $\bullet; \set{\ensuremath{\itbox{}\overline{\itbox{L}}\itbox{}}} \cup
    \set{\ensuremath{\itbox{L}}}; \Gp \ensuremath{\itbox{v}_{0}\itbox{}} : \ensuremath{\itbox{T}}$.  By \lemref{value-strengthening},
    $\bullet; \set{\ensuremath{\itbox{}\overline{\itbox{L}}\itbox{}}}; \Gp \ensuremath{\itbox{v}_{0}\itbox{}} : \ensuremath{\itbox{T}}$, finishing the case.
  \end{rneqncase}

  \begin{rneqncase}{RC-Swap}{
      \ensuremath{\itbox{e}} = \ensuremath{\itbox{swap (new L(),L}_{{sw}}\itbox{) e}_{0}\itbox{}} &
      \ensuremath{\itbox{e{\ensuremath{'}}}} = \ensuremath{\itbox{swap (new L(),L}_{{sw}}\itbox{) e}_{0}\itbox{{\ensuremath{'}}}} \\
      \swap(\ensuremath{\itbox{L}},\ensuremath{\itbox{L}_{{sw}}\itbox{}},\ensuremath{\itbox{}\overline{\itbox{L}}\itbox{}}) = \ensuremath{\itbox{}\overline{\itbox{L}}\itbox{{\ensuremath{'}}}} &
      \reduceto{\ensuremath{\itbox{}\overline{\itbox{L}}\itbox{{\ensuremath{'}}}}}{\ensuremath{\itbox{e}_{0}\itbox{}}}{\ensuremath{\itbox{e}_{0}\itbox{{\ensuremath{'}}}}}
    }
    By \rn{T-Swap}, it must be the case that
    $$
    \begin{bcpcasearray}
      \bullet; \set{\ensuremath{\itbox{}\overline{\itbox{L}}\itbox{}}}; \Gp \ensuremath{\itbox{new L()}} : \ensuremath{\itbox{L}} &
      \ensuremath{\itbox{L}_{{sw}}\itbox{ swappable}} &
      \ensuremath{\itbox{L}} \LEQ_w \ensuremath{\itbox{L}_{{sw}}\itbox{}} &
      \ensuremath{\itbox{L req }} \LSet \\
      \LSet_{rm} = \set{\ensuremath{\itbox{}\overline{\itbox{L}}\itbox{}}} \setminus \set{\ensuremath{\itbox{L{\ensuremath{'}}}} \mid \ensuremath{\itbox{L{\ensuremath{'}}}} \LEQ_w \ensuremath{\itbox{L}_{{sw}}\itbox{}}} &
      \LSet_{rm} \LEQ_w \LSet &
      \bullet; \LSet_{rm} \cup \set{\ensuremath{\itbox{L}}}; \Gp \ensuremath{\itbox{e}_{0}\itbox{}} : \ensuremath{\itbox{T}}
    \end{bcpcasearray}
    $$
    for some \ensuremath{\itbox{L}}, \(\LSet\), and \(\LSet_{rm}\).
    Here, $\set{\ensuremath{\itbox{}\overline{\itbox{L}}\itbox{{\ensuremath{'}}}}} = \LSet_{rm} \cup \set{\ensuremath{\itbox{L}}}$. Then, $\set{\ensuremath{\itbox{}\overline{\itbox{L}}\itbox{{\ensuremath{'}}}}}\, 
    \WF$ by \rn{Wf-Swap}.  By the induction hypothesis, $\bullet;
    \LSet_{rm} \cup \set{\ensuremath{\itbox{L}}}; \Gp \ensuremath{\itbox{e}_{0}\itbox{{\ensuremath{'}}}} : \ensuremath{\itbox{S}}$ for some $\ensuremath{\itbox{S}} \LEQ
    \ensuremath{\itbox{T}}$.  By \rn{T-Swap}, $\bullet; \set{\ensuremath{\itbox{}\overline{\itbox{L}}\itbox{}}}; \Gp \ensuremath{\itbox{e}_{0}\itbox{{\ensuremath{'}}}} : \ensuremath{\itbox{S}}$,
    finishing the case.
  \end{rneqncase}

  \begin{rneqncase}{RC-SwapArg}{
      \ensuremath{\itbox{e}} = \ensuremath{\itbox{swap (e}_{l}\itbox{,L}_{{sw}}\itbox{) e}_{0}\itbox{}} &
      \ensuremath{\itbox{e{\ensuremath{'}}}} = \ensuremath{\itbox{swap (e}_{l}\itbox{{\ensuremath{'}},L}_{{sw}}\itbox{) e}_{0}\itbox{}} \\
      \reduceto{\ensuremath{\itbox{}\overline{\itbox{L}}\itbox{}}}{\ensuremath{\itbox{e}_{l}\itbox{}}}{\ensuremath{\itbox{e}_{l}\itbox{{\ensuremath{'}}}}}
    }
    By \rn{T-Swap}, it must be the case that
    $$
    \begin{bcpcasearray}
      \bullet; \set{\ensuremath{\itbox{}\overline{\itbox{L}}\itbox{}}}; \Gp \ensuremath{\itbox{e}_{l}\itbox{}} : \ensuremath{\itbox{L}} &
      \ensuremath{\itbox{L}_{{sw}}\itbox{ swappable}} &
      \ensuremath{\itbox{L}} \LEQ_w \ensuremath{\itbox{L}_{{sw}}\itbox{}} &
      \ensuremath{\itbox{L req }} \LSet \\
      \LSet_{rm} = \set{\ensuremath{\itbox{}\overline{\itbox{L}}\itbox{}}} \setminus \set{\ensuremath{\itbox{L{\ensuremath{'}}}} \mid \ensuremath{\itbox{L{\ensuremath{'}}}} \LEQ_w \ensuremath{\itbox{L}_{{sw}}\itbox{}}} &
      \LSet_{rm} \LEQ_w \LSet &
      \bullet; \LSet_{rm} \cup \set{\ensuremath{\itbox{L}}}; \Gp \ensuremath{\itbox{e}_{0}\itbox{}} : \ensuremath{\itbox{T}}
    \end{bcpcasearray}
    $$
    for some \ensuremath{\itbox{L}}, \(\LSet\), and \(\LSet_{rm}\).
    By the induction hypothesis, we have $\bullet; \set{\ensuremath{\itbox{}\overline{\itbox{L}}\itbox{}}}; \Gp
    \ensuremath{\itbox{e}_{l}\itbox{{\ensuremath{'}}}} : \ensuremath{\itbox{L{\ensuremath{'}}}}$ for some \ensuremath{\itbox{L{\ensuremath{'}} \(\Leq\) L}}.  By \rn{LS-Extends}, \ensuremath{\itbox{L{\ensuremath{'}}}} and
    \ensuremath{\itbox{L}} have the same require clause $\LSet$.  Since \ensuremath{\itbox{L{\ensuremath{'}} \(\Leq\) L}}, we have
    $\ensuremath{\itbox{L{\ensuremath{'}}}} \LEQ_w \ensuremath{\itbox{L}}$, $\ensuremath{\itbox{L{\ensuremath{'}}}} \LEQ_w \ensuremath{\itbox{L}_{{sw}}\itbox{}}$, and $\LSet_{rm} \cup
    \set{\ensuremath{\itbox{L{\ensuremath{'}}}}} \LEQ_w \LSet_{rm} \cup \set{\ensuremath{\itbox{L}}} \LEQ_w \LSet$.  By
    \lemref{narrowing} and \rn{T-Swap}, $\bullet; \set{\ensuremath{\itbox{}\overline{\itbox{L}}\itbox{}}}; \Gp \ensuremath{\itbox{e{\ensuremath{'}}}}
    : \ensuremath{\itbox{T}}$.  Reflexivity of \ensuremath{\itbox{\(\Leq\)}} finishes the case.
  \end{rneqncase}  

  \begin{rncase}{R-SwapVal}
    Similar to Case \rn{R-WithVal}.
  \end{rncase}
  
  \begin{rneqncase}{RC-InvkRecv}{
      \ensuremath{\itbox{e}} = \ensuremath{\itbox{e}_{0}\itbox{.m(}\overline{\itbox{e}}\itbox{)}} &
      \reduceto{\ensuremath{\itbox{}\overline{\itbox{L}}\itbox{}}}{\ensuremath{\itbox{e}_{0}\itbox{}}}{\ensuremath{\itbox{e}_{0}\itbox{{\ensuremath{'}}}}} &
      \ensuremath{\itbox{e{\ensuremath{'}}}} = \ensuremath{\itbox{e}_{0}\itbox{{\ensuremath{'}}.m(}\overline{\itbox{e}}\itbox{)}}
    }
    By \rn{T-Invk}, it must be the case that
    $$
    \begin{bcpcasearray}
      \bullet; \set{\ensuremath{\itbox{}\overline{\itbox{L}}\itbox{}}}; \Gp \ensuremath{\itbox{e}_{0}\itbox{}} : \ensuremath{\itbox{C}_{0}\itbox{}} &
      \mtype(\ensuremath{\itbox{m}}, \ensuremath{\itbox{C}_{0}\itbox{}}, \set{\ensuremath{\itbox{}\overline{\itbox{L}}\itbox{}}}) = \ensuremath{\itbox{}\overline{\itbox{T}}\itbox{\(\rightarrow\)T}} &
      \bullet; \set{\ensuremath{\itbox{}\overline{\itbox{L}}\itbox{}}}; \Gp \ensuremath{\itbox{}\overline{\itbox{e}}\itbox{}} : \ensuremath{\itbox{}\overline{\itbox{S}}\itbox{}} &
      \ensuremath{\itbox{}\overline{\itbox{S}}\itbox{}} \LEQ \ensuremath{\itbox{}\overline{\itbox{T}}\itbox{}}.
    \end{bcpcasearray}
    $$
    for some \ensuremath{\itbox{}\overline{\itbox{T}}\itbox{}} and \ensuremath{\itbox{}\overline{\itbox{S}}\itbox{}}.
    By the induction hypothesis, $\bullet; \set{\ensuremath{\itbox{}\overline{\itbox{L}}\itbox{}}}; \Gp \ensuremath{\itbox{e}_{0}\itbox{{\ensuremath{'}}}} :
    \ensuremath{\itbox{D}_{0}\itbox{}}$ for some $\ensuremath{\itbox{D}_{0}\itbox{}} \LEQ \ensuremath{\itbox{C}_{0}\itbox{}}$.  By \lemref{subtype-mtype},
    $\mtype(\ensuremath{\itbox{m}}, \ensuremath{\itbox{D}_{0}\itbox{}}, \set{\ensuremath{\itbox{}\overline{\itbox{L}}\itbox{}}}) = \ensuremath{\itbox{}\overline{\itbox{T}}\itbox{\(\rightarrow\)S}}$ and $\ensuremath{\itbox{S}} \LEQ \ensuremath{\itbox{T}}$ for
    some \ensuremath{\itbox{S}}.  By \rn{T-Invk}, $\bullet; \set{\ensuremath{\itbox{}\overline{\itbox{L}}\itbox{}}}; \Gp \ensuremath{\itbox{e}_{0}\itbox{{\ensuremath{'}}.m(}\overline{\itbox{e}}\itbox{)}}
    : \ensuremath{\itbox{S}}$, finishing the case.
  \end{rneqncase}

  \begin{rneqncase}{RC-InvkArg}{
      \ensuremath{\itbox{e}} = \ensuremath{\itbox{e}_{0}\itbox{.m(}}\ldots\ensuremath{\itbox{,e}_{i}\itbox{,}}\ldots\ensuremath{\itbox{)}} &
      \reduceto{\ensuremath{\itbox{}\overline{\itbox{L}}\itbox{}}}{\ensuremath{\itbox{e}_{i}\itbox{}}}{\ensuremath{\itbox{e}_{i}\itbox{{\ensuremath{'}}}}} &
      \ensuremath{\itbox{e{\ensuremath{'}}}} = \ensuremath{\itbox{e}_{0}\itbox{.m(}}\ldots\ensuremath{\itbox{,e}_{i}\itbox{{\ensuremath{'}},}}\ldots\ensuremath{\itbox{)}}
    }
    By \rn{T-Invk}, it must be the case that
    $$
    \begin{bcpcasearray}
      \bullet; \set{\ensuremath{\itbox{}\overline{\itbox{L}}\itbox{}}}; \Gp \ensuremath{\itbox{e}_{0}\itbox{}} : \ensuremath{\itbox{C}_{0}\itbox{}} &
      \mtype(\ensuremath{\itbox{m}}, \ensuremath{\itbox{C}_{0}\itbox{}}, \set{\ensuremath{\itbox{}\overline{\itbox{L}}\itbox{}}}) = \ensuremath{\itbox{}\overline{\itbox{T}}\itbox{\(\rightarrow\)T}} &
      \bullet; \set{\ensuremath{\itbox{}\overline{\itbox{L}}\itbox{}}}; \Gp \ensuremath{\itbox{}\overline{\itbox{e}}\itbox{}} : \ensuremath{\itbox{}\overline{\itbox{S}}\itbox{}} &
      \ensuremath{\itbox{}\overline{\itbox{S}}\itbox{}} \LEQ \ensuremath{\itbox{}\overline{\itbox{T}}\itbox{}}.
    \end{bcpcasearray}
    $$
    for some \ensuremath{\itbox{}\overline{\itbox{T}}\itbox{}} and \ensuremath{\itbox{}\overline{\itbox{S}}\itbox{}}.
    By the induction hypothesis, $\bullet; \set{\ensuremath{\itbox{}\overline{\itbox{L}}\itbox{}}}; \Gp \ensuremath{\itbox{e}_{i}\itbox{{\ensuremath{'}}}} :
    \ensuremath{\itbox{S}_{i}\itbox{{\ensuremath{'}}}}$ for some $\ensuremath{\itbox{S}_{i}\itbox{{\ensuremath{'}}}} \LEQ \ensuremath{\itbox{S}_{i}\itbox{}}$.  By \rn{S-Trans}, $\ensuremath{\itbox{S}_{i}\itbox{{\ensuremath{'}}}} \LEQ
    \ensuremath{\itbox{T}_{i}\itbox{}}$.  So, by \rn{T-Invk}, $\bullet; \set{\ensuremath{\itbox{}\overline{\itbox{L}}\itbox{}}}; \Gp \ensuremath{\itbox{e{\ensuremath{'}}}} :
    \ensuremath{\itbox{T}}$, finishing the case.
  \end{rneqncase}

  \begin{rncase}{RC-New, RC-InvkAArg1, RC-InvkAArg2}
    Similar to the case above. \qed
  \end{rncase}
\end{proof}

\begin{lemmaapp}[\lemref{def:pmtype}]\label{lem:pmtype}
  If $\pmtype(\ensuremath{\itbox{m}}, \ensuremath{\itbox{C}}, \ensuremath{\itbox{L}}) = \ensuremath{\itbox{}\overline{\itbox{T}}\itbox{\(\rightarrow\)T}_{0}\itbox{}}$, then there exist \ensuremath{\itbox{}\overline{\itbox{x}}\itbox{}} and
  \ensuremath{\itbox{e}_{0}\itbox{}} and \ensuremath{\itbox{L{\ensuremath{'}}}} ($\neq \ensuremath{\itbox{Base}}$) such that $\pmbody(\ensuremath{\itbox{m}}, \ensuremath{\itbox{C}}, \ensuremath{\itbox{L}}) =
  \ensuremath{\itbox{}\overline{\itbox{x}}\itbox{.e}_{0}\itbox{}} \IN \ensuremath{\itbox{L{\ensuremath{'}}}}$ and the lengths of \ensuremath{\itbox{}\overline{\itbox{x}}\itbox{}} and \ensuremath{\itbox{}\overline{\itbox{T}}\itbox{}} are equal and
  $\ensuremath{\itbox{L}} \LEQ_w \ensuremath{\itbox{L{\ensuremath{'}}}}$.
\end{lemmaapp}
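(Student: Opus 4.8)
The plan is to prove the lemma by induction on the derivation of $\pmtype(\ensuremath{\itbox{m}}, \ensuremath{\itbox{C}}, \ensuremath{\itbox{L}}) = \ensuremath{\itbox{}\overline{\itbox{T}}\itbox{\(\rightarrow\)T}_{0}\itbox{}}$, with case analysis on the last rule applied. There are only two rules defining $\pmtype$, namely \rn{PMT-Layer} and \rn{PMT-Super}, and each mirrors a corresponding rule for $\pmbody$ (\rn{PMB-Layer} and \rn{PMB-Super}); so the induction essentially re-plays the $\pmtype$ derivation as a $\pmbody$ derivation, while carrying along the weak-subtyping witness and the side condition that the found layer is not \ensuremath{\itbox{Base}}. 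Since the conclusion only asserts the \emph{existence} of a suitable $\pmbody$ result, no functionality argument about $\pmtype$ is needed.

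In the base case \rn{PMT-Layer} we have $\LT(\ensuremath{\itbox{L}})(\ensuremath{\itbox{C.m}}) = \ensuremath{\itbox{T}_{0}\itbox{ C.m(}\overline{\itbox{T}}\itbox{ }\overline{\itbox{x}}\itbox{){\char'173} return e; {\char'175}}}$. Applying \rn{PMB-Layer} to the same declaration gives $\pmbody(\ensuremath{\itbox{m}}, \ensuremath{\itbox{C}}, \ensuremath{\itbox{L}}) = \ensuremath{\itbox{}\overline{\itbox{x}}\itbox{.e}} \IN \ensuremath{\itbox{L}}$, so we take $\ensuremath{\itbox{L{\ensuremath{'}}}} = \ensuremath{\itbox{L}}$. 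Equality of the lengths of $\ensuremath{\itbox{}\overline{\itbox{x}}\itbox{}}$ and $\ensuremath{\itbox{}\overline{\itbox{T}}\itbox{}}$ is immediate from the $\ensuremath{\itbox{}\overline{\itbox{T}}\itbox{ }\overline{\itbox{x}}\itbox{}}$ abbreviation convention, and $\ensuremath{\itbox{L}} \LEQ_w \ensuremath{\itbox{L}}$ holds by \rn{LSw-Refl}. For $\ensuremath{\itbox{L{\ensuremath{'}}}} \neq \ensuremath{\itbox{Base}}$: since $\LT(\ensuremath{\itbox{L}})(\ensuremath{\itbox{C.m}})$ is defined we have $\ensuremath{\itbox{L}} \in \dom(\LT)$, whereas $\ensuremath{\itbox{Base}} \notin \dom(\LT)$ by the sanity conditions, hence $\ensuremath{\itbox{L}} \neq \ensuremath{\itbox{Base}}$.

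In the inductive case \rn{PMT-Super} we have $\LT(\ensuremath{\itbox{L}})(\ensuremath{\itbox{C.m}})$ undefined, $\ensuremath{\itbox{L \(\triangleleft\) L}_{0}\itbox{}}$, and $\pmtype(\ensuremath{\itbox{m}}, \ensuremath{\itbox{C}}, \ensuremath{\itbox{L}_{0}\itbox{}}) = \ensuremath{\itbox{}\overline{\itbox{T}}\itbox{\(\rightarrow\)T}_{0}\itbox{}}$ (here $\ensuremath{\itbox{L}_{0}\itbox{}}$ is the superlayer written $\ensuremath{\itbox{L{\ensuremath{'}}}}$ in the rule; I rename it to avoid clash with the witness in the conclusion). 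The induction hypothesis supplies $\ensuremath{\itbox{}\overline{\itbox{x}}\itbox{}}$, $\ensuremath{\itbox{e}_{0}\itbox{}}$ and $\ensuremath{\itbox{L{\ensuremath{'}}}} \neq \ensuremath{\itbox{Base}}$ with $\pmbody(\ensuremath{\itbox{m}}, \ensuremath{\itbox{C}}, \ensuremath{\itbox{L}_{0}\itbox{}}) = \ensuremath{\itbox{}\overline{\itbox{x}}\itbox{.e}_{0}\itbox{}} \IN \ensuremath{\itbox{L{\ensuremath{'}}}}$, matching lengths, and $\ensuremath{\itbox{L}_{0}\itbox{}} \LEQ_w \ensuremath{\itbox{L{\ensuremath{'}}}}$. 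Feeding $\LT(\ensuremath{\itbox{L}})(\ensuremath{\itbox{C.m}})$ undefined, $\ensuremath{\itbox{L \(\triangleleft\) L}_{0}\itbox{}}$, and this $\pmbody$ result into \rn{PMB-Super} yields $\pmbody(\ensuremath{\itbox{m}}, \ensuremath{\itbox{C}}, \ensuremath{\itbox{L}}) = \ensuremath{\itbox{}\overline{\itbox{x}}\itbox{.e}_{0}\itbox{}} \IN \ensuremath{\itbox{L{\ensuremath{'}}}}$. Finally, $\ensuremath{\itbox{L}} \LEQ_w \ensuremath{\itbox{L}_{0}\itbox{}}$ by \rn{LSw-Extends}, which combined with $\ensuremath{\itbox{L}_{0}\itbox{}} \LEQ_w \ensuremath{\itbox{L{\ensuremath{'}}}}$ gives $\ensuremath{\itbox{L}} \LEQ_w \ensuremath{\itbox{L{\ensuremath{'}}}}$ by \rn{LSw-Trans}.

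This is a short, essentially mechanical induction, so there is no real obstacle; the only point requiring a little care is the side condition $\ensuremath{\itbox{L{\ensuremath{'}}}} \neq \ensuremath{\itbox{Base}}$, which is established in the base case and simply inherited through the inductive step, reflecting that \ensuremath{\itbox{Base}} carries no partial methods and so can never be the layer at which a $\pmbody$ lookup succeeds. I would also remark in passing that the derivation never traverses up to \ensuremath{\itbox{Base}} at all, since $\pmtype(\ensuremath{\itbox{m}}, \ensuremath{\itbox{C}}, \ensuremath{\itbox{Base}})$ is undefined, though this observation is not needed for the proof.
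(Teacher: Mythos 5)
Your proof is correct and follows essentially the same route as the paper's: induction on the $\pmtype$ derivation with cases \rn{PMT-Layer} and \rn{PMT-Super}, replaying the derivation via \rn{PMB-Layer}/\rn{PMB-Super} and using reflexivity, \rn{LSw-Extends}, and transitivity of $\LEQ_w$. Your treatment is in fact slightly more explicit than the paper's (notably on the $\ensuremath{\itbox{L{\ensuremath{'}}}} \neq \ensuremath{\itbox{Base}}$ side condition, which the paper leaves implicit).
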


\begin{proof}
  By induction on $\pmtype(\ensuremath{\itbox{m}}, \ensuremath{\itbox{C}}, \ensuremath{\itbox{L}}) = \ensuremath{\itbox{}\overline{\itbox{T}}\itbox{\(\rightarrow\)T}_{0}\itbox{}}$.

  \begin{rneqncase}{PMT-Layer}{
      $\LT(\ensuremath{\itbox{L}})(\ensuremath{\itbox{C.m}}) = \ensuremath{\itbox{T}_{0}\itbox{ C.m(}\overline{\itbox{T}}\itbox{ }\overline{\itbox{x}}\itbox{){\char'173} return e; {\char'175}}}$
    }
    By \rn{T-PMethod}, the lengths of \ensuremath{\itbox{}\overline{\itbox{T}}\itbox{}} and \ensuremath{\itbox{}\overline{\itbox{x}}\itbox{}} are equal.  $\ensuremath{\itbox{L}}
    \LEQ_w \ensuremath{\itbox{L}}$ by Reflexivity of $\LEQ_w$.  Then, \rn{PMB-Layer}
    finishes the case.
  \end{rneqncase}

  \begin{rneqncase}{PMT-Super}{
      $\LT(\ensuremath{\itbox{L}})(\ensuremath{\itbox{C.m}}) \undf$ \andalso
      \ensuremath{\itbox{L \(\triangleleft\) L{\ensuremath{'}}}} \andalso
      \pmtype(\ensuremath{\itbox{m}},\ensuremath{\itbox{C}},\ensuremath{\itbox{L{\ensuremath{'}}}}) = \ensuremath{\itbox{}\overline{\itbox{T}}\itbox{\(\rightarrow\)T}_{0}\itbox{}}
    }
    The induction hypothesis and \rn{PMB-Layer} and \rn{LSw-Extends}
    and \rn{LSw-Extends} finish the case. \qed
  \end{rneqncase}  
  
\end{proof}

\begin{lemmaapp}[\lemref{def:mtype}]\label{lem:mtype} 

  If $\mtype(\ensuremath{\itbox{m}}, \ensuremath{\itbox{C}}, \set{\ensuremath{\itbox{}\overline{\itbox{L}}\itbox{{\ensuremath{'}}}}}, \set{\ensuremath{\itbox{}\overline{\itbox{L}}\itbox{}}}) = \ensuremath{\itbox{}\overline{\itbox{T}}\itbox{\(\rightarrow\)T}_{0}\itbox{}}$ and \ensuremath{\itbox{}\overline{\itbox{L}}\itbox{{\ensuremath{'}}}}
  is a prefix of \ensuremath{\itbox{}\overline{\itbox{L}}\itbox{}} and \(\set{\ensuremath{\itbox{}\overline{\itbox{L}}\itbox{}}}\, \WF\), then there exist \ensuremath{\itbox{}\overline{\itbox{x}}\itbox{}}
  and \ensuremath{\itbox{e}_{0}\itbox{}} and \ensuremath{\itbox{}\overline{\itbox{L}}\itbox{{\ensuremath{'}}{\ensuremath{'}}}} and \ensuremath{\itbox{C{\ensuremath{'}}}} ($\neq \ensuremath{\itbox{Object}}$) such that
  $\mbody(\ensuremath{\itbox{m}}, \ensuremath{\itbox{C}}, \ensuremath{\itbox{}\overline{\itbox{L}}\itbox{}}, \ensuremath{\itbox{}\overline{\itbox{L}}\itbox{{\ensuremath{'}}}}) = \ensuremath{\itbox{}\overline{\itbox{x}}\itbox{.e}_{0}\itbox{}} \IN \ensuremath{\itbox{C{\ensuremath{'}}}}, \ensuremath{\itbox{}\overline{\itbox{L}}\itbox{{\ensuremath{'}}{\ensuremath{'}}}}$ and the
  lengths of \ensuremath{\itbox{}\overline{\itbox{x}}\itbox{}} and \ensuremath{\itbox{}\overline{\itbox{T}}\itbox{}} are equal and, if \ensuremath{\itbox{}\overline{\itbox{L}}\itbox{{\ensuremath{'}}{\ensuremath{'}}}} is not empty, the last layer name of \ensuremath{\itbox{}\overline{\itbox{L}}\itbox{{\ensuremath{'}}{\ensuremath{'}}}} is not \ensuremath{\itbox{Base}}.
\end{lemmaapp}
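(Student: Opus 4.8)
\textbf{Proof proposal for Lemma~\ref{lem:def:mtype} (existence of a method body from well-definedness of $\mtype$).}
The plan is to prove this by induction on the derivation of $\mtype(\ensuremath{\itbox{m}}, \ensuremath{\itbox{C}}, \set{\ensuremath{\itbox{}\overline{\itbox{L}}\itbox{{\ensuremath{'}}}}}, \set{\ensuremath{\itbox{}\overline{\itbox{L}}\itbox{}}}) = \ensuremath{\itbox{}\overline{\itbox{T}}\itbox{\(\rightarrow\)T}_{0}\itbox{}}$, with a case analysis on the last rule applied. This mirrors the structure of the proof of Lemma~\ref{lem:def:pmtype}, which handles the simpler partial-method case, and parallels the shape of the $\mbody$ rules in Figure~\ref{fig:CFJ:auxfuns}, which were deliberately set up to correspond rule-by-rule to the $\mtype$ rules of Figure~\ref{fig:CFJ:lookup_method_type}. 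The hypothesis that $\set{\ensuremath{\itbox{}\overline{\itbox{L}}\itbox{}}}\,\WF$ and that $\ensuremath{\itbox{}\overline{\itbox{L}}\itbox{{\ensuremath{'}}}}$ is a prefix of $\ensuremath{\itbox{}\overline{\itbox{L}}\itbox{}}$ will be threaded through the induction; note that in the recursive call for the superclass case the prefix shrinks (or becomes all of $\ensuremath{\itbox{}\overline{\itbox{L}}\itbox{}}$, as in \rn{MB-Super}/\rn{MT-Super}), so the prefix condition is maintained.

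First I would dispatch the \rn{MT-Class} case: here $\ensuremath{\itbox{C}}$ itself declares $\ensuremath{\itbox{m}}$ with parameters $\ensuremath{\itbox{}\overline{\itbox{x}}\itbox{}}$ of the right length, so \rn{MB-Class} applies with $\ensuremath{\itbox{C{\ensuremath{'}}}} = \ensuremath{\itbox{C}}$ (which is not $\ensuremath{\itbox{Object}}$ since it has a method definition, and $\ensuremath{\itbox{Object}}\notin\dom(\CT)$) and $\ensuremath{\itbox{}\overline{\itbox{L}}\itbox{{\ensuremath{'}}{\ensuremath{'}}}} = \bullet$, trivially satisfying the ``last name not \ensuremath{\itbox{Base}}'' clause. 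For \rn{MT-PMethod}, there is some $\ensuremath{\itbox{L}}\in\set{\ensuremath{\itbox{}\overline{\itbox{L}}\itbox{{\ensuremath{'}}}}}$ with $\pmtype(\ensuremath{\itbox{m}},\ensuremath{\itbox{C}},\ensuremath{\itbox{L}})$ defined; I apply Lemma~\ref{lem:def:pmtype} (i.e.\ \lemref{def:pmtype}, the $\pmtype$-to-$\pmbody$ lemma) to obtain $\pmbody(\ensuremath{\itbox{m}},\ensuremath{\itbox{C}},\ensuremath{\itbox{L}}) = \ensuremath{\itbox{}\overline{\itbox{x}}\itbox{.e}_{0}\itbox{}}\IN\ensuremath{\itbox{L{\ensuremath{'}}{\ensuremath{'}}}}$ with $\ensuremath{\itbox{L{\ensuremath{'}}{\ensuremath{'}}}}\neq\ensuremath{\itbox{Base}}$ and $\ensuremath{\itbox{}\overline{\itbox{x}}\itbox{}}$ of the right length, so $\mbody$ follows by \rn{MB-Layer} with $\ensuremath{\itbox{}\overline{\itbox{L}}\itbox{{\ensuremath{'}}{\ensuremath{'}}}} = (\ensuremath{\itbox{}\overline{\itbox{L}}\itbox{{\ensuremath{'}}{\ensuremath{'}}{\ensuremath{'}}}};\ensuremath{\itbox{L}})$ for the appropriate prefix $\ensuremath{\itbox{}\overline{\itbox{L}}\itbox{{\ensuremath{'}}{\ensuremath{'}}{\ensuremath{'}}}}$ of $\ensuremath{\itbox{}\overline{\itbox{L}}\itbox{{\ensuremath{'}}}}$ (hence of $\ensuremath{\itbox{}\overline{\itbox{L}}\itbox{}}$), and the last name $\ensuremath{\itbox{L}}$ is an activated layer, not $\ensuremath{\itbox{Base}}$. (A small care point: $\mbody$ in \rn{MB-Layer} walks through the sequence $\ensuremath{\itbox{}\overline{\itbox{L}}\itbox{{\ensuremath{'}}}}$ left-to-right looking for the rightmost layer defining a partial method; I need to argue that \emph{some} layer is reached where $\pmbody$ is defined, which is exactly what the existence of $\ensuremath{\itbox{L}}\in\set{\ensuremath{\itbox{}\overline{\itbox{L}}\itbox{{\ensuremath{'}}}}}$ with $\pmtype$ defined, together with \rn{MB-NextLayer} for the layers to its right, gives.) Finally, for \rn{MT-Super}, no layer in $\set{\ensuremath{\itbox{}\overline{\itbox{L}}\itbox{{\ensuremath{'}}}}}$ provides a partial method for $\ensuremath{\itbox{C.m}}$ and $\ensuremath{\itbox{C}}$ does not declare $\ensuremath{\itbox{m}}$, and $\mtype(\ensuremath{\itbox{m}},\ensuremath{\itbox{D}},\set{\ensuremath{\itbox{}\overline{\itbox{L}}\itbox{}}},\set{\ensuremath{\itbox{}\overline{\itbox{L}}\itbox{}}})$ is defined; applying the induction hypothesis to $\ensuremath{\itbox{D}}$ with prefix $\ensuremath{\itbox{}\overline{\itbox{L}}\itbox{}}$ (trivially a prefix of itself) yields $\mbody(\ensuremath{\itbox{m}},\ensuremath{\itbox{D}},\ensuremath{\itbox{}\overline{\itbox{L}}\itbox{}},\ensuremath{\itbox{}\overline{\itbox{L}}\itbox{}})$, and then \rn{MB-NextLayer} (to skip past the layers of $\ensuremath{\itbox{}\overline{\itbox{L}}\itbox{{\ensuremath{'}}}}$, none of which provide $\ensuremath{\itbox{C.m}}$) followed by \rn{MB-Super} (to step from $\ensuremath{\itbox{C}}$ to $\ensuremath{\itbox{D}}$, legitimate since $\ensuremath{\itbox{m}}\notin\ensuremath{\itbox{}\overline{\itbox{M}}\itbox{}}$) reconstructs $\mbody(\ensuremath{\itbox{m}},\ensuremath{\itbox{C}},\ensuremath{\itbox{}\overline{\itbox{L}}\itbox{{\ensuremath{'}}}},\ensuremath{\itbox{}\overline{\itbox{L}}\itbox{}})$ with the same output as for $\ensuremath{\itbox{D}}$.

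The main obstacle I anticipate is the bookkeeping in the \rn{MT-Super} case: I must verify that the side conditions of \rn{MB-NextLayer} hold for \emph{every} layer name in $\ensuremath{\itbox{}\overline{\itbox{L}}\itbox{{\ensuremath{'}}}}$ (that $\pmbody(\ensuremath{\itbox{m}},\ensuremath{\itbox{C}},\ensuremath{\itbox{L}}_i)$ is undefined), which I get from the premise $\forall\ensuremath{\itbox{L}}\in\set{\ensuremath{\itbox{}\overline{\itbox{L}}\itbox{{\ensuremath{'}}}}}.\ \pmtype(\ensuremath{\itbox{m}},\ensuremath{\itbox{C}},\ensuremath{\itbox{L}})\ \undf$ combined with Lemma~\ref{lem:def:pmtype} read contrapositively (undefined $\pmtype$ implies undefined $\pmbody$, which follows from inspecting \rn{PMB-Layer}/\rn{PMB-Super} against \rn{PMT-Layer}/\rn{PMT-Super}); and I must make sure the claim that $\mbody(\ensuremath{\itbox{m}},\ensuremath{\itbox{C}},\bullet,\ensuremath{\itbox{}\overline{\itbox{L}}\itbox{}})$ can itself advance by \rn{MB-Super} rather than \rn{MB-Class}, which is exactly the condition $\ensuremath{\itbox{m}}\notin\ensuremath{\itbox{}\overline{\itbox{M}}\itbox{}}$ from the \rn{MT-Super} premise. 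The ``last name of $\ensuremath{\itbox{}\overline{\itbox{L}}\itbox{{\ensuremath{'}}{\ensuremath{'}}}}$ is not \ensuremath{\itbox{Base}}'' conclusion is maintained automatically since $\ensuremath{\itbox{}\overline{\itbox{L}}\itbox{{\ensuremath{'}}{\ensuremath{'}}}}$ is always either empty (class case) or ends in an activated layer $\ensuremath{\itbox{L}}\in\dom(\LT)$, and $\ensuremath{\itbox{Base}}\notin\dom(\LT)$ by Sanity Condition~5; in the \rn{MT-Super}/inductive case it is simply inherited from the induction hypothesis. Similarly $\ensuremath{\itbox{C{\ensuremath{'}}}}\neq\ensuremath{\itbox{Object}}$ holds because $\ensuremath{\itbox{C{\ensuremath{'}}}}$ always ends up being a class that actually declares $\ensuremath{\itbox{m}}$ (in \rn{MB-Class}) or the class for which a partial method is declared, and Sanity Condition~8 forbids partial methods for $\ensuremath{\itbox{Object}}$.
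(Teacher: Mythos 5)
Your proposal is correct in substance and uses the same ingredients as the paper's proof: the rule-by-rule correspondence between \(\mtype\) and \(\mbody\), \lemref{def:pmtype} to pass from \(\pmtype\) to \(\pmbody\) in the layer case, and the sanity conditions to rule out \ensuremath{\itbox{Object}} and \ensuremath{\itbox{Base}}. The one real difference is how the induction is set up, and it is exactly where your sketch is thinnest. The paper does a \emph{lexicographic} induction on the \(\mtype\) derivation \emph{and the length of \ensuremath{\itbox{}\overline{\itbox{L}}\itbox{{\ensuremath{'}}}}}, and splits cases on the shape of \ensuremath{\itbox{}\overline{\itbox{L}}\itbox{{\ensuremath{'}}}} rather than on the last \(\mtype\) rule: if \ensuremath{\itbox{}\overline{\itbox{L}}\itbox{{\ensuremath{'}}}} ends in \ensuremath{\itbox{L}_{0}\itbox{}} and \(\pmtype(\ensuremath{\itbox{m}},\ensuremath{\itbox{C}},\ensuremath{\itbox{L}_{0}\itbox{}})\) is defined, \rn{MB-Layer} applies immediately; if it is undefined, the same \(\mtype\) judgment holds for the shorter prefix, the inductive hypothesis applies, and a \emph{single} use of \rn{MB-NextLayer} closes the case. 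This makes the right-to-left walk through the layer sequence a one-step-at-a-time induction. In your organization, both the \rn{MT-PMethod} case (skipping the layers to the right of the rightmost defining layer) and the \rn{MT-Super} case (skipping all of \ensuremath{\itbox{}\overline{\itbox{L}}\itbox{{\ensuremath{'}}}}) require iterating \rn{MB-NextLayer} over a suffix of the sequence; you flag this as a ``care point'' but leave the iteration informal, and it genuinely needs an inner induction on the length of the suffix (or the paper's combined measure) to be a proof. Two further details worth making explicit if you keep your decomposition: in \rn{MT-PMethod} the witness \ensuremath{\itbox{L}} of the premise need not be the layer where \(\mbody\) actually lands (that is the rightmost one with \(\pmbody\) defined), so you need \(\noconflict\) (via \lemref{def:pmtype} applied to that layer and the signature-uniqueness of \(\mtype\)) to conclude the parameter count still matches \ensuremath{\itbox{}\overline{\itbox{T}}\itbox{}}; and ``\(\pmtype\) undefined implies \(\pmbody\) undefined'' should be stated once, since \rn{MB-NextLayer} is conditioned on \(\pmbody\), not \(\pmtype\). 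The paper sidesteps both points by always inspecting only the last layer of the current prefix.
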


\begin{proof}
  By lexicographic induction on $\mtype(\ensuremath{\itbox{m}}, \ensuremath{\itbox{C}}, \set{\ensuremath{\itbox{}\overline{\itbox{L}}\itbox{{\ensuremath{'}}}}}, \set{\ensuremath{\itbox{}\overline{\itbox{L}}\itbox{}}}) =
  \ensuremath{\itbox{}\overline{\itbox{T}}\itbox{\(\rightarrow\)T}_{0}\itbox{}}$ and the length of \ensuremath{\itbox{}\overline{\itbox{L}}\itbox{{\ensuremath{'}}}}.

  \begin{eqncase}{
      \ensuremath{\itbox{}\overline{\itbox{L}}\itbox{{\ensuremath{'}}}} = \bullet \andalso
      \ensuremath{\itbox{class C \(\triangleleft\) D {\char'173}... S}_{0}\itbox{ m(}\overline{\itbox{S}}\itbox{ }\overline{\itbox{x}}\itbox{){\char'173} return e}_{0}\itbox{; {\char'175} ...{\char'175}}}
    }
    By \rn{MT-Class}, it must be the case that $\ensuremath{\itbox{}\overline{\itbox{T}}\itbox{}}, \ensuremath{\itbox{T}_{0}\itbox{}} = \ensuremath{\itbox{}\overline{\itbox{S}}\itbox{}},
    \ensuremath{\itbox{S}_{0}\itbox{}}$ and the lengths of \ensuremath{\itbox{}\overline{\itbox{S}}\itbox{}} and \ensuremath{\itbox{}\overline{\itbox{x}}\itbox{}} are equal.  Then, by \rn{MB-Class},
    $\mbody(\ensuremath{\itbox{m}}, \ensuremath{\itbox{C}},\bullet, \ensuremath{\itbox{}\overline{\itbox{L}}\itbox{}}) = \ensuremath{\itbox{}\overline{\itbox{x}}\itbox{.e}_{0}\itbox{}} \IN \ensuremath{\itbox{C}}, \bullet$, finishing the case.
  \end{eqncase}

  \begin{eqncase}{
      \ensuremath{\itbox{}\overline{\itbox{L}}\itbox{{\ensuremath{'}}}} = \bullet \andalso
      \ensuremath{\itbox{class C \(\triangleleft\) D {\char'173}... }\overline{\itbox{M}}\itbox{{\char'175}}} \andalso
      \ensuremath{\itbox{m}} \not \in \ensuremath{\itbox{}\overline{\itbox{M}}\itbox{}}
    }
    It must be the case that $\mtype(\ensuremath{\itbox{m}},\ensuremath{\itbox{C}},\set{\ensuremath{\itbox{}\overline{\itbox{L}}\itbox{{\ensuremath{'}}}}},\set{\ensuremath{\itbox{}\overline{\itbox{L}}\itbox{}}}) =
    \ensuremath{\itbox{}\overline{\itbox{T}}\itbox{\(\rightarrow\)T}_{0}\itbox{}}$ is derived by \rn{MT-Super} and $\mtype(\ensuremath{\itbox{m}}, \ensuremath{\itbox{D}},
    \set{\ensuremath{\itbox{}\overline{\itbox{L}}\itbox{}}}, \set{\ensuremath{\itbox{}\overline{\itbox{L}}\itbox{}}}) = \ensuremath{\itbox{}\overline{\itbox{T}}\itbox{\(\rightarrow\)T}_{0}\itbox{}}$.  The induction hypothesis
    and \rn{MB-Super} finish the case.
  \end{eqncase}
  
  \begin{eqncase}{
      \ensuremath{\itbox{}\overline{\itbox{L}}\itbox{{\ensuremath{'}}}} = \ensuremath{\itbox{}\overline{\itbox{L}}\itbox{{\ensuremath{'}}{\ensuremath{'}}{\ensuremath{'}}}}, \ensuremath{\itbox{L}_{0}\itbox{}} &
      \pmtype(\ensuremath{\itbox{m}},\ensuremath{\itbox{C}},\ensuremath{\itbox{L}_{0}\itbox{}}) = \ensuremath{\itbox{}\overline{\itbox{T}}\itbox{\(\rightarrow\)T}_{0}\itbox{}}
    }
    By \lemref{pmtype} and \rn{MB-Layer}.
  \end{eqncase}

  \begin{eqncase}{
      \ensuremath{\itbox{}\overline{\itbox{L}}\itbox{{\ensuremath{'}}}} = \ensuremath{\itbox{}\overline{\itbox{L}}\itbox{{\ensuremath{'}}{\ensuremath{'}}{\ensuremath{'}}}}; \ensuremath{\itbox{L}_{0}\itbox{}} &
      \pmtype(\ensuremath{\itbox{m}},\ensuremath{\itbox{C}},\ensuremath{\itbox{L}_{0}\itbox{}}) \undf
    }
    Since $\pmtype(\ensuremath{\itbox{m}},\ensuremath{\itbox{C}},\ensuremath{\itbox{L}_{0}\itbox{}}) \undf$, it must be the case that
    $\mtype(\ensuremath{\itbox{m}}, \ensuremath{\itbox{C}}, \set{\ensuremath{\itbox{}\overline{\itbox{L}}\itbox{{\ensuremath{'}}{\ensuremath{'}}{\ensuremath{'}}}}}, \set{\ensuremath{\itbox{}\overline{\itbox{L}}\itbox{}}}) = \ensuremath{\itbox{}\overline{\itbox{T}}\itbox{\(\rightarrow\)T}_{0}\itbox{}}$.  By the
    induction hypothesis, there exist \ensuremath{\itbox{}\overline{\itbox{x}}\itbox{}} and \ensuremath{\itbox{e}_{0}\itbox{}} and \ensuremath{\itbox{}\overline{\itbox{L}}\itbox{{\ensuremath{'}}{\ensuremath{'}}}} and
    \ensuremath{\itbox{C{\ensuremath{'}}}} ($\neq \ensuremath{\itbox{Object}}$) such that $\mbody(\ensuremath{\itbox{m}}, \ensuremath{\itbox{C}}, \ensuremath{\itbox{}\overline{\itbox{L}}\itbox{{\ensuremath{'}}{\ensuremath{'}}{\ensuremath{'}}}}, \ensuremath{\itbox{}\overline{\itbox{L}}\itbox{{\ensuremath{'}}}})
    = \ensuremath{\itbox{}\overline{\itbox{x}}\itbox{.e}_{0}\itbox{}} \IN \ensuremath{\itbox{C{\ensuremath{'}}}}, \ensuremath{\itbox{}\overline{\itbox{L}}\itbox{{\ensuremath{'}}{\ensuremath{'}}}}$ and the lengths of \ensuremath{\itbox{}\overline{\itbox{x}}\itbox{}} and \ensuremath{\itbox{}\overline{\itbox{T}}\itbox{}} are
    equal.
    It follows that $\pmbody(\ensuremath{\itbox{m}},\ensuremath{\itbox{C}},\ensuremath{\itbox{L}_{0}\itbox{}})$ is $\undf$ from
    $\pmtype(\ensuremath{\itbox{m}},\ensuremath{\itbox{C}},\ensuremath{\itbox{L}_{0}\itbox{}}) \undf$.  \rn{MB-NextLayer} finishes the case.  \qed
  \end{eqncase}
\end{proof}

\begin{theoremapp}[Progress]\label{thm:progress}
  Suppose $\p (\CT, \LT)\ensuremath{\itbox{ ok}}$.  If $\bullet;
  \set{\ensuremath{\itbox{}\overline{\itbox{L}}\itbox{}}}; \bullet \p \ensuremath{\itbox{e}} : \ensuremath{\itbox{T}}$ and \set{\ensuremath{\itbox{}\overline{\itbox{L}}\itbox{}}}\, \WF, then \ensuremath{\itbox{e}} is a
  value or $\reduceto{\ensuremath{\itbox{}\overline{\itbox{L}}\itbox{}}}{\ensuremath{\itbox{e}}}{\ensuremath{\itbox{e{\ensuremath{'}}}}}$ for some \ensuremath{\itbox{e{\ensuremath{'}}}}.
\end{theoremapp}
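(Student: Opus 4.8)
The plan is to prove the Progress Theorem by structural induction on the derivation of $\bullet; \set{\ensuremath{\itbox{}\overline{\itbox{L}}\itbox{}}}; \bullet \p \ensuremath{\itbox{e}} : \ensuremath{\itbox{T}}$, following the standard FJ recipe but adapted to the COP setting. The variable case \rn{T-Var} cannot arise because the type environment is empty, and the cases \rn{T-SuperB}, \rn{T-SuperP}, \rn{T-Proceed}, and \rn{T-SuperProceed} also cannot arise because the location $\Loc$ is $\bullet$ (these syntactic forms only typecheck inside a method or partial method body). The cases for \rn{T-New} and \rn{T-NewL} are immediate: if all argument subexpressions are values the whole expression is a value (for \rn{T-NewL} the argument list is empty), and otherwise some congruence rule \rn{RC-New} applies to a non-value argument. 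The remaining ``interesting'' cases are \rn{T-Field}, \rn{T-Invk}, \rn{T-With}, \rn{T-Swap}, \rn{T-InvkA}, and \rn{T-InvkAL}.

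First I would handle \rn{T-Field}: by the induction hypothesis the receiver $\ensuremath{\itbox{e}_{0}\itbox{}}$ either reduces (apply \rn{RC-Field}) or is a value of type $\ensuremath{\itbox{C}_{0}\itbox{}}$; a value of class type must be of the form $\ensuremath{\itbox{new C}_{0}\itbox{(}\overline{\itbox{v}}\itbox{)}}$ (a layer value $\ensuremath{\itbox{new L()}}$ is impossible since $\ensuremath{\itbox{L}} \not\LEQ \ensuremath{\itbox{C}_{0}\itbox{}}$), and since $\mbox{\it fields}(\ensuremath{\itbox{C}_{0}\itbox{}})$ contains $\ensuremath{\itbox{f}_{i}\itbox{}}$ we know $\ensuremath{\itbox{}\overline{\itbox{v}}\itbox{}}$ has an $i$-th component, so \rn{R-Field} fires. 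For \rn{T-Invk}, again either the receiver or an argument reduces by a congruence rule, or they are all values; then the receiver is $\ensuremath{\itbox{new C}_{0}\itbox{(}\overline{\itbox{v}}\itbox{)}}$ and \rn{R-Invk} always fires, producing the run-time invocation expression. For \rn{T-With} and \rn{T-Swap}, either $\ensuremath{\itbox{e}_{l}\itbox{}}$ reduces (\rn{RC-WithArg}/\rn{RC-SwapArg}) or it is a value of a layer type, hence of the form $\ensuremath{\itbox{new L()}}$, and then either $\ensuremath{\itbox{e}_{0}\itbox{}}$ is a value (apply \rn{R-WithVal}/\rn{R-SwapVal}) or it reduces under the extended layer sequence $\with(\ensuremath{\itbox{L}},\ensuremath{\itbox{}\overline{\itbox{L}}\itbox{}})$ or $\swap(\ensuremath{\itbox{L}},\ensuremath{\itbox{L}_{{sw}}\itbox{}},\ensuremath{\itbox{}\overline{\itbox{L}}\itbox{}})$ by the induction hypothesis, so that \rn{RC-With}/\rn{RC-Swap} fires; here one must check that the modified layer set is still $\WF$, which follows from \rn{Wf-With}/\rn{Wf-Swap} using the premises of \rn{T-With}/\rn{T-Swap}.

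The main obstacle, as usual, is the pair of run-time cases \rn{T-InvkA} and \rn{T-InvkAL}, where we must show that a well-typed run-time invocation $\ensuremath{\itbox{new C}_{0}\itbox{(}\overline{\itbox{v}}\itbox{)<D}_{0}\itbox{,}\overline{\itbox{L}}\itbox{{\ensuremath{'}},}\overline{\itbox{L}}\itbox{>.m(}\overline{\itbox{e}}\itbox{)}}$ (or its quadruple variant) actually steps. If some argument is a non-value, a congruence rule \rn{RC-InvkAArg1}/\rn{RC-InvkAArg2} applies. If all arguments are values, the key is to produce a method body, i.e.\ to establish that $\mbody$ (or $\pmbody$) is defined at the cursor location so that \rn{R-InvkB}, \rn{R-InvkP}, or \rn{R-InvkSP} applies. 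For \rn{T-InvkA}, the well-formedness premise $\ensuremath{\itbox{C}_{0}\itbox{.m}} \p \ensuremath{\itbox{<D}_{0}\itbox{,}\overline{\itbox{L}}\itbox{{\ensuremath{'}},}\overline{\itbox{L}}\itbox{> ok}}$ unpacks (via \rn{Wf-Cursor}) to $\set{\ensuremath{\itbox{}\overline{\itbox{L}}\itbox{}}}\,\WF$ and $\ndp(\ensuremath{\itbox{m}},\ensuremath{\itbox{D}_{0}\itbox{}},\ensuremath{\itbox{}\overline{\itbox{L}}\itbox{{\ensuremath{'}}}},\ensuremath{\itbox{}\overline{\itbox{L}}\itbox{}})$; by \lemref{def:wp_mtype} the latter gives $\mtype(\ensuremath{\itbox{m}},\ensuremath{\itbox{D}_{0}\itbox{}},\set{\ensuremath{\itbox{}\overline{\itbox{L}}\itbox{{\ensuremath{'}}}}},\set{\ensuremath{\itbox{}\overline{\itbox{L}}\itbox{}}})$ defined, and then \lemref{def:mtype} (using that $\ensuremath{\itbox{}\overline{\itbox{L}}\itbox{{\ensuremath{'}}}}$ is a prefix of $\ensuremath{\itbox{}\overline{\itbox{L}}\itbox{}}$, which is part of the cursor-wellformedness invariant) yields $\mbody(\ensuremath{\itbox{m}},\ensuremath{\itbox{D}_{0}\itbox{}},\ensuremath{\itbox{}\overline{\itbox{L}}\itbox{{\ensuremath{'}}}},\ensuremath{\itbox{}\overline{\itbox{L}}\itbox{}}) = \ensuremath{\itbox{}\overline{\itbox{x}}\itbox{.e}_{0}\itbox{}} \IN \ensuremath{\itbox{C{\ensuremath{'}}}},\ensuremath{\itbox{}\overline{\itbox{L}}\itbox{{\ensuremath{'}}{\ensuremath{'}}}}$ with $\ensuremath{\itbox{C{\ensuremath{'}}}} \neq \ensuremath{\itbox{Object}}$ and the arity of $\ensuremath{\itbox{}\overline{\itbox{x}}\itbox{}}$ matching the arguments; depending on whether $\ensuremath{\itbox{}\overline{\itbox{L}}\itbox{{\ensuremath{'}}{\ensuremath{'}}}}$ is empty or ends in a layer name (which, by the lemma, is not $\ensuremath{\itbox{Base}}$, so its superlayer lookup makes sense), we apply \rn{R-InvkB} or \rn{R-InvkP}. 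For \rn{T-InvkAL}, the premise $\pmtype(\ensuremath{\itbox{m}},\ensuremath{\itbox{D}_{0}\itbox{}},\ensuremath{\itbox{L}_{1}\itbox{}})$ defined gives, via \lemref{def:pmtype}, that $\pmbody(\ensuremath{\itbox{m}},\ensuremath{\itbox{D}_{0}\itbox{}},\ensuremath{\itbox{L}_{1}\itbox{}}) = \ensuremath{\itbox{}\overline{\itbox{x}}\itbox{.e}_{0}\itbox{}} \IN \ensuremath{\itbox{L}_{2}\itbox{}}$ with $\ensuremath{\itbox{L}_{2}\itbox{}} \neq \ensuremath{\itbox{Base}}$ and matching arity, so \rn{R-InvkSP} applies (its premise $\ensuremath{\itbox{layer L}_{2}\itbox{\(\triangleleft\) L}_{3}\itbox{}}$ is available precisely because $\ensuremath{\itbox{L}_{2}\itbox{}} \neq \ensuremath{\itbox{Base}}$). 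Throughout, I would also need the obvious inversion facts that a value of class type is a $\ensuremath{\itbox{new C}}$ and a value of layer type is a $\ensuremath{\itbox{new L}}$, plus the fact (from \rn{Wf-Cursor} and the sanity conditions) that the class chain $\ensuremath{\itbox{C}_{0}\itbox{}} \LEQ \ensuremath{\itbox{D}_{0}\itbox{}}$ is finite so $\mbody$ bottoms out at $\ensuremath{\itbox{Object}}$—but since $\mtype$ is already known defined, that termination concern is subsumed by \lemref{def:mtype}.
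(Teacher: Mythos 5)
Your overall plan — induction on the typing derivation, dismissing \rn{T-Var} and the \ensuremath{\itbox{super}}/\ensuremath{\itbox{proceed}}/\ensuremath{\itbox{superproceed}} cases by the shape of $\Loc$ and $\Gamma$, congruence rules when a subexpression is a non-value, \rn{Wf-With}/\rn{Wf-Swap} to keep the layer set well formed under the body of \ensuremath{\itbox{with}}/\ensuremath{\itbox{swap}}, and \lemref{def:mtype}/\lemref{def:pmtype} to manufacture a method body in the run-time invocation cases — is the same route the paper takes, and the \rn{T-InvkA}, \rn{T-InvkAL}, \rn{T-With}, \rn{T-Swap}, and \rn{T-Field} cases are handled essentially as in the paper.

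There is, however, a genuine gap in your \rn{T-Invk} case. You claim that once the receiver and arguments are values, ``\rn{R-Invk} always fires, producing the run-time invocation expression.'' That misreads the rule: \rn{R-Invk} is not an axiom rewriting \ensuremath{\itbox{new C(}\overline{\itbox{v}}\itbox{).m(}\overline{\itbox{w}}\itbox{)}} to its annotated form; its conclusion reduces the plain invocation to \ensuremath{\itbox{e{\ensuremath{'}}}} only under the \emph{premise} that the cursor-annotated invocation \ensuremath{\itbox{new C(}\overline{\itbox{v}}\itbox{)<C,}\overline{\itbox{L}}\itbox{,}\overline{\itbox{L}}\itbox{>.m(}\overline{\itbox{w}}\itbox{)}} itself reduces to \ensuremath{\itbox{e{\ensuremath{'}}}}. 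So in the \rn{T-Invk} case you must still exhibit that reduction; you cannot defer it to the separate \rn{T-InvkA} case of the induction, because the annotated expression is not a subderivation of the \rn{T-Invk} typing judgment. The repair is exactly the argument you already sketch for \rn{T-InvkA}: the premise $\mtype(\ensuremath{\itbox{m}},\ensuremath{\itbox{C}_{0}\itbox{}},\set{\ensuremath{\itbox{}\overline{\itbox{L}}\itbox{}}})$ of \rn{T-Invk} together with $\set{\ensuremath{\itbox{}\overline{\itbox{L}}\itbox{}}}\,\WF$ lets \lemref{def:mtype} (with \ensuremath{\itbox{}\overline{\itbox{L}}\itbox{{\ensuremath{'}}}} taken to be \ensuremath{\itbox{}\overline{\itbox{L}}\itbox{}} itself) produce $\mbody$ with a non-\ensuremath{\itbox{Object}} defining class and, if the layer component is nonempty, a non-\ensuremath{\itbox{Base}} defining layer, after which the substituted body is well defined and \rn{R-InvkB} or \rn{R-InvkP} composed with \rn{R-Invk} gives the step. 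This is what the paper's proof does. A second, smaller omission: in the \rn{R-InvkSP} step for \rn{T-InvkAL} you justify the premise \ensuremath{\itbox{layer L}_{2}\itbox{\(\triangleleft\) L}_{3}\itbox{}} but not the premise \ensuremath{\itbox{class C{\ensuremath{'}}\(\triangleleft\) D}}; that requires $\ensuremath{\itbox{D}_{0}\itbox{}} \neq \ensuremath{\itbox{Object}}$, which the paper obtains from Sanity Condition (8) (a layer cannot define partial methods for \ensuremath{\itbox{Object}}), not merely from finiteness of the inheritance chain.
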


\begin{proof}
  By induction on $\bullet; \set{\ensuremath{\itbox{}\overline{\itbox{L}}\itbox{}}}; \bullet \p \ensuremath{\itbox{e}} : \ensuremath{\itbox{T}}$ with case analysis on the last
  typing rule used.

  \begin{rncase}{T-Var, T-Super, T-Proceed, T-SuperProceed}
    Cannot happen.
  \end{rncase}

  \begin{rneqncase}{T-Field}{
      \ensuremath{\itbox{e}} = \ensuremath{\itbox{e}_{0}\itbox{.f}_{i}\itbox{}} &
      \bullet; \set{\ensuremath{\itbox{}\overline{\itbox{L}}\itbox{}}}; \bullet \p \ensuremath{\itbox{e}_{0}\itbox{}} : \ensuremath{\itbox{C}_{0}\itbox{}} &
      \fields(\ensuremath{\itbox{C}_{0}\itbox{}}) = \ensuremath{\itbox{}\overline{\itbox{T}}\itbox{ }\overline{\itbox{f}}\itbox{}} &
      \ensuremath{\itbox{C}} = \ensuremath{\itbox{C}_{i}\itbox{}}
    }
    By the induction hypothesis, either \ensuremath{\itbox{e}_{0}\itbox{}} is a value or there exists
    \ensuremath{\itbox{e}_{0}\itbox{{\ensuremath{'}}}} such that $\reduceto{\ensuremath{\itbox{}\overline{\itbox{L}}\itbox{}}}{\ensuremath{\itbox{e}_{0}\itbox{}}}{\ensuremath{\itbox{e}_{0}\itbox{{\ensuremath{'}}}}}$.  In the latter
    case, \rn{RC-Field} finishes the case.  In the former case where
    \ensuremath{\itbox{e}_{0}\itbox{}} is a value, by \rn{T-New}, we have
    $$
    \begin{bcpcasearray}
      \ensuremath{\itbox{e}_{0}\itbox{}} = \ensuremath{\itbox{new C}_{0}\itbox{(}\overline{\itbox{v}}\itbox{)}} &
      \bullet; \set{\ensuremath{\itbox{}\overline{\itbox{L}}\itbox{}}}; \bullet \p \ensuremath{\itbox{}\overline{\itbox{v}}\itbox{}} : \ensuremath{\itbox{}\overline{\itbox{S}}\itbox{}} &
      \ensuremath{\itbox{}\overline{\itbox{S}}\itbox{}} \LEQ \ensuremath{\itbox{}\overline{\itbox{T}}\itbox{}}.
    \end{bcpcasearray}
    $$
    So, we have $\reduceto{\ensuremath{\itbox{}\overline{\itbox{L}}\itbox{}}}{\ensuremath{\itbox{e}}}{\ensuremath{\itbox{v}_{i}\itbox{}}}$, finishing the case.
  \end{rneqncase}

  \begin{rneqncase}{T-Invk}{
      \ensuremath{\itbox{e}} = \ensuremath{\itbox{e}_{0}\itbox{.m(}\overline{\itbox{e}}\itbox{)}} &
      \bullet; \set{\ensuremath{\itbox{}\overline{\itbox{L}}\itbox{}}}; \bullet \p \ensuremath{\itbox{e}_{0}\itbox{}} : \ensuremath{\itbox{C}_{0}\itbox{}} \\
      \mtype(\ensuremath{\itbox{m}}, \ensuremath{\itbox{C}_{0}\itbox{}}, \set{\ensuremath{\itbox{}\overline{\itbox{L}}\itbox{}}}) = \ensuremath{\itbox{}\overline{\itbox{T}}\itbox{\(\rightarrow\)T}} &
      \bullet; \set{\ensuremath{\itbox{}\overline{\itbox{L}}\itbox{}}}; \bullet \p \ensuremath{\itbox{}\overline{\itbox{e}}\itbox{}} : \ensuremath{\itbox{}\overline{\itbox{S}}\itbox{}} &
      \ensuremath{\itbox{}\overline{\itbox{S}}\itbox{}} \LEQ \ensuremath{\itbox{}\overline{\itbox{T}}\itbox{}}
    }
    By the induction hypothesis, there exist $i \geq 0$ and \ensuremath{\itbox{e}_{i}\itbox{{\ensuremath{'}}}}
    such that $\reduceto{\ensuremath{\itbox{}\overline{\itbox{L}}\itbox{}}}{\ensuremath{\itbox{e}_{i}\itbox{}}}{\ensuremath{\itbox{e}_{i}\itbox{{\ensuremath{'}}}}}$, in which case
    \rn{RC-InvkRecv} or \rn{RC-InvkArg} finishes the case, or all
    \ensuremath{\itbox{e}_{i}\itbox{}}'s are values \ensuremath{\itbox{v}_{0}\itbox{}}, \ensuremath{\itbox{}\overline{\itbox{v}}\itbox{}}.  Then, by \rn{T-New}, $\ensuremath{\itbox{v}_{0}\itbox{}} =
    \ensuremath{\itbox{new C}_{0}\itbox{(}\overline{\itbox{w}}\itbox{)}}$ for some values \ensuremath{\itbox{}\overline{\itbox{w}}\itbox{}}.  By \lemref{mtype}, there
    exist \ensuremath{\itbox{}\overline{\itbox{x}}\itbox{}}, \ensuremath{\itbox{e}_{0}\itbox{{\ensuremath{'}}}}, \ensuremath{\itbox{}\overline{\itbox{L}}\itbox{{\ensuremath{'}}{\ensuremath{'}}}} and \ensuremath{\itbox{C{\ensuremath{'}}}} ($\neq \ensuremath{\itbox{Object}}$) such that
    $\mbody(\ensuremath{\itbox{m}}, \ensuremath{\itbox{C}_{0}\itbox{}}, \ensuremath{\itbox{}\overline{\itbox{L}}\itbox{}}, \ensuremath{\itbox{}\overline{\itbox{L}}\itbox{}}) = \ensuremath{\itbox{}\overline{\itbox{x}}\itbox{.e}_{0}\itbox{}} \IN \ensuremath{\itbox{C{\ensuremath{'}}}}, \ensuremath{\itbox{}\overline{\itbox{L}}\itbox{{\ensuremath{'}}{\ensuremath{'}}}}$ and
    the lengths of \ensuremath{\itbox{}\overline{\itbox{x}}\itbox{}} and \ensuremath{\itbox{}\overline{\itbox{T}}\itbox{}} are the same.  Since $\ensuremath{\itbox{C{\ensuremath{'}}}} \neq
    \ensuremath{\itbox{Object}}$, there exists \ensuremath{\itbox{D{\ensuremath{'}}}} such that \ensuremath{\itbox{class C{\ensuremath{'}} \(\triangleleft\) D{\ensuremath{'}} {\char'173}...{\char'175}}}.  We
    have two subcases here depending on whether \ensuremath{\itbox{}\overline{\itbox{L}}\itbox{{\ensuremath{'}}{\ensuremath{'}}}} is empty or
    not.  We will show the case where \ensuremath{\itbox{}\overline{\itbox{L}}\itbox{{\ensuremath{'}}{\ensuremath{'}}}} is not empty; the other
    case is similar.  Let $\ensuremath{\itbox{}\overline{\itbox{L}}\itbox{{\ensuremath{'}}{\ensuremath{'}}}} = \ensuremath{\itbox{}\overline{\itbox{L}}\itbox{{\ensuremath{'}}{\ensuremath{'}}{\ensuremath{'}}}};\ensuremath{\itbox{L}_{0}\itbox{}}$ for some \ensuremath{\itbox{}\overline{\itbox{L}}\itbox{{\ensuremath{'}}{\ensuremath{'}}{\ensuremath{'}}}}.
    Since $\ensuremath{\itbox{L}_{0}\itbox{}} \neq \ensuremath{\itbox{Base}}$, there
    exists \ensuremath{\itbox{L}_{1}\itbox{}} such that $\ensuremath{\itbox{layer L}_{0}\itbox{ \(\triangleleft\) L}_{1}\itbox{ {\char'173}...{\char'175}}}$.  Then, the expression
    $$
    \ensuremath{\itbox{e{\ensuremath{'}}}} = \left[
      \begin{array}{l@{/}l}
        \ensuremath{\itbox{new C}_{0}\itbox{(}\overline{\itbox{w}}\itbox{)}} & \ensuremath{\itbox{this}} \\
        \ensuremath{\itbox{}\overline{\itbox{v}}\itbox{}} & \ensuremath{\itbox{}\overline{\itbox{x}}\itbox{}} \\
        \ensuremath{\itbox{new C}_{0}\itbox{(}\overline{\itbox{w}}\itbox{)<C{\ensuremath{'}},}\overline{\itbox{L}}\itbox{{\ensuremath{'}}{\ensuremath{'}}{\ensuremath{'}},}\overline{\itbox{L}}\itbox{>.m}} & \ensuremath{\itbox{proceed}} \\
        \ensuremath{\itbox{new C}_{0}\itbox{(}\overline{\itbox{w}}\itbox{)<D{\ensuremath{'}},}\overline{\itbox{L}}\itbox{,}\overline{\itbox{L}}\itbox{>}} & \ensuremath{\itbox{super}} \\
        \ensuremath{\itbox{new C}_{0}\itbox{(}\overline{\itbox{w}}\itbox{)<C{\ensuremath{'}},L}_{1}\itbox{,}\overline{\itbox{L}}\itbox{,}\overline{\itbox{L}}\itbox{>}} & \ensuremath{\itbox{superproceed}}
      \end{array}
      \right]\ensuremath{\itbox{e}_{0}\itbox{{\ensuremath{'}}}}
    $$
    is well defined (note that the lengths of \ensuremath{\itbox{}\overline{\itbox{x}}\itbox{}} and \ensuremath{\itbox{}\overline{\itbox{v}}\itbox{}} are equal).
    Then, by \rn{R-InvkP} and \rn{R-Invk}, $\reduceto{\ensuremath{\itbox{}\overline{\itbox{L}}\itbox{}}}{\ensuremath{\itbox{e}}}{\ensuremath{\itbox{e{\ensuremath{'}}}}}$.
  \end{rneqncase}

  \begin{rneqncase}{T-New}{
      \ensuremath{\itbox{e}} = \ensuremath{\itbox{new C(}\overline{\itbox{e}}\itbox{)}} &
      \fields(\ensuremath{\itbox{C}}) = \ensuremath{\itbox{}\overline{\itbox{T}}\itbox{ }\overline{\itbox{f}}\itbox{}} &
      \bullet; \set{\ensuremath{\itbox{}\overline{\itbox{L}}\itbox{}}}; \bullet \p \ensuremath{\itbox{}\overline{\itbox{e}}\itbox{}} : \ensuremath{\itbox{}\overline{\itbox{S}}\itbox{}} &
      \ensuremath{\itbox{}\overline{\itbox{S}}\itbox{}} \LEQ \ensuremath{\itbox{}\overline{\itbox{T}}\itbox{}}
    }
    By the induction hypothesis, either (1) \ensuremath{\itbox{}\overline{\itbox{e}}\itbox{}} are all values, in
    which case \ensuremath{\itbox{e}} is also a value; or (2) there exists $i$ and \ensuremath{\itbox{e}_{i}\itbox{{\ensuremath{'}}}}
    such that $\reduceto{\ensuremath{\itbox{}\overline{\itbox{L}}\itbox{}}}{\ensuremath{\itbox{e}_{i}\itbox{}}}{\ensuremath{\itbox{e}_{i}\itbox{{\ensuremath{'}}}}}$, in which case
    \rn{RC-New} finishes the case.
  \end{rneqncase}

  \begin{rncase}{T-NewL}
    Trivial.
  \end{rncase}

  \begin{rneqncase}{T-With}{
      \ensuremath{\itbox{e}} = \ensuremath{\itbox{with e}_{l}\itbox{ e}_{0}\itbox{}} &
      \bullet; \set{\ensuremath{\itbox{}\overline{\itbox{L}}\itbox{}}}; \bullet \p \ensuremath{\itbox{e}_{l}\itbox{}} : \ensuremath{\itbox{L}} &
      \bullet; \set{\ensuremath{\itbox{}\overline{\itbox{L}}\itbox{}}} \cup \set{\ensuremath{\itbox{L}}}; \bullet \p \ensuremath{\itbox{e}_{0}\itbox{}} : \ensuremath{\itbox{T}} \\
      \ensuremath{\itbox{L req }}\LSet &
      \set{\ensuremath{\itbox{}\overline{\itbox{L}}\itbox{}}} \LEQ_w \LSet
    }
    By the induction hypothesis, either \ensuremath{\itbox{e}_{l}\itbox{}} is not a value, in which
    case \rn{RC-WithArg} finishes the case; or \ensuremath{\itbox{e}_{0}\itbox{}} is a value, in
    which case \rn{R-WithVal} finishes the case; or there exists
    \ensuremath{\itbox{e}_{0}\itbox{{\ensuremath{'}}}} such that $\reduceto{\with(\ensuremath{\itbox{L}},\ensuremath{\itbox{}\overline{\itbox{L}}\itbox{}})}{\ensuremath{\itbox{e}_{0}\itbox{}}}{\ensuremath{\itbox{e}_{0}\itbox{{\ensuremath{'}}}}}$, in
    which case \rn{RC-With} finishes the case (notice that
    \(\set{\with(\ensuremath{\itbox{L}}, \ensuremath{\itbox{}\overline{\itbox{L}}\itbox{}})}\, \WF\), by \rn{Wf-With}).
  \end{rneqncase}

  \begin{rneqncase}{T-Swap}{
      \ensuremath{\itbox{e}} = \ensuremath{\itbox{swap (e}_{l}\itbox{,L}_{{sw}}\itbox{) e}_{0}\itbox{}} &
      \bullet; \set{\ensuremath{\itbox{}\overline{\itbox{L}}\itbox{}}}; \bullet \p \ensuremath{\itbox{e}_{l}\itbox{}} : \ensuremath{\itbox{L}} \\
      \ensuremath{\itbox{L}_{{sw}}\itbox{ swappable}} &
      \ensuremath{\itbox{L}} \LEQ_w \ensuremath{\itbox{L}_{{sw}}\itbox{}} &
      \ensuremath{\itbox{L req }} \LSet' \\
      \LSet_{rm} = \set{\ensuremath{\itbox{}\overline{\itbox{L}}\itbox{}}} \setminus \set{\ensuremath{\itbox{L{\ensuremath{'}}}} \mid \ensuremath{\itbox{L{\ensuremath{'}}}} \LEQ_w \ensuremath{\itbox{L}_{{sw}}\itbox{}}} &
      \LSet_{rm} \LEQ_w \LSet' &
      \Loc; \LSet_{rm} \cup \set{\ensuremath{\itbox{L}}}; \Gp \ensuremath{\itbox{e}_{0}\itbox{}} : \ensuremath{\itbox{T}_{0}\itbox{}}
    }
    By the induction hypothesis, either \ensuremath{\itbox{e}_{l}\itbox{}} is not a value, in which
    case \rn{RC-SwapArg} finishes the case; or \ensuremath{\itbox{e}_{0}\itbox{}} is a value, in
    which case \rn{R-SwapVal} finishes the case; or there exists
    \ensuremath{\itbox{e}_{0}\itbox{{\ensuremath{'}}}} such that
    $\reduceto{\swap(\ensuremath{\itbox{L}},\ensuremath{\itbox{L}_{{sw}}\itbox{}},\ensuremath{\itbox{}\overline{\itbox{L}}\itbox{}})}{\ensuremath{\itbox{e}_{0}\itbox{}}}{\ensuremath{\itbox{e}_{0}\itbox{{\ensuremath{'}}}}}$, in which
    case \rn{RC-Swap} finishes the case (notice that, by \rn{Wf-Swap},
    \(\set{\swap(\ensuremath{\itbox{L}},\ensuremath{\itbox{L}_{{sw}}\itbox{}},\ensuremath{\itbox{}\overline{\itbox{L}}\itbox{}})}\, \WF\)).
  \end{rneqncase}

  \begin{rncase}{T-InvkA}
    Similar to the case for \rn{T-Invk}.
  \end{rncase}

  \begin{rneqncase}{T-InvkAL}{
      \multicolumn{4}{@{}l}{\ensuremath{\itbox{e}} = \ensuremath{\itbox{new C}_{0}\itbox{(}\overline{\itbox{v}}\itbox{)<D}_{0}\itbox{,L}_{1}\itbox{,(}\overline{\itbox{L}}\itbox{{\ensuremath{'}}{\ensuremath{'}};L}_{0}\itbox{),}\overline{\itbox{L}}\itbox{{\ensuremath{'}}>.m(}\overline{\itbox{e}}\itbox{)}}} \\
      \bullet; \set{\ensuremath{\itbox{}\overline{\itbox{L}}\itbox{}}}; \bullet \p \ensuremath{\itbox{new C}_{0}\itbox{(}\overline{\itbox{v}}\itbox{)}} : \ensuremath{\itbox{C}_{0}\itbox{}} &
      \ensuremath{\itbox{C}_{0}\itbox{.m}} \p \ensuremath{\itbox{<D}_{0}\itbox{,L}_{1}\itbox{,(}\overline{\itbox{L}}\itbox{{\ensuremath{'}}{\ensuremath{'}};L}_{0}\itbox{),}\overline{\itbox{L}}\itbox{{\ensuremath{'}}> ok}} \\
      \set{\ensuremath{\itbox{}\overline{\itbox{L}}\itbox{}}} \LEQ_{sw} \set{\ensuremath{\itbox{}\overline{\itbox{L}}\itbox{{\ensuremath{'}}}}} &
      \ensuremath{\itbox{L}_{0}\itbox{}} \LEQ_w \ensuremath{\itbox{L}_{1}\itbox{}} \\
      \pmtype(\ensuremath{\itbox{m}},\ensuremath{\itbox{D}_{0}\itbox{}},\ensuremath{\itbox{L}_{1}\itbox{}}) = \ensuremath{\itbox{}\overline{\itbox{T}}\itbox{{\ensuremath{'}}\(\rightarrow\)T}_{0}\itbox{}} \\
      \bullet; \set{\ensuremath{\itbox{}\overline{\itbox{L}}\itbox{}}}; \bullet \p \ensuremath{\itbox{}\overline{\itbox{e}}\itbox{}} : \ensuremath{\itbox{}\overline{\itbox{S}}\itbox{{\ensuremath{'}}}} &
      \ensuremath{\itbox{}\overline{\itbox{S}}\itbox{{\ensuremath{'}}}} \LEQ \ensuremath{\itbox{}\overline{\itbox{T}}\itbox{{\ensuremath{'}}}}
    }
    By the induction hypothesis, either (1) there exists $i \geq 1$ and \ensuremath{\itbox{e}_{i}\itbox{{\ensuremath{'}}}}
    such that $\reduceto{\ensuremath{\itbox{}\overline{\itbox{L}}\itbox{}}}{\ensuremath{\itbox{e}_{i}\itbox{}}}{\ensuremath{\itbox{e}_{i}\itbox{{\ensuremath{'}}}}}$, in which case
    \rn{RC-InvkArg} finishes the case, or (2) all \ensuremath{\itbox{e}_{i}\itbox{}}'s are values
    \ensuremath{\itbox{}\overline{\itbox{w}}\itbox{}}.  Then, by \lemref{pmtype}, there exist \ensuremath{\itbox{}\overline{\itbox{x}}\itbox{}},
    \ensuremath{\itbox{e}_{0}\itbox{{\ensuremath{'}}}} and \ensuremath{\itbox{L}_{2}\itbox{}} ($\neq \ensuremath{\itbox{Base}}$) such that $\pmbody(\ensuremath{\itbox{m}}, \ensuremath{\itbox{D}_{0}\itbox{}},
    \ensuremath{\itbox{L}_{1}\itbox{}}) = \ensuremath{\itbox{}\overline{\itbox{x}}\itbox{.e}_{0}\itbox{}} \IN \ensuremath{\itbox{L}_{2}\itbox{}}$ and the lengths of \ensuremath{\itbox{}\overline{\itbox{x}}\itbox{}} and \ensuremath{\itbox{}\overline{\itbox{T}}\itbox{{\ensuremath{'}}}} are
    the same.  Since $\ensuremath{\itbox{L}_{2}\itbox{{\ensuremath{'}}}} \neq \ensuremath{\itbox{Base}}$, there exists \ensuremath{\itbox{L}_{3}\itbox{}} such
    that \ensuremath{\itbox{layer L}_{2}\itbox{ \(\triangleleft\) L}_{3}\itbox{ {\char'173}...{\char'175}}}.

    By Sanity Condition (8), \ensuremath{\itbox{D}_{0}\itbox{}} is not \ensuremath{\itbox{Object}} and there
    exists \ensuremath{\itbox{E}_{0}\itbox{}} such that \ensuremath{\itbox{class D}_{0}\itbox{ \(\triangleleft\) E}_{0}\itbox{ {\char'173}...{\char'175}}}.  Then, the
    expression
    $$
    \ensuremath{\itbox{e{\ensuremath{'}}}} = \left[
      \begin{array}{l@{/}l}
        \ensuremath{\itbox{new C}_{0}\itbox{(}\overline{\itbox{v}}\itbox{)}} & \ensuremath{\itbox{this}} \\
        \ensuremath{\itbox{}\overline{\itbox{w}}\itbox{}} & \ensuremath{\itbox{}\overline{\itbox{x}}\itbox{}} \\
        \ensuremath{\itbox{new C}_{0}\itbox{(}\overline{\itbox{v}}\itbox{)<D}_{0}\itbox{,}\overline{\itbox{L}}\itbox{{\ensuremath{'}}{\ensuremath{'}},}\overline{\itbox{L}}\itbox{>.m}} & \ensuremath{\itbox{proceed}} \\
        \ensuremath{\itbox{new C}_{0}\itbox{(}\overline{\itbox{v}}\itbox{)<E}_{0}\itbox{,}\overline{\itbox{L}}\itbox{,}\overline{\itbox{L}}\itbox{>}} & \ensuremath{\itbox{super}} \\
        \ensuremath{\itbox{new C}_{0}\itbox{(}\overline{\itbox{v}}\itbox{)<D}_{0}\itbox{,L}_{3}\itbox{,(}\overline{\itbox{L}}\itbox{{\ensuremath{'}}{\ensuremath{'}};L}_{0}\itbox{),}\overline{\itbox{L}}\itbox{>.m}} & \ensuremath{\itbox{superproceed}}
      \end{array}
      \right]\ensuremath{\itbox{e}_{0}\itbox{{\ensuremath{'}}}}
    $$
    is well defined (note that the lengths of \ensuremath{\itbox{}\overline{\itbox{x}}\itbox{}} and \ensuremath{\itbox{}\overline{\itbox{v}}\itbox{}} are equal).
    Then, by \rn{R-InvkSP}, $\reduceto{\ensuremath{\itbox{}\overline{\itbox{L}}\itbox{}}}{\ensuremath{\itbox{e}}}{\ensuremath{\itbox{e{\ensuremath{'}}}}}$. \qed

  \end{rneqncase}
  
\end{proof}

\begin{theoremapp}[Type Soundness]\label{thm:soundness}
  If \(\p (\CT, \LT, \ensuremath{\itbox{e}}) : \ensuremath{\itbox{T}}\) and \ensuremath{\itbox{e}} reduces to a normal form under
  the empty set of layers, then the normal form is \ensuremath{\itbox{new S(}\overline{\itbox{v}}\itbox{)}}
  for some \ensuremath{\itbox{}\overline{\itbox{v}}\itbox{}} and \(\ensuremath{\itbox{S}}\) such that \(\ensuremath{\itbox{S}} \LEQ \ensuremath{\itbox{T}}\).
\end{theoremapp}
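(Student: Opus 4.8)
The plan is to obtain Type Soundness as the usual corollary of Subject Reduction (Theorem~\ref{thm:def:subject-reduction}) and Progress (Theorem~\ref{thm:def:progress}). First I would unpack the program-typing hypothesis: by \rn{T-Prog}, $\p (\CT, \LT, \ensuremath{\itbox{e}}) : \ensuremath{\itbox{T}}$ gives both $\p (\CT, \LT)\ensuremath{\itbox{ ok}}$ and $\bullet; \emptyset; \bullet \p \ensuremath{\itbox{e}} : \ensuremath{\itbox{T}}$. Two small observations make the two theorems applicable uniformly: the empty layer sequence $\bullet$ has $\set{\bullet} = \emptyset$, and $\emptyset\,\WF$ holds by \rn{Wf-Empty}; moreover top-level reduction always takes place under $\bullet$, and Subject Reduction keeps the location $\bullet$, the empty type environment, and the (well-formed) layer set unchanged, so every term occurring in the reduction sequence satisfies exactly the same side conditions as $\ensuremath{\itbox{e}}$.

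Next I would induct on the length $n$ of the reduction sequence $\ensuremath{\itbox{e}} = \ensuremath{\itbox{e}}_0, \ensuremath{\itbox{e}}_1, \dots, \ensuremath{\itbox{e}}_n$, where $\reduceto{\bullet}{\ensuremath{\itbox{e}}_i}{\ensuremath{\itbox{e}}_{i+1}}$ for each $i < n$ and $\ensuremath{\itbox{e}}_n$ is the assumed normal form. Starting from $\bullet; \emptyset; \bullet \p \ensuremath{\itbox{e}}_0 : \ensuremath{\itbox{T}}$, repeated use of Subject Reduction produces, for each $i$, a type $\ensuremath{\itbox{S}}_i$ with $\bullet; \emptyset; \bullet \p \ensuremath{\itbox{e}}_i : \ensuremath{\itbox{S}}_i$ and $\ensuremath{\itbox{S}}_{i+1} \LEQ \ensuremath{\itbox{S}}_i \LEQ \cdots \LEQ \ensuremath{\itbox{S}}_0 = \ensuremath{\itbox{T}}$. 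By transitivity of subtyping (\rn{CL-Trans} for class types, \rn{LS-Trans} for layer types) this collapses to $\bullet; \emptyset; \bullet \p \ensuremath{\itbox{e}}_n : \ensuremath{\itbox{S}}$ for some $\ensuremath{\itbox{S}} \LEQ \ensuremath{\itbox{T}}$.

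Finally, since $\ensuremath{\itbox{e}}_n$ is a normal form it has no reduction step under $\bullet$; Progress (with its remaining hypothesis $\emptyset\,\WF$ discharged as above) then forces $\ensuremath{\itbox{e}}_n$ to be a value, hence of the form $\ensuremath{\itbox{new C(}\overline{\itbox{v}}\itbox{)}}$ or $\ensuremath{\itbox{new L()}}$. I would close the proof by inversion on $\bullet; \emptyset; \bullet \p \ensuremath{\itbox{e}}_n : \ensuremath{\itbox{S}}$: the only typing rules that apply to a value are \rn{T-New} and \rn{T-NewL}, each of which assigns the value the exact class or layer name it mentions. Thus $\ensuremath{\itbox{e}}_n = \ensuremath{\itbox{new S(}\overline{\itbox{v}}\itbox{)}}$ (with $\overline{\ensuremath{\itbox{v}}} = \bullet$ in the layer case), and $\ensuremath{\itbox{S}} \LEQ \ensuremath{\itbox{T}}$ was already established.

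I do not expect a genuinely hard step; the theorem merely repackages the two preceding results. The only things to be careful about are threading the well-formedness condition $\emptyset\,\WF$ and the fixed context $(\bullet,\emptyset,\bullet)$ through every invocation of Subject Reduction and Progress, and reading $\LEQ$ in the conclusion as class subtyping or as normal layer subtyping according to whether $\ensuremath{\itbox{T}}$ is a class or a layer type. If one prefers, the induction can instead be phrased as: prove by induction on $n$ the strengthened statement ``$\bullet; \emptyset; \bullet \p \ensuremath{\itbox{e}}_n : \ensuremath{\itbox{S}}$ for some $\ensuremath{\itbox{S}} \LEQ \ensuremath{\itbox{T}}$'' and then apply Progress once at the very end.
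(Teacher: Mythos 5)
Your proposal is correct and is essentially the paper's own proof: the paper discharges the theorem in one line by appeal to \rn{T-Prog}, Subject Reduction, and Progress, and your write-up simply spells out the routine iteration of Subject Reduction along the reduction sequence, the transitivity of subtyping, and the final inversion on the value typing. Nothing in your argument deviates from or adds to what the paper intends.
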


\begin{proof}
  By \rn{T-Prog} and Theorems~\ref{thm:subject-reduction} and \ref{thm:progress}.
  \qed
 


\end{proof}

\fi
\end{document}